\documentclass[onecolumn,prx,longbibliography,nofootinbib, superscriptaddress]{revtex4-2}

\usepackage[utf8]{inputenc}
\usepackage[dvips]{graphicx} 
\usepackage{subcaption} 
\usepackage{placeins}
\usepackage{amsfonts}
\usepackage{amssymb}
\usepackage{amscd}
\usepackage{amsmath, bm}    
\usepackage{amsthm}
\usepackage{mathtools}
\usepackage{appendix}
\usepackage{bbm}
\usepackage{dsfont}
\usepackage{enumerate}
\usepackage{enumitem}
\usepackage{epsfig}
\usepackage{float}
\usepackage{makecell}
\usepackage{subfloat}
\usepackage{xcolor}		
\usepackage{physics}
\usepackage{tabularx}
\usepackage{extarrows}
\usepackage{csquotes} 
\usepackage{multirow}
\usepackage{booktabs} 
\usepackage{accents}
\usepackage[colorlinks, linkcolor=red, anchorcolor=blue, citecolor=green]{hyperref}

\newlength{\dhatheight}

\usepackage{tikz}
\usepackage{tikzpeople}
\usepackage{pgfplots}
\usetikzlibrary{arrows}
\usetikzlibrary{shapes,fadings,snakes}
\usetikzlibrary{decorations.pathmorphing,patterns}
\usetikzlibrary{calc}
\usetikzlibrary{positioning}
\definecolor{amber}{rgb}{1.0, 0.75, 0.0}

\usepackage[most]{tcolorbox}

\newtcolorbox[auto counter]{mybox}[2][]{
	enhanced,
	breakable,
	colback=blue!5!white,
	colframe=blue!75!black,
	fonttitle=\bfseries,
	title=Box \thetcbcounter: #2,#1
}

\tcolorboxenvironment{deviceinner}{
	enhanced,
	breakable,
	colback=gray!10!white,
	colframe=gray!50!black,
	boxrule=0.5pt,
	left=5pt,
	right=5pt,
	top=5pt,
	bottom=5pt
}
\newif\ifuseboxprotocol
\useboxprotocoltrue  

\ifuseboxprotocol
\tcbset{
	protocolstyle/.style={
		enhanced,
		breakable,
		colback=blue!5!white,
		colframe=blue!75!black,
		fonttitle=\bfseries,
	}
}
\fi

\newcounter{protocol}

\ifuseboxprotocol
\newenvironment{protocol}[2]
{%
	\par\addvspace{\topsep}
	\refstepcounter{protocol} 
	\begin{tcolorbox}[
		protocolstyle, 
		title=Protocol \theprotocol: #1, 
		label=#2 
		]
	}
	{%
	\end{tcolorbox}
	\par\addvspace{\topsep}
}
\else
	\newenvironment{protocol}[2]
	{\par\addvspace{\topsep}
		\noindent
		\tabularx{\linewidth}{@{} X @{}}
		\vspace{-2mm}
		\refstepcounter{protocol}\textbf{Protocol \theprotocol} #1 \label{#2} \\
	}
	{ \\
		\endtabularx
		\par\addvspace{\topsep}}
	\fi
	
	\makeatletter
	\newcommand{\printappendixtoc}{%
		\section*{Contents}    
		\@starttoc{atoc}       
	}
	
	\newcommand{\l@atocsection}{\@dottedtocline{1}{0em}{2.3em}}
	\newcommand{\l@atocsubsection}{\@dottedtocline{2}{1.5em}{3em}}
	\makeatother

	\newtheorem{theorem}{Theorem}
	
	\newtheorem{fact}{Fact}
	\newtheorem{lemma}{Lemma}
	\newtheorem{corollary}{Corollary}

	\newtheorem{definition}{Definition}

	\newtheorem{remark}{Remark}

	\graphicspath{{./figure/}}
	
	\renewcommand{\vec}[1]{\mathbf{#1}}

	\newcommand{\Pmsm}{\Pr_{\operatorname{MS}}}
	\newcommand{\Dl}{\vartriangleleft}
	\newcommand{\Dr}{\vartriangleright}

	\newcommand{\bE}{\mathbb{E}}

	\newcommand{\poly}{\text{poly}}

	\renewcommand{\norm}[1]{\left\lVert#1\right\rVert}
	\newcommand{\ab}{^\flat}
	\newcommand{\abb}{^{{\flat}{\flat}}}
	\newcommand{\vI}{v_\mathrm{I}}
	\newcommand{\vII}{v_\mathrm{II}}
	\newcommand{\vIII}{v_\mathrm{III}}
	\newcommand{\hvI}{\hat{v}_\mathrm{I}}
	\newcommand{\hvII}{\hat{v}_\mathrm{II}}
	\newcommand{\hvIII}{\hat{v}_\mathrm{III}}

	\newcommand{\opnorm}[1]{\left\lVert #1 \right\rVert_{\infty}} 
	\newcommand{\cO}{\mathcal{O}}

\begin{document}
		
		\title{Scalable self-testing of generic multipartite quantum states}
		
		\author{Jinchang Liu}
		\thanks{These authors contributed equally to this work.}
		\affiliation{Institute for Interdisciplinary Information Sciences, Tsinghua University, Beijing 100084, China}
		
		\author{Elias X.\ Huber}
		\thanks{These authors contributed equally to this work.}
		\affiliation{Fraunhofer Singapore, Nanyang Technological University, Singapore}
		\affiliation{Centre for Quantum Technologies, National University of Singapore, Singapore}
		
		\author{Zhenyu Du}
		\thanks{These authors contributed equally to this work.}
		\affiliation{Center for Quantum Information, Institute for Interdisciplinary Information Sciences, Tsinghua University, Beijing 100084, China}
		
		\author{Xingjian Zhang}
		\thanks{Present address: Centre for Quantum Software and Information, University of Technology Sydney, Sydney, Australia}
		\affiliation{QICI Quantum Information and Computation Initiative, School of Computing and Data Science, The University of Hong Kong, Hong Kong SAR, China}
		
		\author{Xiongfeng Ma}
		\email{xma@tsinghua.edu.cn}
		\affiliation{Center for Quantum Information, Institute for Interdisciplinary Information Sciences, Tsinghua University, Beijing 100084, China}

		\begin{abstract}
			Characterizing large quantum systems with minimal assumptions is a central challenge in quantum information science. 
			Self-testing provides the strongest form of certification by identifying the underlying quantum state solely from observed measurement statistics.
			However, existing self-testing methods for generic $n$-partite states face a scalability barrier, requiring exponentially many samples in the system size.
			In this work, we overcome this barrier by introducing a protocol that robustly self-tests almost all $n$-qubit states with only polynomial sample complexity. 
			The key ingredient is an efficient scheme for device-independently evaluating multipartite Pauli measurements, which can be implemented using only a linear number of ancillary Bell pairs together with standard projective and Bell measurements, well within the reach of current quantum technology.
			Beyond self-testing states, our scheme provides a general framework for implementing a wide range of learning and certification protocols in the device-independent setting, thereby opening a scalable route to device-independent quantum information processing in large-scale quantum networks. 
		\end{abstract}
		
		\maketitle
		
		\section{Introduction}
		
		Certifying the correct functioning of quantum devices lies at the heart of quantum technologies~\cite{Eisert2020CertificationReview, Kliesch2021CertificationSurvey, Carraso2021TheoreticalExperimental}. 
		As quantum platforms and quantum networks continue to scale up~\cite{Liu2024MetropolitanNetwork, Main2025DistributedQuantumComputing, Liu2026LongLivedEntanglement, Stas2026EntanglementAssisted}, the common assumption that certification protocols can rely on trusted and well-calibrated devices becomes increasingly untenable. 
		Quantum nonlocality~\cite{Bell1964EPRParadox} makes it possible to remove this assumption, giving rise to the device-independent (DI) paradigm, the only rigorous framework for certifying quantum systems \emph{without making any assumptions} about the inner workings of the experimental devices~\cite{Mayers1998SelfTesting}.
		This framework has enabled transformative applications of quantum technology, such as secure key distribution~\cite{Ekert1991QKDBell, Mayers1998SelfTesting, Acin2007DeviceIndependentSecurity}, certified random number generation~\cite{Pironio2010RandomNumberBell, Liu2018DIQRNG, Farkas2026MaximalRandomness}, and delegated quantum computing~\cite{Reichardt2013ClassicalCommand}. 
		Developing powerful DI tools for characterizing quantum systems has therefore become a central goal of quantum information science.
		
		Within the DI paradigm, the strongest certification of an underlying quantum system is known as \emph{self-testing}~\cite{Mayers1998SelfTesting}.
		In a self-testing protocol, certain observed statistics can uniquely determine the underlying state and measurements, up to the intrinsic degrees of freedom that are indistinguishable from observations.
		A canonical example is that maximal violation of the Clauser–Horne–Shimony–Holt (CHSH) inequality self-tests the two-qubit Bell state and the associated measurements~\cite{Clauser1969CHSH, Mayers1998SelfTesting}.
		
		A central goal in the certification of large-scale quantum devices and networks is therefore to develop \emph{scalable} self-testing protocols for the underlying \emph{multipartite} quantum systems.
		While seminal works have established that all multipartite entangled pure states can, in principle, be self-tested~\cite{Coladangelo2017PureBipartite,Supic2023NetworkSelftest, Balanzo2024PureMultipartiteEntangled}, the challenge of scalability remains overlooked and largely unaddressed. 
		Specifically, rigorous analyses of robustness against experimental noise are currently restricted to a few highly structured examples, such as GHZ and W states~\cite{Wu2014RobustWState, MaKague2014GraphState, Fadel2017DickeStates, Baccari2020ScalableQubitGraph, Wu2021RobustMultiparticle}. 
		Furthermore, most existing results rely on perfect probability distributions, which are inaccessible via finite sampling.
		The few studies that do account for the resulting statistical deviations~\cite{2022sampleefficientdiverification,bancal2021self} typically demand a sample complexity that scales exponentially with the number of parties, a feature that is highly unfavorable in practice. 
		Together, these limitations present a significant barrier to the deployment of self-testing protocols in realistic applications.
		
		Overcoming this exponential sample-complexity barrier is challenging for several reasons: 
		(i) The black-box treatment of measurement devices makes it difficult to use them to probe the underlying system. 
		(ii) Without perfect distribution reconstruction, reliably extracting global information from finite samples is highly challenging.
		(iii) Even in the device-dependent setting, multipartite entangled states exhibit an exceptionally rich structure that is notoriously difficult to characterize~\cite{Eisert2006MultiparticleEntanglement, Haah2016OptimalTomo, Liu2022Fundamental}. These challenges motivate the following central question:
		\begin{center}
			\emph{Can one robustly self-test generic multipartite quantum states with sample complexity polynomial in the number of parties?}
		\end{center}
		
		In this work, we answer this question affirmatively by introducing a robust self-testing protocol for almost all multipartite qubit states with sample complexity polynomial in the number of parties.
		We overcome challenge (i) by solving an open question: how to efficiently certify, in a \emph{global} sense, multipartite Pauli measurements performed by spatially separated parties device-independently.
		In particular, although single-party Pauli measurements can be self-tested~\cite{Bowles2018DIAllEntangled}, extending such certification to the multipartite setting encounters a local-transpose obstacle~\cite{Supic2023NetworkSelftest}.
		Roughly speaking, standard self-testing techniques fix each party’s Pauli measurements only up to an independent local-transpose freedom, which introduces an ambiguous sign in the $Y$ operator (i.e., $\pm Y$). 
		To certify a target multipartite Pauli measurement, these local freedoms must be aligned consistently across all parties.
		To overcome this obstacle, we introduce a new \emph{transpose-braiding} test that enforces this global consistency. 
		The test is based on inequalities between neighboring parties, constructed from a two-qubit observable $K$ (detailed in the Results section). The key insight is that the maximal eigenvalue of $K$ is strictly larger than that of its partial transpose $K^{T_1}$, which forces the transpose freedoms of neighboring parties to be consistent. 
		By sequentially applying this alignment across the network, one certifies multipartite Pauli measurements up to a global transpose freedom. 
		
		Building on this test, our protocol robustly lifts device-dependent Pauli measurements to their device-independent counterparts using only polynomial overhead. 
		Crucially, this protocol enables sample-efficient estimation of the observables relevant for certifying the target states, without reconstructing the full correlation distribution, thereby overcoming challenge (ii). Furthermore, we remove the assumption that the underlying quantum states are independent and identically distributed (i.i.d.) across rounds by providing a fully non-i.i.d.\ analysis.
		
		Finally, we address challenge (iii) by building on recent device-dependent certification protocols for generic states~\cite{Huang2024Certifying, du2025certifyinglocalizablequantumproperties}. 
		The protocol of Ref.~\cite{Huang2024Certifying} relies only on local Pauli measurements and can therefore be lifted to the DI setting using our multipartite Pauli-measurement scheme, already yielding a self-testing protocol with polynomial sample complexity for generic states. 
		To remove its remaining shared-randomness requirement, we further build on the random-basis-enhanced variant of Ref.~\cite{du2025certifyinglocalizablequantumproperties} and show how the required shared randomness can be replaced by standard local randomness. 
		Taken together, our protocol only requires the minimal assumptions of the DI framework.
		
		For experimental implementation, our protocol requires only standard Bell states and Pauli measurements, both of which are well within the reach of current quantum platforms~\cite{Main2025DistributedQuantumComputing, Liu2026LongLivedEntanglement, Stas2026EntanglementAssisted}. 
		By providing a systematic bridge from powerful device-dependent methods to their DI counterparts, our protocol unlocks a wide range of scalable DI learning and certification protocols for large-scale quantum devices and interconnected quantum networks~\cite{Flammia2011DirectFidelityEstimation, Eisert2020CertificationReview, huang_predicting_2020, elben_randomized_2022, Huang2024Certifying, du2025certifyinglocalizablequantumproperties}.
		
		\section{Setup}
		We consider self-testing properties of quantum states shared by $n$ main parties $A=A_0A_1\cdots A_{n-1}$. 
		The standard notion of self-testing a single state is incorporated as a special case.
		The property we aim to self-test is an expectation value of the form $\tr(L \rho_A)$, where $L$ is a target observable. By choosing $L$ appropriately, this framework captures a wide range of certification tasks: it can witness multipartite entanglement, probe circuit complexity, or even single out a unique target state~\cite{Guhne2009EntanglementDetection, Flammia2011DirectFidelityEstimation, huang_predicting_2020, elben_randomized_2022, Huang2024Certifying, du2025certifyinglocalizablequantumproperties}.
		In this work, we focus on observables $L$ induced by randomized multipartite Pauli schemes, which act on an $n$-qubit state as follows.
		First, a random Pauli basis string $\vec{P}=P_0P_1\cdots P_{n-1}\in\{X,Y,Z\}^{\otimes n}$ is drawn from a distribution $\mathcal{D}$.
		Then, for each $0\le i<n$, the $i$th qubit is measured in the basis $P_i$, yielding an outcome string $\vec{b}=b_0b_1\cdots b_{n-1}\in\{0,1\}^n$.
		Finally, one assigns a weight $\omega(\vec{b}|\vec{P})$ depending on both the outcomes $\vec{b}$ and the chosen bases $\vec{P}$.
		Averaging this weighting coefficient over the randomness in $\vec{P}$ and the measurement outcomes is equivalent to estimating $\tr(L\rho_A)$, where
		\begin{equation}\label{eq:L-def}
			L := \mathbb{E}_{\vec{P}\leftarrow \mathcal{D}}\left[\sum_{\vec{b}\in \{0,1\}^n}\omega(\vec{b}|\vec{P}) \bigotimes_{l=0}^{n-1} \frac{\mathbb{I}+ (-1)^{b_l} P_l}{2}\right].
		\end{equation}
		A single round of Pauli measurements is illustrated in Fig.~\ref{fig:lifting}.
		
		In the self-testing protocol, as shown in Fig.~\ref{fig:lifting}, we introduce $n$ auxiliary parties $B = B_0B_1\cdots B_{n-1}$ that assist the main parties in characterizing their shared state, following the network-assisted architecture introduced in previous works~\cite{Renou2018SelfTestingQuantumNetwork, Bowles2018DIAllEntangled, Supic2020selftestingof, Supic2023NetworkSelftest}. 
		Our setting requires minimal, device-independent assumptions.
		Concretely, each party’s device is treated as a black box, and the only trusted ingredient is the randomness of the classical inputs, provided by a referee or generated locally (i.e., a party’s “free will”).
		Spatial separation between parties imposes a tensor-product structure on their measurements.  
		Formally, each main party $A_l$ receives an input $x_l$ and performs an uncharacterized positive operator-valued measure (POVM) $\{M_{a_l|x_l}^{(l)}\}_{a_l}$, producing an outcome $a_l$. 
		Similarly, each auxiliary party $B_l$ receives an input $y_l$, performs an uncharacterized POVM $\{N_{b_l|y_l}^{(l)}\}_{b_l}$, and outputs $b_l$. 
		The observed statistics are then given by Born's rule:
		\begin{equation}\label{eq:self-testing-statistics-main-text}
			\Pr[a_0\ldots a_{n-1}b_0\ldots b_{n-1}|x_0 \ldots x_{n-1}y_0 \ldots y_{n-1}] = \tr[\left(\bigotimes_{l=0}^{n-1}M_{a_l|x_l}^{(l)}\otimes\bigotimes_{l=0}^{n-1}N_{b_l|y_l}^{(l)}\right)\rho],
		\end{equation}
		where $\rho$ is the unknown state shared across all $2n$ parties. 
		In the main text, we analyze the i.i.d.\ setting, where each experimental round is modeled as an independent use of the same state and the same measurement operators. The non-i.i.d.\ scenario is addressed in Appendix~\ref{sec:non-iid}.
		We emphasize that no additional assumptions are made on the inner working of quantum devices.
		
		\begin{figure*}
			\centering
			\includegraphics[width=0.9\linewidth]{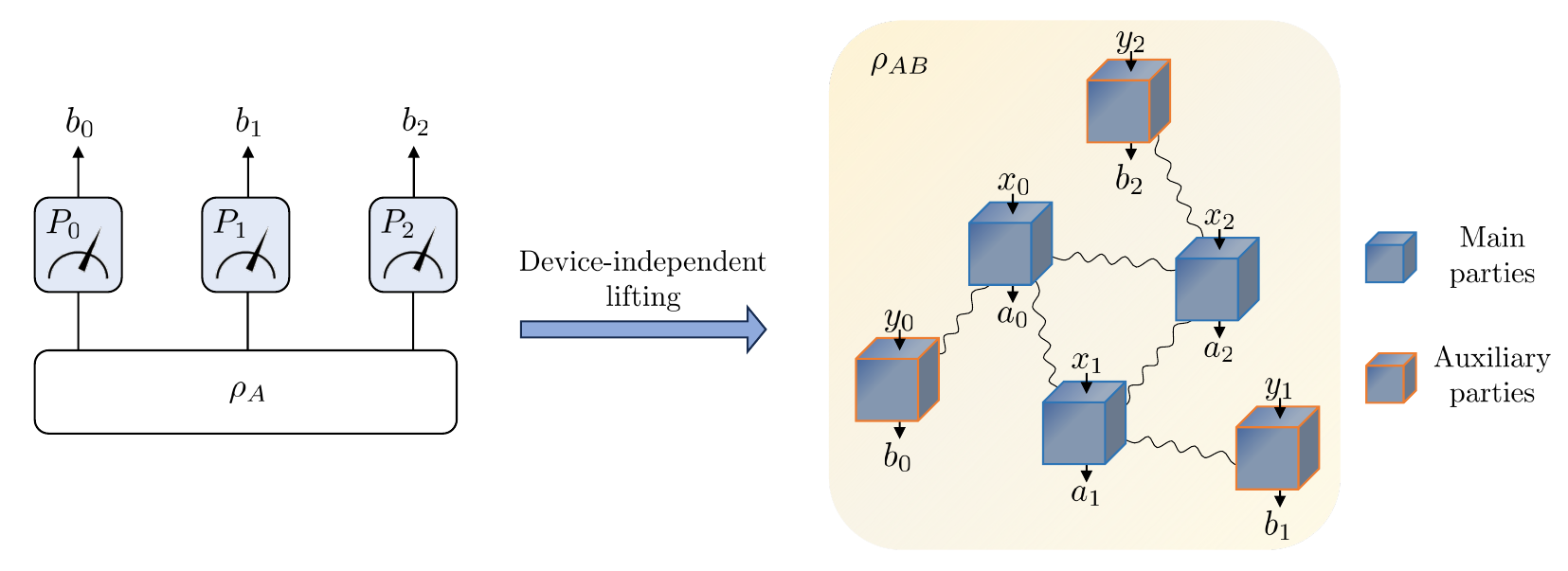}
			\caption{
				Device-independent lifting of a randomized multipartite Pauli measurement protocol, illustrated for three qubits.
				\textbf{Left:} In each round, the parties measure $\rho_A$ in a chosen Pauli basis $P_0P_1\cdots P_{n-1}$ and obtain outcomes $b_0b_1\cdots b_{n-1}$.
				\textbf{Right:} Our protocol lifts this device-dependent Pauli-measurement protocol into its DI version by introducing $n$ auxiliary parties.
				In the DI implementation, both the underlying states and the measurement devices are treated as uncharacterized black boxes with only classical inputs and outputs.
			}
			\label{fig:lifting}
		\end{figure*}
		
		Our self-testing protocol proceeds by repeating the experiment for multiple rounds. 
		In each round, the parties’ outcomes are sampled according to Eq.\,\eqref{eq:self-testing-statistics-main-text}. 
		From the collected data, we construct an estimator $\hat{\omega}$ and aim to certify that $\tr(L\,\rho_A)=\hat{\omega}$ up to a precision $\varepsilon$, achieving self-testing of the desired property and the multipartite Pauli measurements.
		Importantly, self-testing necessarily leaves some intrinsic degrees of freedom undetermined. 
		First, there is the local extraction freedom: if parties share an initial state $\rho'$ and can transform it into $\rho$ via local channels, their measurement devices can produce statistics identical to those of $\rho$. 
		Consequently, self-testing results are always defined up to a local extraction channel. Common examples of such operations include local basis rotations and tracing out ``junk'' auxiliary systems upon which the measurements act trivially.
		Second, the statistics are invariant under global complex conjugation, since simultaneously replacing the state and all measurement operators with their complex conjugates does not affect the real-valued probabilities. 
		Beyond these unavoidable freedoms, any practical notion of self-testing must be robust to statistical fluctuations arising from finite samples.
		Taking these into account, we say that the parties self-test a state or a property of the state if the desired target can be approximately extracted from $\rho$ by some local channels $\{\Gamma_l\}_l$ acting on the main parties, up to the unavoidable global-conjugation freedom~\cite{Supic2020selftestingof}.
		
		\begin{definition}[$\varepsilon$-approximate self-testing]\label{def:approx-selftest-property}
			We say that the main parties $\varepsilon$-approximately share a state $\tau$  if there exists a local extraction channel $\Gamma=\bigotimes_{i=0}^{n-1}\Gamma_i$ acting on the main parties $A=A_0\cdots A_{n-1}$, such that
			\begin{equation}\label{eq:self-testing-extraction}
				D\left(\Gamma(\rho_A), p\tau_+\otimes\ketbra{0}^{\otimes n}+(1-p)\tau_-^*\otimes\ketbra{1}^{\otimes n}\right) \le \varepsilon, 
			\end{equation} 
			for some (possibly unknown) states $\tau_+, \tau_-$ and $p\in [0,1]$ such that $\tau=p\tau_++(1-p)\tau_-$. Here, $D(\cdot,\cdot)$ denotes trace distance. 
			
			We say that the main parties $\varepsilon$-approximately self-test the expectation value $\hat{\omega}$ for an observable $L$ if there exists a state $\tau$ such that they $\varepsilon$-approximately share $\tau$ and $|\tr(L\,\tau)-\hat{\omega}|\le \varepsilon$.
			We say that the main parties $\varepsilon$-approximately self-test a pure state $\Psi$ if they $\varepsilon$-approximately share $\Psi$.
		\end{definition}
		Here, the ancilla $\ketbra{0}^{\otimes n}$ or $\ketbra{1}^{\otimes n}$ are shared across each party, indicating whether the states and measurements are conjugated globally.
		Crucially, the flag is either all zeros or all ones across the \(n\) parties, and therefore represents global conjugation freedom.
		Note that for pure state $\tau = \Psi$, we have $\tau_+=\tau_- = \Psi$, recovering the normal notion of pure-state self-testing.
		
		In general, self-testing an expectation value for an observable $L$ is a weaker task than self-testing a state, as it only certifies a single property of the extracted state. By contrast, self-testing a pure state $\Psi$ requires certifying the entire underlying multipartite state. These two notions can be connected by suitably choosing a witness observable $L$: if $L$ has $\Psi$ as an eigenstate corresponding to its unique maximal eigenvalue, and has a sufficient spectral gap, then any state achieving a nearly maximal expectation value for $L$ must be close to $\Psi$. 
		Our later results leverage precisely this connection.
		We emphasize that Definition~\ref{def:approx-selftest-property} is operationally optimal, as neither the freedom of the local extraction channel nor the intrinsic global conjugation can be further removed.

		\section{Results}
		We now present our main results on self-testing expectation values from multipartite Pauli measurements, with a particularly important application to the self-testing of generic multipartite states.
		
		\subsection{Efficient self-testing of multipartite Pauli measurements}
		We construct a robust and efficient protocol for self-testing the expectation value of an observable $L$ defined by a randomized Pauli scheme. A high-level overview of the protocol is shown in Fig.~\ref{fig:protocol-high-level} and Table~\ref{tab:main-experiment_setup}. We emphasize that these describe the ideal implementation. In this implementation, the target state is shared among the main parties on the systems $T_l$. Each main party $A_l$ contains the four qubits $T_l$, $S_l$, $R_{l}^{\Dr}$ and $R_{l}^{\Dl}$. The protocol uses $n$ Bell pairs shared between each main--auxiliary pair $(A_l,B_l)$, specifically between $S_l$ and $B_l$, as well as $n-1$ additional Bell pairs shared between neighboring main parties $(A_l,A_{l+1})$, specifically between $R_{l}^{\Dr}$ and $R_{l+1}^{\Dl}$. This ideal implementation is illustrated in Fig.~\ref{fig:protocol-high-level}a.
		The Bell pairs shared between neighboring main parties $(A_l,A_{l+1})$ could be replaced by Bell pairs shared between neighboring auxiliary parties $(B_l,B_{l+1})$. The analysis in that case is similar but somewhat more involved, so we focus on the former setting for simplicity. In the actual experiment, we make no assumptions about the internal workings of the devices beyond the classical input and output alphabets, the randomness of the classical measurement inputs, and the spatial separation of the parties $A_l$ and $B_l$.
		
		The input alphabets for the main parties $A_l$ and the auxiliary parties $B_l$ are
		$x_l \in \{0, \dots, 5, \vartriangleleft, \vartriangleright, \diamond\}$ and $y_l \in \{0,1,2\}$, respectively. The corresponding outputs are $a_l \in \{0,1\}$ for $x_l \in \{0,\dots,5\}$, $\vec{a}_l \in \{00,01,10,11\}$ for $x_l \in \{\Dl,\Dr,\diamond\}$, and $b_l \in \{0,1\}$ for the auxiliary parties. We focus on three types of input configurations: (i) CHSH tests, (ii) transpose-braiding tests, and (iii) target Pauli measurements implemented via teleportation. These ingredients are shown separately in Fig.~\ref{fig:protocol-high-level}b--d, and are described quantitatively by the following three families of estimators:
		\begin{equation}
			\begin{split}
				v_\mathrm{I}(l,k) &= 2\sqrt{2}-\sum_{i,j\in\{0,1\}}(-1)^{ij}\bE[(-1)^{a_l+b_l}\mid x_l=x_l(i,k),y_l=y_l(j,k)];\\
				v_{\mathrm{II}}(l,l+1) &= \mathbb{E}_{P\gets\{X,Y,Z\}}\left\{1-(-1)^{\mathbf{1}[P=Y]}\bE[(-1)^{f(b_l|P,\vec{a}_l)}(-1)^{f(b_{l+1}|P,\vec{a}_{l+1})}\mid x_l={\Dr}, x_{l+1}={\Dl}, y_l=y_{l+1}=P]\right\};\\
				v_{\mathrm{III}} &= \mathbb{E}_{\vec{P}\gets\mathcal{D}}[\omega(f(\vec{b}|\vec{P},\{\vec{a}_l\}_{l=0}^{n-1})|\vec{P})\mid \forall l, x_l={\diamond},y_l=P_l];
			\end{split}
		\end{equation}
		where we identify the inputs $y_l=0,1,2$ for $B_l$ with the Pauli operators $X,Y,Z$, respectively,
		\begin{equation}
			\begin{split}
				\text{CHSH settings:}&\quad (x_l(0,k),x_l(1,k),y_l(0,k),y_l(1,k))=\begin{cases}
					(0, 1, 0, 2) & \text{if } k=0, \\
					(3, 2, 0, 1) & \text{if } k=1, \\
					(5, 4, 2, 1) & \text{if } k=2;
				\end{cases}\\
				\text{Teleportation correction:}&\quad  f(b|P,\vec{a}) = b \oplus \mathbf{1}[\{P, U_{\vec{a}}\} = 0],\quad \text{where } U_{\vec{a}} = X^{a_0}Z^{a_1} \text{ for } \vec{a}=a_0a_1,
			\end{split}
		\end{equation}
		and $f(\vec{b}|\vec{P}, \{\vec{a}_l\})$ applies $f$ to each vector component, that is, $f(\vec{b}|\vec{P}, \{\vec{a}_l\})=(f(b_l|P_l,\vec{a}_l))_{l=0}^{n-1}$.
		During the protocol, we obtain unbiased estimators of $\vI$, $\vII$, and $\vIII$. We go on to show that $\vI,\vII\approx0$ certify that $\vIII$ indeed estimates the observable $L$ on a state shared by the main parties. We describe the role of each ingredient below, with the complete protocols and full analysis deferred to Appendix~\ref{sec:sample_comp} (see Protocols~\ref{prot:di-witness-local-randomness} and \ref{prot:di-witness-shared-randomness} for the versions with local and shared randomness, respectively).
		\begin{figure*}
			\centering
			\includegraphics[width=0.9\linewidth]{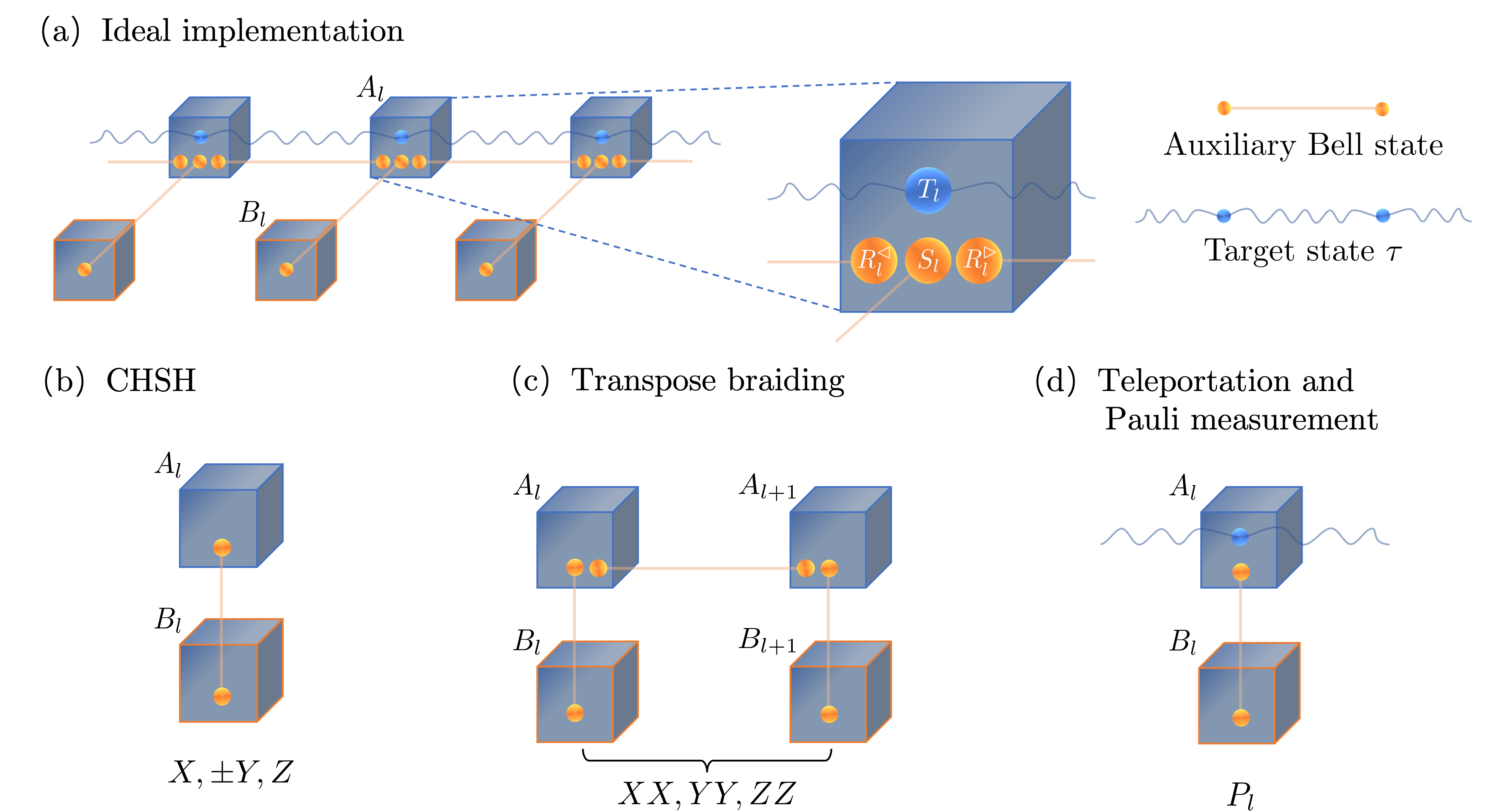}
			\caption{
				Self-testing protocol for implementing randomized multipartite Pauli measurement schemes.
				(a) The protocol introduces $n$ auxiliary parties and $2n-1$ additional Bell pairs, with $n$ pairs between $(A_l,B_l)$ and $n-1$ pairs between neighboring main parties. Together, these resources enable DI lifting of randomized multipartite Pauli measurements. 
				(b) CHSH tests certify the auxiliary Bell pairs and the relevant Pauli observables up to a local complex-conjugation freedom, which manifests as an undetermined sign flip for $Y$ measurements.
				(c) The transpose-braiding test aligns these local freedoms by enforcing consistent conjugation across adjacent auxiliary parties. Neighboring parties are either both conjugated or not conjugated at all, so that adjacent parties jointly implement $YY$ measurements.
				(d) To obtain a self-tested expectation value, the main parties teleport their shared state to the auxiliary parties, who then perform the self-tested Pauli measurements in basis $\vec{P} = P_0P_1\cdots P_{n-1}$.
			}
			\label{fig:protocol-high-level}
		\end{figure*}
		
		\begin{table}
			\centering
			\begin{tabular}{c|c|c}
				\hline
				Party & Input & Measurement in the ideal implementation\\
				\hline
				\multirow{6}{*}{$A_l$} & $x_l\in \{0,1\}$  & $\frac1{\sqrt2}(X+(-1)^{x_l}Z)$ on ${S_l}$\\ 
				& $x_l\in \{2,3\}$  & $\frac1{\sqrt2}(X+(-1)^{x_l} Y)$ on ${S_l}$ \\
				& $x_l\in \{4,5\}$  & $\frac1{\sqrt2}(Z+(-1)^{x_l} Y)$ on ${S_l}$ \\
				& $x_l =\ \Dr$ & BSM on $R_l^{\Dr}S_{l}$ \\
				& $x_l =\ \Dl$ & BSM on $R_l^{\Dl}S_{l}$ \\
				& $x_l = \diamond$ & BSM on $T_lS_l$ \\
				\hline
				\multirow{3}{*}{$B_l$} & $y_l = 0$ & $X$ \\
				& $y_l = 1$ & $Y$ \\
				& $y_l = 2$ & $Z$ \\
				\hline
			\end{tabular}
			\caption{Experimental setup. For each party index $l=0,\ldots,n-1$, the table specifies the measurements in the ideal implementations.}
			\label{tab:main-experiment_setup}
		\end{table}
		
		\emph{(i) CHSH test.} 
		The CHSH tests in Fig.~\ref{fig:protocol-high-level}b, corresponding to the estimator $\vI$ with input settings $x_l\in\{0,1\}, y_l\in\{0,2\}$, $x_l\in\{2,3\}, y_l\in\{0,1\}$, and $x_l\in\{4,5\}, y_l\in\{1,2\}$, are used to robustly self-test both the presence of a Bell pair shared between $S_l$ and $B_l$ and the anti-commutation structure of the measurements. 
		Here, $2\sqrt2-\vI(l,k)$ is the value of a CHSH test where the input values for $A_l$, $B_l$ are $(x_l(0,k),x_l(1,k))$ and $(y_l(0,k),y_l(1,k))$, respectively.
		When $\vI(l,k)\approx 0$ for all $l,k$, that is, when all observed CHSH values are close to their extremum, the auxiliary measurements are certified to implement observables that are approximately $X$ and $Z$, together with a third observable that is $Y$ up to a local complex-conjugation freedom~\cite{Bowles2018DIAllEntangled}.
		
		However, self-testing Pauli observables only up to local-transpose freedom is not sufficient for estimating a multipartite Pauli observable. In particular, while one can self-test the observables $X$, $Z$, and $\pm Y$, the sign of $Y$ cannot be fixed locally. As a simple example, consider estimating $L=Y_1\otimes Y_2$ on some state. If party~1 effectively measures $-Y$ while party~2 measures $Y$, then the implemented observable becomes $L^{T_1}=(-Y_1)\otimes Y_2=-L$. If this mismatch were ignored, the protocol would, in general, estimate a partially transposed version of $L$ across some subsystems.
		
		\emph{(ii) Transpose braiding.} The transpose-braiding test in Fig.~\ref{fig:protocol-high-level}c (corresponding to estimator $\vII$ and input settings $x_l={\vartriangleright}$, $x_{l+1}={\vartriangleleft}$, $y_l=y_{l+1}$) resolves this mismatch and enforces that the auxiliary parties perform consistent $YY$ measurements.
		The test checks that the expectation value for the two-qubit observable $K = X_1\otimes X_2 - Y_1 \otimes Y_2 + Z_1 \otimes Z_2$ is close to $3$.
		In the ideal implementation, this observable is measured on the Bell pair shared between $R_l^{\Dr}$ and $R_{l+1}^{\Dl}$, which is teleported to $B_l$ and $B_{l+1}$ via Bell-state measurements on $R_l^{\Dr}S_l$ and $R_{l+1}^{\Dl}S_{l+1}$.
		This Bell state is the unique eigenstate of $K$ with eigenvalue $3$.
		By contrast, as discussed above, applying a partial transpose on either subsystem flips the sign of the $Y\otimes Y$ term. This case corresponds to
		$K^{T_1}=K^{T_2}=X_1\otimes X_2 + Y_1\otimes Y_2 + Z_1\otimes Z_2$,
		whose maximal eigenvalue is only $1$. This gap separation robustly rules out misaligned partial transposes.
		Consequently, only the consistent measurements $\{X_1,Y_1,Z_1\}$ and $\{X_2,Y_2,Z_2\}$, or $\{X_1,-Y_1,Z_1\}$ and $\{X_2,-Y_2,Z_2\}$ can produce the expected correlations. 
		Chaining these constraints along the line of neighboring parties eliminates the local-transpose freedom across all parties.
		
		\emph{(iii) Teleportation and measurement.} To obtain an estimate of the target observable $L$, we introduce a measurement setting (corresponding to estimator $\vIII$ and input setting $x_l=\diamond$ for all $l$) in which each main party teleports its share of the unknown state to the corresponding auxiliary party, using the Bell pair shared between $S_l$ and $B_l$ that is also used in the CHSH tests,  as depicted in Fig.~\ref{fig:protocol-high-level}d.
		The auxiliary parties then perform the Pauli measurements certified by the CHSH tests. Although standard teleportation requires a correction unitary $U_{\vec a}$ conditioned on the BSM outcome $\vec a$, for Pauli measurements this correction can be incorporated entirely into classical post-processing: the measurement outcome is flipped precisely when $U_{\vec a}$ anti-commutes with the measured Pauli observable. That is, $f(b_l|P_l,\vec{a}_l)$ gives the corrected Pauli measurement outcome. Thus, $\vIII$ is an estimate of the observable $L$.
		
		We now outline the construction of the extraction channel $\Gamma$ in Definition~\ref{def:approx-selftest-property}. First, satisfying all the CHSH tests (see (i)) allows us to extract a qubit system $B'_l$ from the system $B_l$ for each $l$, such that measurements on $B_l$ are equivalent to Pauli measurements on $ B'_l$~\cite{Bowles2018DIAllEntangled}. One could then choose the input $x_l=\diamond$ for all main parties, obtain the BSM outcomes $\vec{a}_l$, transform from $B_l$ to $B'_l$, and apply the corresponding correction unitary $U_{\vec a_l}$ to each system $B'_l$. This approximately extracts a state $\Gamma(\rho)$ which satisfies Eq.~\eqref{eq:self-testing-extraction} on the $ B'_l$ systems.
		However, a technical difficulty arises: extracting the state at the auxiliary parties depends on the BSM outcomes at the main parties, which, in turn, requires classical communication between the parties.
		Self-testing, however, requires that the target state must be produced by a fully local extraction channel, as in Definition~\ref{def:approx-selftest-property}, and thus no communication between parties is allowed. Prior network-assisted self-testing protocols~\cite{Supic2023NetworkSelftest} do rely on such additional communication. 
		In contrast, we overcome this difficulty by explicitly constructing a fully local extraction channel. The key insight is that the 3-CHSH inequalities in Fig.~\ref{fig:protocol-overview}b not only certify local Pauli measurements, but also imply that the shared main--auxiliary state is close to a Bell pair. This allows the extraction channel to replace the physical main--auxiliary Bell pair with a locally prepared Bell pair on the main side, thereby reproducing the effect of the teleportation entirely locally, extracting the desired target state without any communication. The detailed construction of this local extraction channel is given in Appendix~\ref{subsec:refinement-extraction-channel}.

		In summary, those components enable an efficient DI lifting of multipartite Pauli measurements:
		\begin{theorem}[Efficient self-testing of multipartite Pauli measurements]\label{thm:efficient-selftest-pauli-msm-informal-main-text}
			Consider an $n$-qubit observable $L$ specified by a bounded weighting function $\omega(\cdot\mid\cdot)$ and a Pauli-basis distribution $\mathcal{D}$. Then the self-testing protocol depicted in Fig.~\ref{fig:protocol-high-level} that uses $n$ auxiliary parties and $2n-1$ auxiliary Bell states satisfies
			with probability at least $1-\delta$:
			\begin{enumerate}
				\item Soundness: If the protocol outputs \texttt{CERTIFIED}, then the main parties $\varepsilon$-approximately self-test the reported estimate $\hat{\omega}$ of the observable $L$.
				\item Completeness: If the underlying experiment is sufficiently close to the ideal implementation (with inverse-polynomial noise tolerance in state preparation and measurements), then the protocol outputs \texttt{CERTIFIED}.
			\end{enumerate}
			The protocol requires
			$N=\cO\!\left(\mathrm{poly}(n,\varepsilon^{-1})\log(\delta^{-1})\right)$
			rounds.
			These guarantees hold (i) for product basis distributions $\mathcal{D}=\bigotimes_{l=0}^{n-1}\mathcal{D}_l$, where each marginal distribution $\mathcal{D}_l$ satisfies $p_X^{(l)}, p_Y^{(l)}, p_Z^{(l)} = \Omega(1)$, using only local randomness, and (ii) for arbitrary $\mathcal{D}$ when the parties are allowed shared randomness.
		\end{theorem}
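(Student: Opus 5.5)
The plan is to reduce soundness to a deterministic robust self-testing statement: if the \emph{true} values satisfy $\vI(l,k)\le\eta$ for all $l,k$ and $\vII(l,l+1)\le\eta$ for all $l$, then $|\tr(L\tau)-\vIII|\le \mathrm{poly}(n)\,\mathrm{poly}(\eta)\,\|\omega\|_\infty$ for a state $\tau$ that the main parties $\mathrm{poly}(n)\mathrm{poly}(\eta)$-approximately share. Then I add a concentration argument with $\eta=\varepsilon/\mathrm{poly}(n)$.

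\textbf{Step 1: local Pauli operators.} I start from the 3-CHSH robust self-test of Ref.~\cite{Bowles2018DIAllEntangled}. For each $l$, $\vI(l,k)\le\eta$ for $k=0,1,2$ gives a local isometry $V_l$ on $B_l$, and an analogous one on $S_l$ inside $A_l$. Under these isometries the state on $S_lB_l$ is $O(\sqrt\eta)$-close to $\Phi^+\otimes\sigma_{\rm junk}$. The observables $B_l(X),B_l(Z)$ act as $X,Z$ on $B'_l$. The observable $B_l(Y)$ acts as $Y\otimes\Pi_l^+ - Y\otimes\Pi_l^-$ for a junk projector pair $\Pi_l^\pm$, which encodes the local transpose flag. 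All approximations are in the state-dependent norm $\|(\cdot)\ket\psi\|$.

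\textbf{Step 2: transpose braiding and teleportation.} I would show that the teleported observable on $R_l^{\Dr}R_{l+1}^{\Dl}$ is effectively $X X \mp YY + ZZ$, and that cross terms $\Pi_l^+\otimes\Pi_{l+1}^-$ see at most eigenvalue $1$ of $K^{T_1}$ rather than $3$. Then $\vII(l,l+1)\le\eta$ forces $\|\Pi_l^+\Pi_{l+1}^-\psi\|^2+\|\Pi_l^-\Pi_{l+1}^+\psi\|^2=O(\eta)$, using the spectral gap $3-1=2$. Chaining along the line $l=0,\dots,n-1$ with the triangle inequality gives that the global state is $O(n\sqrt\eta)$-close to its restriction to the span of $\Pi^{+}_{\rm all}$ and $\Pi^{-}_{\rm all}$. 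This is the origin of the flags $\ketbra{0}^{\otimes n},\ketbra{1}^{\otimes n}$. Teleportation with classical correction $f$ makes $\vIII$ equal to $\tr(L\,\tilde\tau)$, where $\tilde\tau$ is the state on $B'$ conditioned on the BSM outcomes and Pauli-corrected, up to error $n\cdot O(\sqrt\eta)\|\omega\|_\infty$ from swapping each $B_l$ measurement for its ideal one, using a hybrid argument over $l$.

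\textbf{Step 3: local extraction channel.} Following the idea sketched before the theorem, I use the near-Bell-pair structure of $S_lB_l$ from Step 1. This lets me replace the physical $S_lB_l$ pair by a locally prepared Bell pair inside $A_l$, perform the BSM on $T_lS_l$ locally, and apply the correction locally. The result is $\Gamma_l$ acting only on $A_l$, with $D(\Gamma(\rho_A),\cdot)\le O(n\sqrt\eta)$. Composing with the flag extraction from Step 2 gives the form in Definition~\ref{def:approx-selftest-property}. \textbf{I expect this step, together with keeping all errors polynomial in $n$ rather than exponential, to be the main obstacle.} Errors must be accumulated additively through the $n$ hybrid swaps. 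Multiplying fidelities would lose exponentially.

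\textbf{Step 4: statistics and completeness.} The protocol samples settings at random: CHSH for random $(l,k)$, braiding for random $l$, and the $\diamond$ round otherwise. With local randomness, each party draws its own input. The $\diamond$ rounds for all parties simultaneously then occur with probability $\Omega(1)^n$, which is bad. So instead I would test each estimator in separate blocks. In the local-randomness case, I would use the product structure of $\mathcal{D}$ and the bounds $p_P^{(l)}=\Omega(1)$ to importance-reweight $\omega$ from a fixed local distribution, keeping the variance $\mathrm{poly}(n)$ after restricting to the observables that are actually needed. With shared randomness, $\vec P\gets\mathcal{D}$ is drawn directly. Hoeffding bounds with a union bound over the $O(n)$ estimators give $N=\mathrm{poly}(n,\varepsilon^{-1})\log\delta^{-1}$. \texttt{CERTIFIED} is output iff every empirical $\hat v_{\rm I},\hat v_{\rm II}\le 2\eta$. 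Completeness follows because inverse-polynomial noise keeps the true values below $\eta$, and hence below $2\eta$ after estimation, with probability $1-\delta$. The non-i.i.d.\ extension is handled by martingale (Azuma) concentration in place of Hoeffding.
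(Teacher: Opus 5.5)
Your Steps 1--2 and the idea behind Step 3 match the paper's architecture: 3-CHSH with the extended-SWAP isometry whose junk carries a transpose flag, a telescoping hybrid over $l$, the gap $\lambda_{\max}(K)=3$ versus $\lambda_{\max}(K^{T_1})=1$ chained along the line, and replacing the $S_lB_l$ pair by a locally prepared one. However, two steps do not go through as written.

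\textbf{Step 3: the flag is on the wrong side.} Definition~\ref{def:approx-selftest-property} requires the conjugation flag to be produced by a product of channels on the \emph{main} parties. Your flag $\Pi_l^\pm$ lives on $B_l$, so ``composing with the flag extraction from Step 2'' gives a channel that acts on $B$. The missing observation is that $V_{A_l}$ produces its own flag register $A''_l$. In the junk state $\ket{\xi}$ of Lemma~\ref{lemma:pauli-upto-conj}, the pair $A''_lB''_l$ is supported only on $\ket{00}$ and $\ket{11}$. Hence a local CNOT from $A''_l$ into a fresh register reproduces the $B''_l$ flag exactly; this is the $\mathcal{CX}_{A''_lA\abb_l}$ step of $\Lambda$. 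One then undoes $V_{A_l}$, applies the \emph{physical} $\diamond$ POVM, and corrects on the local half of the fresh Bell pair. An ideal BSM on $T_lS_l$ is not available, since $T_l,S_l$ are not identifiable in the DI setting.

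\textbf{Step 2: two smaller corrections.} You cannot reason about a ``teleported observable on $R_l^{\Dr}R_{l+1}^{\Dl}$'' in the DI setting. The argument works because the eigenvalue bound holds for whatever Pauli-corrected state sits on $B'_lB'_{l+1}$. Because the $\Dr,\Dl$ measurements never touch $B''$, the mismatch bound transfers to $\psi$ and hence to the $\diamond$ rounds. Also, the per-pair mismatch weight is $\cO(\eta_{\rm II}+\sqrt{\eta_{{\rm I},l}}+\sqrt{\eta_{{\rm I},l+1}})$, not $\cO(\eta)$. This is why the paper uses the separate thresholds $\varepsilon^2/(n^2W^2)$ and $\varepsilon/(nW)$.

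\textbf{Step 4: the sampling design rests on a false dichotomy.} All-$\diamond$ rounds are exponentially rare only if each $A_l$ picks $\diamond$ with probability bounded away from $1$. The paper takes $\Pr[x_l=\diamond]=1-1/n$ and $1/(8n)$ for each of the eight test inputs. All-$\diamond$ rounds then occur with probability $(1-1/n)^n=\Omega(1)$, while each CHSH setting still occurs with probability $\Omega(1/n)$ and each braiding setting with $\Omega(1/n^2)$.

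Your fixed block schedule makes the main parties' inputs deterministic and known to the devices. This is harmless only in the strict i.i.d.\ model, and it contradicts your closing claim about the non-i.i.d.\ case. A non-i.i.d.\ device can act honestly in the test blocks and arbitrarily in the $\diamond$ block, and no Azuma bound repairs that. The paper's non-i.i.d.\ lemma needs every round to feed each estimator with a fixed history-independent probability, $\bE[\hat\delta_i\mid m_{<i}]=p$.

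Importance-reweighting $\omega$ from a fixed product distribution $q\neq\mathcal{D}$ also fails. It multiplies the estimator by $\prod_l p^{(l)}_{P_l}/q^{(l)}_{P_l}$, whose second moment $\prod_l\sum_P (p^{(l)}_P)^2/q^{(l)}_P$ grows exponentially once $q^{(l)}$ differs from $p^{(l)}$ by a constant on a constant fraction of parties. The condition $p^{(l)}_X,p^{(l)}_Y,p^{(l)}_Z=\Omega(1)$ is not there for reweighting. It lets $B_l$ \emph{always} sample from $\mathcal{D}_l$ and still hit every CHSH and braiding setting often enough. Only a constant-factor rejection step is then needed to make the three settings in $\vII$ equiprobable.
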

		
		The formal version of Theorem~\ref{thm:efficient-selftest-pauli-msm-informal-main-text} is given by Theorems~\ref{thm:fin-sample-DI-estimation-L-local-randomness},~\ref{thm:fin-sample-DI-estimation-L-shared-randomness}, and~\ref{thm:completeness} in the Appendix. 
		Here, the soundness statement guarantees the correctness of the reported value, while completeness ensures robustness to experimental imperfections. 
		For the case when the protocol uses only local randomness, the requirement $p_X^{(l)}, p_Y^{(l)}, p_Z^{(l)} = \Omega(1)$ allows the auxiliary parties to sample measurement settings according to the distribution $\mathcal{D}$ while simultaneously ensuring that each setting is sampled sufficiently often and therefore well self-tested.  
		While local randomness is a standard assumption in device-independent protocols, the shared-randomness requirement can, in principle, be achieved by first establishing a shared random string via DI quantum key distribution or conference key agreement. Incorporating this additional step and its associated security analysis is beyond our current scope and is left for future work.
		
		We emphasize that when the classical input to each party is trusted, for example by being supplied by an additional referee, and spatial separation is guaranteed, our self-testing protocol does not assume honest reporting of measurement results by the main or auxiliary parties. Indeed, any adversarial strategy can be absorbed into the local measurement, and such a strategy can pass our test only if the parties effectively implement the desired measurements.

		\subsection{Efficient self-testing of generic multipartite states}
		We now apply our lifting theorem to the strongest notion of self-testing: Certifying a target state. 
		The crux of our approach is to construct a randomized Pauli measurement scheme whose associated observable $L$ efficiently and robustly certifies a target state $\Psi$. 
		Ideally, $L$ should have $\Psi$ as its eigenvector with maximal eigenvalue $\lambda_1 = 1$, and it should exhibit a sufficient spectral gap $\Delta>0$ between $\lambda_1$ and the second-largest eigenvalue. 
		Such a gap ensures robustness: if one can estimate $\tr(L\rho) \approx  1 - c_1\Delta$ with additive error $\varepsilon=c_2\Delta$, then $\rho$ must be close to $\Psi$: $\tr(\rho \Psi) \ge 1 - c_1-c_2$.
		
		A recent shadow-overlap protocol achieves this for generic states~\cite{Huang2024Certifying} by constructing an observable with maximal eigenvalue $1$ and spectral gap $\Delta=\Omega(n^{-2})$. 
		Combined with our lifting theorem, this immediately yields an efficient and robust self-testing protocol for generic states. 
		Yet, the resulting construction inherits the shared-randomness requirement from~\cite{Huang2024Certifying}. 
		To remove this requirement, we introduce a random-basis-enhanced variant that can be implemented using only local randomness. Specifically, we consider the observable
		\begin{equation}
			L = M_{\Psi} \coloneqq \frac{1}{n3^n}\sum_{\vec{P} \in \{X,Y,Z\}^{\otimes n}}\sum_{k=0}^{n-1} \sum_{\vec{b}^{(k)} \in \{0,1\}^{n-1}}
			\ketbra{\vec{b}^{(k)}, \vec{P}^{(k)}}_{[n]\setminus\{k\}} \otimes \ketbra{\Psi_{\vec{b}, \vec{P}}^{(k)}},
		\end{equation}
		where $\vec{b}^{(k)} \coloneqq (b_0,\cdots,b_{k-1},b_{k+1},\cdots,b_{n-1}), \vec{P}^{(k)} = (P_0,\cdots,P_{k-1}, P_{k+1},\cdots,P_{n-1})$, $\ket{\vec{b}^{(k)}, \vec{P}^{(k)}}$ denotes the basis state corresponding to measurement outcomes $\vec{b}^{(k)}$ in basis $\vec{P}^{(k)}$, and $\ket{\Psi_{\vec{b}, \vec{P}}^{(k)}}$ denotes the normalized state on system $k$ after projecting $\ket{\Psi}$ onto $\ket{\vec{b}^{(k)}, \vec{P}^{(k)}}_{[n]\setminus\{k\}}$.
		Intuitively, this observable first chooses a qubit $k$ uniformly at random, measures the remaining $n-1$ qubits in a random Pauli basis, and then measures the fidelity of the resulting single-qubit post-measurement state against the ideal reduced state $\ket{\Psi_{\vec{b}, \vec{P}}^{(k)}}$.
		We prove in Appendix~\ref{app:self-testing-fin-samples} that $M_{\Psi}$ preserves the spectral gap $\Delta=\Omega(n^{-2})$.
		Moreover, it can be written in the form of Eq.~\eqref{eq:L-def} using local shadow tomography~\cite{huang_predicting_2020}. 
		Applying our lifting theorem to this Pauli scheme then yields an efficient and robust self-testing protocol that self-tests almost all $n$-qubit pure states using only local randomness and polynomial sample complexity. This presents an exponential improvement over previously known multipartite self-testing schemes~\cite{Supic2023NetworkSelftest, Balanzo2024PureMultipartiteEntangled}. 
		Moreover, the protocol provides inverse-polynomial noise tolerance against device imperfections.
		
		\begin{theorem}[Efficient self-testing of generic multipartite states]\label{thm:self-testing-state-main-text}
			Let $\Psi$ be an $n$-qubit pure state drawn Haar-randomly. Then, with probability at least $1-\exp(-\Omega(n))$ over the choice of $\Psi$, the protocol obtained by applying Theorem~\ref{thm:efficient-selftest-pauli-msm-informal-main-text} to $M_{\Psi}$ achieves, with probability at least $1-\delta$, soundness and completeness for $\varepsilon$-approximate self-testing of $\Psi$ (in the sense of Theorem~\ref{thm:efficient-selftest-pauli-msm-informal-main-text} and Definition~\ref{def:approx-selftest-property}). The protocol uses $N=\cO\!\left(\mathrm{poly}(n,\varepsilon^{-1})\log\delta^{-1}\right)$ rounds and requires only local randomness.
		\end{theorem}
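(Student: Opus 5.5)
The plan is to reduce Theorem~\ref{thm:self-testing-state-main-text} to Theorem~\ref{thm:efficient-selftest-pauli-msm-informal-main-text} together with a spectral-gap statement for $M_{\Psi}$. This needs three ingredients: (a) $M_\Psi$ is a randomized Pauli scheme of the form in Eq.~\eqref{eq:L-def} with bounded weights and a product basis distribution; (b) with probability $1-\exp(-\Omega(n))$ over Haar-random $\Psi$, $M_\Psi$ has $\Psi$ as its top eigenvector with eigenvalue $1$ and spectral gap $\Delta=\Omega(n^{-2})$; (c) a gap-to-fidelity conversion that is compatible with the conjugation-flagged mixture in Definition~\ref{def:approx-selftest-property}.

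\textbf{Step (a).} Take $\mathcal{D}$ uniform on $\{X,Y,Z\}^{\otimes n}$. This is a product distribution with $p_X=p_Y=p_Z=1/3$, so the local-randomness branch (i) of Theorem~\ref{thm:efficient-selftest-pauli-msm-informal-main-text} applies. For each $\vec{P}$ and outcome $\vec{b}$, write the single-qubit projector $\ketbra{\Psi^{(k)}_{\vec b,\vec P}}$ via its local classical-shadow estimator: the shadow $3\ketbra{b_k,P_k}-\mathbb{I}$ gives overlap $3\langle b_k,P_k|\Psi^{(k)}|b_k,P_k\rangle-1$. Averaging over $k$ then defines
\[
\omega(\vec b|\vec P)=\frac1n\sum_k\bigl(3\langle b_k,P_k|\Psi^{(k)}_{\vec b,\vec P}|b_k,P_k\rangle-1\bigr),
\]
which lies in $[-1,2]$ and is therefore bounded. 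Plugging into Eq.~\eqref{eq:L-def}, the expectation over $P_k$ of the shadow reproduces the projector, giving $L=M_\Psi$. One should handle outcomes where the projection of $\Psi$ vanishes; these carry zero weight in the ideal state, and any fixed choice of $\Psi^{(k)}$ works there since the operator identity holds regardless.

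\textbf{Step (b).} First, $M_\Psi\Psi=\Psi$ holds for every $\Psi$, since $\Psi$ restricted to the outcome branch is exactly the product $\ket{\vec b^{(k)}}\otimes\ket{\Psi^{(k)}}$. Also $M_\Psi\le \mathbb{I}$, as it is an average of projectors. For the gap I would follow Ref.~\cite{Huang2024Certifying}. $M_\Psi$ is an average over $\vec P$ and $k$ of operators that compare $\Psi$ with states differing at one site. One shows that $\mathbb{I}-M_\Psi$ dominates, up to constants, a Laplacian of a Markov chain on computational-basis strings, where single-bit flips are weighted by amplitude ratios of $\Psi$. For Haar-random $\Psi$ these ratios are well conditioned with probability $1-e^{-\Omega(n)}$, and the chain mixes with gap $\Omega(n^{-2})$ by a canonical-path argument on the hypercube. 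The random Pauli bases only average the single-site fidelity projections. I expect the main obstacle to be showing that the random-basis variant does not lose the gap relative to the $Z$-basis construction of Ref.~\cite{Huang2024Certifying}. I would try to lower-bound $\mathbb{I}-M_\Psi$ by $\tfrac{1}{3^{n-1}}$ times the term with all other qubits measured in $Z$ plus positive remainders. That bound loses exponentially, so instead I would try a convexity argument: for each fixed basis string $\vec P^{(k)}$, rotate locally, so the term is the Huang et al.\ operator for the locally rotated state $U\Psi$, which is again Haar random. A union bound over $3^n$ bases fails, so a typicality argument in expectation over $\vec P$ is needed. This step is the crux.

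\textbf{Step (c).} Apply Theorem~\ref{thm:efficient-selftest-pauli-msm-informal-main-text} with accuracy $\varepsilon'=c\,\varepsilon\Delta$ and acceptance threshold $\hat\omega\ge 1-c'\varepsilon\Delta$. This costs $\mathrm{poly}(n,\varepsilon^{-1})\log\delta^{-1}$ rounds, since $\Delta^{-1}=O(n^2)$. Soundness gives an extracted $p\tau_+\otimes\ketbra{0}^{\otimes n}+(1-p)\tau_-^*\otimes\ketbra{1}^{\otimes n}$ with $\tr(M_\Psi\tau)\ge 1-O(\varepsilon\Delta)$. The gap then gives $\tr(\Psi\tau)\ge1-O(\varepsilon)$. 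Because fidelity is linear in $\tau$, both $\tau_\pm$ have high fidelity with $\Psi$ after weighting by $p$. The conjugated branch then lies close to $\Psi^*\otimes\ketbra{1}^{\otimes n}$, which is the form required in Definition~\ref{def:approx-selftest-property} with $\tau_\pm=\Psi$, after converting fidelity to trace distance (square-root loss absorbed by rescaling $\varepsilon$). Completeness follows from the completeness part of Theorem~\ref{thm:efficient-selftest-pauli-msm-informal-main-text}: in the ideal implementation $\tr(M_\Psi\Psi)=1$, and inverse-polynomial noise shifts the value by at most $O(\varepsilon\Delta)$.
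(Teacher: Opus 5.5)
Your steps (a) and (c) agree with the paper. You use uniform local bases with the $k$-average folded into the shadow weight $\omega$, run the local-randomness lifting theorem at precision of order $\Delta\varepsilon^2$, and then combine per-branch fidelity bounds, Fuchs--van de Graaf and concavity of the square root to reach the flagged form of Definition~\ref{def:approx-selftest-property}. The genuine gap is step (b). The only non-routine input to the theorem is that $\Delta(M_\Psi)=\Omega(n^{-2})$ with probability $1-e^{-\Omega(n)}$ over $\Psi$, and you stop exactly there (``a typicality argument in expectation over $\vec P$ is needed. This step is the crux''). As you note, your first route (keeping only the all-$Z$ term) loses a factor $3^{-(n-1)}$, and a per-$\Psi$ union bound over $3^n$ bases fails. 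So as written, nothing lets you pass from $\tr(M_\Psi\tau)\approx 1$ to $\tr(\Psi\tau)\approx 1$, and the proposal does not yet prove the theorem.

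The missing argument has three short parts.

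\emph{Grouping.} Write $M_\Psi=\mathbb{E}_{\vec P}L^{\vec P}_\Psi$, grouping by the full basis string (the $k$-th summand does not depend on $P_k$) rather than term by term in $k$. A single-$k$ summand is a projector with a highly degenerate eigenvalue~$1$; it is the $k$-average inside $L^{\vec P}_\Psi$ that carries the gap. For fixed $\vec P$, $L^{\vec P}_\Psi=U^\dagger L_{U\Psi}U$ with $U$ a product of single-qubit basis changes. Local-unitary invariance of the Haar measure together with Ref.~\cite{Huang2024Certifying} then gives $\Delta(L^{\vec P}_\Psi)\ge c'n^{-2}$ with probability $1-e^{-cn}$, for each fixed $\vec P$.

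\emph{Gap under averaging.} Every $L^{\vec P}_\Psi$ satisfies $L^{\vec P}_\Psi\le\mathbb{I}$ and $L^{\vec P}_\Psi\ket{\Psi}=\ket{\Psi}$, so each preserves the orthogonal complement of $\Psi$. Hence $\lambda_2(M_\Psi)=\max_{\phi\perp\Psi}\bra{\phi}M_\Psi\ket{\phi}\le\mathbb{E}_{\vec P}\lambda_2(L^{\vec P}_\Psi)$, i.e.\ $\Delta(M_\Psi)\ge\mathbb{E}_{\vec P}\Delta(L^{\vec P}_\Psi)$ with every term nonnegative. You therefore only need most bases to be good, not all.

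\emph{Fubini and Markov.} Set $B(\Psi)\coloneqq\Pr_{\vec P}[\Delta(L^{\vec P}_\Psi)<c'n^{-2}]$. Fubini gives $\mathbb{E}_\Psi B(\Psi)\le e^{-cn}$, and Markov's inequality gives $B(\Psi)\le e^{-cn/2}$ except with probability $e^{-cn/2}$. On that event, $\Delta(M_\Psi)\ge(1-e^{-cn/2})c'n^{-2}$. This is the paper's Lemma~\ref{lem:random-basis-enh}.

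As a side point, drop your sketch of why Ref.~\cite{Huang2024Certifying} works. Neighbouring amplitude ratios of a Haar-random state are not uniformly well conditioned: the smallest of the $2^n$ squared amplitudes is typically of order $2^{-2n}$. Simply cite their gap theorem, as the paper does.
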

		
		The formal version of Theorem~\ref{thm:self-testing-state-main-text} is given by Theorem~\ref{thm:self-testing-state-local-randomness} in the Appendix.
		We remark that Theorem~\ref{thm:self-testing-state-main-text} does not directly apply to certain highly structured states, such as GHZ states. Nonetheless, such states often admit highly efficient self-testing protocols precisely because of their strong structure~\cite{Baccari2020ScalableQubitGraph}. 
		Our result therefore complements the existing self-testing literature by providing a general, sample-efficient method that applies to all but an exponentially small fraction of the state space, thereby dramatically enlarging the class of states known to admit sample-efficient self-testing. 
		
		Our protocol also enjoys additional advantages. First, the local measurements performed by each party are independent of the target state being self-tested, so a single dataset can be reused to self-test many target states simultaneously. Second, the auxiliary parties obtain no information about the state held by the main parties as long as the Bell-state-measurement outcomes arising from teleportation are not disclosed to them. In this sense, the protocol provides perfect privacy of both the target state and the underlying state against the auxiliary parties.

		\section{Discussion}\label{sec:discussion}
		
		An important next step is to reduce the resource requirements of our protocols further. For instance, the sample complexity of our self-testing protocol for generic states can be further improved, as recent results suggest that certifying generic quantum states may be possible with constant sample complexity~\cite{du2025certifyinglocalizablequantumproperties}. 
		Moreover, our protocol can be substantially more efficient for self-testing structured target states or properties. 
		As one example, the overhead associated with teleporting all parties could be largely avoided for states and properties that admit local certification, such as qLDPC codes~\cite{Breuckmann2021QuantumLDPC} or code subspaces~\cite{Baccari2020DISubspace}. These tasks are directly relevant to protecting quantum information and implementing quantum secret sharing in networks~\cite{Hillery1999QuantumSecretSharing}. 
		As another example, adapting our techniques to self-test ground states of local Hamiltonians~\cite{Wang2024CertifyingGroundState} could provide new routes to verifying quantum computations~\cite{fitzsimons2018PostHoc}. 
		Our framework can also be applied to certify complex structures of the states distributed across the network, such as circuit complexity and multipartite entanglement~\cite{du2025certifyinglocalizablequantumproperties}.  
		Since our self-testing protocol is robust against adversarial parties and, in the state-self-testing setting, also provides privacy against auxiliary parties, it is intriguing to explore its cryptographic applications, for example, in verifiable blind quantum computing, multi-prover interactive proofs, and quantum zero-knowledge proofs.

		Finally, it remains important to fully understand the ultimate capabilities and limitations of self-testing. 
		While our method applies to almost all multipartite qubit states, an important open question is whether \emph{arbitrary} quantum states can be self-tested robustly and sample-efficiently.
		A closely related challenge is to achieve sample-efficient self-testing in the standard Bell scenario, eliminating the need for auxiliary Bell pairs or additional parties~\cite{Balanzo2024PureMultipartiteEntangled}. 
		Furthermore, extending our framework to multi-qudit systems of prime dimension provides a natural direction for future work. 
		We expect this generalization to be feasible, drawing confidence from recent successes in self-testing odd-prime maximally entangled states~\cite{meyer2025robustly} and $\mathbb{Z}_d$ toric code ground states~\cite{vallee2025tailoring}.

		\begin{acknowledgements}
			This work is supported by the National Natural Science Foundation of China Grants No.~12174216,  the Innovation Program for Quantum Science and Technology Grant No.~2021ZD0300804,  No.~2021ZD0300702, and the IIIS Young Scholar Innovation Fund of Turing AI Institute (Nanjing).
		\end{acknowledgements}
		
		\bibliography{ref}
		
		\newpage
		\onecolumngrid
		\appendix

		\begin{center}
			{\large \textbf{Supplementary Material}}
		\end{center}
		
		\section*{Contents} 
		
		\makeatletter
		\let\oldaddcontentsline\addcontentsline
		\renewcommand{\addcontentsline}[3]{%
			\def\target{#1}%
			\def\toclabel{toc}%
			\ifx\target\toclabel
			\oldaddcontentsline{atoc}{atoc#2}{#3}%
			\else
			\oldaddcontentsline{#1}{#2}{#3}%
			\fi
		}
		
		\@starttoc{atoc}
		\makeatother
		
		\vspace{1cm}
		
		\renewcommand{\thetheorem}{S\arabic{theorem}}
		\renewcommand{\thefact}{S\arabic{fact}}
		\renewcommand{\thelemma}{S\arabic{lemma}}
		\renewcommand{\thedefinition}{S\arabic{definition}}
		\renewcommand{\theproposition}{S\arabic{proposition}}
		\renewcommand{\thecorollary}{S\arabic{corollary}}
		\renewcommand{\theclaim}{S\arabic{claim}}
		\renewcommand{\thepage}{S\arabic{page}}
		\renewcommand{\thefigure}{S\arabic{figure}}
		
		\renewcommand{\theHtheorem}{S\arabic{theorem}}
		\renewcommand{\theHfact}{S\arabic{fact}}
		\renewcommand{\theHlemma}{S\arabic{lemma}}
		\renewcommand{\theHdefinition}{S\arabic{definition}}
		\renewcommand{\theHproposition}{S\arabic{proposition}}
		\renewcommand{\theHcorollary}{S\arabic{corollary}}
		\renewcommand{\theHclaim}{S\arabic{claim}}
		\renewcommand{\theHfigure}{S\arabic{figure}}
		
		\setcounter{theorem}{0}
		\setcounter{fact}{0}
		\setcounter{lemma}{0}
		\setcounter{equation}{0}
		\setcounter{definition}{0}
		\setcounter{proposition}{0}
		\setcounter{claim}{0}
		\setcounter{corollary}{0}
		\setcounter{figure}{0}
		\setcounter{page}{1}
		\setcounter{section}{0}
		\setcounter{equation}{0}
		
		\section{Setup}\label{sec:setup}
		
		We begin by establishing the notation and framework for our self-testing protocols. 
		A self-testing protocol compares a \textit{reference experiment} with a \textit{physical experiment}.
		The reference experiment represents an ideal setup that generates a target probability distribution. 
		The physical experiment corresponds to the actual laboratory implementation, and is treated with minimal assumptions regarding its internal functioning. 
		By comparing the observed statistics of the physical experiment with those of the reference experiment, we can certify properties of the physical implementation.
		
		\subsection{The reference experiment} 
		The reference experiment involves an $n$-qubit target state $\tau$ (or $\Psi$, if pure) distributed across $n$ separated systems $T=T_0T_1\ldots T_{n-1}$. 
		The setup utilizes two types of auxiliary entanglement: $n$ Bell states $\ket{\phi^+} = \frac{1}{\sqrt{2}} (\ket{00} + \ket{11})$ shared between systems $S=S_0S_1\ldots S_{n-1}$ and auxiliary parties $B=B_0B_1\ldots B_{n-1}$, and $n-1$ Bell states shared between adjacent link qubits $R_{l}^{\vartriangleright}$ and $R_{l+1}^{\vartriangleleft}$ for $l \in [n-1]$ (where $[n] \coloneqq \{0, \dots, n-1\}$).
		We define the main party $A_l$ for each $l$ to include the system $T_l$ and $S_l$, as well as the relevant link qubits. Specifically, internal sites ($0 < l < n-1$) contain $\{T_l, S_l, R_{l}^{\vartriangleright}, R_{l}^{\vartriangleleft}\}$, while the boundary sites $l=0$ and $l=n-1$ contain only one link qubit ($R_{0}^{\vartriangleright}$ and $R_{n-1}^{\vartriangleleft}$, respectively).
		The complete setup is illustrated in Fig.~\ref{fig:exp-setup}a.
		
		\begin{figure}[!htbp]
			\includegraphics[width=\linewidth]{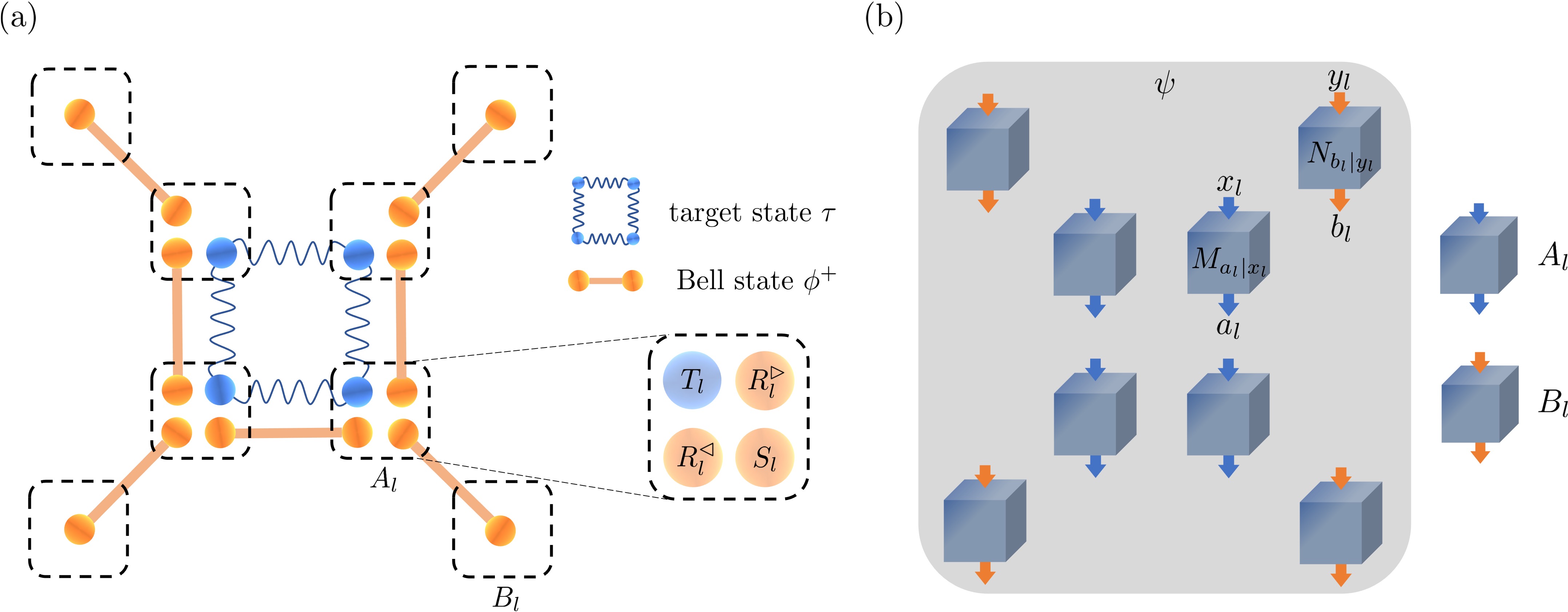}
			\caption{
				Experimental setup.
				(a) Schematic of the reference experiment. The setup involves $n$ main parties $A_0,\dots, A_{n-1}$ and $n$ auxiliary parties $B_0,\dots,B_{n-1}$.
				Each main party $A_l$ contains: (i) one qubit $T_l$ from the $n$-qubit target state $\tau$; (ii) one qubit $S_l$ of a Bell state $\phi^+$ maximally entangled with the auxiliary qubit $B_l$; and (iii) the necessary link qubits $R_{l}^{\vartriangleright}$ and $R_{l}^{\vartriangleleft}$ that form Bell states with adjacent sites. Specifically, one qubit $R_{l}^{\vartriangleright}$ of a Bell state in $A_l$ is maximally entangled with $R_{l+1}^{\vartriangleleft}$ in $A_{l+1}$ for $l \in [n-1]$. The boundary parties $A_0$ and $A_{n-1}$ each contain only one link qubit.
				(b) Schematic of the physical experiment. The setup involves $2n$ spatially separated parties, treated as black boxes with no knowledge of their internal workings. Each party receives an input and returns an output. The whole system is described by an unknown pure state $\psi$.}
			\label{fig:exp-setup}
		\end{figure}

		\begin{remark}
			It is possible to modify the network structure in Fig.~\ref{fig:exp-setup}a according to experimental needs with only minor adjustments to our subsequent analysis. For example, entangled link qubits $R_{l}^{\vartriangleright}$ and $R_{l+1}^{\vartriangleleft}$ could be moved from $A_l$ and $A_{l+1}$ to $B_l$ and $B_{l+1}$ instead. In this work, we always focus on the architecture outlined in Fig.~\ref{fig:exp-setup}a for the sake of concreteness and clarity.
		\end{remark}
		
		\begin{table}[htbp]
			\centering
			\begin{tabular}{c|c|c|c|c}
				\hline
				Input & Reference measurement & Party & POVM  & Observable\\
				\hline
				$x_l\in \{0,1\}$  & $\frac{\sigma_x+(-1)^{x_l}\sigma_z}{\sqrt{2}}_{S_l}$ & $A_l$ & $M^{(l)}_{a_l|x_l}, a_l=0,1$ & $A_{x_l}^{(l)} = \sum_{a_l = 0,1}(-1)^{a_l} M_{a_l|x_l}^{(l)}$ \\ 
				$x_l\in \{2,3\}$  & $\frac{\sigma_x+(-1)^{x_l} \sigma_y}{\sqrt{2}}_{S_l}$ & $A_l$ & $M^{(l)}_{a_l|x_l}, a_l=0, 1$ & $A_{x_l}^{(l)} = \sum_{a_l = 0,1}(-1)^{a_l} M_{a_l|x_l}^{(l)}$ \\
				$x_l\in \{4,5\}$  & $\frac{\sigma_z+(-1)^{x_l} \sigma_y}{\sqrt{2}}_{S_l}$ & $A_l$ & $M^{(l)}_{a_l|x_l}, a_l=0, 1$ & $A_{x_l}^{(l)} = \sum_{a_l = 0,1}(-1)^{a_l} M_{a_l|x_l}^{(l)}$ \\
				$y_l = 0$ & $\sigma_x$ & $B_l$ & $N^{(l)}_{b_l|0}, b_l=0,1$ & $B_{0}^{(l)} = \sum_{b_l = 0,1}(-1)^{b_l} N_{b_l|0}^{(l)}$ \\
				$y_l = 1$ & $\sigma_y$ & $B_l$ & $N^{(l)}_{b_l|1}, b_l=0,1$ & $B_{1}^{(l)} = \sum_{b_l = 0,1}(-1)^{b_l} N_{b_l|1}^{(l)}$\\
				$y_l = 2$ & $\sigma_z$ & $B_l$ & $N^{(l)}_{b_l|2}, b_l=0,1$ & $B_{2}^{(l)} = \sum_{b_l = 0,1}(-1)^{b_l} N_{b_l|2}^{(l)}$ \\
				$x_l =\ \Dr$ & BSM on $R_l^{\vartriangleright}S_{l}$ & $A_l$ & $M^{(l)}_{\vec{a}_l|\Dr}, \vec{a}_l\in \{0,1\}^2$ & 
				\\
				$x_l =\ \Dl$ & BSM on $R_l^{\vartriangleleft}S_{l}$ & 
				$A_l$ & $M^{(l)}_{\vec{a}_l|\Dl}, \vec{a}_l\in \{0,1\}^2$ & 
				\\
				$x_l = \diamond$ & BSM on $T_lS_l$ & $A_l$ & $M^{(l)}_{\vec{a}_l|\diamond}, \vec{a}_l\in \{0,1\}^2$ &  \\
				\hline
			\end{tabular}
			\caption{Experimental setup. For each index $l\in [n]$, the table specifies the measurement settings (inputs) and the ideal implementations for the reference experiment, as well as the corresponding measurement operators and observables in the physical system.}
			\label{tab:experiment_setup}
		\end{table}
		
		The input sets for the main parties $A_l$ and the auxiliary parties $B_l$ are 
		$x_l \in \{0, \dots, 5, \vartriangleleft, \vartriangleright, \diamond\}$ and $y_l \in \{0, 1, 2\}$, respectively. 
		The correspondence between these inputs and the reference measurements is summarized in Table~\ref{tab:experiment_setup}. Below, we briefly describe the role of those measurements.
		
		For inputs $x_l \in \{0,\dots,5\}$ and $y_l \in \{0,1,2\}$, the main party $A_l$ and the auxiliary party $B_l$ perform the CHSH game~\cite{Clauser1969CHSH} on the Bell state $\phi^+$ shared between the systems $S_l$ and $B_l$. 
		Each CHSH game self-tests two of the three measurement settings on the auxiliary system, locking their algebraic properties to those of the Pauli group and ensuring that the shared state is locally isometric to a Bell state. 
		The combined 3-CHSH game robustly self-tests that the auxiliary parties are implementing measurements that, up to complex conjugation, are locally isometric to ideal Pauli measurements~\cite{McKague2012RobustSinglet, Bamps2015SoSDecomposition, Bowles2018DIAllEntangled} on a Bell state. See Sec.~\ref{app:3CHSH} for details.
		
		For inputs $x_l = \diamond$ and $y \in \{0,1,2\}$, the main parties $A_l$ teleport the target state $\tau$ on $T_l$ to $B_l$ via a local BSM. The auxiliary parties $B_l$ then perform Pauli measurements, certified by the 3-CHSH game, to perform self-tested multipartite Pauli measurements on $\tau$. 
		See Sec.~\ref{app:DI_pauli_partial_transpose} for details.
		
		However, the 3-CHSH game only self-tests multipartite Pauli measurements up to any combination of partial-transposes. 
		To overcome this partial-transpose problem, for inputs $x_l \in \{ \vartriangleleft, \vartriangleright\}$ and $y_l \in \{0,1,2\}$, the main parties perform BSMs on $R_{l}^{\vartriangleright}S_l$ and $R_{l+1}^{\vartriangleleft}S_{l+1}$ to teleport the Bell state on the link qubits $R_l^{\vartriangleright}R_{l+1}^{\vartriangleleft}$ to the auxiliary parties $B_lB_{l+1}$. 
		The auxiliary parties then perform their certified Pauli measurements on the teleported Bell state. 
		This procedure resolves the partial transpose problem and is detailed in Sec.~\ref{app:global_tranpose}.
		
		\subsection{The physical experiment}
		The physical experiment involves $2n$ spatially separated parties, denoted as $A_l$ and $B_l$ for $l \in [n]$, as illustrated in Fig.~\ref{fig:exp-setup}b. 
		We assume the devices operate within the framework of quantum mechanics and satisfy no-signaling constraints; yet, the specific state and measurements are uncharacterized. 
		Consequently, each party $A_l$ ($B_l$) is treated as a black box that receives an input $x_l$ ($y_l$) and produces an output $a_l$ ($b_l$). 
		The global system is described by an arbitrary $2n$-partite quantum state $\psi$. 
		Upon receiving input $x_l$ ($y_l$), party $A_l$ ($B_l$) performs the POVM $\{M_{a_l|x_l}^{(l)}\}_{a_l}$ ($\{N_{b_l|y_l}^{(l)}\}_{b_l}$). 
		
		The self-testing framework relies on several standard physical assumptions:
		\begin{enumerate}
			\item Quantum Mechanics: Quantum mechanics provides a complete description of the physical experiment.
			\item Locality: Measurements performed on a local system act trivially on any spatially separated system.
			\item Free Will: Each party possesses the ``free will'' to choose their measurement settings. Consequently, the settings are statistically independent of the source and are distributed according to the probabilities specified by the protocol.
		\end{enumerate}
		Under these assumptions, the joint probability distribution of the measurement outcomes is determined by Born's rule:
		\begin{equation}\label{eq:measurement_prob}
			\Pr(\vec{a}, \vec{b} | \vec{x}, \vec{y}) = \tr \left[ \left( \bigotimes_{l=0}^{n-1} M_{a_l|x_l}^{(l)} \otimes N_{b_l|y_l}^{(l)} \right) \rho \right].
		\end{equation}

		We further make two standard assumptions without loss of generality. 
		First, we assume the underlying quantum state $\psi = \ketbra{\psi}$ is a pure state. 
		This is justified because any mixed state can be purified by introducing an (inaccessible) environment $R$.
		Second, we assume all measurements are projective, meaning the POVM elements $M_{a_l|x_l}^{(l)}$ and $N_{b_l|y_l}^{(l)}$ are projectors. 
		This is valid because any general POVM can be implemented equivalently by a projective measurement on an enlarged Hilbert space by Naimark's dilation theorem.
		For projective measurements with binary outcomes ($a_l,b_l \in \{0,1\}$), we define the relevant dichotomic observable $A_{x_l}^{(l)} \coloneqq \sum_{a_l = 0,1}(-1)^{a_l} M_{a_l|x_l}^{(l)}$ and $B_{y_l}^{(l)} = \sum_{b_l = 0,1}(-1)^{b_l} N_{b_l|y_l}^{(l)}$. 
		The notations are summarized in Table~\ref{tab:experiment_setup}.

		For later convenience, we define 
		\begin{equation}\label{eq:Pauli-notation-definition}
			\begin{split}
				X_A^{ (l)} &= \frac{A_0^{(l)} + A_1^{(l)}}{\sqrt{2}} \quad\quad  X_B^{ (l)} = B_0^{(l)} \\
				Y_A^{ (l)} &= \frac{A_2^{(l)} - A_3^{(l)}}{\sqrt{2}} \quad\quad  Y_B^{ (l)} = B_1^{(l)}\\
				Z_A^{ (l)} &= \frac{A_0^{(l)} - A_1^{(l)}}{\sqrt{2}} \quad\quad  Z_B^{ (l)} = B_2^{(l)}
			\end{split}
		\end{equation}
		which become three Pauli matrices $\sigma_X, \sigma_Y$ and $\sigma_Z$ in the reference experiment.
		
		For simplicity, we initially assume that the physical experiment is repeated in an i.i.d.\ manner—that is, the state and measurement operators remain the same across rounds. 
		This i.i.d.\ assumption will be removed in Section~\ref{sec:non-iid}.
		
		\section{Protocol overview}\label{sec:protocol-overview}
		In this section, we provide an overview of the DI protocols employed in this work. We begin by introducing the estimators used to characterize the physical experiment based on the observed statistics $\Pr(\vec{a}, \vec{b} | \vec{x}, \vec{y})$ (Section~\ref{subsec:overview-estimators}). Next, we describe the measurement protocols designed to approximate these estimators from finite experimental samples (Section~\ref{subsec:overview-sampling}). Finally, we discuss how these estimations facilitate the self-testing of quantum states and the DI certification of generic physical properties (Section~\ref{subsec:overview-certification}).
		
		\subsection{Self-testing multipartite Pauli measurement schemes with infinite samples}\label{subsec:overview-estimators}
		An overview of the estimators employed is provided in Box~\ref{box:ProtocolEstimatorsInformal}. Our primary goal is to implement a multipartite Pauli measurement scheme, defined by the observable $L$ in Eq.~\eqref{eq:L-def}, in a DI manner.
		
		The first two estimators, $\vI$ and $\vII$, are independent of the specific observable $L$ and serve to self-test the underlying measurement and teleportation apparatus of the quantum network:
		\begin{enumerate}[label=(\Roman*)]
			\item A near-zero value for $\vI$ guarantees that parties $A_l$ and $B_l$ approximately share Bell states and that their \enquote{black boxes} implement the Pauli measurement scheme up to a local transpose. This estimator is analyzed in detail in Appendix~\ref{app:3CHSH}.
			
			\item While a vanishing $\vI$ ensures certification of Pauli measurements up to local transpose, a technical challenge remains: correlating these local transposes to ensure the Pauli measurements are consistent up to a single, global conjugation. This issue is resolved if $\vII$ is close to zero in the physical experiment, as discussed in Appendix~\ref{app:global_tranpose}.
		\end{enumerate}
		
		Conditional on $\vI, \vII \approx 0$, we can execute any certification protocol based on local Pauli measurements. 
		The expectation value $\tr(L\rho)$ is then estimated by $\vIII$, valid up to a global conjugation and a local extraction map $\Lambda$. 
		The challenge of estimating $\tr(L\rho)$ relative to a global conjugation, rather than a mixture of local transposes, is resolved in Appendix~\ref{app:global_tranpose}.
		
		An additional technical challenge is refining the extraction map $\Lambda$. We show that this map can be simplified from a map that acts on both main and auxiliary parties and involves classical communication to a product of local maps that act only on the main parties $A_l$. This refinement is shown in Appendix~\ref{subsec:refinement-extraction-channel}.
		
		\begin{mybox}[label={box:ProtocolEstimatorsInformal}]{Estimation protocols (informal version of Box~\ref{box:ProtocolEstimators})}
			\begin{enumerate}[label=(\Roman*)]
				\item \textbf{Goal:} Approximately verify that each pair $(A_l, B_l)$ shares a Bell state and that, for $P \in \{X, Y, Z\}$, the observables $P_A^{(l)}$ and $P_B^{(l)}$ (as defined in \eqref{eq:Pauli-notation-definition}) implement Pauli matrices up to a local transpose.
				
				\textbf{Method:} Verify that the observed statistics $p_l(a_l, b_l | x_l, y_l)$ approximately minimize three independent CHSH inequalities (indexed by $k$) of the form:
				\begin{equation}
					\vI(l,k) = 2\sqrt{2} - \sum_{i,j\in \{0,1\}}\sum_{a_l,b_l\in \{0,1\}}p_l(a_l,b_l|x(i,k),y(j,k))(-1)^{ij}(-1)^{a_l+b_l}\quad
					\begin{cases}
						\geq 0, \quad \mathrm{Physical~Experiment}\\
						= 0, \quad \mathrm{Reference~Experiment}
					\end{cases}\label{eq:CHSH-informal}
				\end{equation}
				for a suitable choice of functions $x: i,k \mapsto x(i,k)\in \{0,1,2,3,4,5\},\ y: i,k \mapsto y(i,k)\in \{0,1,2\}$.
				
				\item \textbf{Goal:}  Correlate the local transposes across different systems $l \in [n]$ to ensure a consistent global transpose.
				
				\textbf{Method:} Use quantum teleportation on the main systems to entangle the auxiliary systems $B_l$ and $B_{l+1}$, then verify that the measurement statistics approximately minimize the expectation
				\begin{equation}
					\vII(l,l+1) = 1 - \frac13\mathbb{E}[K]\quad\begin{cases}
						\geq 0, \quad \mathrm{Physical~Experiment}\\
						= 0, \quad \mathrm{Reference~Experiment}
					\end{cases}
				\end{equation}
				for the stabilizer-like operator $K = X_B^{(l)}X_B^{(l+1)} - Y_B^{(l)}Y_B^{(l+1)}+Z_B^{(l)}Z_B^{(l+1)}$.
				
				\item\textbf{Goal:} Estimate the expectation value $\tr(L\psi)$ up to a global conjugation and local extraction channels.
				
				\textbf{Method:} Teleport the state $\psi$ from the main systems to the auxiliary systems. The value $\tr(L\psi)$ is then estimated via local Pauli measurements according to the distribution $\mathcal{D}$ and the post-processing function $\omega(\vec{b}|\vec{P})$:
				\begin{equation}
					\vIII = \mathbb{E}[\omega(\vec{b}|\vec{P})]\quad\begin{cases}
						\approx \tr[\left(L\otimes\ketbra{0}^{\otimes n}+L^*\otimes\ketbra{1}^{\otimes n} \right)\Lambda(\rho)], \quad \mathrm{Physical~Experiment~(given~} \vI,\vII\approx 0 \mathrm{)}\\
						= \tr(L\psi), \quad \mathrm{Reference~Experiment}
					\end{cases}.
				\end{equation}
			\end{enumerate}
		\end{mybox}
		
		\subsection{Self-testing multipartite Pauli measurement schemes with finite samples}\label{subsec:overview-sampling}
		
		In Fig.~\ref{fig:protocol-overview}, we sketch the procedure for estimating $\vI, \vII$, and $\vIII$ from finite measurement samples. 
		A more comprehensive technical discussion of these protocols is provided in Appendix~\ref{sec:sample_comp}, and Protocols~\ref{prot:di-witness-local-randomness} and \ref{prot:di-witness-shared-randomness}.
		
		\begin{figure}[!htbp]
			\centering
			
			\includegraphics[width = 0.7\textwidth]{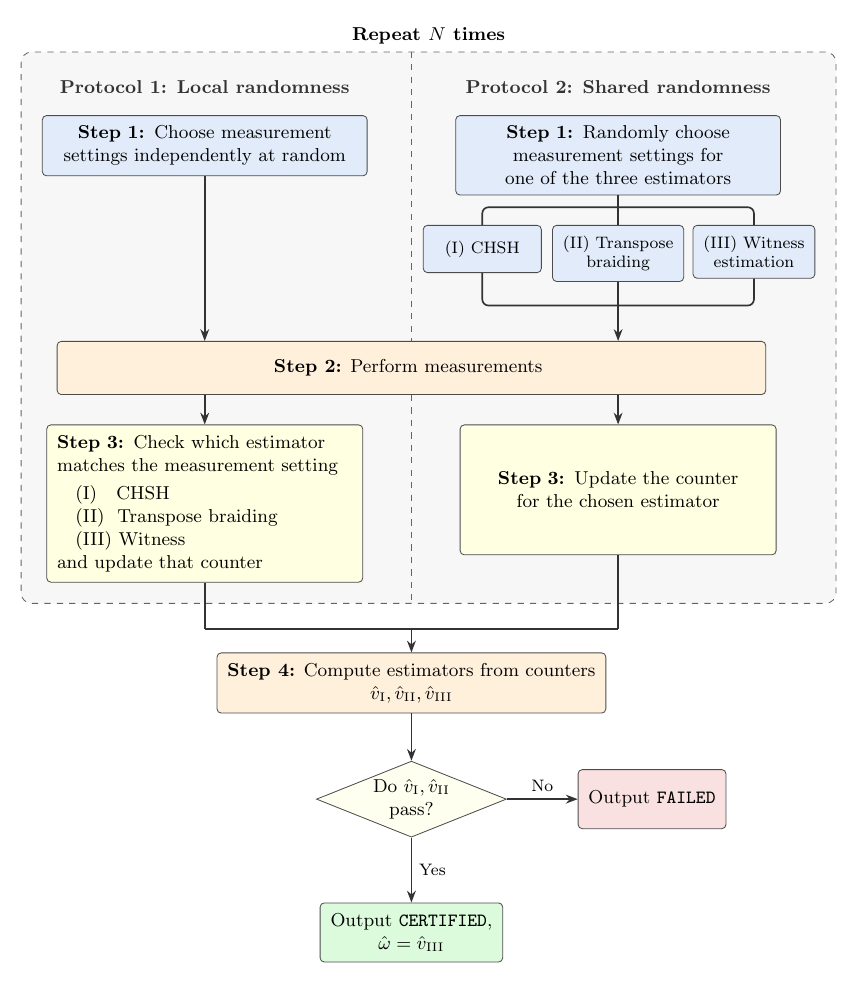}
			\caption{Overview of our DI self-testing protocols (local randomness on the left, shared randomness on the right). 
				With shared randomness, the parties can select an estimator in advance (see Box~\ref{box:ProtocolEstimatorsInformal}) and sample only those measurement settings that contribute to it. 
				With only local randomness, the estimator cannot be coordinated a priori, so Step~3 performs a sifting procedure to choose an estimator relevant to the measurement settings.}
			\label{fig:protocol-overview}
		\end{figure}
		
		The primary objective of these protocols is to estimate the true values of the estimators defined in Box~\ref{box:ProtocolEstimatorsInformal} to a specified precision. This required precision is derived in Theorem~\ref{thm:DI-estimation-L}, which constitutes one of the main results of this work. 
		Once the target precision is established, we perform a statistical analysis to derive upper bounds on the corresponding sample complexity and the total number of measurement rounds required. Here, we consider two self-testing scenarios:
		\begin{enumerate}
			\item Local Randomness: Measurement settings are chosen independently by each spatially separated party (Protocol~\ref{prot:di-witness-local-randomness}).
			\item Shared Randomness: Measurement settings are sampled from a joint distribution, enabled by shared randomness between the parties (Protocol~\ref{prot:di-witness-shared-randomness}).
		\end{enumerate}
		
		We distinguish between these two cases because the lack of shared randomness increases the sample complexity by a factor of $n$. 
		This overhead stems from the requirement to choose uncorrelated local strategies while ensuring that the highly correlated setting necessary for teleportation ($x_l = \diamond$ for all $l$, as shown in Table~\ref{tab:experiment_setup}) is sampled with sufficient frequency.

		\subsection{Robust self-testing of quantum states}\label{subsec:overview-certification}
		
		Finally, we employ the estimator $\hvIII$, obtained via the protocols illustrated in Fig.~\ref{fig:protocol-overview}, to approximate the expectation value $\tr(L\rho)$. Provided that $L$ serves as an effective witness for a specific property, this quantity provides a DI certification of that same property.
		
		This certification is robustly sound, meaning that any state with a trace distance to the target state exceeding a small threshold $\varepsilon$ will be rejected with high probability. 
		Furthermore, it is robustly complete: if the physical experiment sufficiently approximates the reference experiment, the protocol will successfully certify the state with high probability. 
		Formal definitions of soundness and completeness, along with the corresponding proofs, are provided in Appendices~\ref{app:DI-multipartite-Pauli} and~\ref{app:self-testing-fin-samples}.
		
		As a concrete application and a major contribution of this work, we apply our protocols to a multipartite Pauli measurement scheme for fidelity certification, as introduced in Refs.~\cite{Huang2024Certifying, du2025certifyinglocalizablequantumproperties}. Our results, presented in Corollary~\ref{col:self-testing-Haar-random-states}, establish a robust and sample-efficient self-testing scheme for almost all pure quantum states $\Psi$. This represents an exponential improvement over previous protocols~\cite{Supic2023NetworkSelftest, Balanzo2024PureMultipartiteEntangled}.
		
		\section{Robust self-testing of Bell states and Pauli measurements}\label{app:3CHSH}
		In this section, we present a robust analysis for self-testing Bell states and Pauli measurements using three independent CHSH inequalities. Specifically, the self-testing is performed by constructing three CHSH inequalities, each utilizing a different pair of measurement bases drawn from the Pauli set $\{\sigma_x, \sigma_y, \sigma_z\}$~\cite{Bowles2018DIAllEntangled}.
		
		Let $l \in [n]$ denote the location and $k \in \{0, 1, 2\}$ index the specific CHSH test performed. The measurement settings (inputs) $(x_l(0,k), x_l(1,k))$ for party $A_l$ and $(y_l(0,k), y_l(1,k))$ for party $B_l$ are specified by:
		\begin{equation}\label{eq:3-CSHS-inputs}
			(x_l(0,k), x_l(1,k), y_l(0,k), y_l(1,k)) = 
			\begin{cases}
				(0, 1, 0, 2) & \text{if } k=0, \\
				(3, 2, 0, 1) & \text{if } k=1, \\
				(5, 4, 2, 1) & \text{if } k=2.
			\end{cases}
		\end{equation}
		Let $a_l(i,k) \in \{0, 1\}$ and $b_l(j,k) \in \{0, 1\}$ denote the measurement outcomes for party $A_l$ when input $x_l(i,k)$ is chosen, and for party $B_l$ when input $y_l(j,k)$ is chosen, respectively, where $i, j \in \{0, 1\}$.
		The CHSH violation $v_\mathrm{I}(l,k)$ is defined as the distance from the quantum maximum:
		\begin{equation}\label{eq:CHSHestimator}
			v_\mathrm{I}(l,k) = 2\sqrt{2}-\sum_{ij = 0,1}(-1)^{ij}\mathbb{E}[(-1)^{a_l(i,k)+b_l(j,k)}\mid x_l=x_l(i,k),y_l=y_l(j,k)].
		\end{equation}
		
		The original CHSH inequality establishes that $v_\mathrm{I}(l,k) \ge 0$. Crucially, the maximum quantum violation, $v_\mathrm{I}(l,k)=0$, is achieved if and only if the state and measurements are equivalent (up to a local isometry) to a Bell state and the optimal CHSH measurements on that state.
		This equivalence is also robust. Specifically, a small violation $v_\mathrm{I}(l,k) \approx 0$ implies that the underlying state and measurements are approximately equivalent to the ideal Bell state and optimal measurements~\cite {Bamps2015SoSDecomposition}.
		We provide a self-contained analysis of this robust self-testing procedure here.
		
		Below, for two scalars or vectors $v$ and $w$, we use the notation $v \approx_{\epsilon} w$ to denote $\norm{v-w} = \cO(\epsilon)$.
		
		\begin{lemma}[Self-testing operator relations from CHSH game]\label{lemma:chsh-implications}
			Suppose for $l\in [n],\ \forall k\in \{0,1,2\}$, we have $v_\mathrm{I}(l,k)\le\epsilon$. Then:
			\begin{enumerate}
				\item Teleportation: $\forall P\in\{X,Y,Z\}$,
				\begin{equation}\label{eq:Bell-stat-X-YZ}
					\norm{\left(P^{(l)}_{A}-(-1)^{\delta_{PY}} P_{B}^{(l)}\right)\ket{\psi}} = \cO(\sqrt{\epsilon})  
				\end{equation}
				\item Anti-commutation: $\forall C\in \{A,B\}, \forall P,Q \in \{X,Y,Z\}$,
				\begin{equation}\label{eq:anti-comm}
					\norm{\left(\{P^{(l)}_{C}, Q^{(l)}_{C}\}-2\delta_{P,Q}\mathbb{I}\right)\ket{\psi}} = \cO(\sqrt{\epsilon})
				\end{equation}
				\item\label{bullet:unitary-extension} The above two claims remain valid if we replace the operators $P_C^{(l)}$ and $Q_C^{(l)}$ by their unitary versions $\hat{P}_C^{(l)}$ and $\hat{Q}_C^{(l)}$. For a given Hermitian operator $O$, its unitary version $\hat{O}$ is defined by replacing non-zero eigenvalues of $O$ with their sign and zero eigenvalues by one. 
			\end{enumerate}
		\end{lemma}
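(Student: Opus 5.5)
The plan is to use the standard sum-of-squares (SOS) decomposition of the CHSH operator, separately for each of the three tests $k\in\{0,1,2\}$. The resulting approximate relations "main-party observable $\approx$ combination of auxiliary observables on $\ket{\psi}$" are then pushed through the definitions in Eq.~\eqref{eq:Pauli-notation-definition}. Throughout, the operators of $A_l$ and $B_l$ act on different tensor factors and therefore commute. All observables are $\pm1$-valued, $(A_x^{(l)})^2=(B_y^{(l)})^2=\mathbb{I}$, because measurements are projective. Operator norms are therefore bounded by small constants ($\le\sqrt2$ for the combinations in Eq.~\eqref{eq:Pauli-notation-definition}), so $\approx_{\sqrt\epsilon}$ relations survive multiplication by any of these operators.

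\textbf{Step 1 (SOS).} For $k=0$, write $W_0=A_0(B_0+B_2)+A_1(B_0-B_2)$, dropping the label $l$. Using only $A_x^2=B_y^2=\mathbb{I}$, one checks
\begin{equation*}
2\sqrt2\,\mathbb{I}-W_0=\tfrac{1}{\sqrt2}\Bigl[\bigl(A_0-\tfrac{B_0+B_2}{\sqrt2}\bigr)^2+\bigl(A_1-\tfrac{B_0-B_2}{\sqrt2}\bigr)^2\Bigr].
\end{equation*}
Since $\bra{\psi}(2\sqrt2-W_0)\ket{\psi}=v_\mathrm{I}(l,0)\le\epsilon$, this gives $A_0\ket\psi\approx_{\sqrt\epsilon}\tfrac{B_0+B_2}{\sqrt2}\ket\psi$ and $A_1\ket\psi\approx_{\sqrt\epsilon}\tfrac{B_0-B_2}{\sqrt2}\ket\psi$. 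The symmetric SOS, with the roles of the two parties exchanged, gives $B_0\ket\psi\approx\tfrac{A_0+A_1}{\sqrt2}\ket\psi$ and $B_2\ket\psi\approx\tfrac{A_0-A_1}{\sqrt2}\ket\psi$.

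The same computation applies to the other two tests. For $k=1$, with inputs $(3,2,0,1)$, it yields $A_3\approx\tfrac{B_0+B_1}{\sqrt2}$ and $A_2\approx\tfrac{B_0-B_1}{\sqrt2}$ on $\ket{\psi}$. For $k=2$, with inputs $(5,4,2,1)$, it yields $A_5\approx\tfrac{B_2+B_1}{\sqrt2}$ and $A_4\approx\tfrac{B_2-B_1}{\sqrt2}$.

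Adding and subtracting these relations gives $X_A\ket\psi\approx X_B\ket\psi$ and $Z_A\ket\psi\approx Z_B\ket\psi$. It also gives $Y_A\ket\psi=\tfrac{A_2-A_3}{\sqrt2}\ket\psi\approx -B_1\ket\psi=-Y_B\ket\psi$. This is exactly the claimed teleportation relation with the sign $(-1)^{\delta_{PY}}$.

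\textbf{Step 2 (anti-commutation).} I first treat the auxiliary side, whose observables are honest $\pm1$ operators. From Step~1, $(B_0+B_2)\ket\psi\approx\sqrt2A_0\ket\psi$. Applying $B_0+B_2$ once more and commuting it past $A_0$ gives
\begin{equation*}
(B_0+B_2)^2\ket\psi\approx\sqrt2A_0(B_0+B_2)\ket\psi\approx 2A_0^2\ket\psi=2\ket\psi.
\end{equation*}
Since $(B_0+B_2)^2=2\mathbb{I}+\{B_0,B_2\}$, this shows $\{X_B,Z_B\}\ket\psi\approx0$. The pairs $(X_B,Y_B)$ and $(Z_B,Y_B)$ follow in the same way from the $k=1$ and $k=2$ tests, and the case $P=Q$ is trivial.

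For the main party, the relations are transferred through Step~1. For example,
\begin{equation*}
X_AZ_A\ket\psi\approx X_AZ_B\ket\psi=Z_BX_A\ket\psi\approx Z_BX_B\ket\psi,
\end{equation*}
and similarly $Z_AX_A\ket\psi\approx X_BZ_B\ket\psi$. Hence $\{X_A,Z_A\}\ket\psi\approx\{X_B,Z_B\}\ket\psi\approx0$. Likewise $P_A^2\ket\psi\approx P_B^2\ket\psi=\ket\psi$; here the sign of $Y$ appears twice and cancels. For a mixed pair such as $(X,Y)$, the sign only multiplies the whole anticommutator, so the bound is unaffected.

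\textbf{Step 3 (unitary versions).} The $B$ observables are already unitary. For a Hermitian $O=P_A^{(l)}$, I use the scalar inequality
\begin{equation*}
|\lambda-\mathrm{sign}(\lambda)|=\bigl||\lambda|-1\bigr|\le\bigl||\lambda|-1\bigr|(|\lambda|+1)=|\lambda^2-1|,
\end{equation*}
which also holds at $\lambda=0$ with the convention $\hat O=1$ there. Applied spectrally, it gives $\norm{(\hat O-O)\ket\psi}\le\norm{(O^2-\mathbb{I})\ket\psi}=\cO(\sqrt\epsilon)$ by Step~2 with $P=Q$. Substituting $\hat O$ for $O$ in Steps~1 and~2 then costs only triangle inequalities and the bound $\norm{\hat O}=1$.

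I expect the main obstacle to be Step~2 on the main-party side, together with the bookkeeping in Step~3. Every replacement of $P_A$ by $\pm P_B$ must be made directly on $\ket\psi$, since approximations hold only in the state-dependent norm. Consequently, the operators that are moved must commute with the one being replaced, which is guaranteed because they act on the other party. In addition, the constants must be tracked so that each step costs only $\cO(\sqrt\epsilon)$ with a uniform constant.
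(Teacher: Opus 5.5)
Your proposal is correct and follows essentially the same route as the paper: the sum-of-squares decomposition of each CHSH operator, the transfer of the resulting relations between $A_l$ and $B_l$ directly on $\ket{\psi}$ (using that the two parties' operators commute), and the spectral bound $\norm{(\hat O - O)\ket{\psi}}\le\norm{(O^2-\mathbb{I})\ket{\psi}}$ for the unitary versions. One point in your favour is that routing the main-party anticommutators through $B$ also covers the pairs $(X,Y)$ and $(Y,Z)$, whose main-party operators come from different CHSH tests, whereas the paper's remark that the combinations ``anti-commute by definition'' literally establishes only $\{X_A,Z_A\}=0$.
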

		
		\begin{proof}
			For the first claim, we utilize the standard identity, decomposing the CHSH operator to a sum of squared terms~\cite{Bamps2015SoSDecomposition}.
			For each setting $k \in \{0, 1, 2\}$, define the dichotomic observables $\Tilde{A}_i$ on system $A$ and $\Tilde{B}_j$ on system $B$ as:
			\begin{equation}
				(\Tilde{A}_0, \Tilde{A}_1, \Tilde{B}_0, \Tilde{B}_1) =
				\begin{cases}
					(A^{(l)}_0, A^{(l)}_1, B^{(l)}_0, B^{(l)}_2) \quad &\mathrm{if}\ k=0,\\
					(A^{(l)}_3, A^{(l)}_2, B^{(l)}_0, B^{(l)}_1) \quad &\mathrm{if}\ k=1,\\
					(A^{(l)}_5, A^{(l)}_4, B^{(l)}_2, B^{(l)}_1) \quad &\mathrm{if}\ k=2.
				\end{cases}\label{eq:tilde-ops-def}
			\end{equation}
			The CHSH operator $\mathcal{B}_k$ for these settings is $\mathcal{B}_k = \sum_{i,j = 0,1}(-1)^{ij} \Tilde{A}_i\otimes \Tilde{B}_j$. It is known that the quantity $2\sqrt{2}\mathbb{I} - \mathcal{B}_k$ can be quadratically decomposed:
			\begin{equation}\label{eq:CHSH-quadratic}
				2\sqrt{2}\mathbb{I}-\mathcal{B}_k = \frac{1}{\sqrt{2}}\left[\left(\Tilde{B}_0-\frac{\Tilde{A}_0+\Tilde{A}_1}{\sqrt{2}}\right)^2+\left(\Tilde{B}_1-\frac{\Tilde{A}_0-\Tilde{A}_1}{\sqrt{2}}\right)^2\right].
			\end{equation}
			The condition $v_{\mathrm{I}}(l,k)\le \epsilon$, where $v_{\mathrm{I}}(l,k) = \tr[(2\sqrt{2}\mathbb{I}-\mathcal{B}_k) \psi]$, implies that the expectation value of the non-negative operator on the right-hand side is small:
			\begin{equation}
				\tr\left[\left(\Tilde{B}_i-\frac{\Tilde{A}_0+(-1)^i\Tilde{A}_1}{\sqrt{2}}\right)^2 \psi\right] \le \sqrt{2}\epsilon \quad \text{for } i\in \{0,1\}.
			\end{equation}
			Therefore, 
			\begin{equation}\label{eq:P1-approx-on-rho}
				\norm{\left(\Tilde{B}_i-\frac{\Tilde{A}_0+(-1)^i\Tilde{A}_1}{\sqrt{2}}\right)\ket{\psi}} = \cO(\sqrt{\epsilon}).
			\end{equation}
			The first claim is obtained by substituting the specific assignments: (1) for $P=X$, set $k=0$ and $i=0$; (2) for $P=Y$, set $k=1$ and $i=1$; (3) for $P=Z$, set $k=0$ and $i=1$.
			
			For the second claim, when $C=B$ and $P=Q$, since the operators $\Tilde{B}_i$ are dichotomic observables resulting from projectors, they satisfy $\Tilde{B}_i^2 = \mathbb{I}$, so the claim holds exactly (i.e., with $\epsilon=0$).
			
			When $C=A$ and $P=Q$, 
			\begin{equation}\label{eq:squared-to-one-AfromB}
				\begin{split}
					\norm{\ket{\psi}-\left(\frac{\Tilde{A}_0+(-1)^i\Tilde{A}_1}{\sqrt{2}}\right)^2\ket{\psi}}
					&= \norm{\Tilde{B}_i^2\ket{\psi}-\Tilde{B}_i\frac{\Tilde{A}_0+(-1)^i\Tilde{A}_1}{\sqrt{2}}\ket{\psi}+\Tilde{B}_i\frac{\Tilde{A}_0+(-1)^i\Tilde{A}_1}{\sqrt{2}}\ket{\psi}-\left(\frac{\Tilde{A}_0+(-1)^i\Tilde{A}_1}{\sqrt{2}}\right)^2\ket{\psi}} \\
					&\le \norm{\Tilde{B}_i\left(\Tilde{B}_i-\frac{\Tilde{A}_0+(-1)^i\Tilde{A}_1}{\sqrt{2}}\right)\ket{\psi}} + \norm{\frac{\Tilde{A}_0+(-1)^i\Tilde{A}_1}{\sqrt{2}}\left(\Tilde{B}_i-\frac{\Tilde{A}_0+(-1)^i\Tilde{A}_1}{\sqrt{2}}\right)\ket{\psi}} \\
					& = \cO(\sqrt{\epsilon}).
				\end{split}
			\end{equation}
			The final step follows from Eq.~\eqref{eq:P1-approx-on-rho}.
			
			When $C=A$ and $P\neq Q$, the operators $\frac{\Tilde{A}_0\pm\Tilde{A}_1}{\sqrt{2}}$ anti-commute by definition.
			
			When $C=B$ and $P\neq Q$, 
			\begin{equation}
				\begin{split}\label{eq:B-anticomm-from-A}
					\norm{\{\Tilde{B}_0,\Tilde{B}_1\}\ket{\psi}} &= \norm{\left(\{\Tilde{B}_0,\Tilde{B}_1\} - \left\{\frac{\Tilde{A}_0+\Tilde{A}_1}{\sqrt{2}},\frac{\Tilde{A}_0-\Tilde{A}_1}{\sqrt{2}}\right\}\right)\ket{\psi}} \\
					&= \Bigg\Vert \Bigg[
					\Tilde{B}_1\left(\Tilde{B}_0 - \frac{\Tilde{A}_0+\Tilde{A}_1}{\sqrt{2}}\right) + \Tilde{B}_0\left(\Tilde{B}_1 - \frac{\Tilde{A}_0-\Tilde{A}_1}{\sqrt{2}}\right) \\
					&\quad\quad+ \frac{\Tilde{A}_0+\Tilde{A}_1}{\sqrt{2}}\left(\Tilde{B}_1 - \frac{\Tilde{A}_0-\Tilde{A}_1}{\sqrt{2}}\right) 
					+ \frac{\Tilde{A}_0-\Tilde{A}_1}{\sqrt{2}}\left(\Tilde{B}_0 - \frac{\Tilde{A}_0+\Tilde{A}_1}{\sqrt{2}}\right)\Bigg]\ket{\psi} \Bigg\Vert  \\
					& = \cO(\sqrt{\epsilon}).
				\end{split}
			\end{equation}
			The final step follows from combining the triangular inequality and Eq.~\eqref{eq:P1-approx-on-rho}. Substituting the definitions for $k\in\{0,1,2\}$ proves this final case of the second claim.
			
			For the third claim, we first establish the closeness of $P_C^{(l)}$ and $\hat{P}_C^{(l)}$:
			\begin{equation}\label{eq:raw-and-unitarypart-close-on-psi-claim}
				\begin{split}
					\norm{\left(\hat{P}_C^{(l)}-P_C^{(l)}\right)\ket{\psi}}&=\norm{\left(\mathbb{I}-(\hat{P}_C^{(l)})^{\dag}P_C^{(l)}\right)\ket{\psi}}\\
					&=\norm{\left(\mathbb{I}-\abs{P_C^{(l)}}\right)\ket{\psi}}\\
					&\leq \norm{\left(\mathbb{I}+\abs{P_C^{(l)}|}\right)\left(\mathbb{I}-\abs{P_C^{(l)}}\right)\ket{\psi}}\\
					&=\norm{\left(\mathbb{I}-P_C^{(l) 2}\right)\ket{\psi}} \\
					&= \cO(\sqrt{\epsilon}).
				\end{split}
			\end{equation}
			The last line follows from Eq.~\eqref{eq:anti-comm}. This closeness relation allows us to replace $P_C^{(l)}$ with $\hat{P}_C^{(l)}$ in the above two claims while incurring an error of $\cO(\sqrt{\epsilon})$.
			For example, 
			\begin{equation}
				\begin{split}
					\norm{\{\hat{X}_A^{(l)}, \hat{Z}_A^{(l)}\} \ket{\psi}} &= \norm{(\hat{X}_A^{(l)}\hat{Z}_A^{(l)} + \hat{Z}_A^{(l)}\hat{X}_A^{(l)}) \ket{\psi}} \\
					&\approx_{\sqrt{\epsilon}} \norm{(\hat{X}_A^{(l)}Z_A^{(l)} + \hat{Z}_A^{(l)}X_A^{(l)}) \ket{\psi}} \\
					&\approx_{\sqrt{\epsilon}} \norm{(\hat{X}_A^{(l)}Z_B^{(l)} + \hat{Z}_A^{(l)}X_B^{(l)}) \ket{\psi}} \\
					&\approx_{\sqrt{\epsilon}} \norm{(X_A^{(l)}Z_B^{(l)} + Z_A^{(l)}X_B^{(l)}) \ket{\psi}} \\
					&\approx_{\sqrt{\epsilon}} \norm{(X_A^{(l)}Z_A^{(l)} + Z_A^{(l)}X_A^{(l)}) \ket{\psi}} \\
					&=\cO(\sqrt{\epsilon}).
				\end{split}
			\end{equation}
			The second and fourth lines use Eq.~\eqref{eq:raw-and-unitarypart-close-on-psi-claim}, the third and fifth lines use Eq.~\eqref{eq:Bell-stat-X-YZ}, and the last line uses Eq.~\eqref{eq:anti-comm}. 
			This completes the proof.
		\end{proof}
		
		Moreover, we show that not only do the measurement operators $\{\hat{P}_C\}$ satisfy the same operator relations as the Pauli operators when acting on $\ket{\psi}$, but the existence of these relations guarantees the existence of a local isometry $V_{A_l} \otimes V_{B_l}$. This isometry acts on the physical systems $A_l$ and $B_l$ to extract a Bell state $\ket{\phi^+}$. Under the action of this isometry, the physical operators $\hat{P}_C^{(l)}$ are mapped to the standard Pauli operators on the extracted Bell state.
		This equivalence is subject to a local-transpose freedom for the $\hat{Y}_C^{(l)}$ operator, which manifests as a tensor factor of $\sigma_Z$ (Eq.~\eqref{eq:intertwiner-claim}). 
		Importantly, this local freedom is correlated across both $A$ and $B$ systems, as the resulting ancillary state $\ket{\xi}$ in the following lemma only has $\ket{00}$ and $\ket{11}$ components on the auxiliary registers $A''B''$.
		
		\begin{lemma}[Local isometry]\label{lemma:pauli-upto-conj}
			Let $\ket{\psi}$ be a pure state on $\mathcal{H}_{ABR}$, and let $\hat{X}_C, \hat{Y}_C, \hat{Z}_C$ for $C \in \{A,B\}$ be unitary operators acting on $\mathcal{H}_C$. Suppose for $\epsilon > 0$ and all $C \in \{A,B\}, P,Q\in \{X,Y,Z\}$, the following conditions hold:
			\begin{align}
				\norm{\left(\hat{P}_A-(-1)^{\delta_{PY}} \hat{P}_B\right) \ket{\psi}}&\le \epsilon,  \label{eq:P1-system-indep-eps}\\
				\norm{\left(\{\hat{P}_C,\hat{Q}_C\}-2\delta_{P,Q}\mathbb{I}\right)\ket{\psi}} &\le\epsilon \label{eq:P2-Clifford-property-eps}.
			\end{align}
			Then, for two-dimensional qubit spaces $\mathcal{H}_{C'}$ and $\mathcal{H}_{C''}$, there exist local isometries $V_C: \mathcal{H}_C \to \mathcal{H}_{C}\otimes\mathcal{H}_{C'}\otimes\mathcal{H}_{C''}$ that satisfy the following properties:
			\begin{enumerate}
				\item Bell state extraction:
				\begin{equation}\label{eq:image-under-isometry}
					V_A\otimes V_B \ket{\psi} \approx_{\epsilon} \ket{\phi^+}_{A'B'} \ket{\xi}_{ABRA''B''}, 
				\end{equation}
				where $\ket{\xi} = \frac{1}{2\sqrt{2}}\left[\ket{00}_{A''B''}(\mathbb{I}+i\hat{Y}_A\hat{X}_A)+\ket{11}_{A''B''}(\mathbb{I}-i\hat{Y}_A\hat{X}_A)\right](\mathbb{I}+\hat{Z}_A)\ket{\psi}_{ABR}$.
				\item Pauli matrices up to local transpose:
				\begin{equation}\label{eq:intertwiner-claim}
					V_C \hat{P}_C \ket{\psi} \approx_{\epsilon} \left[(\sigma_{P})_{C'}\otimes (\sigma_Z)_{C''}^{\delta_{PY}}\right]V_C \ket{\psi}
				\end{equation}
			\end{enumerate}
		\end{lemma}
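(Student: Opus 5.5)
I will construct the isometries explicitly, in the style of the standard SWAP / partial-swap circuit used for CHSH self-testing with complex measurements (as in Bowles et al.\ and McKague). The explicit form of $\ket{\xi}$ in the statement tells me what the circuit should be.

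\textbf{Construction of $V_C$.} For each $C\in\{A,B\}$:
\begin{enumerate}
\item Append $\ket{0}_{C'}$ and apply the usual swap gadget built from $\hat Z_C,\hat X_C$. That is, a Hadamard on $C'$, a controlled-$\hat Z_C$, another Hadamard, then a controlled-$\hat X_C$.
\item Append $\ket{0}_{C''}$ and apply a second gadget that decides the ``conjugation sector.'' Concretely, apply a Hadamard on $C''$ and then a controlled unitary that applies $i\hat Y_C\hat X_C\hat Z_C$ (or the analogous combination) on the $\ket{1}$ branch, followed by a Hadamard.
\end{enumerate}
This projects onto the $\pm1$ eigenspaces of the approximately involutive operator $i\hat Y_C\hat X_C\hat Z_C$. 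Since $\hat X_C,\hat Y_C,\hat Z_C$ approximately pairwise anticommute and square to identity on $\ket{\psi}$, the operator $i\hat Y\hat X$ behaves like $\pm\hat Z$ in each sector. Its two eigen-branches therefore correspond to the representations $\{\sigma_X,\sigma_Y,\sigma_Z\}$ and $\{\sigma_X,-\sigma_Y,\sigma_Z\}$. For $B$, the sign convention is flipped by the $(-1)^{\delta_{PY}}$ in \eqref{eq:P1-system-indep-eps}, so that the two sector labels are correlated across $A$ and $B$.

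\textbf{Step 1: Bell state extraction, Eq.~\eqref{eq:image-under-isometry}.} I will expand $V_A\otimes V_B\ket{\psi}$ as a sum over the computational basis of $A'B'A''B''$. Each coefficient vector is a product of factors $(\mathbb{I}\pm\hat Z_C)$, $\hat X_C$, and $(\mathbb{I}\pm i\hat Y_C\hat X_C)$ applied to $\ket{\psi}$.

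I then move every $B$-operator to the $A$ side using \eqref{eq:P1-system-indep-eps}. Each move costs $\epsilon$ and must be done in the correct order, since $B$-operators commute with $A$-operators; the $Y$ sign flip enters here. I then use \eqref{eq:P2-Clifford-property-eps} to commute past and cancel factors, with each reordering again costing $O(\epsilon)$.

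The outcome I aim for is:
\begin{itemize}
\item cross terms such as $\ket{01}_{A''B''}$ vanish, because $(\mathbb{I}+i\hat Y_A\hat X_A)(\mathbb{I}-i\hat Y_A\hat X_A)\approx 0$ on $\ket{\psi}$ once $(i\hat Y\hat X)^2\approx\mathbb{I}$;
\item $A'B'$ terms with mismatched bits vanish, because $(\mathbb{I}+\hat Z_A)(\mathbb{I}-\hat Z_A)\approx0$;
\item the $\ket{00}_{A'B'}$ and $\ket{11}_{A'B'}$ terms coincide, using $\hat X_A(\mathbb{I}-\hat Z_A)\approx(\mathbb{I}+\hat Z_A)\hat X_A$.
\end{itemize}
Collecting these terms yields $\ket{\phi^+}\ket{\xi}$ with the stated $\ket{\xi}$, and normalization constants fall out of the Hadamards. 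All errors are a constant number of $\epsilon$'s, since the operators are unitary and each step is a triangle inequality.

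\textbf{Step 2: intertwining, Eq.~\eqref{eq:intertwiner-claim}.} For $P=X,Z$ this is the standard swap-isometry calculation: push $\hat P_C$ through the circuit using approximate anticommutation on $\ket{\psi}$. For $P=Y$, I will write $\hat Y_C\approx \hat Y_C(i\hat Y_C\hat X_C)(-i\hat X_C\hat Y_C)$ and split according to the $C''$ sector. On the $\ket{0}$ sector, $\hat Y_C$ acts as $i\hat X_C\hat Z_C$-like, i.e.\ as $\sigma_Y$ on $C'$; on the $\ket{1}$ sector it acts as $-\sigma_Y$. This produces the $(\sigma_Z)_{C''}$ factor.

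\textbf{Main obstacle.} The delicate point is that the hypotheses hold only when operators act directly on $\ket{\psi}$. Every approximate relation must therefore be applied to an operator that is immediately adjacent to $\ket{\psi}$, which means commuting through other operators first. Moving $A$-operators to $B$ via \eqref{eq:P1-system-indep-eps} is what makes this legal. I expect the bookkeeping for the $Y$/sector gadget to be the hardest part: checking that the ordering of products like $\hat Y\hat X\hat Z$ lets each substitution be done adjacent to the state with $O(1)$ many steps, and that the sector projectors remain consistent between $A$ and $B$ despite the sign in \eqref{eq:P1-system-indep-eps}.
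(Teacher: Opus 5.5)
Your proposal is correct and follows essentially the paper's route: the extended SWAP isometry of Bowles et al.\ (Hadamard, controlled-$\hat Z_C$, Hadamard, controlled-$\hat X_C$ on $C'$, then a Hadamard-sandwiched controlled gadget on $C''$), followed by a direct expansion that moves $B$-operators onto $A$ via \eqref{eq:P1-system-indep-eps} and reorders via \eqref{eq:P2-Clifford-property-eps}. There are two small differences. First, the paper controls on $i\hat Y_C\hat X_C$ rather than $i\hat Y_C\hat X_C\hat Z_C$; the two agree to $O(\epsilon)$ on $(\mathbb{I}+\hat Z_A)\ket{\psi}$. Second, the paper uses the identical gadget on $A$ and $B$: the minus sign in \eqref{eq:P1-system-indep-eps} is cancelled by the order reversal when $\hat Y_B\hat X_B$ is transferred to $A$, so you should not insert any extra sign on the $B$ side, or $A''B''$ would come out anticorrelated.
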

		\begin{proof}
			\begin{figure}[!htbp]
				\centering
				\includegraphics[width=0.5\linewidth]{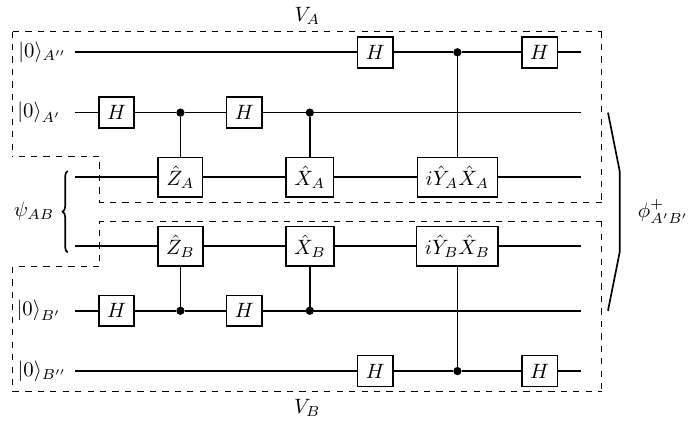}
				\caption{Local isometry from the extended SWAP test~\cite{Bowles2018DIAllEntangled} to extract a Bell state from the underlying state in the physical experiment. }
				\label{fig:extended-swap-iso}
			\end{figure}
			
			The isometries  $V_A$ and $V_B$ are explicitly constructed using the extended SWAP isometry given in~\cite{Bowles2018DIAllEntangled}. Its action is illustrated in Fig.~\ref{fig:extended-swap-iso}. 
			We now directly calculate the effect of the isometry $V_A\otimes V_B$ on $\ket{\psi}$ on the physical state $\ket{\psi}$: 
			\begin{align}
				&\quad \, \ket{0}_{A''} \ket{0}_{A'} \ket{\psi} \ket{0}_{B'} \ket{0}_{B''}  \nonumber \\
				\xrightarrow{H^{\otimes 4}}  &=  \frac{1}{4} \sum_{a'',a',b',b''} \ket{a''a'b'b''}_{A''A'B'B''}\ket{\psi}  \nonumber\\
				\xrightarrow{C'\hat{Z}}  &=  \frac{1}{4} \sum_{a'',a',b',b''} \ket{a''a'b'b''}_{A''A'B'B''} \hat{Z}_A^{a'}\hat{Z}_B^{b'} \ket{\psi} \nonumber \\
				&\approx_{\epsilon}  \frac{1}{4} \sum_{a'',a',b',b''} \ket{a''a'b'b''}_{A''A'B'B''} \hat{Z}_A^{a'+b'} \ket{\psi}  \tag{By Eq.~\eqref{eq:P1-system-indep-eps}}\\
				\xrightarrow{H_{A'}H_{B'}}  &= \frac{1}{8} \sum_{a'',b'',k_A,k_B} \ket{a''b''k_Ak_B}_{A''B''A'B'} \left(\sum_{a',b'} (-1)^{a'k_A+b'k_B}\hat{Z}_A^{a'+b'}\right) \ket{\psi} \label{eq:isometry-step-after-H-dash0}\\
				&= \frac{1}{4} \sum_{a'',b''}\ket{a''b''}_{A''B''} \left[\ket{00}_{A'B'}(\mathbb{I}+\hat{Z}_A)+\ket{11}_{A'B'}(\mathbb{I}-\hat{Z}_A)\right]\ket{\psi}  \label{eq:isometry-step-after-H-dash}\\
				\xrightarrow{C'\hat{X}}  &= \frac{1}{4} \sum_{a'',b''}\ket{a''b''}_{A''B''} \left[\ket{00}_{A'B'}(\mathbb{I}+\hat{Z}_A)+\ket{11}_{A'B'}\hat{X}_A\hat{X}_B(\mathbb{I}-\hat{Z}_A)\right]\ket{\psi} \nonumber\\
				&\approx_{\epsilon}  \frac{1}{4} \sum_{a'',b''}\ket{a''b''}_{A''B''} \left(\ket{00}_{A'B'}+\ket{11}_{A'B'}\right)(\mathbb{I}+\hat{Z}_A)\ket{\psi} \tag{By Eqs.~\eqref{eq:P1-system-indep-eps},\eqref{eq:P2-Clifford-property-eps}}  \\
				\xrightarrow{C''i\hat{Y}\hat{X}}  &= \frac{1}{4}\left(\ket{00}_{A'B'}+\ket{11}_{A'B'}\right)\sum_{a'',b''}\ket{a''b''}_{A''B''} (i\hat{Y}_A \hat{X}_A)^{a''}(i\hat{Y}_B \hat{X}_B)^{b''}(\mathbb{I}+\hat{Z}_A)\ket{\psi} \nonumber\\ 
				&\approx_{\epsilon}  \frac{1}{4}\left(\ket{00}_{A'B'}+\ket{11}_{A'B'}\right)\sum_{a'',b''}\ket{a''b''}_{A''B''} (i\hat{Y}_A \hat{X}_A)^{a''+b''}(\mathbb{I}+\hat{Z}_A)\ket{\psi} \tag{By Eqs.~\eqref{eq:P1-system-indep-eps},\eqref{eq:P2-Clifford-property-eps}} \\
				\xrightarrow{H_{A''}H_{B''}}   &=  \frac{1}{4}\left(\ket{00}_{A'B'}+\ket{11}_{A'B'}\right) \left[\ket{00}_{A''B''}(\mathbb{I}+i\hat{Y}_A\hat{X}_A)+\ket{11}_{A''B''}(\mathbb{I}-i\hat{Y}_A\hat{X}_A)\right](\mathbb{I}+\hat{Z}_A)\ket{\psi}. \nonumber 
			\end{align}
			This proves Eq.~\eqref{eq:image-under-isometry}.
			
			Analogously, we can trace the action of $V_A\otimes V_B$ on $\hat{P}_C\ket{\psi}$ for $C\in \{A,B\}$ to verify Eq.~\eqref{eq:intertwiner-claim}. We focus on $C=A$, as the analysis for $C=B$ is similar. For $P=Z$, applying $V_A\otimes V_B$ on $\hat{Z}_A\ket{\psi}$ results in the replacement in Eq.~\eqref{eq:isometry-step-after-H-dash}:
			\begin{equation}
				\begin{split}
					(\mathbb{I}+\hat{Z}_A)\ket{\psi} &\mapsto (\hat{Z}_A + \hat{Z}_A^2) \ket{\psi} \approx_{\epsilon} (\mathbb{I}+\hat{Z}_A)\ket{\psi},\\
					(\mathbb{I}-\hat{Z}_A)\ket{\psi} &\mapsto (\hat{Z}_A - \hat{Z}_A^2) \ket{\psi} \approx_{\epsilon} -(\mathbb{I}-\hat{Z}_A)\ket{\psi}.
				\end{split}
			\end{equation}
			This replacement gives the state $(\sigma_Z)_{A'} \ket{\phi^+}_{A'B'} = \frac{1}{\sqrt{2}}(\ket{00} - \ket{11})_{A'B'}$ on system $A'B'$. 
			
			Similarly, for $P=X$, Eq.~\eqref{eq:isometry-step-after-H-dash0} is changed to
			\begin{equation} 
				\begin{split}
					& \quad\, \frac{1}{8} \sum_{a'',b'',k_A,k_B} \ket{a''b''k_Ak_B}_{A''B''A'B'} \hat{X}_A\left(\sum_{a',b'} (-1)^{a'}(-1)^{a'k_A+b'k_B}\hat{Z}_A^{a'+b'}\right) \ket{\psi} \\
					&= \frac{1}{4} \sum_{a'',b''} \ket{a''b''}_{A''B''} \left[\ket{10}_{A'B'} \hat{X}_A (\mathbb{I}+\hat{Z}_A)+\ket{01}_{A'B'}\hat{X}_A (\mathbb{I}-\hat{Z}_A)\right]\ket{\psi} \\
					\xrightarrow{C'\hat{X}}&\approx_{\epsilon} \frac{1}{4} \sum_{a'',b''}\ket{a''b''}_{A''B''} \left(\ket{10}_{A'B'}+\ket{01}_{A'B'}\right)(\mathbb{I}+\hat{Z}_A)\ket{\psi}.
				\end{split}
			\end{equation} 
			This gives the state $(\sigma_X)_{A'} \ket{\phi^+}_{A'B'} = \frac{1}{\sqrt{2}}(\ket{10} + \ket{01})$ on system $A'$. 
			
			For $P=Y$, Eq.~\eqref{eq:isometry-step-after-H-dash} is changed to
			\begin{equation}
				\begin{split}
					&\quad\, \frac{1}{4} \sum_{a'',b''} \ket{a''b''}_{A''B''} \left[\ket{10}_{A'B'}\hat{Y}_A (\mathbb{I}+\hat{Z}_A)+\ket{01}_{A'B'}\hat{Y}_A (\mathbb{I}-\hat{Z}_A)\right]\ket{\psi} \\
					\xrightarrow{C'\hat{X}} & \approx_{\epsilon} -\frac{1}{4} \sum_{a'',b''} \ket{a''b''}_{A''B''}\left(\ket{10}_{A'B'}-\ket{01}_{A'B'}\right) \hat{Y}_A \hat{X}_A  (\mathbb{I}+\hat{Z}_A))\ket{\psi} \\
					\xrightarrow{C''i\hat{Y}\hat{X}} & \approx_{\epsilon}i\frac{1}{4} \sum_{a'',b''} \ket{a''b''}_{A''B''}\left(\ket{10}_{A'B'}-\ket{01}_{A'B'}\right) (i\hat{Y}_A\hat{X}_A)^{a''+b''+1}(\mathbb{I}+\hat{Z}_A)\ket{\psi} \\
					\xrightarrow{H_{A''}H_{B''}} & = i\frac{1}{4} \left(\ket{10}_{A'B'}-\ket{01}_{A'B'}\right) \left[\ket{00}_{A''B''}(\mathbb{I}+i\hat{Y}_A\hat{X}_A)-\ket{11}_{A''B''}(\mathbb{I}-i\hat{Y}_A\hat{X}_A)\right](\mathbb{I}+\hat{Z}_A)\ket{\psi} \\
					&= (\sigma_Y)_{A'}\otimes (\sigma_Z)_{A''} V_A\otimes V_B \ket{\psi}.
				\end{split}
			\end{equation}
			Combining the analysis above, we have shown that
			\begin{equation}
				(V_A\otimes V_B) \hat{P}_C \ket{\psi} \approx_{\epsilon} \left[(\sigma_{P})_{C'}\otimes (\sigma_Z)_{C''}^{\delta_{PY}}\right](V_A \otimes V_B) \ket{\psi}.
			\end{equation}
			This proves Eq.~\eqref{eq:intertwiner-claim}.
		\end{proof}
		
		\section{Self-testing multipartite Pauli measurements}\label{app:DI-multipartite-Pauli}
		
		This section presents the main technical contribution of this work: a self-testing scheme for performing multipartite Pauli measurements.
		The crucial feature of our protocol is its robustness to practical device imperfections and statistical fluctuations, which is essential for guaranteeing the sample efficiency of our subsequent multipartite self-testing procedures. 
		
		We build the full self-testing protocol by a sequence of incremental steps. 
		First, we define the ideal multipartite Pauli measurement protocols that our DI scheme aims to implement.
		Then, we leverage the robust self-testing results for bipartite Pauli measurements established in Sec.~\ref{app:3CHSH} to construct a self-testing protocol that implements the multipartite Pauli measurements, albeit only up to partial transpose (Lemma~\ref{lemma:DI-estimation-L-up-to-partial-transpose}).
		Recognizing that this partial transpose is insufficient for achieving sample-efficient self-testing for the global state, we devise a novel, more efficient self-testing protocol that correlates partial transposes across all subsystems (Lemma~\ref{cor:DI-estimation-L-up-to-global-conj}). 
		This protocol rigorously implements the target multipartite Pauli measurements up to global transpose, which, as we show in Section~\ref{app:self-testing-fin-samples}, is sufficient to guarantee polynomial sample efficiency for self-testing the global state up to global conjugation.
		Finally, we detail the construction of the local quantum channels to extract the desired target quantum state from the state $\psi$ present in the physical experiment.
		
		\subsection{Multipartite Pauli measurement protocols}\label{subsec:multipatite-pauli-msm-protocols}
		An $n$-partite Pauli measurement scheme is usually designed to estimate the expectation value of a target observable $L$ for an input state $\rho$. This procedure is defined by the following four steps:
		\begin{enumerate}
			\item Basis selection: A specific product Pauli basis, $\vec{P} = P_0 \otimes P_1 \otimes \cdots \otimes P_{n-1}$, is sampled according to a predetermined probability distribution $\mathcal{D}$ over the set $\{X, Y, Z\}^{\otimes n}$.
			\item Local measurement: Each party $l \in \{0, \dots, n-1\}$ performs a local Pauli measurement on their subsystem along the basis $P_l$.
			\item Result announcement and weighting: The parties record their collective measurement outcomes $\vec{b} = (b_0, b_1, \dots, b_{n-1}) \in \{0, 1\}^n$. A weight $\omega(\vec{b}|\vec{P})$ is assigned to this outcome, dependent on the selected basis $\vec{P}$.
			\item Expectation value estimation: By repeating steps (1)-(3) multiple times and averaging over both the basis choice $\vec{P}$ and the measurement outcome $\vec{b}$, an empirical estimate of the expectation value $\tr(L\rho)$ is obtained, where the observable $L$ is defined by Eq.~\eqref{eq:L-def}.
		\end{enumerate}
		In this section, we focus on the infinite-sample regime.
		In practice, statistical fluctuations due to finite sampling are crucial for determining sample complexity, which we rigorously address in Sec.~\ref{sec:sample_comp}.
		
		\subsection{Self-testing up to \emph{partial} transpose}\label{app:DI_pauli_partial_transpose}
		We now construct the self-testing protocol for realizing the $n$-partite Pauli measurement schemes defined above. The core strategy employs a local teleportation-and-measure approach~\cite{Supic2023NetworkSelftest}.
		However, a serious problem with such an approach is that the resulting DI realization suffers from a partial transpose problem~\cite{Bowles2018DIAllEntangled, Supic2023NetworkSelftest, sekatski2025entangledstatesnonlocalselftestable}. 
		Prior work~\cite{Supic2023NetworkSelftest, sekatski2025entangledstatesnonlocalselftestable} resolves this issue through full tomography on a genuine multipartite entangled state or through additional source assumptions. The former does not generalize to arbitrary states, and the latter introduces additional device-dependent assumptions. Importantly, the full tomography introduces exponential sample complexity in $n$.
		This exponential scaling presents a major obstacle to practical, scalable self-testing protocols. We address this partial-transpose problem in a sample-efficient, fully DI manner later in Sec.~\ref{app:global_tranpose}.
		Here, we focus on developing a robust DI protocol for performing multipartite Pauli measurements using the teleportation-and-measure approach up to the partial transposes.

		The key idea is to utilize the shared entanglement between each main party $A_l$ and its auxiliary party $B_l$ to effectively teleport the input state from the main parties $A=A_0A_1\dots A_{n-1}$ to the auxiliary parties $B=B_0B_1\dots B_{n-1}$. Once the state is teleported, the desired $n$-partite Pauli measurements can be performed by the parties $B$ using measurement operators rigorously self-tested (up to a local isometry) via the CHSH inequalities.
		
		Concretely, the main parties $A_l$ first perform the local measurements $\{M_{\vec{a}_l|\diamond}^{(l)}\}_{\vec{a}_l \in \{0,1\}^2}$. In the physical experiment, the measurements $\{M_{\vec{a}_l|\diamond}^{(l)}\}_{\vec{a}_l \in \{0,1\}^2}$ are completely uncharacterized, while in the ideal reference experiment, they implement a BSM on $T_lS_l$, teleporting the input state on $T_l$ to the auxiliary system $B_l$ via the Bell state shared on $S_lB_l$.
		In standard teleportation, one needs to perform a Pauli correction $U_{\vec{a_l}}$ after obtaining BSM outcome $\vec{a}_l$, where
		\begin{equation}
			U_{\vec{a}} = X^{a_0}Z^{a_1}, \quad \vec{a}=a_0a_1.
		\end{equation} 
		Crucially, the auxiliary parties $B_l$ do not physically apply this Pauli correction here. 
		Instead, they first perform the certified Pauli measurement $\{N_{b_l|P_l}^{(l)}\}_{b_l}$ along the basis $P_l \in \{X, Y, Z\}$ to obtain a raw outcome $b_l$. 
		Note that for convenience, we use $P_l = X,Y,Z$ to correspond to the measurement setting $y_l = 0, 1, 2$ (see Table~\ref{tab:experiment_setup}). 
		The necessary correction is then incorporated by adjusting the measurement outcome conditional on the BSM result $\vec{a}_l$ received from the measurement on $A_l$. 
		Specifically, the raw outcome $b_l$ is mapped to a corrected outcome $f(b_l|P_l,\vec{a}_l)$ by flipping $b_l$ when the correction operator $U_{\vec{a}_l}$ and the measurement basis $P_l$ anti-commute:
		\begin{equation}\label{eq:Pauli_correction}
			f(b_l|P_l,\vec{a}_l) = b_l \oplus \mathbf{1}[\{P_l, U_{\vec{a}_l}\} = 0].
		\end{equation}
		
		The overall estimator $v_{\mathrm{III}}$ for measuring the observable $L$ is then defined as the expectation over the basis distribution $\mathcal{D}$, where the weight $\omega$ is applied to the corrected outcome vector $f(\vec{b}|\vec{P},\{\vec{a}_l\}_{l=0}^{n-1})$:
		\begin{equation}\label{eq:v_III-protocol-specification}
			v_{\mathrm{III}} = \mathbb{E}_{\vec{P}\leftarrow \mathcal{D}}[\mathbb{E}[\omega(f(\vec{b}|\vec{P},\{\vec{a}_l\}_{l=0}^{n-1})|\vec{P})\mid\forall l, x_l={\diamond},y_l=P_l]].
		\end{equation}
		where $f(\vec{b}|\vec{P}, \{\vec{a_l}\}) = (f(b_0|P_0,\vec{a}_0),f(b_1|P_1,\vec{a}_1),\cdots, f(b_{n-1}|P_{n-1},\vec{a}_{n-1}))$ is the vector of corrected outcomes, and we identify $P_l = X,Y,Z$ to $y_l = 0, 1, 2$.
		
		We establish that, provided the auxiliary Pauli measurements are approximately certified via the 3-CHSH game, the estimator $v_{\mathrm{III}}$ approximately estimates $\tr(L\Gamma(\psi))$ up to partial transpose. 
		Here, $\Gamma(\psi)$ is the state after the teleportation and ideal recovery:
		\begin{equation}\label{eq:Gamma_channel}
			\Gamma:=\bigotimes_{l=0}^{n-1}\Gamma_l,\quad \Gamma_l:\rho\mapsto\sum_{\vec{a}_l\in\{0,1\}^2}U^{B_l'}_{\vec{a}_l}\tr_{A_lB_l}(M_{\vec{a}_l|\diamond}^{(l)} V_{B_l}\rho V_{B_l}^\dagger) U^{B_l'\dagger}_{\vec{a}_l}
		\end{equation}
		and $V_{B_l}$ is the isometry given in Lemma~\ref{lemma:pauli-upto-conj}.
		We summarize the above protocol in the following lemma.
		
		\begin{lemma}[Self-testing multipartite Pauli measurements up to partial transpose]\label{lemma:DI-estimation-L-up-to-partial-transpose}
			Let $L$ be the $n$-partite observable defined in Eq.~\eqref{eq:L-def}, with the weighting coefficients bounded by a constant $W>1$, i.e., $|\omega(\vec{b}|\vec{P})| \leq W$.
			Let $\Gamma$ be the quantum channel representing the teleportation followed by Pauli correction (Eq.~\eqref{eq:Gamma_channel}).
			Suppose that the deviation from the maximal CHSH value (Eq.~\eqref{eq:CHSHestimator}) for each party $l\in [n]$ and CHSH choice $k\in \{0,1,2\}$ is bounded by $v_\mathrm{I}(l,k) \leq \epsilon_l$.
			
			Then, the estimator $v_{\mathrm{III}}$ in the teleportation-and-measurement protocol (Eq.~\eqref{eq:v_III-protocol-specification}) approximates the expectation value of the target observable $L$ up to partial transpose:
			\begin{equation}\label{eq:thm-di-est-main-result-product-LOCC-partial-transpose}
				\left\lvert v_\mathrm{III}-\tr[\left(\sum_{\iota\in\{0,1\}^n}L^{T_\iota}_{B'}\otimes\ketbra{\iota}_{B''} \right)\Gamma(\psi)]\right\rvert \leq \cO\left(W\sum_{l=0}^{n-1}\sqrt{\epsilon_l}\right).
			\end{equation}
			Here, $(\cdot)^{T_\iota}$ denotes partial transpose applied to the subsystems $B'_l$ specified by the bit-string $\iota$. 
		\end{lemma}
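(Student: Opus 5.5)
We will write $v_{\mathrm{III}}$ exactly as an expectation value on $\ket{\psi}$ and then replace the physical auxiliary observables, one party at a time, by their idealized images under the isometries of Lemma~\ref{lemma:pauli-upto-conj}.

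For a fixed basis string $\vec P$ and BSM outcomes $\{\vec a_l\}$, the corrected-outcome weight can be expanded in Pauli-product form. We use $\frac{\mathbb{I}+(-1)^{b}P}{2}$ and note that flipping $b$ corresponds to replacing $B_{P}^{(l)}$ by $-B_P^{(l)}$, i.e.\ by $U_{\vec a}^\dagger P U_{\vec a}$ at the Pauli level. This gives
\begin{equation*}
v_{\mathrm{III}}=\mathbb{E}_{\vec P}\sum_{\vec b}\omega(\vec b|\vec P)\sum_{\{\vec a_l\}}\bra{\psi}\bigotimes_l M^{(l)}_{\vec a_l|\diamond}\otimes\bigotimes_l \Pi^{(l)}_{b_l|P_l,\vec a_l}\ket{\psi},
\end{equation*}
where $\Pi^{(l)}_{b|P,\vec a}=\frac12(\mathbb{I}+(-1)^{b}s(P,\vec a)B^{(l)}_P)$ and $s=\pm1$ is the anticommutation sign.

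The first step is to replace each $B_P^{(l)}$ by its unitary version $\hat P_B^{(l)}$ (item~\ref{bullet:unitary-extension} of Lemma~\ref{lemma:chsh-implications}). We then apply $V_{B_l}$ and use the intertwining relation~\eqref{eq:intertwiner-claim} to replace $V_{B_l}\hat P_B^{(l)}\ket{\psi}$ by $(\sigma_P)_{B_l'}(\sigma_Z)^{\delta_{PY}}_{B_l''}V_{B_l}\ket{\psi}$. By Lemma~\ref{lemma:chsh-implications}, the hypotheses of Lemma~\ref{lemma:pauli-upto-conj} hold with error $\cO(\sqrt{\epsilon_l})$. On $B_l''$, Eq.~\eqref{eq:image-under-isometry} shows that the state is supported on $\ket{00}$ and $\ket{11}$ of $A_l''B_l''$, so $(\sigma_Z)_{B_l''}$ acts as the classical flag $\iota_l$ that sends $\sigma_Y\mapsto(-1)^{\iota_l}\sigma_Y=\sigma_Y^{T}$ when $\iota_l=1$. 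Hence, conditioned on the flag, the single-qubit projector becomes $\frac12(\mathbb{I}+(-1)^{b}sP^{T_{\iota_l}})$ on $B'_l$.

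We perform these replacements sequentially over $l=0,\dots,n-1$ in a hybrid argument. At each step, the operators for the other parties, the $M_{\vec a|\diamond}$, and the already-replaced ideal Paulis all act on different tensor factors and have operator norm at most one, so they commute with the substitution. Each step therefore costs $\cO(\sqrt{\epsilon_l})$ per term. The obstacle is that we must avoid paying this cost separately for each of the exponentially many $(\vec b,\{\vec a_l\})$ terms. To handle this, we group the sum as $\sum_{\vec b}\omega(\vec b|\vec P)\prod_l\Pi^{(l)}$ and bound the difference through a vector-norm estimate. Writing $\Pi=\frac12(\mathbb{I}\pm O)$, the difference at step $l$ is $\frac12\,\bigl\|(\hat P_B-\text{ideal})\ket{\psi}\bigr\|$ multiplied by an operator of the form $\sum_{\vec b}\omega(\vec b|\vec P)\,(\pm)\bigotimes_{m\neq l}\Pi^{(m)}$. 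Since the projectors $\Pi^{(m)}$ form resolutions of identity for $m\neq l$, this operator has norm at most $W$, and summing over $\vec a$ with $\sum_{\vec a}M_{\vec a}=\mathbb{I}$ preserves the bound via a Cauchy--Schwarz step. The $A$-side projectors and the measurements of other parties are bounded block-diagonal operators, which is what makes this work. The total error is then $\cO(W\sum_l\sqrt{\epsilon_l})$.

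Finally, after all replacements, we trace out $A_lB_l$ together with the $M_{\vec a_l|\diamond}$. The sign $s(P_l,\vec a_l)$ is reabsorbed by conjugating $B'_l$ with $U_{\vec a_l}$, since $U^\dagger P U=sP$, which commutes appropriately with the transpose up to a sign shared with $P^{T}$. This yields precisely $\Gamma_l$ of Eq.~\eqref{eq:Gamma_channel}. Resumming over $\vec b$ and $\vec P$ reconstructs $\sum_\iota L^{T_\iota}_{B'}\otimes\ketbra{\iota}_{B''}$ evaluated on $\Gamma(\psi)$, which is Eq.~\eqref{eq:thm-di-est-main-result-product-LOCC-partial-transpose}.
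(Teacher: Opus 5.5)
Your proposal is correct and is essentially the paper's proof. Both arguments run a party-by-party hybrid that replaces $V_{B_l}N^{(l)}_{b_l|P_l}\ket{\psi}$ by $(\tau_{P_l,b_l})_{B_l'B_l''}V_{B_l}\ket{\psi}$ at cost $\cO(W\sqrt{\epsilon_l})$ per step, then absorb the outcome correction into the $U_{\vec a_l}$ of $\Gamma$ and read $(\sigma_Z)^{\delta_{PY}}_{B_l''}$ as the partial-transpose flag.

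The only real variation is how you bound each hybrid step without exponential blow-up. The paper goes through the total-variation distance and partial POVM sums $\Delta_j$ with $\opnorm{\Delta_j}\le 1$, splitting by $b_j$. You instead bound the operator norm of the $\omega$-weighted sum directly. This works cleanly because the step-$l$ difference is $\tfrac12(-1)^{b_l}s$ times the single vector $(V_{B_l}B^{(l)}_{P_l}-(\sigma_{P_l}\otimes\sigma_Z^{\delta_{P_lY}})V_{B_l})\ket{\psi}$, and because the projective $M_{\vec a}$ are block-diagonal.

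One small point: $\sigma_P\otimes\sigma_Z^{\delta_{PY}}=\sum_{\iota}\sigma_P^{T_\iota}\otimes\ketbra{\iota}$ is an exact operator identity. Your appeal to the $\ket{00},\ket{11}$ support of $\ket{\xi}$ is therefore unnecessary at that step.
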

		
		\begin{proof}
			We begin by expanding the definition of the estimator $v_{\mathrm{III}}$:
			\begin{equation}\label{eq:Exp-vIII-proof1}
				\begin{split}
					\vIII &= \bE_{\vec{P}\leftarrow \mathcal{D}}\left\{\sum_{\substack{\{\vec{a}_l\}_{l=0}^{n-1}\\\vec{b}\in\{0,1\}^n}}
					\omega(f(\vec{b}|\vec{P},\{\vec{a}_l\})|\vec{P}) \tr[M_{\vec{a}} \left(\bigotimes_{l=0}^{n-1} N^{(l)}_{b_l|P_l}\right) \psi]\right\} \\
					&= \bE_{\vec{P}\leftarrow \mathcal{D}}\left\{\sum_{\substack{\{\vec{a}_l\}_{l=0}^{n-1}\\\vec{b}\in\{0,1\}^n}}
					\omega(f(\vec{b}|\vec{P},\{\vec{a}_l\})|\vec{P}) \tr[M_{\vec{a}}V_B^{\dagger} \left(\bigotimes_{l=0}^{n-1} V_{B_l} N^{(l)}_{b_l|P_l}\right) \psi]\right\}.
				\end{split}
			\end{equation}
			Here, $M_{\vec{a}}:=\bigotimes_{l=0}^{n-1} M_{\vec{a}_l|\diamond}^{(l)}$ is the product of BSM-like operators, $V_B:= \bigotimes_{l=0}^{n-1} V_{B_l}$ is the product of isometries. The second line uses the identity $V_B^\dagger V_B = \mathbb{I}$.
			
			Define $\tau_{P,b} = \frac{1}{2}[\mathbb{I} + (-1)^b \sigma_P \otimes \sigma_Z^{\delta_{PY}}]$. Combining Lemma~\ref{lemma:chsh-implications} and Eq.~\eqref{eq:intertwiner-claim} in Lemma~\ref{lemma:pauli-upto-conj}, we have
			\begin{align}\label{eq:replacement-B-by-Pauli}
				V_{B_l} N^{(l)}_{b_l|P_l} \ket{\psi} \approx_{\sqrt{\epsilon_l}} (\tau_{P_l, b_l})_{B_l'B_l''} V_{B_l} \ket{\psi}.
			\end{align}
			Our objective is to replace the term $V_{B_l} N^{(l)}_{b_l|P_l}$ in Eq.~\eqref{eq:Exp-vIII-proof1} with $(\tau_{P_l,b_l})_{B_l'B_l''} V_{B_l}$ for all $l=0, \dots, n-1$. We perform this replacement incrementally.
			Let $v_j$ be the estimator value after the replacement has been performed for the first $j$ subsystems:
			\begin{equation}
				\begin{split}
					v_j &\coloneqq \bE_{\vec{P}\leftarrow \mathcal{D}}\left\{\sum_{\substack{\{\vec{a}_l\}_{l=0}^{n-1}\\\vec{b}\in\{0,1\}^n}}
					\omega(f(\vec{b}|\vec{P},\{\vec{a}_l\})|\vec{P}) \tr[M_{\vec{a}}V_B^{\dagger} \left(\bigotimes_{l=0}^{j-1} (\tau_{P_l,b_l})_{B_l'B_l''} V_{B_l}\right)\left(\bigotimes_{l=j}^{n-1} V_{B_l} N^{(l)}_{b_l|P_l}\right) \psi]\right\} \\
					&=  \bE_{\vec{P}\leftarrow \mathcal{D}}\left\{\sum_{\substack{\{\vec{a}_l\}_{l=0}^{n-1}\\\vec{b}\in\{0,1\}^n}}
					\omega(f(\vec{b}|\vec{P},\{\vec{a}_l\})|\vec{P}) p^{j}_{\vec{P}}(\{\vec{a}_l\}, \vec{b})\right\},
				\end{split}
			\end{equation}
			where $p^j_{\vec{P}}(\{\vec{a}_l\}, \vec{b}) \coloneqq \tr[M_{\vec{a}}V_B^{\dagger} \left(\bigotimes_{l=0}^{j-1} (\tau_{P_l,b_l})_{B_l'B_l''} V_{B_l}\right)\left(\bigotimes_{l=j}^{n-1} V_{B_l} N^{(l)}_{b_l|P_l}\right) \psi]$.

			By definition, $v_0 = v_{\mathrm{III}}$. 
			Now we aim to show that (i)  $\abs{v_{j+1}-v_j} \le \cO(W\sqrt{\epsilon_j})$; and (ii) $v_n = \tr[\left(\sum_{\iota\in \{0,1\}^n}L^{T_\iota}_{B'}\otimes\ketbra{\iota}_{B''} \right)\Gamma(\psi)]$. 
			Then, Eq.~\eqref{eq:thm-di-est-main-result-product-LOCC-partial-transpose} follows by expanding $v_n-v_0$ into a telescoping sum and applying triangular inequalities.
			
			We first prove (i). By substituting the definitions of $v_j$ and $v_{j+1}$, we have:
			\begin{equation}\label{eq:vj_distance}
				\begin{split}
					\abs{v_{j+1} - v_j} &= \bE_{\vec{P}\leftarrow \mathcal{D}}\left\{\sum_{\substack{\{\vec{a}_l\}_{l=0}^{n-1}\\\vec{b}\in\{0,1\}^n}}
					\omega(f(\vec{b}|\vec{P},\{\vec{a}_l\})|\vec{P}) [p^{j}_{\vec{P}}(\{\vec{a}_l\}, \vec{b}) - p^{j+1}_{\vec{P}}(\{\vec{a}_l\}, \vec{b})]\right\} \\
					&\le \bE_{\vec{P}\leftarrow \mathcal{D}}\left\{\sum_{\substack{\{\vec{a}_l\}_{l=0}^{n-1}\\\vec{b}\in\{0,1\}^n}}
					\abs{\omega(f(\vec{b}|\vec{P},\{\vec{a}_l\})|\vec{P})} \cdot \abs{p^{j}_{\vec{P}}(\{\vec{a}_l\}, \vec{b}) - p^{j+1}_{\vec{P}}(\{\vec{a}_l\}, \vec{b})}\right\} \\
					& \le 2W \bE_{\vec{P}\leftarrow \mathcal{D}} \left\{ \mathrm{TV}(p^{j}_{\vec{P}}, p^{j+1}_{\vec{P}})\right\},
				\end{split}
			\end{equation}
			where $\mathrm{TV}(\cdot,\cdot)$ denotes the total variational distance. 
			
			We define the sets $S_{>,0} \coloneqq \{(\{\vec{a}_l\}, \vec{b}) : p^j_{\vec{P}}(\{\vec{a}_l\}, \vec{b}) > p^{j+1}_{\vec{P}}(\{\vec{a}_l\}, \vec{b}), b_j = 0\}$, and $S_{>,1}, S_{<,0}, S_{<,1}$ analogously. Then, 
			\begin{equation}\label{eq:total_variational_pj}
				\begin{split}
					2\mathrm{TV}(p^{j}_{\vec{P}}, p^{j+1}_{\vec{P}}) = \Bigg(\sum_{(\{\vec{a}_l\}, \vec{b}_l) \in S_{>,0}} &+ \sum_{(\{\vec{a}_l\}, \vec{b}_l) \in S_{>,1}}\Bigg) (p^{j}_{\vec{P}}(\{\vec{a}_l\}, \vec{b}) - p^{j+1}_{\vec{P}}(\{\vec{a}_l\}, \vec{b})) \\
					+  &\left(\sum_{(\{\vec{a}_l\}, \vec{b}_l) \in S_{<,0}} + \sum_{(\{\vec{a}_l\}, \vec{b}_l) \in S_{<,1}}\right) (p^{j+1}_{\vec{P}}(\{\vec{a}_l\}, \vec{b}) - p^{j}_{\vec{P}}(\{\vec{a}_l\}, \vec{b})).
				\end{split}
			\end{equation}
			We now bound the first summation term $S_{>,0}$, and the other three terms follow similarly. 
			\begin{equation}
				\begin{split}
					\sum_{S_{>,0}} p^{j}_{\vec{P}}(\{\vec{a}_l\}, \vec{b}) - p^{j+1}_{\vec{P}}(\{\vec{a}_l\}, \vec{b}) &= \sum_{S_{>,0}} \tr[M_{\vec{a}}V_B^{\dagger} \left(\bigotimes_{l=0}^{j-1} (\tau_{P_l,b_l})_{B_l'B_l''} V_{B_l}\right) \left((\tau_{P_j, b_j})_{B_j'B_j''} V_{B_j} - V_{B_j} N^{(j)}_{b_j|P_j}\right)\left(\bigotimes_{l=j+1}^{n-1} V_{B_l} N^{(l)}_{b_l|P_l}\right) \psi] \\
					&= \sum_{S_{>,0}} \tr[M_{\vec{a}}V_B^{\dagger} \left(\bigotimes_{l=0}^{j-1} (\tau_{P_l, b_l})_{B_l'B_l''} V_{B_l}\right)\left(\bigotimes_{l=j+1}^{n-1} V_{B_l} N^{(l)}_{b_l|P_l}\right) \ket{\tilde{\psi}}\bra{\psi}].
				\end{split}
			\end{equation}
			Here, $\ket{\tilde{\psi}} \coloneqq \left((\tau_{P_j, b_j})_{B_j'B_j''} V_{B_j} - V_{B_j} N^{(j)}_{b_j|P_j}\right) \ket{\psi}$, which satisfies $\norm{\ket{\tilde{\psi}}} \le \sqrt{\epsilon_j}$ by Eq.~\eqref{eq:replacement-B-by-Pauli}. Define 
			\begin{equation}
				\Delta_j \coloneqq \sum_{S_{>,0}}M_{\vec{a}}V_B^{\dagger} \left(\bigotimes_{l=0}^{j-1} (\tau_{P_l, b_l})_{B_l'B_l''} V_{B_l}\right)\left(\bigotimes_{l=j+1}^{n-1} V_{B_l} N^{(l)}_{b_l|P_l}\right),
			\end{equation}
			Since $\Delta_j$ is a partial sum of measurement operators of a POVM, we have $\opnorm{\Delta_j} \leq 1$. Hence, 
			\begin{equation}
				\begin{split}
					\sum_{S_{>,0}} p^{j}_{\vec{P}}(\{\vec{a}_l\}, \vec{b}) - p^{j+1}_{\vec{P}}(\{\vec{a}_l\}, \vec{b}) &= \tr(\Delta_j \ket{\tilde{\psi}}\bra{\psi}) \\
					& \le \opnorm{\Delta_j} \norm{\ket{\tilde{\psi}}} \\
					& \le \sqrt{\epsilon_j}.
				\end{split}
			\end{equation}
			Substituting these bounds into Eqs.~\eqref{eq:total_variational_pj}, \eqref{eq:vj_distance} proves (i).
			
			We now proceed to prove (ii). By definition, $U_{\vec{a}_l}^{B_l'}  (\tau_{P_l,b_l})_{B_l'B_l''} U^{B_l'\dagger}_{\vec{a}_l} =  (\tau_{P_l,f_l})_{B_l'B_l''}$, where $f_l=f(b_l|P_l,\vec{a}_l)$. Therefore, 
			\begin{equation}\label{eq:vn_expression}
				\begin{split}
					v_n &= \bE_{\vec{P}\leftarrow \mathcal{D}}\left\{\sum_{\substack{\{\vec{a}_l\}_{l=0}^{n-1}\\\vec{b}\in\{0,1\}^n}}
					\omega(f(\vec{b}|\vec{P},\{\vec{a}_l\})|\vec{P}) \tr[M_{\vec{a}}V_B^{\dagger} \left(\bigotimes_{l=0}^{n-1} (\tau_{P_l,b_l})_{B_l'B_l''} \right)V_B \psi]\right\} \\
					&=  \bE_{\vec{P}\leftarrow \mathcal{D}}\left\{\sum_{\substack{\{\vec{a}_l\}_{l=0}^{n-1}\\\vec{b}\in\{0,1\}^n}}
					\omega(f(\vec{b}|\vec{P},\{\vec{a}_l\})|\vec{P}) \tr[M_{\vec{a}}\left(\bigotimes_{l=0}^{n-1} U^{B'_l}_{\vec{a}_l}(\tau_{P_l,b_l})_{B_l'B_l''} U_{\vec{a}_l}^{B'_l\dagger}\right) \left(\bigotimes_{l=0}^{n-1} U^{B'_l}_{\vec{a}_l}\right)V_B \psi V_B^{\dagger}\left(\bigotimes_{l=0}^{n-1} U_{\vec{a}_l}^{B'_l\dagger}\right)]\right\} \\
					&= \bE_{\vec{P}\leftarrow \mathcal{D}}\left\{\sum_{\substack{\{\vec{a}_l\}_{l=0}^{n-1}\\f_0f_1\cdots f_{n-1}\in\{0,1\}^n}}
					\omega(f_0f_1\cdots f_{n-1}|\vec{P}) \tr[M_{\vec{a}}\left(\bigotimes_{l=0}^{n-1} (\tau_{P_l,f_l})_{B_l'B_l''}\right) \left(\bigotimes_{l=0}^{n-1} U^{B'_l}_{\vec{a}_l}\right)V_B \psi V_B^{\dagger}\left(\bigotimes_{l=0}^{n-1} U_{\vec{a}_l}^{B'_l\dagger}\right)]\right\} \\
					&= \bE_{\vec{P}\leftarrow \mathcal{D}}\left\{\sum_{\substack{\vec{b}\in\{0,1\}^n}}
					\omega(\vec{b}|\vec{P}) \tr[\left(\bigotimes_{l=0}^{n-1} (\tau_{P_l,b_l})_{B_l'B_l''}\right) \Gamma(\psi)]\right\}
				\end{split}
			\end{equation}
			where we used that $M_{\vec{a}}$ acts only on the main systems $A_l$ and that $f(\circ|P_l,\vec{a}_l)$ is bijective.
			Moreover, 
			\begin{equation}
				\begin{split}
					&\quad \bE_{\vec{P}\leftarrow \mathcal{D}}\left\{\sum_{\substack{\vec{b}\in\{0,1\}^n}}
					\omega(\vec{b}|\vec{P}) \bigotimes_{l=0}^{n-1} (\tau_{P_l,b_l})_{B_l'B_l''} \right\} \\
					&= \bE_{\vec{P}\leftarrow \mathcal{D}}\left\{\sum_{\substack{\vec{b}\in\{0,1\}^n}}
					\omega(\vec{b}|\vec{P}) \bigotimes_{l=0}^{n-1}\left[\left(\frac{\mathbb{I} + (-1)^{b_l}\sigma_{P_l}}{2}\right)_{B_l'}\otimes \ketbra{0}_{B_l''}  + \left(\frac{\mathbb{I} + (-1)^{b_l+\delta_{PY}}\sigma_{P_l}}{2}\right)_{B_l'}\otimes \ketbra{1}_{B_l''}\right] \right\} \\
					&= \bE_{\vec{P}\leftarrow \mathcal{D}}\left\{\sum_{\substack{\vec{b}\in\{0,1\}^n}}
					\omega(\vec{b}|\vec{P}) \sum_{\iota\in\{0,1\}^n} \left[\bigotimes_{l=0}^{n-1} \left(\frac{\mathbb{I} + (-1)^{b_l}\sigma_{P_l}}{2}\right) \right]_{B'}^{T_{\iota}}\otimes \ketbra{\iota}_{B''} \right\} \\
					&= \sum_{\iota\in\{0,1\}^n} L^{T_\iota}_{B'}\otimes \ketbra{\iota}_{B''}. \\
				\end{split}
			\end{equation}
			Substituting this into Eq.~\eqref{eq:vn_expression} proves (ii). 
		\end{proof}

		\subsection{Self-testing up to \emph{global} transpose}\label{app:global_tranpose}
		We now remove the undesirable partial transpose terms in Lemma~\ref{lemma:DI-estimation-L-up-to-partial-transpose}, thereby yielding a robust DI protocol that implements multipartite Pauli measurements up to only an intrinsic global transpose.
		
		Recall that the partial transpose terms originate from the sum over $\iota \in S$, where $S\coloneqq \{0,1\}^n \backslash \{0^n, 1^n\}$ (i.e., the set of transpose configurations that are neither the identity nor the global transpose). If we can demonstrate that the probability of extracting these unwanted terms is negligible, specifically if
		$\tr\left[\left(\sum_{\iota \in S} \ketbra{\iota}_{B''} \right) \Gamma(\rho)\right] \approx 0$, then these components can be disregarded.
		
		Crucially, any configuration $\iota \in S$ must contain at least one adjacent pair of distinct partial transpose indicators (i.e., $\iota_l \neq \iota_{l+1}$ for some $l \in [n-1]$). Therefore, the overall removal condition reduces to proving that for any adjacent pair of parties $l$ and $l+1$, the probability of the outcomes $\ket{01}$ or $\ket{10}$ on the auxiliary systems $B_l''B_{l+1}''$ is approximately zero:
		\begin{equation}
			\tr\left[(\ketbra{01} + \ketbra{10})_{B''_lB''_{l+1}} \Gamma(\rho)\right] \approx 0.
		\end{equation}
		
		We achieve this by choosing another operator $K_{B_l'B_{l+1}'}$ with specific properties related to its partial transpose: its maximal eigenvalue $\lambda_{\max}(K)$ is significantly larger than the maximal eigenvalues of its partial transposes, $\lambda_{\max}(K^{T_{B_l'}})$ and $\lambda_{\max}(K^{T_{B_{l+1}'}})$.
		Specifically, we choose the two-qubit observable $K$ as:
		\begin{equation}
			K = \sigma_X \otimes \sigma_X - \sigma_Y \otimes \sigma_Y + \sigma_Z \otimes \sigma_Z.
		\end{equation}
		We observe the following crucial properties: (i) The maximal eigenvalue of $K$ is $\lambda_{\max}(K)=3$, which is attained by the Bell state $\phi^+$; (ii) The partial transpose of $K$ with respect to either subsystem is: 
		\begin{equation}
			K^{T_{B_l'}} = K^{T_{B_{l+1}'}} = \sigma_X \otimes \sigma_X + \sigma_Y \otimes \sigma_Y + \sigma_Z \otimes \sigma_Z, 
		\end{equation} 
		whose maximal eigenvalue is strictly smaller: $\lambda_{\max}(K^{T_{B_l'}}) = \lambda_{\max}(K^{T_{B_{l+1}'}}) = 1$.
		
		This strategy translates to the following procedure in the reference experiment:
		First, the main parties $A_l$ and $A_{l+1}$ share a Bell state on the designated link qubits $R_{l}^{\vartriangleright}$ and $R_{l+1}^{\vartriangleleft}$. 
		They then perform BSM on $R_{l}^{\vartriangleright}S_l$ and $R_{l+1}^{\vartriangleleft}S_{l+1}$ to teleport the shared Bell state to the auxiliary systems $B_l B_{l+1}$. 
		Finally, the auxiliary parties perform specific Pauli measurements designed to estimate the expectation value of $K$.
		
		In the physical experiment, this procedure corresponds to the main parties $A_l$ and $A_{l+1}$ performing their measurements $\{M_{\vec{a}_l | \vartriangleright}^{(l)}\}_{\vec{a}_l}$ and $\{M_{\vec{a}_{l+1} | \vartriangleleft}^{(l+1)}\}_{\vec{a}_{l+1}}$ respectively. The auxiliary parties $B_l$ and $B_{l+1}$ perform their certified Pauli measurements $\{N_{b_j|P_j}^{(j)}\}_{b_j}$ for $j=l,l+1$ and $P_j \in \{X,Y,Z\}$. Utilizing the Pauli corrections (Eq.~\eqref{eq:Pauli_correction}), they calculate the estimator $v_{\mathrm{II}}$ for the average two-party correlation:
		\begin{equation}\label{eq:v_II-protocol-specification}
			v_{\mathrm{II}}(l,l+1) = \sum_{P\in\{X,Y,Z\}}\frac{1}{3}\left[1-(-1)^{\mathbf{1}[P=Y]}\mathbb{E}\left[(-1)^{f(b_{l}|P,\vec{a}_l) + f(b_{l+1}|P,\vec{a}_{l+1})}\;\middle|\; x_l={\Dr}, x_{l+1}={\Dl}, y_l=y_{l+1}=P\right]\right],
		\end{equation}
		where the product bases $P_l P_{l+1}$ are uniformly drawn from the set $\{XX, YY, ZZ\}$. 
		
		The gap between $\lambda_{\max}(K)=3$ and $\lambda_{\max}(K^{T_{B_l'}})=1$ is directly reflected in the bounds of the estimator. Specifically, an ideal measurement estimating $K$ yields the minimal value, $v_{\mathrm{II}}(l,l+1) = 0$. Conversely, an expectation value corresponding to the presence of the unwanted local partial transpose $K^{T_{B_l'}}$ must yield $v_{\mathrm{II}}(l,l+1) \ge \frac{2}{3}$. This constant gap provides a mechanism to exclude partial transpose. The robust analysis is formalized in the following lemma.
		
		\begin{lemma}[Removing partial transpose]\label{lemma:II-implications-approximate}
			Let $\{V_{B_l}\}_{l=0}^{n-1}$ be the isometry defined in Lemma~\ref{lemma:pauli-upto-conj}.
			For any adjacent parties $l$ and $l+1$ and basis $k\in \{0,1,2\}$, suppose the deviation from the maximal CHSH value (Eq.~\eqref{eq:CHSHestimator}) is bounded by $v_\mathrm{I}(l,k)\leq \epsilon_{\mathrm{I},l}$ and $v_\mathrm{I}(l+1,k)\leq \epsilon_{\mathrm{I},l+1}$ for all $k\in \{0,1,2\}$, and the two-party estimator (Eq.~\eqref{eq:v_II-protocol-specification}) is bounded by $v_{\mathrm{II}}(l,l+1)\leq \epsilon_{\mathrm{II},(l,l+1)}$. 
			
			Then, the probability of the unequal partial transpose components between adjacent parties is bounded by:
			\begin{equation}\label{eq:correlating_nearby_parties}
				\tr[(\ketbra{01}+\ketbra{10})_{B_l''B_{l+1}''}V_{B_l}\otimes V_{B_{l+1}}\psi V_{B_l}^\dagger\otimes V_{B_{l+1}}^\dagger] = \cO\left(\epsilon_{\mathrm{II},(l,l+1)}+\sqrt{\epsilon_{\mathrm{I},l}} + \sqrt{\epsilon_{\mathrm{I},l+1}}\right).
			\end{equation}
		\end{lemma}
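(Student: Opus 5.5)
We will reduce the physical value of $\vII(l,l+1)$ to the expectation of an idealized two-qubit operator on the extracted registers, and then exploit the spectral gap between $K$ and its partial transposes.

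\emph{Step 1: rewrite $\vII$ with idealized Paulis.} We rewrite $\vII(l,l+1)$ exactly as $\vIII$ was handled in the proof of Lemma~\ref{lemma:DI-estimation-L-up-to-partial-transpose}. The teleportation measurements $M^{(l)}_{\vec a_l|\Dr}\otimes M^{(l+1)}_{\vec a_{l+1}|\Dl}$ play the role of $M_{\vec a}$. The two-party weight is $\omega(f_l f_{l+1}|PP) = (-1)^{\mathbf{1}[P=Y]}(-1)^{f_l+f_{l+1}}$, which is bounded by $W=1$. The basis distribution is uniform over $\{XX,YY,ZZ\}$.

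We then carry out the telescoping replacement $V_{B_j}N^{(j)}_{b_j|P}\ket\psi \approx_{\sqrt{\epsilon_{\mathrm{I},j}}} (\tau_{P,b_j})_{B_j'B_j''}V_{B_j}\ket\psi$ for $j=l,l+1$, using Eq.~\eqref{eq:replacement-B-by-Pauli}. Absorbing the Pauli corrections $U_{\vec a_j}$ into the bijection $f$ as in Eq.~\eqref{eq:vn_expression}, this gives
\[
\Bigl| (1-\vII) - \tfrac13 \tr\bigl[\bigl(\textstyle\sum_{\iota\in\{0,1\}^2} K^{T_\iota}_{B_l'B_{l+1}'}\otimes\ketbra{\iota}_{B_l''B_{l+1}''}\bigr)\sigma\bigr]\Bigr| = \cO\bigl(\sqrt{\epsilon_{\mathrm{I},l}}+\sqrt{\epsilon_{\mathrm{I},l+1}}\bigr).
\]
Here $\sigma$ is the state on $B'_lB'_{l+1}B''_lB''_{l+1}$ obtained by applying the channel $\Gamma'$ to $\psi$. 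The channel $\Gamma'$ is defined as in Eq.~\eqref{eq:Gamma_channel}, but with the measurements $\Dr,\Dl$ in place of $\diamond$ and applied only to sites $l,l+1$.

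\emph{Step 2: identify the transpose pattern.} The $\iota$-dependence enters through $\sigma_Z^{\delta_{PY}}$ on $B''$, which flips the sign of the $YY$ term exactly when $\iota_l\neq\iota_{l+1}$. Hence the operator is $K$ on the blocks $\iota\in\{00,11\}$ and $K^{T_1}$ on the blocks $\{01,10\}$.

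\emph{Step 3: apply the spectral gap.} Write $q = \tr[(\ketbra{01}+\ketbra{10})_{B''}\sigma]$. Since $\lambda_{\max}(K)=3$ and $\lambda_{\max}(K^{T_1})=1$, we get $\tfrac13\tr[\cdots]\le (1-q) + q/3 = 1-\tfrac{2q}{3}$. Combining with Step 1,
\[
\tfrac{2}{3}q \le \vII + \cO\bigl(\sqrt{\epsilon_{\mathrm{I},l}}+\sqrt{\epsilon_{\mathrm{I},l+1}}\bigr).
\]

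\emph{Step 4: remove the channel from $q$.} The projector in $q$ acts only on $B''$. The operators $M^{(j)}_{\vec a_j|\cdot}$ act on $A_j$ and sum to the identity over $\vec a_j$, and the $U_{\vec a_j}$ act only on $B'_j$. Therefore $q$ equals the left-hand side of Eq.~\eqref{eq:correlating_nearby_parties}, with $\Gamma'$ no longer appearing. This yields the claim.

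The main obstacle is Step 1. The replacement must be done inside a sum over BSM outcomes whose $B''$ flags are correlated with the sign of the Pauli. This requires the total-variation argument of Lemma~\ref{lemma:DI-estimation-L-up-to-partial-transpose} to go through with two sites and the different BSM settings. We also need the isometries $V_{B_j}$, which are built only from the $B$-side operators, to be independent of which teleportation input $A_j$ uses. The latter holds because the $A$ measurements commute with $V_B$.
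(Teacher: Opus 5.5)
Your proposal is correct and follows essentially the paper's proof. The paper also treats $\vII(l,l+1)$ as an $n=2$ instance of $\vIII$ with the $\Dr,\Dl$ measurements replacing $\diamond$, invokes Lemma~\ref{lemma:DI-estimation-L-up-to-partial-transpose}, uses the eigenvalue gap between $K$ and $K^{T_1}$, and finally drops the channel because it acts trivially on $B''_lB''_{l+1}$. The only difference is cosmetic: the paper works with $L=\mathbb{I}-\tfrac13K$ and $\lambda_{\min}(L^{T_\iota})\in\{0,\tfrac23\}$, whereas you work with $1-\vII$ and $\lambda_{\max}$ of $K$ and $K^{T_1}$.
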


		\begin{proof}
			W.l.o.g., we prove the claim for $l = 0$.
			The two-party witness estimator $v_{\mathrm{II}}(0, 1)$ is a specific instance of the general Pauli measurement estimator $v_{\mathrm{III}}$ (Eq.~\eqref{eq:v_III-protocol-specification}) under the following substitutions: (i) The number of parties is $n=2$; (ii) The basis distribution $\mathcal{D}$ for $\vec{P} = P_0 P_1$ is uniform over $\{XX, YY, ZZ\}$; (iii) The weighting coefficient is $\omega(\vec{b}|\vec{P}) = 1-(-1)^{\delta_{P_0 Y}}(-1)^{b_0+b_1}$, bounded by $W = \cO(1)$.
			Moreover, the channel $\Gamma$ is defined by replacing the original BSM operators in Eq.~\eqref{eq:Gamma_channel} with the new ones:  $M^{(0)}_{\vec{a}_0|\diamond} \rightarrow M^{(0)}_{\vec{a}_0|\Dr}, M^{(1)}_{\vec{a}_{1}|\diamond} \rightarrow M^{(1)}_{\vec{a}_{1}|\Dl}$. 
			The associated observable, $L$, derived from this specific choice of weights and bases, is $L = \mathbb{I} - \frac{1}{3}K$.
			
			By Lemma~\ref{lemma:DI-estimation-L-up-to-partial-transpose},
			\begin{equation}\label{eq:applied-lemma-vIII-to-vII}
				\abs{\tr[\left(\sum_{\iota\in\{0,1\}^2}L_{B_0'B_1'}^{T_\iota}\otimes\ketbra{\iota}_{B_0''B_{1}''}\right)\Gamma(\psi)] - v_{\mathrm{II}}(0,1)} = \cO(\sqrt{\epsilon_{\mathrm{I},0}} + \sqrt{\epsilon_{\mathrm{I},1}}).
			\end{equation}
			Since $v_{\mathrm{II}}(0, 1) \le \epsilon_{\mathrm{II},(0, 1)}$, we have:
			\begin{equation}
				\tr[\left(\sum_{\iota\in\{0,1\}^2}L_{B_0'B_1'}^{T_\iota}\otimes\ketbra{\iota}_{B_0''B_{1}''}\right)\Gamma(\psi)]  = \cO\left(\epsilon_{\mathrm{II},(0,1)}+\sqrt{\epsilon_{\mathrm{I},0}} + \sqrt{\epsilon_{\mathrm{I},1}}\right).
			\end{equation}
			Moreover, we utilize the spectral properties of $L$: $\lambda_{\min}(L) = \lambda_{\min}(L^T) = 1-\tfrac{1}{3}\lambda_{\max}(K) = 0$,  and $\lambda_{\min}(L^{T_0}) = \lambda_{\min}(L^{T_1}) = 1 - \tfrac{1}{3}\lambda_{\max}(K^{T_0})= \tfrac{2}{3}$. Therefore, 
			\begin{equation}
				\tr[\left(\sum_{\iota\in\{0,1\}^2}L_{B_0'B_1'}^{T_\iota}\otimes\ketbra{\iota}_{B_0''B_{1}''}\right)\Gamma(\psi)] \ge \frac{2}{3} \tr[(\ketbra{01} + \ketbra{10})_{B_0''B_{1}''}\Gamma(\psi)]
			\end{equation}
			Combining the above two equations gives 
			\begin{equation}\label{eq:correlating_nearby_parties_Gamma}
				\tr[(\ketbra{01} + \ketbra{10})_{B_0''B_{1}''}\Gamma(\psi)] = \cO\left(\epsilon_{\mathrm{II},(0,1)}+\sqrt{\epsilon_{\mathrm{I},0}} + \sqrt{\epsilon_{\mathrm{I},1}}\right).
			\end{equation}
			Finally, the channel $\Gamma(\rho)$ (see~\eqref{eq:Gamma_channel}) can be decomposed as 
			\begin{equation}\label{eq:gamma-decomp-A-and-V-isometry}
				\Gamma(\rho) = \Gamma_A(V_{B_0}\otimes V_{B_{1}}\rho V_{B_0}^\dagger\otimes V_{B_{1}}^\dagger)
			\end{equation}
			for a channel $\Gamma_A$ that acts trivially on the $B''_0B''_1$ systems and hence commutes with $(\ketbra{01} + \ketbra{10})_{B_0''B_{1}''}$. 
			Then Eq.~\eqref{eq:correlating_nearby_parties} is proved by combining Eq.~\eqref{eq:correlating_nearby_parties_Gamma} with the trace-preserving property of $\Gamma_A$.
		\end{proof}
		
		As mentioned, removing partial transposes between all adjacent pairs of parties is sufficient to implement multipartite Pauli measurements up to global transpose. This result is summarized in the following lemma.

		\begin{lemma}[DI multipartite Pauli measurements up to global transpose]\label{cor:DI-estimation-L-up-to-global-conj}
			Let $L$ be the $n$-partite observable defined in Eq.~\eqref{eq:L-def}, with the weighting coefficients bounded by a constant $W>1$, i.e., $|\omega(\vec{b}|\vec{P})| \leq W$.
			Let $\Gamma$ be the quantum channel representing the teleportation followed by Pauli correction (Eq.~\eqref{eq:Gamma_channel}).
			Suppose that the deviation from the maximal CHSH value (Eq.~\eqref{eq:CHSHestimator}) for each party $l\in [n]$ in basis $k\in \{0,1,2\}$ is bounded by $v_\mathrm{I}(l,k) \leq \epsilon_{\mathrm{I},l}$, and the two-party estimator (Eq.~\eqref{eq:v_II-protocol-specification}) for $l\in[n-1]$  is bounded by $v_{\mathrm{II}}(l,l+1)\leq \epsilon_{\mathrm{II},(l,l+1)}$. 
			Let $\varepsilon = \sum_{j=0}^{n-2}\epsilon_{\mathrm{II},(j,j+1)}+\sum_{l=0}^{n-1}\sqrt{\epsilon_{\mathrm{I},l}}$.
			
			Then, the total probability of the non-identity and non-full transpose terms is negligible:
			\begin{equation}\label{eq:thm-di-est-main-result-trace-lb}
				\tr[(\ketbra{0}^{\otimes n} +\ketbra{1}^{\otimes n})_{B''}\Gamma(\rho)] = 1-\cO(\varepsilon)
			\end{equation}
			and the estimator $v_{\mathrm{III}}$ in the teleportation-and-measurement protocol (Eq.~\eqref{eq:v_III-protocol-specification}) approximates the expectation value of the target observable $L$ up to global transpose, specifically:
			\begin{equation}\label{eq:thm-di-est-main-result-up-to-global-conj}
				\left\lvert v_\mathrm{III}-\tr[\left(L_{B'}\otimes\ketbra{0}_{B''}^{\otimes n}+L_{B'}^*\otimes\ketbra{1}_{B''}^{\otimes n} \right)\Gamma(\rho)]\right\rvert = \cO(W\varepsilon).
			\end{equation} 
		\end{lemma}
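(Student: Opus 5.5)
The proof combines Lemma~\ref{lemma:II-implications-approximate} for every adjacent pair with a union bound over the bad flag configurations, and then feeds the result into Lemma~\ref{lemma:DI-estimation-L-up-to-partial-transpose}.

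\textbf{Step 1: the trace bound.} For each $j\in[n-1]$ define the projector
\[
\Pi_j:=(\ketbra{01}+\ketbra{10})_{B_j''B_{j+1}''}.
\]
Any string $\iota\in S=\{0,1\}^n\setminus\{0^n,1^n\}$ has some $j$ with $\iota_j\neq\iota_{j+1}$. Hence
\[
\mathbb{I}-(\ketbra{0}^{\otimes n}+\ketbra{1}^{\otimes n})_{B''}=\sum_{\iota\in S}\ketbra{\iota}_{B''}\le\sum_{j=0}^{n-2}\Pi_j
\]
as operators on $B''$. The channel $\Gamma$ factors as $\Gamma_A\circ V_B$, where $\Gamma_A$ is trace preserving and acts trivially on $B''$; this is the same decomposition used at the end of the proof of Lemma~\ref{lemma:II-implications-approximate}, now applied to all $n$ sites. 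Therefore
\[
\tr[\Pi_j\Gamma(\rho)]=\tr[\Pi_j V_{B_j}\otimes V_{B_{j+1}}\rho\, V^\dagger_{B_j}\otimes V^\dagger_{B_{j+1}}],
\]
since the isometries on the other sites drop out under the partial trace. Lemma~\ref{lemma:II-implications-approximate} bounds this by $\cO(\epsilon_{\mathrm{II},(j,j+1)}+\sqrt{\epsilon_{\mathrm{I},j}}+\sqrt{\epsilon_{\mathrm{I},j+1}})$. Summing over $j$, each $\sqrt{\epsilon_{\mathrm{I},l}}$ appears at most twice, so the total is $\cO(\varepsilon)$. This gives Eq.~\eqref{eq:thm-di-est-main-result-trace-lb}.

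\textbf{Step 2: the estimator.} Lemma~\ref{lemma:DI-estimation-L-up-to-partial-transpose} gives
\[
\Bigl\lvert v_{\mathrm{III}}-\sum_{\iota}\tr\bigl[(L^{T_\iota}\otimes\ketbra{\iota})\Gamma(\rho)\bigr]\Bigr\rvert=\cO\Bigl(W\sum_l\sqrt{\epsilon_{\mathrm{I},l}}\Bigr)=\cO(W\varepsilon).
\]
The terms $\iota=0^n$ and $\iota=1^n$ are exactly the target terms, because the full transpose $L^{T}$ equals $L^*$ for Hermitian $L$. The remaining terms are bounded using
\[
\Bigl\lvert\tr\bigl[(L^{T_\iota}\otimes\ketbra{\iota})\Gamma(\rho)\bigr]\Bigr\rvert\le\opnorm{L^{T_\iota}}\,\tr\bigl[\ketbra{\iota}_{B''}\Gamma(\rho)\bigr],
\]
which holds because $\ketbra{\iota}\otimes\mathbb{I}$ is a projector commuting with the operator. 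Summing over $\iota\in S$ and using Step 1 gives a bound of $\max_\iota\opnorm{L^{T_\iota}}\cdot\cO(\varepsilon)$.

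\textbf{Main obstacle.} The delicate point is bounding $\opnorm{L^{T_\iota}}$ by $\cO(W)$ without an exponential factor. Partial transposition does not preserve the operator norm in general. Here, however, $L^{T_\iota}$ has the same form as Eq.~\eqref{eq:L-def} with $\sigma_Y\to-\sigma_Y$ on the transposed sites. It is therefore $\mathbb{E}_{\vec P}\sum_{\vec b}\omega(\vec b|\vec P)\,\Pi'_{\vec b,\vec P}$, where for each fixed $\vec P$ the operators $\Pi'_{\vec b,\vec P}$ are the projectors of a complete product projective measurement (an orthonormal basis). Each summand over $\vec b$ is thus a diagonal operator with entries bounded by $W$, so its norm is at most $W$, and averaging over $\vec P$ keeps the norm at most $W$. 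Combining this with Steps 1 and 2 yields Eq.~\eqref{eq:thm-di-est-main-result-up-to-global-conj} with error $\cO(W\varepsilon)$.
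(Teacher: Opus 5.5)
Your proposal is correct and follows essentially the same route as the paper's proof. You bound $\tr[\Pi_j\Gamma(\rho)]$ for each adjacent pair via Lemma~\ref{lemma:II-implications-approximate}, using the factorization of $\Gamma$ through the isometries $V_{B_l}$, and take a union bound to get Eq.~\eqref{eq:thm-di-est-main-result-trace-lb}; you then discard the mixed-transpose terms of Lemma~\ref{lemma:DI-estimation-L-up-to-partial-transpose} at a cost of $W$ times that probability. Your argument that $\lVert L^{T_\iota}\rVert_\infty\le W$, because each fixed-$\vec P$ term stays diagonal in a product basis with eigenvalues $\omega(\vec b|\vec P)$, justifies a bound that the paper only asserts.
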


		\begin{proof}
			Define $S_j \coloneqq \{\iota : \iota \in \{0,1\}^n, \iota_j \neq \iota_{j+1}\}$ for $j \in [n-1]$. Then, 
			\begin{equation}\label{eq:intermediate-result-cor-DI-est-L-global-conj}
				\begin{split}
					1 - \tr[(\ketbra{0}^{\otimes n} +\ketbra{1}^{\otimes n})_{B''}\Gamma(\rho)] & = \tr[\left(\sum_{\iota \neq 0^n, 1^n}\ketbra{\iota}_{B''}\right)\Gamma(\rho)] \\
					&\le \tr[\left(\sum_{j\in[n-1]}\sum_{\iota \in S_j}\ketbra{\iota}_{B''}\right)\Gamma(\rho)] \\
					& = \sum_{j \in [n-1]} \cO\left(\epsilon_{\mathrm{II}, (j, j+1)} + \sqrt{\epsilon_{\mathrm{I},j}} + \sqrt{\epsilon_{\mathrm{I},j+1}}\right).
				\end{split}
			\end{equation}
			Here, the last equation comes from Lemma \ref{lemma:II-implications-approximate} (also see~\eqref{eq:gamma-decomp-A-and-V-isometry}) and $\sum_{\iota \in S_j}\ketbra{\iota}_{B''} = \ketbra{01}_{B''_j,B''_{j+1}}+\ketbra{10}_{B''_j,B''_{j+1}}$. This proves Eq.~\eqref{eq:thm-di-est-main-result-trace-lb}.
			
			Moreover, 
			\begin{equation}\label{eq:intermediate-result-cor-DI-est-L-global-conj2}
				\begin{split}
					\abs{\tr[\left(L_{B'}\otimes\ketbra{0}_{B''}^{\otimes n}+L_{B'}^*\otimes\ketbra{1}_{B''}^{\otimes n} \right)\Gamma(\rho)] - \tr[\left(\sum_{\iota\in\{0,1\}^n}L^{T_\iota}_{B'}\otimes\ketbra{\iota}_{B''} \right)\Gamma(\rho)]} &= \abs{\tr[\left(\sum_{\iota \neq 0^n, 1^n}L^{T_\iota}_{B'}\otimes\ketbra{\iota}_{B''} \right)\Gamma(\rho)]} \\
					&\le W \tr[\left(\sum_{\iota \neq 0^n, 1^n}\ketbra{\iota}_{B''} \right)\Gamma(\rho)] \\
					&= W \sum_{j \in [n-1]} \cO\left(\epsilon_{\mathrm{II}, (j, j+1)} + \sqrt{\epsilon_{\mathrm{I},j}} + \sqrt{\epsilon_{\mathrm{I},j+1}}\right).
				\end{split}
			\end{equation}
			Here, the second equation comes from $\opnorm{L^{T_\iota}} \le \max\{\abs{\omega(\vec{b}|\vec{P})}\} \le W$ for any $\iota \in \{0,1\}^n$, and the third equation comes from Eq.~\eqref{eq:intermediate-result-cor-DI-est-L-global-conj}.
			Combining Eq.~\eqref{eq:intermediate-result-cor-DI-est-L-global-conj2} and Eq.~\eqref{eq:thm-di-est-main-result-product-LOCC-partial-transpose} in Lemma \ref{lemma:DI-estimation-L-up-to-partial-transpose} by triangular inequality proves Eq.~\eqref{eq:thm-di-est-main-result-up-to-global-conj}.
		\end{proof}

		\subsection{Summary of estimators}
		In Box~\ref{box:ProtocolEstimators} we summarize the quantities introduced in the preceding sections. The definition of the function $f$ is given in \eqref{eq:Pauli_correction}. Our subsequent results will make extensive use of the expectation values $\vI,\vII,\vIII$ as well as their finite-sample estimators, denoted by $\hvI,\hvII,\hvIII$.
		
		\begin{mybox}[label={box:ProtocolEstimators}]{Estimation protocols}
			\begin{enumerate}[label=(\Roman*)]
				\item\label{bullet:I} Self-testing Bell states and Pauli measurements via 3-CHSH: 
				For all configurations $l\in [n], k\in \{0,1,2\}$: Let $(x_l(0,k),x_l(1,k),y_l(0,k),y_l(1,k))$ be the one defined in Eq.~\eqref{eq:3-CSHS-inputs}.
				For $i,j\in \{0,1\}$ measure $A^{(l)}_{x_l(i,k)} \to a_l(i,k)$ and $B_{y_l(j,k)}^{(l)}\to b_l(j,k)$.
				Then set
				\begin{equation}
					v_\mathrm{I}(l,k) = 2\sqrt{2}-\sum_{i,j\in\{0,1\}}(-1)^{ij}\bE[(-1)^{a_l+b_l}\mid x_l=x_l(i,k),y_l=y_l(j,k)].
				\end{equation}
				
				\item\label{bullet:II}  Removing partial transpose:  For all configurations $l\in [n-1]$: Uniformly randomly sample $P\in \{X,Y,Z\}$.
				Measure $M^{(l)}_{\vec{a}_l|\Dr}\to \vec{a}_l$ and $M^{(l+1)}_{\vec{a}_{l+1}|\Dl}\to \vec{a}_{l+1}$ on system $A_l$ and $A_{l+1}$.
				Measure $P^{(l)}_{B}\to b_l$ and $P^{(l+1)}_{B} \to b_{l+1}$. 
				Then set
				\begin{equation}
					v_{\mathrm{II}}(l,l+1) = \mathbb{E}_{P\gets\{X,Y,Z\}}[1-(-1)^{\mathbf{1}[P=Y]}\bE[(-1)^{f(b_l|P,\vec{a}_l)}(-1)^{f(b_{l+1}|P,\vec{a}_{l+1})}\mid x_l={\Dr}, x_{l+1}={\Dl}, y_l=y_{l+1}=P]].
				\end{equation}
				
				\item\label{bullet:III}  Estimating $L$: Measure $M^{(l)}_{\vec{a}_l|\diamond}$ on all systems $A_l$, obtaining $\{\vec{a}_l\}_{l=0}^{n-1}$.
				Given a local measurement protocol defined through $\mathcal{D}$ and $\omega(\vec{b}|\vec{P})$  (see~\eqref{eq:L-def}), randomly sample $\vec{P}$ according to $\mathcal{D}$.
				Identify $X \mapsto X_B^{(l)},\ Y \mapsto Y_B^{(l)},\ Z \mapsto Z_B^{(l)}$ and measure the systems $B$ according to $\vec{P}$, obtaining a bit-string of outcomes $\vec{b}$.
				Set
				\begin{equation}
					v_{\mathrm{III}} = \mathbb{E}_{\vec{P}\gets\mathcal{D}}[\omega(f(\vec{b}|\vec{P},\{\vec{a}_l\}_{l=0}^{n-1})|\vec{P})\mid \forall l, x_l={\diamond},y_l=P_l].
				\end{equation}
			\end{enumerate}
		\end{mybox}
		
		\subsection{Refining the extraction channel $\Gamma$}\label{subsec:refinement-extraction-channel}
		The result established in Lemma~\ref{cor:DI-estimation-L-up-to-global-conj}, while guaranteeing DI multipartite Pauli measurements up to a global transpose, presents a drawback: the extraction channel $\Gamma = \bigotimes_l \Gamma_l$ is not a product of standard quantum channels acting only on the main parties $A_l$. This is because $\Gamma_l$ requires classical communication between $A_l$ and $B_l$ for the Pauli correction. 
		To develop a more conventional and practical self-testing protocol, it is highly desirable, yet challenging to construct an extraction channel $\Lambda=\bigotimes_l \Lambda_l$ that is a product of local quantum channels $\Lambda_l$, acting only on the main systems $A_l$ where the target state resides. This subtlety has not always been addressed in prior work \cite{Supic2023NetworkSelftest}.
		
		Here, we settle this problem by explicitly constructing such a channel. We summarize the result in the following theorem. Detailed proofs are deferred to the end of this section.
		\begin{theorem}[Self-testing multipartite Pauli measurements with local extraction channel]\label{thm:DI-estimation-L}
			Let $L$ be the $n$-partite observable defined in Eq.~\eqref{eq:L-def}, with the weighting coefficients bounded by a constant $W>1$, i.e., $|\omega(\vec{b}|\vec{P})| \leq W$.
			Suppose that the deviation from the maximal CHSH value (Eq.~\eqref{eq:CHSHestimator}) for each party $l\in [n]$ in basis $k\in \{0,1,2\}$ is bounded by $v_\mathrm{I}(l,k) \leq \epsilon_{\mathrm{I},l}$, and the two-party estimator (Eq.~\eqref{eq:v_II-protocol-specification}) for $l\in[n-1]$  is bounded by $v_{\mathrm{II}}(l,l+1)\leq \epsilon_{\mathrm{II},(l,l+1)}$.  
			Let $\varepsilon = \sum_{j=0}^{n-2}\epsilon_{\mathrm{II},(j,j+1)}+\sum_{l=0}^{n-1}\sqrt{\epsilon_{\mathrm{I},l}}$.

			Then, there exists a product of local quantum channels (with explicit construction given in Eq.~\eqref{eq:explict_construction_Lambda}):
			\begin{equation}
				\Lambda = \bigotimes_{l=0}^{n-1} \Lambda_l, \quad \Lambda_l: A_l \rightarrow A\ab_l A\abb_l,
			\end{equation}
			where $A\ab_l, A\abb_l$ are two-dimensional qubit spaces, such that the total probability of the non-identity and non-full transpose terms is negligible:
			\begin{equation}\label{eq:thm-Pauli-prob}
				\tr[(\ketbra{0}^{\otimes n} +\ketbra{1}^{\otimes n})_{A\abb}\Lambda(\rho)] = 1-\cO(\varepsilon),
			\end{equation}
			and the estimator $v_{\mathrm{III}}$ in the teleportation-and-measurement protocol (Eq.~\eqref{eq:v_III-protocol-specification}) approximates the expectation value of the target observable $L$ up to global transpose, specifically:
			\begin{equation}\label{eq:thm-Pauli-value}
				\left\lvert v_\mathrm{III}-\tr[\left(L_{A\ab}\otimes\ketbra{0}_{A\abb}^{\otimes n}+L_{A\ab}^*\otimes\ketbra{1}_{A\abb}^{\otimes n} \right)\Lambda(\rho)]\right\rvert = \cO(W\varepsilon).
			\end{equation} 
		\end{theorem}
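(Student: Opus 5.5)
The plan is to reduce Theorem~\ref{thm:DI-estimation-L} to Lemma~\ref{cor:DI-estimation-L-up-to-global-conj} by building a local channel $\Lambda=\bigotimes_l\Lambda_l$ with $\Lambda(\rho)\approx\Gamma(\rho)$ after identifying $A\ab_l\simeq B'_l$ and $A\abb_l\simeq B''_l$. In trace distance the error should be $\cO(\sum_l\sqrt{\epsilon_{\mathrm{I},l}})$, up to the same flag-sector bookkeeping. Both claims then follow: Eq.~\eqref{eq:thm-Pauli-prob} from Eq.~\eqref{eq:thm-di-est-main-result-trace-lb}, and Eq.~\eqref{eq:thm-Pauli-value} from Eq.~\eqref{eq:thm-di-est-main-result-up-to-global-conj}, in each case by the triangle inequality and $\opnorm{L}\le W$.

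The idea is that the teleportation in $\Gamma_l$ uses the physical pair $S_l B_l$, which Lemma~\ref{lemma:pauli-upto-conj} certifies to be close to $\ket{\phi^+}_{A'_lB'_l}\ket{\xi}$ after $V_{A_l}\otimes V_{B_l}$. Since $\hat P_{A}^{(l)}$ and $\hat P_B^{(l)}$ agree on $\ket\psi$ up to the $(-1)^{\delta_{PY}}$ sign, the junk flag $B''_l$ is perfectly correlated with the $A''_l$ flag. I would define $\Lambda_l$ on $A_l$ in four steps:
\begin{enumerate}
\item Apply the isometry $V_{A_l}$.
\item Prepare a fresh local Bell pair $\ket{\phi^+}_{\tilde S_l A\ab_l}$.
\item Apply the BSM $\{M^{(l)}_{\vec a_l|\diamond}\}$, but with the role of the physical $S_l$ played by this fresh qubit. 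Concretely, one swaps the extracted qubit $A'_l$ with $\tilde S_l$ using the certified $\hat P_A^{(l)}$ operators, so that $M^{(l)}_{\vec a_l|\diamond}$ acts on the fresh pair's half.
\item Apply the correction $U_{\vec a_l}$ to $A\ab_l$, copy the flag $A''_l$ into $A\abb_l$, and trace out the rest.
\end{enumerate}
The output registers are $A\ab_lA\abb_l$. The point is that the state in which $A_l$'s half of the certified Bell pair is replaced by a locally prepared one coincides with the physical state, up to $\cO(\sqrt{\epsilon_{\mathrm{I},l}})$ in norm, on the relevant registers. This rests on Eq.~\eqref{eq:image-under-isometry} and the fact that a maximally entangled pair is decoupled from everything else.

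The comparison proceeds party by party with a hybrid argument, as in the proof of Lemma~\ref{lemma:DI-estimation-L-up-to-partial-transpose}. Let $\Xi_j$ denote the map that uses $\Lambda_l$ for $l<j$ and $\Gamma_l$ for $l\ge j$. I would show $\norm{\Xi_j(\rho)-\Xi_{j+1}(\rho)}_1=\cO(\sqrt{\epsilon_{\mathrm{I},j}})$. The argument is to write both as the same POVM $M_{\vec a_j|\diamond}$ and Pauli correction applied to two vectors: $(V_{A_j}\otimes V_{B_j})\ket\psi$ with $B'_j$ kept, versus the locally-prepared version. These are $\cO(\sqrt{\epsilon_{\mathrm{I},j}})$-close by Lemma~\ref{lemma:pauli-upto-conj} combined with Lemma~\ref{lemma:chsh-implications}. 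The other parties' operations are contractions, so the error propagates unchanged. Summing over $j$ gives $\cO(\sum_l\sqrt{\epsilon_{\mathrm{I},l}})\le\cO(\varepsilon)$.

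The main obstacle is step 3. $M^{(l)}_{\vec a_l|\diamond}$ is an uncharacterized operator on the physical $A_l$, including the physical $S_l$, and not on some abstract qubit. The "replace the physical Bell half by a local one" move must therefore be made with the SWAP-type isometry implemented by the $\hat P_A^{(l)}$ operators, which exchanges the content of $A'_l$ with a fresh qubit. It must also be checked that $M_{\vec a|\diamond}$ commutes with operations on $B_l$ and the other parties. I would first attempt to show this via the identity $(V_{A_l}^\dagger\otimes\mathbb{I})(\text{SWAP}_{A'_l\tilde S_l})(V_{A_l}\otimes V_{B_l})\ket\psi\otimes\ket{\phi^+}\approx V_{B_l}\ket\psi\otimes\ket{\phi^+}$ (with registers relabeled), using that the reduced state of $A'_lB'_l$ is nearly pure. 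I would also keep careful track of the $A''_l$/$B''_l$ flag correlation, since the global-conjugation flag must be read on the main side.
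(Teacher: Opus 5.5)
Your proposal is correct and is essentially the paper's proof. Your $\Lambda_l$ is exactly the channel of Eq.~\eqref{eq:explict_construction_Lambda}: apply $V_{A_l}$, replace the extracted half $A'_l$ by half of a locally prepared $\phi^+_{A'_lA^\flat_l}$, copy the flag from $A''_l$ (justified by its perfect correlation with $B''_l$ in $\xi$), undo $V_{A_l}$, then perform the physical BSM and apply $U_{\vec a_l}$ on $A^\flat_l$. Your identity $(V_{A_l}^\dagger\otimes\mathbb{I})\,\mathrm{SWAP}\,(V_{A_l}\otimes V_{B_l})\ket{\psi}\otimes\ket{\phi^+}\approx V_{B_l}\ket{\psi}\otimes\ket{\phi^+}$ (after relabeling $B'_l\to A^\flat_l$) is the key step of Lemma~\ref{lem:teleportation_AAab}. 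Your party-by-party hybrid, which uses Lemma~\ref{lemma:pauli-upto-conj} only on $\psi$ and then data processing, does the job of the induction in Lemma~\ref{lem:replace_system_B}.

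Two details need fixing.
\begin{itemize}
\item The flag copy $\mathcal{CX}_{A''_lA^{\flat\flat}_l}$ must be applied before undoing $V_{A_l}$. Undoing it is realized as a channel by $\tilde V^\dagger$ followed by $\tr_{A'_lA''_l}$, and afterwards $A''_l$ no longer carries the flag. So your step 4 ordering fails as written, though it is easily repaired.
\item As you anticipate, $\Lambda(\rho)$ is close only to the flag-dephased, relabeled $\Gamma(\rho)$. This is the paper's $\mathcal{CX}_{B''A^{\flat\flat}}$ form with $B''$ discarded. It suffices, because every observable entering Eqs.~\eqref{eq:thm-Pauli-prob} and \eqref{eq:thm-Pauli-value} is diagonal in the flag basis.
\end{itemize}
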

		
		Now we proceed to construct the local channel $\Lambda_l$. Comparing the target Eq.~\eqref{eq:thm-Pauli-value} with the previous result in Eq.~\eqref{eq:thm-di-est-main-result-up-to-global-conj}, the task is to replace the auxiliary systems $B'_l$ and $B''_l$ with the new local systems $A\ab$ and $A\abb$.
		We leverage the properties of the local isometry $V_{A_l}V_{B_l}$ (given in Eq.~\eqref{eq:image-under-isometry}), which maps the input state $\ket{\psi}$ to an approximately factorized form:
		\begin{equation}\label{eq:distance_Vl_xi}
			V_{A_l}V_{B_l}\ket{\psi}_{ABR} \approx_{\sqrt{\epsilon_{\mathrm{I},l}}} \ket{\phi^+}_{A_l'B_l'}\ket{\xi_l}_{ABRA''_lB''_l},
		\end{equation}
		where 
		\begin{equation}
			\ket{\xi_l} = \frac{1}{2\sqrt{2}}\left[\ket{00}_{A''_lB''_l}(\mathbb{I}+i\hat{Y}_{A}^{(l)}\hat{X}_{A}^{(l)})+\ket{11}_{A''_lB''_l}(\mathbb{I}-i\hat{Y}_{A}^{(l)}\hat{X}_{A}^{(l)})\right](\mathbb{I}+\hat{Z}_{A}^{(l)})\ket{\psi}_{ABR}.
		\end{equation}
		Our strategy for constructing $\Lambda_l$ is based on the following ideas: 
		\begin{enumerate}
			\item To replace $B'_l$ with $A\ab_l$, we throw away the Bell state $\ket{\phi^+}_{A'_lB'_l}$, and locally prepare a Bell state $\ket{\phi^+}_{A'_lA\ab_l}$ to ``counterfeit'' the original one that was essential to the teleportation mechanism.  
			\item To replace $B''_l$ with $A\abb_l$, we introduce a fresh ancilla $\ket{0}_{A\abb_l}$ and apply $CX_{B''_lA\abb_l}$ to copy the information from $B''_l$ to $A\abb_l$.  
			\item By observing from $\ket{\xi_l}$ that the $A''_l$ register is perfectly correlated with $B''_l$, we can replace the non-local unitary $CX_{B''_lA\abb_l}$ with a local unitary $CX_{A''_lA\abb_l}$. 
		\end{enumerate}
		
		To formalize the conceptual steps outlined above, we introduce an intermediate channel $\Lambda_0$:
		\begin{equation}\label{eq:intermediate_channel}
			\Lambda_0 \coloneqq \bigotimes_{l=0}^{n-1} \Lambda_{l}^0, \quad \Lambda^0_l: \rho \rightarrow \mathcal{CX}_{A''_lA\abb_l}(\tr_{A_l'B_l'}[\mathcal{V}_{A_l} (\mathcal{V}_{B_l} (\rho))] \otimes \phi^+_{A'A\ab} \otimes \ketbra{0}_{A\abb_l}).
		\end{equation}
		Here, $\mathcal{V}: \rho \rightarrow V\rho V^{\dagger}$ denotes the channel representation of the corresponding isometry $V$, and $\mathcal{CX}$ is defined analogously. For compactness, we use $\mathcal{V}_A \coloneqq \bigotimes_{l=0}^{n-1} \mathcal{V}_{A_l}$ and similarly for $\mathcal{V}_B, \mathcal{CX}_{A''A\abb}, \ketbra{0}_{A\abb}, \phi^+_{A'A\ab}$.
		We show that $\Lambda_0$ successfully replaces $B'_l$ with $A\ab_l$, and copies information from $B''_l$ to $A\abb$, in the following sense. 
		\begin{lemma}[Replacing $B_l'B_l''$ with $A^{\ab}A^{\abb}$]\label{lem:replace_system_B}
			Suppose that the deviation from the maximal CHSH value (Eq.~\eqref{eq:CHSHestimator}) for each party $l\in [n]$ in basis $k\in \{0,1,2\}$ is bounded by $v_\mathrm{I}(l,k) \leq \epsilon_l$. 
			Then, the channel $\Lambda_0$ applied to the pure state $\psi$ satisfies:
			\begin{equation}\label{eq:replacing_B}
				D\left(\Lambda_0(\psi), \mathcal{CX}_{B''A\abb}(\phi^+_{A'A\ab} \otimes \xi_{ABRA''B''} \otimes \ketbra{0}_{A\abb})\right) = \cO\left( \sum_{l=0}^{n-1}\sqrt{\epsilon_l}\right),
			\end{equation}
			and 
			\begin{equation}
				\ket{\xi}_{ABRA''B''} \coloneqq \bigotimes_{l=0}^{n-1} \left\{\frac{1}{2\sqrt{2}}\left[\ket{00}_{A''_lB''_l}(\mathbb{I}+i\hat{Y}_{A}^{(l)}\hat{X}_{A}^{(l)})+\ket{11}_{A''_lB''_l}(\mathbb{I}-i\hat{Y}_{A}^{(l)}\hat{X}_{A}^{(l)})\right](\mathbb{I}+\hat{Z}_{A}^{(l)}) \right\}\ket{\psi}_{ABR}.
			\end{equation}
		\end{lemma}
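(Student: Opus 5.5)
We prove Lemma~\ref{lem:replace_system_B} with a hybrid argument that applies the isometries one pair at a time.

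\textbf{Reduction to the isometry image.} First, move everything except the isometries outside. By construction,
\[
\Lambda_0(\psi)=\mathcal{CX}_{A''A\abb}\Bigl(\tr_{A'B'}\bigl[(V_AV_B)\psi(V_AV_B)^\dagger\bigr]\otimes\phi^+_{A'A\ab}\otimes\ketbra{0}_{A\abb}\Bigr),
\]
where $V_A=\bigotimes_l V_{A_l}$ and $V_B=\bigotimes_l V_{B_l}$. The maps $\mathcal{CX}$, the partial trace and tensoring with fixed ancillas are all trace non-increasing. Trace distance is contractive under them. It therefore suffices to bound the distance between the pure states $V_AV_B\ket{\psi}$ and $\ket{\phi^+}^{\otimes n}_{A'B'}\ket{\xi}$. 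We then note one further fact. On $\ket{\xi}$, each pair $A''_lB''_l$ is supported only on $\ket{00}$ and $\ket{11}$. Hence $CX_{A''_lA\abb_l}$ and $CX_{B''_lA\abb_l}$ act identically on the state once $\ket{0}_{A\abb_l}$ is appended. This turns the $A''$-controlled gates in $\Lambda_0$ into the $B''$-controlled gates in the claimed target state, at no cost. The trace over $A'B'$ then simply discards $\phi^{+\otimes n}_{A'B'}$.

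\textbf{Hybrid argument over sites.} The main step is to show
\[
\bigl\|V_AV_B\ket{\psi}-\ket{\phi^+}^{\otimes n}\ket{\xi}\bigr\|=\cO\Bigl(\sum_l\sqrt{\epsilon_l}\Bigr).
\]
We do this by applying $V_{A_l}V_{B_l}$ one site at a time. Let $\ket{\psi_j}$ denote the state after the first $j$ pairs of isometries are replaced by their ideal action. That is, $\ket{\psi_j}=\bigotimes_{l<j}\ket{\phi^+}_{A'_lB'_l}\,\bigotimes_{l<j}\Pi_l\ket{\psi}$, where $\Pi_l$ is the operator defining $\xi_l$ (see Eq.~\eqref{eq:distance_Vl_xi}). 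Lemma~\ref{lemma:chsh-implications} supplies the unitary relations with error $\cO(\sqrt{\epsilon_l})$. Lemma~\ref{lemma:pauli-upto-conj} then gives $V_{A_l}V_{B_l}\ket{\psi}\approx_{\sqrt{\epsilon_l}}\ket{\phi^+}\Pi_l\ket{\psi}$ for the single site $l$.

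\textbf{Main obstacle.} The difficulty is to carry out step $j$ on the already-modified vector $\bigotimes_{l<j}\Pi_l\ket{\psi}$ rather than on $\ket{\psi}$ itself. The approximate relations of Lemma~\ref{lemma:chsh-implications} are only guaranteed on $\ket{\psi}$. We handle this in two ways.

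\emph{Site-$j$ relations.} The operators of site $j$ act on $A_jB_j$, while $\Pi_l$ for $l<j$ is built from operators on $A_l$. These commute. We expand $V_{A_j}V_{B_j}$ as a sum, with computational-basis coefficients, of bounded products of $\hat P^{(j)}_C$. Each approximation step in the proof of Lemma~\ref{lemma:pauli-upto-conj} has the form $\|(O-O')\ket{\psi}\|\le\sqrt{\epsilon_j}$, with $O,O'$ local to site $j$. Applying the bounded operator $\bigotimes_{l<j}\Pi_l$ (norm $\cO(1)$, since $\Pi_l$ has operator norm at most $\tfrac{1}{2\sqrt2}\cdot2\cdot2\cdot\sqrt{2}=\sqrt{2}$ on the two-register block) preserves this error up to a constant.

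\emph{Avoiding exponential growth.} A product of $n$ such norms could grow exponentially. To avoid this, we would instead use that each $\Pi_l$, composed with the site-$l$ isometry step, is close to a genuine isometric action on $\ket{\psi}$. Concretely, $\|\Pi_l\ket{\psi}\|\approx1$ and $V_{A_l}V_{B_l}$ is an isometry. We would therefore keep the isometries $V_{A_l}V_{B_l}$ for $l<j$ as exact isometries in the hybrid, rather than replacing them. That is, we bound $\|V_{A_{<j}}V_{B_{<j}}(V_{A_j}V_{B_j}-\text{ideal}_j)\ket{\psi}\|$, which equals $\|(V_{A_j}V_{B_j}-\text{ideal}_j)\ket{\psi}\|$ by isometric invariance and commutation across sites. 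The ideal map at site $j$ is realized as a norm-$\cO(1)$ local operator $W_j$ with $W_j\ket{\psi}=\ket{\phi^+}\Pi_j\ket{\psi}$.

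\textbf{Closing the sum.} Summing the per-site errors by the triangle inequality gives the bound $\cO(\sum_l\sqrt{\epsilon_l})$. Converting vector distance to trace distance of the pure states, $D\le\|\cdot\|$, completes the proof. The delicate point is the bookkeeping of the final commutation $W_l$ versus $V_{A_l}V_{B_l}$ across sites, which needs the per-site errors to stay additive rather than multiplicative.
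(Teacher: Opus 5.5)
You follow the same strategy as the paper: reduce to the isometry image, replace the isometries one site at a time, and swap the control of $CX_{A''_lA\abb_l}$ for $B''_l$. Your outer reduction and that control swap are fine. You also correctly isolate the crux that the paper's proof passes over. Its inductive step bounds the cost of site $j$ by the single-site distance ``by data processing'', but passing from $\psi$ to $\xi_{<j}$ means applying $\prod_{l<j}\Pi_l$, which is not a channel.

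Your remedy does not close this. Write $G_l:=V_{A_l}V_{B_l}$ and $W_l:=\ket{\phi^+}_{A'_lB'_l}\Pi_l$. The quantity you bound, $\norm{(\prod_{l<j}G_l)(G_j-W_j)\ket\psi}$, only compares $\prod_{l\le j}G_l\ket\psi$ with $(\prod_{l<j}G_l)W_j\ket\psi$. These differences do not telescope to $\prod_lG_l\ket\psi-\prod_lW_l\ket\psi$. Any hybrid that switches one site at a time must, in its late steps, apply many $W_l$ to an error vector $(G_j-W_j)\ket\psi$. Since $\Pi_l^\dagger\Pi_l=\mathbb{I}+\hat Z_A^{(l)}$, each such factor can amplify that vector by $\sqrt2$.

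No bookkeeping can rescue this, because your main step is false for $\ket\xi=\prod_l\Pi_l\ket\psi$, and so is the lemma as written. Take ideal measurements, so $\hat X_A^{(l)},\hat Y_A^{(l)},\hat Z_A^{(l)}$ are exactly $X,Y,Z$ on $S_l$. Then $i\hat Y_A^{(l)}\hat X_A^{(l)}=Z$ and $\Pi_l=\sqrt2\,\ket{00}_{A''_lB''_l}\otimes\ketbra{0}_{S_l}$. Take $\ket\psi\propto\sqrt{1-\eta}\,\ket{\phi^+}^{\otimes n}+\sqrt\eta\,\ket{0}^{\otimes 2n}$ on $\bigotimes_lS_lB_l$. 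Since $(2\sqrt2\,\mathbb{I}-\mathcal{B}_k^{(l)})\ket{\phi^+}^{\otimes n}=0$, every $v_{\mathrm I}(l,k)=\cO(\eta)$, yet $\norm{\ket\xi}\approx\sqrt{1-\eta}+2^{n/2}\sqrt\eta$. For $\eta=2^{-n/2}$ you get $\sum_l\sqrt{\epsilon_l}=\cO(n2^{-n/4})\to0$ while $\norm{\ket\xi}\approx2^{n/4}$. The trace distance in Eq.~\eqref{eq:replacing_B} is then at least $\tfrac12(\norm{\ket\xi}^2-1)$.

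The repair is to change the junk vector. Let $Q_l$ project onto $\ket{\phi^+}_{A'_lB'_l}\otimes\mathrm{span}\{\ket{00},\ket{11}\}_{A''_lB''_l}$. Since $Q_lW_l=W_l$, you have $\norm{(\mathbb{I}-Q_l)G_l\ket\psi}\le\norm{(G_l-W_l)\ket\psi}=\cO(\sqrt{\epsilon_l})$. The $G_m$ with $m\neq l$ are isometries commuting with $Q_l$; this is where your isometric-invariance idea genuinely works. Hence $\norm{(\mathbb{I}-Q_l)V_AV_B\ket\psi}=\cO(\sqrt{\epsilon_l})$. The identity $\mathbb{I}-\prod_lQ_l=\sum_l(\prod_{m<l}Q_m)(\mathbb{I}-Q_l)$ makes these errors add. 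This gives $V_AV_B\ket\psi\approx\ket{\phi^+}^{\otimes n}\ket{\xi'}$ with error $\cO(\sum_l\sqrt{\epsilon_l})$, where $\ket{\xi'}:=(\bra{\phi^+}^{\otimes n}_{A'B'}\otimes\mathbb{I})\prod_lQ_l\,V_AV_B\ket\psi$. This $\ket{\xi'}$ is subnormalized and has each $A''_lB''_l$ pair in $\mathrm{span}\{\ket{00},\ket{11}\}$. The rest of your argument, including the control swap, then goes through with $\xi'$ in place of $\xi$. That is all Lemma~\ref{lem:teleportation_AAab} and Theorem~\ref{thm:DI-estimation-L} actually use.
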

		
		After the replacement, we recover the original physical state on $A_l$ by first applying the inverse channel $\mathcal{V}_{A_l}^{\dagger}$ to the system $A_lA_l'A_l''$, followed by the original BSM $\{M^{(l)}_{\vec{a}_l|\diamond}\}$. 
		To formalize $\mathcal{V}_{A_l}^{\dagger}$, note that the construction of $V_{A_l}$ in Fig.~\ref{fig:extended-swap-iso} allows us to express $\mathcal{V}_{A_l}$ as $\mathcal{V}_{A_l}(\rho) = \tilde{V} (\rho \otimes \ketbra{00}_{A_l'A_l''}) \tilde{V}^{\dagger}$ for a unitary $\tilde{V}$. 
		Accordingly, we define the inverse channel as  $\mathcal{V}_{A_l}^{\dagger} : \rho \mapsto \tr_{A_l'A_l''}(\tilde{V}^{\dagger}\rho \tilde{V})$, which guarantees that  $\mathcal{V}_{A_l}^{\dagger}\circ \mathcal{V}_{A_l}$ act as the identity channel.
		Crucially, the recovery unitary $U_{\vec{a}_l}$ now acts on $A\ab_l$ instead of $B_l'$. 
		Concretely, we introduce another channel $\Lambda_1$:
		\begin{equation}
			\Lambda_1 \coloneqq \bigotimes_{l=0}^{n-1}\Lambda_l^1, \quad \Lambda_l^1: \rho \rightarrow \sum_{\vec{a}_l\in  \{0,1\}^2} \mathcal{U}_{\vec{a}_l}^{A^{\ab}_l} \left( \tr_{A_l}[M_{\vec{a}_l|\diamond}^{(l)} \mathcal{V}_{A_l}^{\dagger} (\rho)]\right).
		\end{equation}
		We show that $\Lambda_1$ successfully extracts the wanted state from $\Lambda_0(\psi)$.
		
		\begin{lemma}[Teleportation on $AA^{\ab}$]\label{lem:teleportation_AAab}
			Suppose that the deviation from the maximal CHSH value (Eq.~\eqref{eq:CHSHestimator}) for each party $l\in [n]$ in basis $k\in \{0,1,2\}$ is bounded by $v_\mathrm{I}(l,k) \leq \epsilon_l$. Let $\Gamma$ be the quantum channel representing the teleportation followed by Pauli correction on $B'_l$ (Eq.~\eqref{eq:Gamma_channel}). Then, 
			\begin{equation}
				D\left(\Lambda_1 \circ \Lambda_0(\psi), \mathcal{CX}_{B''A^{\abb}}\left( \mathbb{I}_{B'\rightarrow A^{\ab}}[\Gamma(\psi)] \otimes \ketbra{0}_{A\abb}\right)\right) = \cO\left(\sum_{l=0}^{n-1}\sqrt{\epsilon_l}\right). 
			\end{equation}
			Here, $\mathbb{I}_{B'\rightarrow A\ab}$ replaces the system index $B'$ with $A\ab$, i.e.\ renames the system $B'$ to $A\ab$.
		\end{lemma}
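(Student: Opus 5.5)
The plan is to reduce the claim to Lemma~\ref{lem:replace_system_B}, using that trace distance is contractive under the CPTP map $\Lambda_1$, and then to check an exact identity on the idealized state. By Lemma~\ref{lem:replace_system_B},
\begin{equation*}
D\Big(\Lambda_1\circ\Lambda_0(\psi),\ \Lambda_1\big[\mathcal{CX}_{B''A\abb}(\phi^+_{A'A\ab}\otimes\xi\otimes\ketbra{0}_{A\abb})\big]\Big)=\cO\Big(\sum_l\sqrt{\epsilon_l}\Big).
\end{equation*}
Since $\Lambda_1$ acts only on $A_lA_l'A_l''A\ab_l$ and $\mathcal{CX}_{B''A\abb}$ acts only on $B''A\abb$, the two commute. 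It therefore suffices to show that $\Lambda_1(\phi^+_{A'A\ab}\otimes\xi)$ is $\cO(\sum_l\sqrt{\epsilon_l})$-close to $\mathbb{I}_{B'\to A\ab}[\Gamma(\psi)]$, keeping all other registers fixed.

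The second step relates $\phi^+_{A'A\ab}\otimes\xi$ to $\mathcal{V}_A\mathcal{V}_B(\psi)$ by renaming $B'\leftrightarrow A\ab$. Eq.~\eqref{eq:image-under-isometry}, applied party by party, gives $V_AV_B\ket{\psi}\approx\ket{\phi^+}_{A'B'}\ket{\xi}$. This requires a telescoping argument over $l$, analogous to the proof of Lemma~\ref{lemma:DI-estimation-L-up-to-partial-transpose}. Each step factors out one Bell pair using the local relations of Lemma~\ref{lemma:chsh-implications} for party $l$, and every operator involved is a contraction, so the total error is $\cO(\sum_l\sqrt{\epsilon_l})$. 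Hence $\phi^+_{A'A\ab}\otimes\xi$ equals the state $\mathcal{V}_A\mathcal{V}_B(\psi)$ with the label $B'$ renamed $A\ab$, up to this error in trace distance.

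Now apply $\Lambda_1$ to that renamed state. The map $\mathcal{V}_A^\dagger$ undoes $\mathcal{V}_A$ exactly, leaving $\mathcal{V}_B(\psi)$ with $B'$ renamed. The BSM outcome $\vec{a}_l$ and the correction $U_{\vec{a}_l}$ then act on $A\ab_l$, which now plays the role of $B'_l$. Because $\mathcal{V}_B$ acts on different systems than $M^{(l)}_{\vec{a}_l|\diamond}$ and $\tr_{A_l}$, the order of operations does not matter. The result is exactly $\mathbb{I}_{B'\to A\ab}[\Gamma(\psi)]$ by Eq.~\eqref{eq:Gamma_channel}; the partial trace in $\Gamma$ over $A_lB_l$ matches the traces here, with $B$ left untouched until it is discarded. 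A final application of contractivity of $\Lambda_1$ and the triangle inequality then gives the claim.

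The main obstacle is the multi-party factorization in the second step. Lemma~\ref{lemma:pauli-upto-conj} is stated for a single pair, and $\ket{\xi}$ involves the operators $(\mathbb{I}+\hat Z^{(l)}_A)$ and related terms for every $l$. One must check that the isometries and these operators for distinct $l$ commute, which holds because they act on different parties. One must also check that each replacement error is measured on a state that is still close to $\psi$ after the earlier operators are applied. I would handle this as Lemma~\ref{lem:replace_system_B} presumably does: apply the operators for party $l$ to $V_{A_{<l}}V_{B_{<l}}\ket\psi$, and use that $V_{A_l}V_{B_l}$ commutes with the earlier operators so that the single-party bound applies directly to $\ket\psi$.
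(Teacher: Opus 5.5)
Your route is the paper's own: invoke Lemma~\ref{lem:replace_system_B}, trade $\phi^+_{A'A\ab}\otimes\xi$ for the renamed $\mathcal{V}_A\mathcal{V}_B(\psi)$ by applying Eq.~\eqref{eq:distance_Vl_xi} party by party, then let $\mathcal{V}_A^\dagger$, the BSM and the correction reproduce $\Gamma$. Your last step is fine. The gap is the multi-party factorization $V_AV_B\ket{\psi}\approx\ket{\phi^+}^{\otimes n}_{A'B'}\ket{\xi}$, and the justification ``every operator involved is a contraction'' is false. Here $\ket{\xi}=\prod_l O_l\ket{\psi}$ with $O_l=\frac{1}{2\sqrt2}\bigl[\ket{00}_{A_l''B_l''}(\mathbb{I}+i\hat{Y}_A^{(l)}\hat{X}_A^{(l)})+\ket{11}_{A_l''B_l''}(\mathbb{I}-i\hat{Y}_A^{(l)}\hat{X}_A^{(l)})\bigr](\mathbb{I}+\hat{Z}_A^{(l)})$. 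A direct computation gives $O_l^\dagger O_l=\mathbb{I}+\hat{Z}_A^{(l)}$, so $\opnorm{O_l}=\sqrt2$ in general. Commuting $V_{A_l}V_{B_l}$ past the other factors does let you apply the single-party bound to $\ket{\psi}$. But in any telescoping order the resulting error vector is then multiplied by already-replaced factors $O_m$, whose product has norm up to $2^{(n-1)/2}$.

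This is not just a lossy bound; the intermediate claim is false. Take ideal measurements and $\ket{\psi}=\sqrt{1-\eta}\,\ket{\phi^+}^{\otimes n}_{SB}\ket{g}_R+\sqrt{\eta}\,\ket{00}^{\otimes n}_{SB}\ket{b}_R$ with $\braket{g}{b}=0$. Every $v_{\mathrm{I}}(l,k)=\cO(\eta)$, yet $\norm{\ket{\xi}}^2=\bra{\psi}\prod_l(\mathbb{I}+Z_{S_l})\ket{\psi}=1-\eta+\eta 2^n$. For $\eta=2^{-n/2}$ you get $\sum_l\sqrt{\epsilon_l}=\cO(n2^{-n/4})$, while $\norm{V_AV_B\ket{\psi}-\ket{\phi^+}^{\otimes n}\ket{\xi}}\ge\norm{\ket{\xi}}-1\approx 2^{n/4}$. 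The same example violates Lemma~\ref{lem:replace_system_B} as stated, since its second argument has trace $\norm{\ket{\xi}}^2$. Its proof uses the same sequential step and applies ``data processing'' to these non-trace-preserving maps, so your first step inherits the problem as well.

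The lemma itself survives if you replace $\xi$ by a projected junk state and the telescoping by a union bound. Let $\ket{\Omega}\coloneqq V_AV_B\ket{\psi}$ and $P_l\coloneqq\phi^+_{A_l'B_l'}\otimes(\ketbra{00}+\ketbra{11})_{A_l''B_l''}$.
\begin{enumerate}
\item The state $\ket{\xi_l}$ in Lemma~\ref{lemma:pauli-upto-conj} has no $\ket{01},\ket{10}$ components, so $\norm{(\mathbb{I}-P_l)V_{A_l}V_{B_l}\ket{\psi}}=\cO(\sqrt{\epsilon_l})$.
\item The other parties' isometries commute with $P_l$, hence $\bra{\Omega}(\mathbb{I}-P_l)\ket{\Omega}=\cO(\epsilon_l)$.
\item The $P_l$ commute, so $\mathbb{I}-\prod_lP_l\le\sum_l(\mathbb{I}-P_l)$. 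This gives $\norm{\ket{\Omega}-\prod_lP_l\ket{\Omega}}=\cO(\sqrt{\textstyle\sum_l\epsilon_l})\le\cO(\sum_l\sqrt{\epsilon_l})$, where $\prod_lP_l\ket{\Omega}=\ket{\phi^+}^{\otimes n}_{A'B'}\ket{\chi}$.
\item After normalizing $\chi$, $\Lambda_0(\psi)$ is close to $\mathcal{CX}_{A''A\abb}(\phi^+_{A'A\ab}\otimes\chi\otimes\ketbra{0}_{A\abb})$.
\item The control can be moved from $A''$ to $B''$ because $A_l''=B_l''$ on the support of $\chi$.
\item Finally, $\phi^+_{A'A\ab}\otimes\chi$ is close to $\mathcal{V}_A\mathcal{V}_B(\psi)$ with $B'$ renamed $A\ab$.
\end{enumerate}
From there your third step (undo $\mathcal{V}_A$, BSM, correction on $A\ab$) finishes the proof as you describe.
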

		
		The quantum channel $\Lambda$ is then formally defined as 
		\begin{equation}\label{eq:explict_construction_Lambda}
			\Lambda\coloneqq \bigotimes_{l=0}^{n-1} \Lambda_{l}^1 \circ \tilde{\Lambda}_l^0, \quad  \tilde{\Lambda}_l^0: \rho \rightarrow \mathcal{CX}_{A''_lA\abb_l}(\tr_{A_l'}[\mathcal{V}_{A_l} (\rho)] \otimes \phi^+_{A'A\ab} \otimes \ketbra{0}_{A\abb_l})
		\end{equation}
		Comparing to $\Lambda_l^0$, $\tilde{\Lambda}_l^0$ replaces the partial trace $\tr_{A'_lB'_l}$ with $\tr_{A'_l}$, and removes $\mathcal{V}_{B_l}$.
		The channel $\Lambda_l^1 \circ \tilde{\Lambda}_l^0$ therefore operates entirely on the $A_l$, satisfying the condition that $\Lambda$ is a product of local channels on $A_l$. 
		We are now able to prove our main result, Theorem~\ref{thm:DI-estimation-L}.

		\begin{proof}[Proof of Theorem~\ref{thm:DI-estimation-L}]
			By the definition of $CX$ gate, for any class of observables $\{O_{\iota}\}_{\iota \in \{0,1\}^n}$ and state $\rho_{B'B''}$, we have
			\begin{equation}
				\tr[\left(\sum_{\iota} (O_{\iota})_{B'} \otimes \ketbra{\iota}_{B''} \right) \rho_{B'B''}] = \tr[\left(\sum_{\iota} (O_{\iota})_{B'} \otimes \ketbra{\iota}_{A\abb} \right) \mathcal{CX}_{B''A\abb}(\rho_{B'B''}\otimes \ketbra{0}_{A\abb})].
			\end{equation}
			Suppose $\opnorm{O_{\iota}} \le M$ for any $\iota$, we have 
			\begin{equation}\label{eq:trace_distance_Gamma_Lambda}
				\begin{split}
					\tr[ \left(\sum_{\iota} (O_{\iota})_{B'} \otimes \ketbra{\iota}_{B''} \right) \Gamma(\psi)] 
					&= \tr[ \left(\sum_{\iota} (O_{\iota})_{B'} \otimes \ketbra{\iota}_{A^{\abb}} \right)  \mathcal{CX}_{B''A^{\abb}}\left(\Gamma(\psi) \otimes \ketbra{0}_{A\abb}\right)] \\
					&= \tr[ \left(\sum_{\iota} (O_{\iota})_{A^{\ab}} \otimes \ketbra{\iota}_{A^{\abb}} \right)  \mathcal{CX}_{B''A^{\abb}}\left( \mathbb{I}_{B'\rightarrow A^{\ab}}[\Gamma(\psi)] \otimes \ketbra{0}_{A\abb}\right)] \\
					&\approx_{M \varepsilon_1} \tr[\left(\sum_{\iota} (O_{\iota})_{A^{\ab}} \otimes \ketbra{\iota}_{A^{\abb}} \right)  \Lambda_1 \circ \Lambda_0 (\psi)] \\
					&= \tr[\left(\sum_{\iota} (O_{\iota})_{A^{\ab}} \otimes \ketbra{\iota}_{A^{\abb}} \right)  \Lambda(\psi)].
				\end{split}
			\end{equation}
			Here, $\varepsilon_1 \coloneqq \sum_{l=0}^{n-1}\sqrt{\epsilon_{\mathrm{I},l}}$. The third line uses Lemma~\ref{lem:teleportation_AAab} and $\opnorm{\sum_{\iota} O_{\iota} \otimes \ketbra{\iota}} \le M$, and the fourth line uses the fact that $\mathcal{V}_B$ acting on $B$ system does not affect the expectation value of any observable on $A\ab A\abb$ system.
			
			Then, Eq.~\eqref{eq:thm-Pauli-prob} is proved by combining Eq.~\eqref{eq:trace_distance_Gamma_Lambda}, Eq.~\eqref{eq:thm-di-est-main-result-trace-lb} and $\opnorm{\ketbra{0}^{\otimes n} + \ketbra{1}^{\otimes n}} = 1$; Eq.~\eqref{eq:thm-Pauli-value} is proved by combining Eq.~\eqref{eq:trace_distance_Gamma_Lambda}, Eq.~\eqref{eq:thm-di-est-main-result-up-to-global-conj} and $\opnorm{\sum_{\iota} L^{T_\iota} \otimes \ketbra{\iota}} \le W$.
		\end{proof}
		
		\begin{proof}[Proof of Lemma \ref{lem:replace_system_B}]
			The proof proceeds by induction on the number of parties $j$. For compactness, we first establish the notation for the tensor product over the first $j$ parties.
			Let $\mathcal{V}_{A_{<j}}$ denote the tensor product of channels $\mathcal{V}_{A_l}$ for $l<j$, i.e., $\mathcal{V}_{A_{<j}} \coloneqq \bigotimes_{l=0}^{j-1} \mathcal{V}_{A_l}$. We define $\mathcal{V}_{B_{<j}}$, $\mathcal{CX}_{A''_{<j}A\abb_{<j}}$, $\ketbra{0}_{A\abb_{<j}}$, and $\phi^+_{A'_{<j}A\ab_{<j}}$ analogously. We further define
			\begin{equation}
				\begin{split}\label{eq:tau_big_state}
					\ket{\xi_{<j}}_{ABRA''_{<j}B''_{<j}}  &\coloneqq \bigotimes_{l=0}^{j-1} \left\{\frac{1}{2\sqrt{2}}\left[\ket{00}_{A''_lB''_l}(\mathbb{I}+i\hat{Y}_{A}^{(l)}\hat{X}_{A}^{(l)})+\ket{11}_{A''_lB''_l}(\mathbb{I}-i\hat{Y}_{A}^{(l)}\hat{X}_{A}^{(l)})\right](\mathbb{I}+\hat{Z}_{A}^{(l)}) \right\}\ket{\psi}_{ABR}, \\
					\tau_{j} &\coloneqq \mathcal{CX}_{A''_{<j}A\abb_{<j}}(\phi^+_{A'_{<j}A\ab_{<j}} \otimes {\xi_{<j, ABRA''_{<j}B''_{<j}}} \otimes \ketbra{0}_{A\abb_{<j}}).
				\end{split}
			\end{equation}
			We aim to show that for a certain constant $c_0 > 0$, 
			\begin{equation}\label{eq:induction}
				D\left(\bigotimes_{l=0}^{j-1}\Lambda_{l}^0(\psi),  \tau_j \right) \le c_0 \sum_{l=0}^{j-1} \sqrt{\epsilon_{l}}
			\end{equation}
			holds for any $0 < j \le n$. 
			
			For the base case $j=1$, we consider only the $l=0$ party. Because $D(\phi_1,\phi_2) \le \norm{\ket{\phi_1}-\ket{\phi_2}}$ for two pure states $\ket{\phi_1}, \ket{\phi_2}$, by Eq.~\eqref{eq:distance_Vl_xi},  
			\begin{equation}\label{eq:basic_case}
				D(\mathcal{V}_{A_0} (\mathcal{V}_{B_0} (\psi)), \phi^+_{A'_0B'_0} \otimes \xi_{0, ABRA''_0B''_0}) \le c_0\sqrt{\epsilon_0}, 
			\end{equation}
			for a certain constant $c_0$. Then, the base case $j=1$ is proved by combining Eq.~\eqref{eq:basic_case} and data processing.
			
			Assuming Eq.~\eqref{eq:induction} holds for $j=j_0$, we now prove it for $j_1 = j_0+1$:
			\begin{equation}
				\begin{split}
					D\left(\bigotimes_{l=0}^{j_1-1}\Lambda_{l}^0(\psi), \tau_{j_1}\right) 
					&\le D\left(\bigotimes_{l=0}^{j_1-1}\Lambda_{l}^0(\psi), \Lambda_{j_0}^0(\tau_{j_0})\right) +   D\left(\Lambda_{j_0}^0(\tau_{j_0}), \tau_{j_1}\right) \\
					&\le D\left(\bigotimes_{l=0}^{j_0-1}\Lambda_{l}^0(\psi), \tau_{j_0}\right) +   D\left(\mathcal{V}_{A_{j_0}} (\mathcal{V}_{B_{j_0}}(\psi)), \phi^+_{A'_{j_0}B'_{j_0}} \otimes \xi_{j_0, ABRA''_{j_0}B''_{j_0}}\right) \\
					&\le c_0 \sum_{l=0}^{j_1-1} \sqrt{\epsilon_l}.
				\end{split}
			\end{equation}
			where the first line uses triangular inequality, the second line uses data processing, the third line uses induction and Eq.~\eqref{eq:distance_Vl_xi}. 
			This establishes the claim for $j=j_1$. 
			By induction, Eq.~\eqref{eq:induction} holds for $j=n$.

			Finally, we can directly replace $\mathcal{CX}_{A''_{<j}A\abb_{<j}}$ in Eq.~\eqref{eq:tau_big_state} with $\mathcal{CX}_{B''_{<j}A\abb_{<j}}$ , as system $A''$ and $B''$ are symmetric in $\xi_{<j}$. Substituting $j=n$ into Eq.~\eqref{eq:induction} proves Eq.~\eqref{eq:replacing_B} .

		\end{proof}
		
		\begin{proof}[Proof of Lemma~\ref{lem:teleportation_AAab}]
			For compactness, we denote $\mathcal{U}_{\{\vec{a}_l\}}^{A^{\ab}} \coloneqq \bigotimes_{l=0}^{n-1} \mathcal{U}_{\vec{a}_l}^{A^{\ab}_l}$, and similarly for $M_{\{\vec{a}_l\}|\diamond}, \mathcal{V}_A^{\dagger}$. 
			By applying $\mathcal{V}_A^{\dagger}$, we have
			\begin{equation}
				\begin{split}
					\mathcal{V}_A^{\dagger}(\Lambda_0(\psi)) &\approx_{\varepsilon} \mathcal{V}_A^{\dagger} (\mathcal{CX}_{B''A\abb}(\phi^+_{A'A\ab} \otimes \xi_{ABRA''B''} \otimes \ketbra{0}_{A\abb}))\\
					&=  \mathcal{CX}_{B''A\abb}(\mathcal{V}_A^{\dagger}(\phi^+_{A'A\ab} \otimes \xi_{ABRA''B''}) \otimes \ketbra{0}_{A\abb}) \\
					&\approx_{\varepsilon} \mathcal{CX}_{B''A^{\abb}} \circ \mathbb{I}_{B'\rightarrow A^{\ab}} [\mathcal{V}_B(\psi)] \otimes \ketbra{0}_{A^{\abb}}.
				\end{split}
			\end{equation}
			Here, $\varepsilon \coloneqq \sum_{l=0}^{n-1} \sqrt{\epsilon_l}$,  and $\rho \approx_{\varepsilon} \sigma$ denotes $D(\rho,\sigma) = \cO(\varepsilon)$. The first line uses Lemma~\ref{lem:replace_system_B}, and the third line applies Eq.~\eqref{eq:distance_Vl_xi} sequentially for $l\in[n]$. 
			
			Then, by applying $M_{\{\vec{a}_l\}}$ and $\mathcal{U}_{\{\vec{a}_l\}}^{A^{\ab}}$, we have
			\begin{equation}
				\begin{split}
					\Lambda_1 \circ \Lambda_0(\psi) &\approx_{\varepsilon} \sum_{\{\vec{a}_l\}_{l=0}^{n-1}} \mathcal{U}_{\{\vec{a}_l\}}^{A\ab}\left(\tr_{A_l}[M_{\{\vec{a}_l|\diamond \}} \mathcal{CX}_{B''A^{\abb}} \circ \mathbb{I}_{B'\rightarrow A^{\ab}} [\mathcal{V}_B(\psi)] \otimes \ketbra{0}_{A^{\abb}}]\right) \\
					&= \mathcal{CX}_{B''A^{\abb}} \left( \mathbb{I}_{B'\rightarrow A^{\ab}} \left\{\sum_{\{\vec{a}_l\}_{l=0}^{n-1}} \mathcal{U}_{\{\vec{a}_l\}}^{B'}\left(\tr_{A_l}[M_{\{\vec{a}_l\}|\diamond} \mathcal{V}_B(\psi) ]\right)\right\} \otimes \ketbra{0}_{A\abb}\right) \\
					&= \mathcal{CX}_{B''A^{\abb}} \left( \mathbb{I}_{B'\rightarrow A^{\ab}}[\Gamma(\psi)] \otimes \ketbra{0}_{A\abb}\right),
				\end{split}
			\end{equation}
			where the third line uses the definition of $\Gamma$. This completes the proof.
		\end{proof}
		
		\section{Sample complexity analysis}\label{sec:sample_comp}
		Having established self-tested Pauli measurement schemes based on the accurate values of $\vI$, $\vII$, and $\vIII$ (Box~\ref{box:ProtocolEstimators} and Theorem~\ref{thm:DI-estimation-L}), we now analyze a crucial step for practical implementation and scalability: the sample complexity required for robustly estimating these estimators in a physical experiment using finite samples.
		
		A central challenge arises in estimating the multi-party estimator $\vIII$. The protocol requires the spatially separated main parties $A_l$ to collectively teleport the entire $n$-partite state to the ancillary qubits $B_l$. Subsequently, the ancillary parties $B_l$ must sample a product Pauli basis $\vec{P}$ from the probability distribution $\mathcal{D}$ that defines the observable $L$. This collaborative sampling becomes challenging when the target distribution $\mathcal{D}$ is correlated across the individual systems.
		
		Here, we address the sample complexity under two distinct scenarios, yielding two main sample complexity theorems:
		\begin{enumerate}
			\item No shared randomness (local sampling): The spatially separated parties use their local randomness to choose measurement settings independently from each other. While even correlated distributions $\mathcal{D}$ can be sampled in this manner through rejection sampling, we consider a target distribution $\mathcal{D}$ that can be decomposed into a product of local probability distributions (Theorem~\ref{thm:fin-sample-DI-estimation-L-local-randomness}).
			\item Shared randomness (collaborative sampling): The spatially separated parties have access to shared randomness, which could be distributed using DI quantum key distribution protocols or similar means. They utilize this resource to sample the measurement settings $\vec{P}$ collaboratively, enabling the efficient implementation of non-local probability distributions $\mathcal{D}$ (Theorem~\ref{thm:fin-sample-DI-estimation-L-shared-randomness}).
		\end{enumerate}
		
		\subsection{General case}
		We proceed by first considering the most general case, where an arbitrary joint probability distribution governs the choice of measurement settings across all $n$ parties. Subsequently, we will apply this general analysis to derive the specific sample complexity bounds for the two practical scenarios mentioned above.

		We will use standard concentration inequalities in the following formulation:
		\begin{fact}[Chernoff bound]\label{fact:chernoff}
			Let $T>0$ and $\{X_i\}_{i=1}^T$ be a set of $T$ i.i.d.\ binary random variables with $\bE[X_i]=p$. Then, with probability at least $1-\delta$, we have $\sum_{i=1}^T X_i \ge N$, provided that $T= \Theta\left(\frac{1}{p}(N+\log\frac{1}{\delta})\right)$.
		\end{fact}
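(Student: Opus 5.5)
The plan is to derive this from the standard multiplicative Chernoff lower-tail bound for sums of i.i.d.\ Bernoulli variables. Only the upper-bound direction of the $\Theta(\cdot)$ is needed: it suffices to show that some $T = c\,\frac{1}{p}\left(N+\log\frac{1}{\delta}\right)$, with an absolute constant $c$, guarantees $\sum_{i=1}^T X_i \ge N$ with probability at least $1-\delta$. Taking $T$ larger only helps, since the sum of nonnegative variables is monotone in $T$.

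First I will recall, and if needed briefly rederive, the lower-tail bound. Let $S=\sum_{i=1}^T X_i$ and $\mu = \mathbb{E}[S]=Tp$. Then for $\eta\in(0,1)$,
\begin{equation*}
\Pr[S \le (1-\eta)\mu] \le \exp\!\left(-\frac{\eta^2\mu}{2}\right).
\end{equation*}
The derivation is the usual one. Apply Markov's inequality to $e^{-tS}$ with $t>0$, and use independence to factor the moment generating function as $\mathbb{E}[e^{-tS}] = (1-p+pe^{-t})^T \le \exp\!\big(\mu(e^{-t}-1)\big)$. Then optimize at $t=-\log(1-\eta)$, and use $(1-\eta)\log(1-\eta)\ge -\eta+\eta^2/2$.

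Next, fix $\eta = 1/2$ and choose $c=8$, so that $T = \frac{8}{p}\left(N+\log\frac1\delta\right)$ and $\mu = 8N + 8\log\frac1\delta$.
\begin{itemize}
\item Since $\mu\ge 8N$, we have $(1-\eta)\mu = \mu/2 \ge 4N \ge N$. Hence the event $S<N$ is contained in the event $S\le \mu/2$.
\item Since $\mu \ge 8\log\frac1\delta$, the failure probability is at most $\exp(-\mu/8)\le \delta$.
\end{itemize}
Together these give $\Pr[S\ge N]\ge 1-\delta$.

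The degenerate cases are harmless. If $N=0$ the claim is trivial. If $T$ is not an integer, rounding it up only increases $\mu$. I do not expect a genuine obstacle here. The only point needing care is bookkeeping the constants so that a single choice of $T$ simultaneously makes $\mu/2\ge N$ and $e^{-\mu/8}\le\delta$. Splitting $\mu$ additively into an $N$-part and a $\log(1/\delta)$-part, as above, handles both at once.
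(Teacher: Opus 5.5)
Your proof is correct. The paper states this fact without proof, and your derivation is exactly the standard Chernoff argument it invokes: you prove the multiplicative lower-tail bound $\Pr[S\le(1-\eta)\mu]\le e^{-\eta^2\mu/2}$ by the exponential-moment method, then instantiate it with $\eta=1/2$ and $T=\frac{8}{p}\bigl(N+\log\frac{1}{\delta}\bigr)$. This is the same technique (Markov's inequality on an exponential moment, together with $1+x\le e^x$) that the paper uses in its mixed Chernoff--Azuma lemma. Your reading of $\Theta(\cdot)$ as ``a sufficiently large constant multiple suffices,'' justified by monotonicity of $S$ in $T$, is also the right interpretation, since a too-small constant would make the claim false.
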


		\begin{fact}[Hoeffding inequality]\label{fact:hoeffding}
			Let $T>0$ and $\{X_i\}_{i=1}^T$ be a set of $T$ i.i.d.\ random variables with $\bE[X_i] = \mu$ and $\abs{X_i}\le \sigma$. Then, with probability at least $1-\delta$, we have $\abs{\frac{1}{T}\sum_{i=1}^T X_i - \mu} \le \epsilon$, provided that $T= \Theta\left(\frac{\sigma^2}{\epsilon^2}\log\frac{1}{\delta}\right)$.
		\end{fact}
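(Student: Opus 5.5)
This is the standard Hoeffding inequality, and I would prove it with the exponential-moment (Chernoff) method applied to both tails separately. Write $S_T=\sum_{i=1}^T (X_i-\mu)$. The variables $Y_i\coloneqq X_i-\mu$ are i.i.d., have mean zero, and take values in an interval of length at most $2\sigma$, since $X_i\in[-\sigma,\sigma]$. The target is the two-sided tail bound
\begin{equation}
\Pr\left[\abs{\tfrac{1}{T}S_T}>\epsilon\right]\le 2\exp\left(-\frac{T\epsilon^2}{2\sigma^2}\right).
\end{equation}
Setting the right-hand side equal to $\delta$ and solving for $T$ then gives $T=\frac{2\sigma^2}{\epsilon^2}\log\frac{2}{\delta}=\Theta\!\left(\frac{\sigma^2}{\epsilon^2}\log\frac1\delta\right)$. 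This holds for $\delta\le 1/2$, say; larger $\delta$ is absorbed into the constant, or handled by using $\log(2/\delta)\le 2\log(1/\delta)$ for small $\delta$.

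The key step is Hoeffding's lemma. For a mean-zero random variable $Y$ with $Y\in[a,b]$ and any $s\in\mathbb{R}$, we have $\bE[e^{sY}]\le \exp(s^2(b-a)^2/8)$. I would prove it in three moves. First, use convexity of $y\mapsto e^{sy}$ on $[a,b]$: bound $e^{sY}$ by the chord through $(a,e^{sa})$ and $(b,e^{sb})$, then take expectations using $\bE Y=0$. This gives $\bE[e^{sY}]\le e^{\varphi(u)}$ with $u=s(b-a)$, $p=-a/(b-a)$, and $\varphi(u)=-pu+\log(1-p+pe^{u})$. Second, check that $\varphi(0)=\varphi'(0)=0$. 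Third, show $\varphi''(u)=q(1-q)\le 1/4$, where $q=pe^u/(1-p+pe^u)\in[0,1]$; Taylor's theorem then gives $\varphi(u)\le u^2/8$. This calculus bound is the only nontrivial step, and it is the main obstacle, though a routine one. With $b-a\le 2\sigma$ it yields $\bE[e^{sY_i}]\le e^{s^2\sigma^2/2}$.

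For the upper tail, apply Markov's inequality to $e^{sS_T}$ for $s>0$, and use independence to factor the moment generating function:
\begin{equation}
\Pr[S_T\ge T\epsilon]\le e^{-sT\epsilon}\prod_{i=1}^T\bE[e^{sY_i}]\le \exp\left(-sT\epsilon+\tfrac{T s^2\sigma^2}{2}\right).
\end{equation}
Choosing $s=\epsilon/\sigma^2$ gives $\exp(-T\epsilon^2/(2\sigma^2))$. The lower tail follows by applying the same argument to $-Y_i$, and a union bound over the two tails gives the factor $2$. This completes the plan.
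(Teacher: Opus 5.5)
Your proof is correct. The paper quotes this as a standard fact and gives no proof, and your argument is the textbook derivation: Hoeffding's lemma via the convexity chord and $\varphi''=q(1-q)\le 1/4$, then Markov's inequality on $e^{sS_T}$ with $s=\epsilon/\sigma^2$, then a union bound over the two tails. It yields the clean sufficient condition $T\ge \frac{2\sigma^2}{\epsilon^2}\log\frac{2}{\delta}$.

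One small correction to your side remark. For $\delta$ close to $1$, the $\log 2$ cannot be ``absorbed into the constant,'' because $\log\frac1\delta\to 0$ while $\log\frac2\delta\to\log 2$. In fact the Fact as literally stated fails in that regime. Take $X_i=\pm\sigma$ with equal probability and any odd $T<\sigma/\epsilon$: the empirical mean never lies within $\epsilon$ of $0$. It should therefore be read with $\delta\le 1/2$, or with $\log\frac{2}{\delta}$ in place of $\log\frac1\delta$. That is exactly the regime your argument covers and the only one in which the paper uses it.
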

		
		We first analyze the sample complexity for estimating $\vI$ and $\vII$.  We use $\Pmsm[\vec{s};\vec{S}]$ to denote the probability that the measurement settings specified by the vector $\vec{s}$ are chosen on the systems specified by $\vec{S}$. 
		
		\begin{lemma}[Sample complexity for estimating $\vI$ and $\vII$]\label{lemma:sample-complexity-N_I-and-N_II}
			Let $N_{\mathrm{I}}$ be the sample complexity to jointly estimate $\vI(l,k)$ up to additive error $\epsilon_{\mathrm{I}}>0$ with joint failure probability $\delta>0$ for all $l\in [n], k\in [3]$. Then
			\begin{equation}
				N_{\mathrm{I}} = \cO\Bigg(\frac{1}{\min\limits_{l\in [n], k \in [3], i,j \in \{0,1\}}\left\{\Pmsm[x_l(i,k), y_l(j,k); A_l, B_l]\right\}}\frac{\operatorname{log}(\frac{n}{\delta})}{\epsilon_{\mathrm{I}}^2}\Bigg) \label{eq:N_I-asym}.
			\end{equation}
			Let $N_{\mathrm{II}}$ be the asymptotic sample complexity to jointly estimate $\vII(l,l+1)$ up to additive error $\epsilon_{\mathrm{II}}>0$ with joint failure probability $\delta>0$ for all $l\in [n-1]$. Then
			\begin{equation}
				N_{\mathrm{II}} = \cO\Bigg(\frac{1}{\min\limits_{y\in [3], l\in [n-1]}\left\{\Pmsm[\Dr, \Dl, y,y; A_l,A_{l+1}, B_l,B_{l+1}]\right\}}\frac{\log(\frac{n}{\delta})}{\epsilon_{\mathrm{II}}^2}\Bigg).\label{eq:N_II-asym}
			\end{equation}
		\end{lemma}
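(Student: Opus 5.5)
The plan is to reduce each estimator to a collection of bounded, i.i.d.\ conditional averages, one for each input configuration. Fact~\ref{fact:chernoff} will guarantee that every configuration is sampled often enough, and Fact~\ref{fact:hoeffding} will guarantee that each conditional average concentrates. A union bound over all configurations then gives the joint failure probability.

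\textbf{Estimating $\vI$.} For each triple $(l,k)$ and each $(i,j)\in\{0,1\}^2$, I will call a round relevant if the inputs on $(A_l,B_l)$ equal $(x_l(i,k),y_l(j,k))$. By the free-will assumption and the i.i.d.\ setting, the indicator of relevance is a Bernoulli variable with mean $p_{l,k,i,j}=\Pmsm[x_l(i,k),y_l(j,k);A_l,B_l]\ge p_{\min}$. Conditioned on relevance, the products $(-1)^{a_l+b_l}$ from distinct relevant rounds are i.i.d.\ variables bounded by $1$, whose mean is exactly the correlator appearing in $\vI(l,k)$. This uses locality: the marginal on $A_lB_l$ does not depend on the other parties' inputs.
\begin{itemize}
\item Since $\vI(l,k)$ is a signed sum of four correlators, it suffices to estimate each correlator to precision $\epsilon_{\mathrm{I}}/4$.
\item By Hoeffding, precision $\epsilon_{\mathrm{I}}/4$ with failure probability $\delta'$ needs $N_0=\Theta(\epsilon_{\mathrm{I}}^{-2}\log(1/\delta'))$ relevant samples.
\item By Chernoff, $T=\Theta(p_{\min}^{-1}(N_0+\log(1/\delta')))$ total rounds give at least $N_0$ relevant rounds with probability $1-\delta'$.
\end{itemize}
There are $12n$ configurations, so I set $\delta'=\delta/(24n)$ and apply a union bound over all Chernoff and Hoeffding events. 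Since $N_0$ dominates $\log(1/\delta')$ when $\epsilon_{\mathrm{I}}\le 1$, this yields $T=\cO(p_{\min}^{-1}\epsilon_{\mathrm{I}}^{-2}\log(n/\delta))$, which is Eq.~\eqref{eq:N_I-asym}.

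\textbf{Estimating $\vII$.} The argument is identical, with configurations indexed by $(l,y)$ for $l\in[n-1]$ and $y\in[3]$. A round is relevant if the inputs on $(A_l,A_{l+1},B_l,B_{l+1})$ are $(\Dr,\Dl,y,y)$. The sampled quantity is $(-1)^{\mathbf{1}[y=Y]}(-1)^{f(b_l|y,\vec a_l)+f(b_{l+1}|y,\vec a_{l+1})}$, which is bounded by $1$. The estimate of $\vII$ averages three such correlators with weights $1/3$, so per-correlator precision $\epsilon_{\mathrm{II}}$ suffices. A union bound over $3(n-1)$ configurations gives Eq.~\eqref{eq:N_II-asym}.

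\textbf{Main obstacle.} The delicate point is justifying that the samples conditioned on relevance are i.i.d.\ with the correct mean. The number of relevant rounds is itself random, so Hoeffding cannot be applied to a fixed-size sample directly. I would handle this by conditioning on the full input sequence. Given the inputs, the outcomes in different rounds are independent by the i.i.d.\ assumption, and each relevant round has the conditional distribution from Eq.~\eqref{eq:measurement_prob}, marginalised via no-signalling. Hoeffding therefore applies to the first $N_0$ relevant rounds, conditionally on there being at least $N_0$ of them. Failure on either the Chernoff event or the Hoeffding event is then absorbed into $\delta'$.
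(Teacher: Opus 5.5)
Your proposal is correct and follows the paper's proof. Both use Fact~\ref{fact:chernoff} to guarantee enough rounds for each relevant input configuration, Fact~\ref{fact:hoeffding} for each conditional average, and a union bound with $\delta'=\Theta(\delta/n)$; your explicit conditioning on the full input sequence, to handle the random number of relevant rounds, is correct and fills in a step the paper leaves implicit.
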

		
		\begin{proof}
			Let $N_{J;\vec{c}}(\epsilon', \delta')$ denote the number of samples to estimate $v_{J}(\vec{c})$ up to $\epsilon'$ with failure probability $\delta'$ for a given configuration $J$ and arguments $\vec{c}$ (Box~\ref{box:ProtocolEstimators}).
			Using Facts \ref{fact:chernoff} and \ref{fact:hoeffding} we get:
			\begin{equation}
				\begin{split}
					N_{\mathrm{I}; l,k}(\epsilon', \delta') &= \cO\left(\frac{1}{\min\limits_{i,j \in \{0,1\}}\left\{\Pmsm[x_l(i,k),y_l(j,k);A_l,B_l]\right\}}\frac{\operatorname{log}(\frac{1}{\delta'})}{\epsilon'^2}\right) \\
					N_{\mathrm{II}; l}(\epsilon', \delta') &= \cO\left(\frac{1}{\min\limits_{y\in [3]}\left\{\Pmsm[\Dr, \Dl, y,y; A_l,A_{l+1}, B_l,B_{l+1}]\right\}}\frac{\log(\frac{1}{\delta'})}{\epsilon'^2}\right).
				\end{split}
			\end{equation}
			Let $\delta' = \Theta\left(\tfrac{\delta}{n}\right)$. The joint failure probability can be obtained by applying a union bound over $l\in [n]$ and $k\in [3]$, which completes the proof.
		\end{proof}
		
		We denote $\Pmsm[\diamond^n; A\land \vec{y}\leftarrow \mathcal{D}]$ as the overall probability that a measurement round contributes to the estimation of $\vIII$. 
		Specifically, it is the probability that the measurement settings on the main parties $A_0A_1\cdots A_{n-1}$ are all set to the teleportation basis $\diamond$, and that the measurement settings $\vec{y}$ on the auxiliary system $B_0B_1\cdots B_{n-1}$ can be treated as being sampled from the distribution $\mathcal{D}$ (e.g., if using rejection sampling, the probability not to reject a sample). 
		Then, the sample complexity for estimating $\vIII$ follows similarly.
		
		\begin{lemma}[Sample complexity for estimating $\vIII$]\label{lem:sample-complexity-N_III}
			Let $L$ be the $n$-partite observable defined in Eq.~\eqref{eq:L-def}, with the weighting coefficients bounded by a constant $W>1$, i.e., $|\omega(\vec{b}|\vec{P})| \leq W$.
			Let $N_{\mathrm{III}}$ be the sample complexity to estimate $\vIII$ up to additive error $\epsilon_{\mathrm{III}}>0$ with failure probability $\delta>0$. Then
			\begin{equation}\label{eq:N_III-asym}
				N_{\mathrm{III}} = \cO\left(\frac{1}{\Pmsm[\diamond^n; A\land \vec{y}\leftarrow \mathcal{D}]}\frac{W^2}{\epsilon_{\mathrm{III}}^2}\log(\frac{1}{\delta})\right).
			\end{equation}
		\end{lemma}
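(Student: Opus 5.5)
The argument mirrors the proof of Lemma~\ref{lemma:sample-complexity-N_I-and-N_II}: first count how many rounds actually contribute to $\vIII$, then apply Hoeffding to the average over those rounds. In each i.i.d.\ round, call the round \emph{useful} if every main party receives input $\diamond$ and the auxiliary settings $\vec{y}$ can be treated as a sample $\vec{P}\leftarrow\mathcal{D}$ (i.e.\ it is not rejected). This happens with probability $p:=\Pmsm[\diamond^n; A\land \vec{y}\leftarrow \mathcal{D}]$, independently across rounds. For a useful round we record the random variable $Z=\omega(f(\vec{b}|\vec{P},\{\vec{a}_l\})|\vec{P})$ and take $\hvIII$ to be the empirical mean of $Z$ over all useful rounds.

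The first step checks that, conditioned on usefulness, the $Z$'s are i.i.d.\ with mean exactly $\vIII$. Free will makes the inputs independent of the source. The usefulness event only depends on the inputs, and by definition of $p$ it makes the conditional law of $\vec{P}$ equal to $\mathcal{D}$. Given $\vec{P}$ and $x_l=\diamond$ for all $l$, the outcomes $(\{\vec{a}_l\},\vec{b})$ follow Born's rule, Eq.~\eqref{eq:measurement_prob}. Averaging $Z$ therefore reproduces Eq.~\eqref{eq:v_III-protocol-specification}. Moreover $|Z|\le W$.

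The second step fixes the counts. By Hoeffding (Fact~\ref{fact:hoeffding}) with $\sigma=W$, $N' = \Theta(W^2\epsilon_{\mathrm{III}}^{-2}\log(2/\delta))$ useful rounds suffice for $|\hvIII-\vIII|\le\epsilon_{\mathrm{III}}$ except with probability $\delta/2$. By Chernoff (Fact~\ref{fact:chernoff}), $T=\Theta(p^{-1}(N'+\log(2/\delta)))$ total rounds yield at least $N'$ useful rounds except with probability $\delta/2$. A union bound gives total failure probability $\delta$. Since $W>1$ and $\epsilon_{\mathrm{III}}\le W$ may be assumed (otherwise the bound is trivial), the additive $\log(1/\delta)$ term is absorbed into $N'$. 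This gives $T=\cO(p^{-1}W^2\epsilon_{\mathrm{III}}^{-2}\log(1/\delta))$, which is Eq.~\eqref{eq:N_III-asym}.

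The main subtlety is that the number of useful rounds is random, so Hoeffding cannot be applied to a fixed-size sample directly. I would handle this by using only the first $N'$ useful rounds: conditioned on the event that at least $N'$ exist, their values are still i.i.d.\ with the conditional law, because usefulness is determined by the inputs alone and rounds are independent. This avoids any conditioning bias. Everything else is routine.
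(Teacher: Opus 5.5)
Your proposal is correct and follows the same route as the paper. Hoeffding (Fact~\ref{fact:hoeffding}) with bound $W$ fixes the required number $\cO(W^2\epsilon_{\mathrm{III}}^{-2}\log(1/\delta))$ of contributing rounds, and Chernoff (Fact~\ref{fact:chernoff}) with per-round success probability $\Pmsm[\diamond^n; A\land \vec{y}\leftarrow \mathcal{D}]$ converts this into the total round count. Your extra bookkeeping makes explicit what the paper's two-line proof leaves implicit: the union bound, absorbing the additive $\log(1/\delta)$ term via $W^2/\epsilon_{\mathrm{III}}^2\gtrsim 1$, and conditioning on the input-determined set of useful rounds to handle the random sample size.
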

		\begin{proof}
			
			Recall from Eq.\eqref{eq:v_III-protocol-specification} that $\vIII = \mathbb{E}_{\vec{P}\leftarrow \mathcal{D}}[\mathbb{E}[\omega(f(\vec{b}|\vec{P},\{\vec{a}_l\}_{l=0}^{n-1})|\vec{P})]]$.
			By Fact~\ref{fact:hoeffding}, the required number of measurement rounds that contribute to estimating $\vIII$ is
			$\cO\left(\frac{W^2}{\epsilon_{\mathrm{III}}^2}\log\left(\frac{1}{\delta}\right)\right)$. By the definition of $\Pmsm[\diamond^n; A\land \vec{y}\leftarrow \mathcal{D}]$ and Fact~\ref{fact:chernoff}, the required number of total measurement rounds is given by Eq.\eqref{eq:N_III-asym}.
		\end{proof}
		
		After establishing the general sample complexity analysis, we now proceed to analyze the detailed sample complexity in concrete scenarios. This is achieved by devising the probability distribution for the measurement settings and substituting it into the above analysis.
		
		\subsection{Using local randomness}\label{subsec:no-shared-randomness}
		We first consider the scenario in which each party uses local randomness to sample their measurement settings independently.
		
		In this scenario, we need to ensure that each main party chooses to teleport the input state to the auxiliary system with a noticeable probability. This is essential to prevent an unfavorable scaling in the estimation of $\vIII$ (Lemma~\ref{lem:sample-complexity-N_III}) due to $\Pmsm[\diamond^n; A\land \vec{y}\leftarrow \mathcal{D}]$ being too small.
		To achieve this, each main party $A_l$ selects the teleportation setting $\diamond$ with a high probability, for example:
		\begin{equation}
			\Pmsm[\diamond; A_l] = 1 - \frac{1}{n}.
		\end{equation}
		Then, the probability that all main parties simultaneously choose the teleportation setting $\diamond$ is larger than a constant:
		\begin{equation}\label{eq:alldiamond-prob-Omega1}
			\Pmsm[\diamond^n; A]  = \left(1 - \frac{1}{n}\right)^n = \Omega(1).
		\end{equation}
		Additionally, to ensure a noticeable probability for the 3-CHSH test and the coordinating partial transpose settings, the main parties sample the remaining $8$ measurement settings with equal probability:
		\begin{equation}
			\Pmsm[i; A_l] = \frac{1}{8n} \quad \text{for } i \in \{0, 1, 2, 3, 4, 5, \Dr, \Dl\}.
		\end{equation}
		
		Furthermore, we require that the sampled measurement settings $\vec{y}$ from the auxiliary parties be effectively sampled from the target distribution $\mathcal{D}$ of the desired Pauli measurements. However, a general distribution $\mathcal{D}$ might necessitate correlated sampling among the auxiliary parties, which could result in the probability $\Pmsm[\diamond^n; A \land \vec{y} \leftarrow \mathcal{D}]$ becoming exponentially small, thereby rendering the self-testing protocol inefficient (Lemma~\ref{lem:sample-complexity-N_III}).
		To avoid this, we focus on Pauli measurements where the desired distribution $\mathcal{D}$ allows for independent sampling across the different auxiliary systems $l$. That is, $\mathcal{D}$ admits a product distribution:
		\begin{equation}
			\mathcal{D} = \bigotimes_{l=0}^{n-1} \mathcal{D}^{(l)}, \quad \mathcal{D}^{(l)} = (p_X^{(l)},p_Y^{(l)},p_Z^{(l)}).
		\end{equation}
		In this case, each auxiliary party $B_l$ can simply sample their measurement setting according to the marginal distribution $\mathcal{D}_l$ independently. We additionally impose that $p_X^{(l)}, p_Y^{(l)}, p_Z^{(l)} = \Omega(1)$ for all $l$, ensuring that the probability of sampling the specific settings required for estimating $\vI$ and $\vII$ is sufficiently large.
		
		Crucially, because the marginal probabilities $p_X^{(l)}, p_Y^{(l)}, p_Z^{(l)}$ are generally not equal, the joint measurement settings $y_l = y_{l+1} \in \{X, Y, Z\}$ required for evaluating $\vII$ are sampled with varying probabilities. However, the original definition of $\vII$ relies on a uniform distribution over these three joint settings. To correct this discrepancy and construct an unbiased estimator, we apply a rejection sampling procedure. Specifically, to estimate $\vII$ for a given $l$, we accept a sample with an observed joint setting $y$ with probability
		\begin{equation}
			\frac{\min_{y'\in[3]}{\Pmsm[\Dr, \Dl, y',y'; A_l,A_{l+1}, B_l,B_{l+1}]}}{\Pmsm[\Dr, \Dl, y,y; A_l,A_{l+1}, B_l,B_{l+1}]}.
		\end{equation}
		This effectively equalizes the sampling rates, ensuring that each matching pair $y_l = y_{l+1}$ contributes to the final update with the same uniform probability $\min_{y'\in[3]}\{\Pmsm[\Dr, \Dl, y',y'; A_l,A_{l+1}, B_l,B_{l+1}]\}$.
		This guarantees an unbiased estimator for $\vII$, which is implemented as in \eqref{eq:rejection-sample-prot1}.
		
		The protocol using local randomness is summarized in Protocol~\ref{prot:di-witness-local-randomness}. The following theorem guarantees its performance:
		
		\begin{theorem}[Finite sample version of Theorem~\ref{thm:DI-estimation-L} using local randomness]\label{thm:fin-sample-DI-estimation-L-local-randomness}
			Let $L$ be the $n$-partite observable defined in Eq.~\eqref{eq:L-def}, with the weighting coefficients bounded by a constant $W>1$, i.e., $|\omega(\vec{b}|\vec{P})| \leq W$.
			Furthermore, suppose the target measurement distribution $\mathcal{D}$ for $L$ is a product distribution $\mathcal{D} = \bigotimes_{l=0}^{n-1}\mathcal{D}^{(l)}$, and for any party $l$, the marginal probabilities satisfy $p_X^{(l)}, p_Y^{(l)}, p_Z^{(l)}  =\Omega(1)$.
			
			For a desired precision $\varepsilon > 0$ and failure probability $0 < \delta < 1$, if the total number of samples $N$ is set to:
			\begin{equation}\label{eq:claim-sample-complexity-no-shared-randomness}
				N = \cO\left(W^4n^5\frac{\log(\frac{n}{\delta})}{\varepsilon^4}\right).
			\end{equation}
			Then, with probability at least $1-\delta$, if Protocol~\ref{prot:di-witness-local-randomness} outputs \texttt{CERTIFIED}, there exist $p \in [0,1]$, a state $\tau = p \tau_+ + (1-p)\tau_-$, and a product of local channels $\Gamma$ such that:
			\begin{equation}\label{eq:state-closeness}
				D\left(p \tau_+ \otimes \ketbra{0}^{\otimes n} + (1-p)\tau_-^* \otimes \ketbra{1}^{\otimes n}, \Gamma(\rho)\right) \le \cO\left(\frac{\varepsilon}{W}\right),
			\end{equation}
			and the returned estimator $\hvIII$ satisfies:
			\begin{equation}\label{eq:estimator_error_independent_measurement}
				| \hvIII - \tr(L\tau) | = \cO(\varepsilon).
			\end{equation}
		\end{theorem}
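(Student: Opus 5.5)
The plan is to combine the infinite-sample guarantee of Theorem~\ref{thm:DI-estimation-L} with the sample-complexity bounds of Lemmas~\ref{lemma:sample-complexity-N_I-and-N_II} and~\ref{lem:sample-complexity-N_III}. The single free precision parameter will be tuned so that the accumulated error $\varepsilon_{\rm tot}=\sum_j\epsilon_{\mathrm{II},(j,j+1)}+\sum_l\sqrt{\epsilon_{\mathrm{I},l}}$ becomes $\cO(\varepsilon/W)$.

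First, I fix the acceptance thresholds of Protocol~\ref{prot:di-witness-local-randomness}. The protocol certifies only if $\hvI(l,k)\le\eta_{\mathrm I}$ for all $l,k$ and $\hvII(l,l+1)\le\eta_{\mathrm{II}}$ for all $l$. I choose
\[
\eta_{\mathrm I}=\epsilon_{\mathrm I}=\Theta\!\left(\frac{\varepsilon^2}{W^2n^2}\right),
\qquad
\eta_{\mathrm{II}}=\epsilon_{\mathrm{II}}=\Theta\!\left(\frac{\varepsilon}{Wn}\right),
\]
and I estimate $\vIII$ to additive precision $\epsilon_{\mathrm{III}}=\Theta(\varepsilon)$.

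On the good event, every empirical estimate is within its precision of the true value. Then certification implies $\vI(l,k)\le2\epsilon_{\mathrm I}$ and $\vII\le2\epsilon_{\mathrm{II}}$. This gives $\varepsilon_{\rm tot}=\cO(n\sqrt{\epsilon_{\mathrm I}}+n\epsilon_{\mathrm{II}})=\cO(\varepsilon/W)$.

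With this, Theorem~\ref{thm:DI-estimation-L} supplies the local channel $\Gamma:=\Lambda$. It also gives $q:=\tr[(\ketbra{0}^{\otimes n}+\ketbra{1}^{\otimes n})_{A\abb}\Lambda(\rho)]=1-\cO(\varepsilon/W)$.

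Next I define $\tau_\pm$ and $p$. I project $\Lambda(\rho)$ onto the flag sectors $0^n$ and $1^n$, writing
\[
p\,\tau_+=\tr_{A\abb}\!\left[\ketbra{0}^{\otimes n}\Lambda(\rho)\right]/q,
\qquad
(1-p)\,\tau_-^*=\tr_{A\abb}\!\left[\ketbra{1}^{\otimes n}\Lambda(\rho)\right]/q .
\]
The target state is then $\sigma=p\tau_+\otimes\ketbra{0}^{\otimes n}+(1-p)\tau_-^*\otimes\ketbra{1}^{\otimes n}$.

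To show Eq.~\eqref{eq:state-closeness}, I first note that the off-diagonal flag blocks do not matter for the observables we evaluate. For the trace distance itself, I first dephase the flag register; this is a local operation that can be absorbed into $\Gamma$ by measuring each $A\abb_l$ in the computational basis. After dephasing, the distance is bounded by the weight $1-q$ of the discarded sectors plus the renormalization error, so it is $\cO(1-q)=\cO(\varepsilon/W)$.

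For Eq.~\eqref{eq:estimator_error_independent_measurement}, I use $\tr(L\tau)=\tr[(L\otimes\ketbra{0}^{\otimes n}+L^*\otimes\ketbra{1}^{\otimes n})\sigma]$, which holds because $\tr(L^*\tau_-^*)=\tr(L\tau_-)$ for Hermitian $L$. Since $\opnorm{L}\le W$, this quantity is within $\cO(W\cdot\varepsilon/W)$ of the corresponding value on $\Lambda(\rho)$. That value is within $\cO(W\varepsilon_{\rm tot})=\cO(\varepsilon)$ of $\vIII$ by Eq.~\eqref{eq:thm-Pauli-value}. Finally, $\vIII$ is within $\epsilon_{\mathrm{III}}$ of $\hvIII$.

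The remaining step, and the main bookkeeping obstacle, is to bound the round count from the setting probabilities under local randomness. For the CHSH tests, $\Pmsm[x_l(i,k),y_l(j,k);A_l,B_l]=\frac1{8n}\Omega(1)$, so Lemma~\ref{lemma:sample-complexity-N_I-and-N_II} gives
\[
N_{\mathrm I}=\cO\!\left(n\,\frac{\log(n/\delta)}{\epsilon_{\mathrm I}^2}\right)=\cO\!\left(\frac{W^4n^5\log(n/\delta)}{\varepsilon^4}\right).
\]
This is the dominant term and matches Eq.~\eqref{eq:claim-sample-complexity-no-shared-randomness}.

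For the transpose-braiding test, $\Pmsm[\Dr,\Dl,y,y]=\Omega(n^{-2})$, so after rejection sampling $N_{\mathrm{II}}=\cO(n^2\cdot W^2n^2\log(n/\delta)/\varepsilon^2)$. This is dominated by $N_{\mathrm I}$ whenever $\varepsilon\lesssim W$.

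For the teleportation setting, Eq.~\eqref{eq:alldiamond-prob-Omega1} together with the product form of $\mathcal{D}$ gives $\Pmsm[\diamond^n;A\land\vec y\leftarrow\mathcal{D}]=\Omega(1)$. Hence $N_{\mathrm{III}}=\cO(W^2\log(1/\delta)/\varepsilon^2)$.

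I need to check two points here. First, the three estimations can share the same rounds, because each round is sifted into whichever estimator its settings match. Second, the rejection-sampled $\vII$ estimator is unbiased. A union bound over the three failure events, each allotted $\delta/3$, completes the argument.
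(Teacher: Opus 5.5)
Your overall route is the paper's: thresholds of order $\varepsilon^2/(n^2W^2)$ and $\varepsilon/(nW)$, an appeal to Theorem~\ref{thm:DI-estimation-L}, dephasing the flag register to define $p,\tau_\pm$, and counting setting probabilities $\Omega(1/n)$, $\Omega(1/n^2)$, $\Omega(1)$ to reach $W^4n^5$. However, your soundness step rests on a misreading of the certification test.

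Protocol~\ref{prot:di-witness-local-randomness} does not threshold each $\hvI(l,k)$ and each $\hvII(l,l+1)$ separately. It only checks the averages $\hvI=\frac{1}{3n}\sum_{l,k}\hvI(l,k)\le\varepsilon^2/(n^2W^2)$ and $\hvII=\frac{1}{n-1}\sum_{l}\hvII(l,l+1)\le\varepsilon/(nW)$. Hence \texttt{CERTIFIED} does not give $\vI(l,k)\le 2\epsilon_{\mathrm I}$ for every $(l,k)$. A single $\vI(l,k)$ can be $\Theta(n\epsilon_{\mathrm I})$ while the average still passes. If you bound every $\sqrt{\epsilon_{\mathrm I,l}}$ by this worst case, you only get $\sum_l\sqrt{\epsilon_{\mathrm I,l}}=\cO(\sqrt n\,\varepsilon/W)$, which is a factor $\sqrt n$ too weak. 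You also cannot simply redefine the test per $(l,k)$. That modified protocol accepts only in a subset of the cases where the stated one accepts, so its soundness says nothing about the protocol in the theorem.

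The missing step, which the paper supplies, is to aggregate using concavity of the square root (Jensen or Cauchy--Schwarz). Using the per-$(l,k)$ estimation accuracy and $\vI(l,k)\ge 0$,
\[
\sum_{l,k}\sqrt{\vI(l,k)}\le\sqrt{3n\sum_{l,k}\bigl(\hvI(l,k)+\epsilon_{\mathrm I}\bigr)}=\sqrt{9n^2\bigl(\hvI+\epsilon_{\mathrm I}\bigr)}=\cO\!\left(\frac{\varepsilon}{W}\right).
\]
Taking $\epsilon_{\mathrm I,l}:=\max_k\vI(l,k)$ then gives $\sqrt{\epsilon_{\mathrm I,l}}\le\sum_k\sqrt{\vI(l,k)}$. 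The transpose-braiding term needs no such trick, because it enters the error linearly: $\sum_{l}\vII(l,l+1)\le(n-1)(\hvII+\epsilon_{\mathrm{II}})=\cO(\varepsilon/W)$. With this replacement, the rest of your argument goes through as written: the flag dephasing, the identity $\tr(L^*\tau_-^*)=\tr(L\tau_-)$, the $\cO(W)$ operator-norm transfer, and the round count.
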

		
		\begin{protocol}{DI multipartite Pauli measurements using local randomness}{prot:di-witness-local-randomness}
			\textbf{Input: } 
			\begin{enumerate}
				\item Total number of measurement rounds $N$.
				\item Target observable $L$ defined in Eq.~\eqref{eq:L-def}, with weighting coefficients $\omega(\vec{b}|\vec{P})$ satisfying $|\omega(\vec{b}|\vec{P})| \leq W$. 
				\item Target measurement distribution $\mathcal{D} = \bigotimes_{l=0}^{n-1}\mathcal{D}^{(l)}$ of $L$, where $\mathcal{D}^{(l)} = (p_X^{(l)}, p_Y^{(l)}, p_Z^{(l)})$ are the marginal probabilities for party $B_l$. 
			\end{enumerate}
			
			\textbf{Initialization}
			
			Set all counters and estimators to zero:
			\begin{enumerate}
				\item $n_{\mathrm{III}} \leftarrow 0, e_{\mathrm{III}} \leftarrow 0$.
				\item For $l\in[n], k\in\{0,1,2\}, i,j\in\{0,1\}$:  $n_{\mathrm{I}}(l,k,i,j) \leftarrow 0, n_{\mathrm{II}}(l,l+1) \leftarrow 0$, $e_{\mathrm{I}}(l,k,i,j) \leftarrow 0, e_{\mathrm{II}}(l,l+1) \leftarrow 0$.
			\end{enumerate}
			
			\textbf{Measurement rounds}
			
			Execute the following steps for $N$ total rounds:
			\begin{enumerate}
				\item Setting selection: Each party independently chooses their measurement setting:
				\begin{enumerate}[label=(\arabic*)]
					\item 
					Party $A_l$ chooses $x_l \in \{0, \dots, 5, \diamond, \Dr, \Dl\}$ with probabilities:
					\begin{equation}
						\begin{split}
							\Pr[x_l = \diamond] &= 1 - \frac{1}{n} \\
							\Pr[x_l = i] &= \frac{1}{8n} \quad \text{for } i \in \{0, 1, 2, 3, 4, 5, \Dr, \Dl\}
						\end{split}
					\end{equation}
					\item Party $B_l$ chooses $y_l \in \{0, 1, 2\}$ with probabilities $\Pr[y_l=0]=p_X^{(l)}, \Pr[y_l=1]=p_Y^{(l)}, \Pr[y_l=2]=p_Z^{(l)}$.
				\end{enumerate}
				
				\item Measurement: Party $A_l$ performs $M^{(l)}_{\vec{a}_l|x_l}$ and obtains outcome $\vec{a}_l$; Party $B_l$ performs $N^{(l)}_{b_l|y_l}$ and obtains outcome $b_l$.
				
				\item Counter updates: A verifier receives all settings $\{x_l, y_l\}_l$ and outcomes $\{\vec{a}_l, b_l\}_l$ and updates counters based on the sampled settings.
				\begin{enumerate}[label=(\arabic*)]
					\item For each $l \in [n]$ and $k \in \{0, 1, 2\}$:
					Let $(x_l(0,k),x_l(1,k),y_l(0,k),y_l(1,k))$ be the one defined in Eq.~\eqref{eq:3-CSHS-inputs}.\\
					For $i, j \in \{0, 1\}:$
					If $(x_l, y_l) = (x_l(i,k), y_l(j,k)):$
					\begin{equation}
						n_{\mathrm{I}}(l,k,i,j) \leftarrow n_{\mathrm{I}}(l,k,i,j) + 1, \quad e_{\mathrm{I}}(l,k,i,j) \leftarrow e_{\mathrm{I}}(l,k,i,j) + (-1)^{a_l + b_l}.
					\end{equation}
					
					\item For each $l \in [n-1]:$
					If $x_l ={\Dr}$, $x_{l+1} ={\Dl}$ and $y_l = y_{l+1} \in \{0, 1, 2\}:$
					Let $P \in \{X,Y,Z\}$ be the Pauli corresponding to $y_l$.
					With probability
					\begin{equation}\label{eq:rejection-sample-prot1}
						\frac{\min_Q(p_Q^{(l)}p_Q^{(l+1)})}{p_P^{(l)}p_P^{(l+1)}}
					\end{equation}
					perform the following update:
					\begin{equation}
						n_{\mathrm{II}}(l,l+1) \leftarrow n_{\mathrm{II}}(l,l+1)+1, \quad
						e_{\mathrm{II}}(l,l+1) \leftarrow e_{\mathrm{II}}(l,l+1)-(-1)^{\mathbf{1}[P=Y]}(-1)^{f(b_l|P,\vec{a}_l)+f(b_{l+1}|P,\vec{a}_{l+1})}.
					\end{equation}
					\item If $x_l = \diamond$ for all $l \in [n]:$
					Let $\vec{P}$ be the vector of Paulis corresponding to $\vec{y} = (y_0, ..., y_{n-1})$.
					Let $\vec{a} = \{\vec{a}_l\}_{l=0}^{n-1}$ and $\vec{b} = \{b_l\}_{l=0}^{n-1}$.
					\begin{equation}
						n_{\mathrm{III}} \leftarrow n_{\mathrm{III}}+1, \quad e_{\mathrm{III}} \leftarrow e_{\mathrm{III}} + \omega(f(\vec{b}|\vec{P}, \vec{a})|\vec{P}).
					\end{equation}
				\end{enumerate}
			\end{enumerate}
			
			\textbf{Finalization and output}
			\begin{enumerate}
				\item Compute the final estimators: 
				\begin{equation}\label{eq:estimators-prot1}
					\begin{aligned}
						\hat{C}_{ij}(l,k) &= \frac{e_{\mathrm{I}}(l,k,i,j)}{n_{\mathrm{I}}(l,k,i,j)} \\
						\hat{S}(l,k) &= \sum_{i,j \in \{0,1\}} (-1)^{ij} \hat{C}_{ij}(l,k) \\
						\hat{v}_{\mathrm{I}} &= 2\sqrt{2} - \frac{1}{3n} \sum_{l=0}^{n-1} \sum_{k=0}^{2} \hat{S}(l,k) \\
						\hat{v}_{\mathrm{II}} &= 1+ \frac{1}{n-1}\sum_{l=0}^{n-2}\frac{e_{\mathrm{II}}(l,l+1)}{n_{\mathrm{II}}(l,l+1)} \\
						\hvIII &= \frac{e_{\mathrm{III}}}{n_{\mathrm{III}}}.
					\end{aligned}
				\end{equation}
				If the denominator of any ratio is zero, the corresponding ratio is set to zero.
				\item Certification check: If $\hat{v}_{\mathrm{I}} > \frac{\varepsilon^2}{n^2W^2}$ or $\hat{v}_{\mathrm{II}} > \frac{\varepsilon}{nW}$ return \texttt{FAILED}. Else, return \texttt{CERTIFIED} and $\hvIII$.
			\end{enumerate}
		\end{protocol}
		
		\begin{proof}[Proof of Theorem~\ref{thm:fin-sample-DI-estimation-L-local-randomness}]
			Define $\hvI(l,k) \coloneqq 2\sqrt{2} - \hat{S}(l,k)$, $\hvII(l,l+1) \coloneqq 1 + \frac{e_{\mathrm{II}}(l,l+1)}{n_{\mathrm{II}}(l,l+1)}$.
			First, we require the total sample complexity $N$ to be large enough such that, with a high probability of at least $1-\delta$, the following error bounds are simultaneously satisfied for all $l, k$:
			\begin{equation}\label{eq:requirement_error_bound}
				\begin{split}
					\abs{\hvI(l,k)  - \vI(l,k)} &\le \frac{\varepsilon^2}{n^2W^2}, \\
					\abs{\hvII(l,l+1) - \vII(l,l+1)} & \le \frac{\varepsilon}{nW}, \\
					\abs{\hvIII - \vIII} & \le \varepsilon.
				\end{split}
			\end{equation}
			We will analyze the required $N$ at the end of the proof and proceed assuming \eqref{eq:requirement_error_bound}.
			Given this, we analyze the consequences conditional on the protocol outputting \texttt{CERTIFIED}. 
			We can bound the sum of square roots of the true values $\vI$:
			\begin{equation}
				\begin{split}
					\sum_{l=0}^{n-1}\sum_{k=0}^2\sqrt{\vI(l,k)} &\le \sum_{l=0}^{n-1}\sum_{k=0}^2\sqrt{\hvI(l,k)+\frac{\varepsilon^2}{W^2n^2}}\\
					&= \sum_{l=0}^{n-1}\sum_{k=0}^2\frac{1}{3n}\sqrt{9n^2 \hvI(l,k)+\frac{9\varepsilon^2}{W^2}}\\
					&\le \sqrt{\frac{1}{3n}  \left[\sum_{l=0}^{n-1}\sum_{k=0}^2 \left(9n^2\hvI(l,k)  + \frac{9\varepsilon^2}{W^2}\right)\right]} \\
					&\le \sqrt{9n^2 \hat{v}_{\mathrm{I}} + \frac{9\varepsilon^2}{W^2}}\\
					&= \cO\left(\frac{\varepsilon}{W}\right).
				\end{split}
			\end{equation}
			Here, the third line uses Jensen's inequality, the fourth line uses $\hvI = \frac{1}{3n} \sum_l\sum_k \hvI(l,k) $, and the fifth line uses the condition for outputting \texttt{CERTIFIED}: $\hvI \le \frac{\varepsilon^2}{n^2W^2}$.  We can also bound the sum of the true values $\vII$:
			\begin{equation}
				\begin{split}
					\sum_{l=0}^{n-2} \vII(l,l+1) &\le \sum_{l=0}^{n-2} \hvII(l,l+1) + \frac{\varepsilon}{W} \\
					&\le \cO\left(\frac{\varepsilon}{W}\right),
				\end{split}
			\end{equation}
			where we use $\hvII = \frac{1}{n-1} \sum_{l=0}^{n-2} \hvII(l,l+1)$ and the condition for outputting \texttt{CERTIFIED}: $\hvII \le \frac{\varepsilon}{nW}$.

			Next, let $\sigma \coloneqq \Lambda(\rho)$, where $\Lambda$ is the product of local channels on the main systems $A$ as specified in Theorem~\ref{thm:DI-estimation-L}. 
			We decompose $\sigma$ with respect to the indicator systems. Define the unnormalized states $A_+ \coloneqq (\mathbb{I} \otimes \bra{0}^{\otimes n}) \sigma (\mathbb{I} \otimes \ket{0}^{\otimes n})$ and $A_- \coloneqq (\mathbb{I} \otimes \bra{1}^{\otimes n}) \sigma (\mathbb{I} \otimes \ket{1}^{\otimes n})$ with weights $p_\pm \coloneqq \tr(A_\pm)$.
			Let $\mathcal{D} = \bigotimes_{i=1}^n \mathcal{D}_i$ denote the product of single-qubit dephasing channels on the ancillary indicator systems. The action of the channel yields 
			\begin{equation}\label{eq:dephasing-decompose}
				\mathcal{D}(\sigma) = p_+ \tau_+ \otimes \ketbra{0}^{\otimes n} + p_- \tau_-^* \otimes \ketbra{1}^{\otimes n} + (1-p_+ - p_-)\tilde{\sigma}
			\end{equation}
			for some residual state $\tilde{\sigma}$.
			
			Defining the total map $\Gamma = \mathcal{D} \circ \Lambda$, we have:
			\begin{equation}
				\begin{split}
					p_+ + p_-  &= \tr[(\ketbra{0}^{\otimes n} +\ketbra{1}^{\otimes n})\Gamma(\rho)] \\
					&= \tr[(\ketbra{0}^{\otimes n} +\ketbra{1}^{\otimes n})\Lambda(\rho)] \\
					&= 1-\cO\left(\frac{\varepsilon}{W}\right),
				\end{split}
			\end{equation}
			where the second inequality comes from the property of dephasing, and the third inequality comes from Theorem~\ref{thm:DI-estimation-L}.
			Set $p = p_+ / (p_+ + p_-)$. By Eq.~\eqref{eq:dephasing-decompose}, the trace distance satisfies $D(p\tau_+ \otimes \ketbra{0} + (1-p)\tau_-^*\otimes \ketbra{1}^{\otimes n}, \Gamma(\rho)) = \cO(\tfrac{\varepsilon}{W})$, proving Eq.~\eqref{eq:state-closeness}. 
			
			Finally, the error in the estimator is bounded via Theorem~\ref{thm:DI-estimation-L} and the triangle inequality:
			\begin{equation}
				\begin{split}
					\abs{\vIII - \tr(L \tau)} &\le \abs{ \vIII - \tr\left[\left(L\otimes\ketbra{0}^{\otimes n}+L^*\otimes\ketbra{1}^{\otimes n} \right)\Gamma(\rho)\right] } \\ 
					&\quad + \abs{ \tr\left[\left(L\otimes\ketbra{0}^{\otimes n}+L^*\otimes\ketbra{1}^{\otimes n} \right)\Gamma(\rho)\right] - \tr(L\tau) } \\
					&\le \cO(\varepsilon) + W D(p\tau_+ \otimes \ketbra{0} + (1-p)\tau_-^*\otimes \ketbra{1}^{\otimes n}, \Gamma(\rho)) \\
					&= \cO(\varepsilon).
				\end{split}
			\end{equation}
			Combining this with the bound $\abs{\hvIII - \vIII} \le \varepsilon$ yields  Eq.~\eqref{eq:estimator_error_independent_measurement}.
			
			We now analyze the total sample complexity $N$ required to satisfy the error bounds in Eq.~\eqref{eq:requirement_error_bound}.
			By assumption, there exists a constant $p=\Omega(1)$ such that for all $l \in\{0, \dots, n-1\}$, the marginal Pauli measurement probabilities satisfy $p_X^{(l)}, p_Y^{(l)}, p_Z^{(l)} \ge p$. Given the independent sampling of measurement settings detailed in Protocol~\ref{prot:di-witness-local-randomness}, we have:
			\begin{equation}
				\begin{split}
					\min\limits_{y \in [3], x \in [6], l\in [n]}\left\{\Pmsm[x,y; A_l,B_l]\right\} &\geq \frac{p}{8n} = \Omega\left(\frac{1}{n}\right), \\
					\min\limits_{y\in [3], l\in [n-1]}\left\{\Pmsm[\Dr, \Dl, y,y; A_l,A_{l+1}, B_l,B_{l+1}]\right\} &\geq \frac{p^2}{64 n^2} = \Omega\left(\frac{1}{n^2}\right),\\
					\Pmsm[\diamond^n; A\land \vec{y}\leftarrow \mathcal{D}] &= \left(1-\frac{1}{n}\right)^n = \Omega(1).
				\end{split}
			\end{equation}
			
			By substituting these minimum probabilities and the desired error bounds from Eq.~\eqref{eq:requirement_error_bound} into the general sample complexity results, specifically Lemmas~\ref{lemma:sample-complexity-N_I-and-N_II} and \ref{lem:sample-complexity-N_III}, the overall sample complexity $N$ is determined by the maximum of the three required sample complexities:
			\begin{equation}
				\begin{split}
					N &= \cO\left(\max\left\{W^4n^5\frac{\log(\frac{n}{\delta})}{\varepsilon^4},W^2n^4\frac{\log(\frac{n}{\delta})}{\varepsilon^2},W^2\frac{\log(\frac{1}{\delta})}{\varepsilon^2}\right\}\right) \\
					&= \cO\left(W^4n^5\frac{\log(\frac{n}{\delta})}{\varepsilon^4}\right).
				\end{split}
			\end{equation} 
		\end{proof}

		As a side remark, if the target observable $L$ is invariant under complex conjugation (i.e., $L = L^*$), its expectation value is unaffected by the global complex conjugation of the state. This property simplifies our result and allows us to directly estimate the expectation value of $L$ on the extracted state $\Gamma(\rho)$ itself. We formalize this observation in the following corollary.
		
		\begin{corollary}[Theorem~\ref{thm:fin-sample-DI-estimation-L-local-randomness} for $L = L^*$]\label{cor:fin-sample-DI-estimation-L-local-randomness-conjugation-invariant}
			If $L = L^*$, then Equations~\eqref{eq:state-closeness} and~\eqref{eq:estimator_error_independent_measurement} in Theorem~\ref{thm:fin-sample-DI-estimation-L-local-randomness} can be replaced by $|\hvIII - \tr(L\Gamma(\rho))| = \cO(\varepsilon)$, where we extend $L$ to act as the identity on the indicator systems.
		\end{corollary}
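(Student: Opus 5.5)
The plan is to reuse the certification step of Theorem~\ref{thm:fin-sample-DI-estimation-L-local-randomness} directly. I would not pass through the decomposition into $\tau_\pm$ at all. Work on the same high-probability event where the bounds in Eq.~\eqref{eq:requirement_error_bound} hold, and condition on the protocol outputting \texttt{CERTIFIED}. The bounds derived there then give $\sum_{l,k}\sqrt{\vI(l,k)}=\cO(\varepsilon/W)$ and $\sum_l \vII(l,l+1)=\cO(\varepsilon/W)$. Hence the aggregate parameter $\varepsilon$ of Theorem~\ref{thm:DI-estimation-L}, evaluated at the true values, is $\cO(\varepsilon/W)$.

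Next, I would apply Eq.~\eqref{eq:thm-Pauli-value} of Theorem~\ref{thm:DI-estimation-L} with the local channel $\Lambda$. This gives
\begin{equation}
\left|\vIII-\tr\left[\left(L\otimes\ketbra{0}^{\otimes n}+L^*\otimes\ketbra{1}^{\otimes n}\right)\Lambda(\rho)\right]\right|=\cO(\varepsilon).
\end{equation}
Using $L=L^*$, the operator inside the trace becomes $L\otimes(\ketbra{0}^{\otimes n}+\ketbra{1}^{\otimes n})$. It differs from $L\otimes\mathbb{I}$ by $-L\otimes\Pi$, where $\Pi$ projects onto the indicator strings other than $0^n$ and $1^n$. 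Since $\opnorm{L}\le W$ and $\Pi$ commutes with $L\otimes\mathbb{I}$, the contribution of this term is at most $W\tr[\Pi\Lambda(\rho)]$. By Eq.~\eqref{eq:thm-Pauli-prob} this is $W\cdot\cO(\varepsilon/W)=\cO(\varepsilon)$. Therefore $|\vIII-\tr(L\Lambda(\rho))|=\cO(\varepsilon)$.

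Finally, I would take $\Gamma=\mathcal{D}\circ\Lambda$, where $\mathcal{D}$ is the dephasing on the indicator systems as in the proof of the theorem; alternatively one may simply take $\Gamma=\Lambda$. The dephasing acts only on the indicator systems, on which the extended $L$ is the identity, so $\tr(L\Gamma(\rho))=\tr(L\Lambda(\rho))$. Combining this with $|\hvIII-\vIII|\le\varepsilon$ from Eq.~\eqref{eq:requirement_error_bound} gives the claim.

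The only delicate point is bookkeeping. The mismatch weight must be multiplied by $\opnorm{L}\le W$, and this stays $\cO(\varepsilon)$ only because the certification thresholds were chosen to make the Theorem~\ref{thm:DI-estimation-L} error $\cO(\varepsilon/W)$. I would check this scaling carefully. No new sample-complexity analysis is needed.
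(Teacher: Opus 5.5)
Your proof is correct, but it takes a different route from the paper's. The paper treats Theorem~\ref{thm:fin-sample-DI-estimation-L-local-randomness} as a black box. It uses $L=L^*$ to rewrite $\tr(L\tau)=\tr\bigl[(L\otimes\mathbb{I})\bigl(p\tau_+\otimes\ketbra{0}^{\otimes n}+(1-p)\tau_-^*\otimes\ketbra{1}^{\otimes n}\bigr)\bigr]$. It then combines Eq.~\eqref{eq:estimator_error_independent_measurement} with the trace-distance bound Eq.~\eqref{eq:state-closeness}, at a cost of $2\opnorm{L}\cdot\cO(\varepsilon/W)=\cO(\varepsilon)$.

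You instead reopen that theorem's proof and take its intermediate bounds $\sum_{l,k}\sqrt{\vI(l,k)}=\cO(\varepsilon/W)$ and $\sum_l\vII(l,l+1)=\cO(\varepsilon/W)$. You then apply Theorem~\ref{thm:DI-estimation-L} directly. The condition $L=L^*$ merges the two flagged terms into $L\otimes(\ketbra{0}^{\otimes n}+\ketbra{1}^{\otimes n})$. The leftover weight on mixed flag strings is controlled by Eq.~\eqref{eq:thm-Pauli-prob}, and you correctly identify the $W\cdot\cO(\varepsilon/W)$ bookkeeping that keeps it $\cO(\varepsilon)$.

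What your route buys: it never constructs $\tau_\pm$, which the paper builds by dephasing the flags only for the corollary to collapse them back. It also shows directly that the undephased local extractor $\Lambda$ already works.

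What the paper's route buys: it uses only the displayed conclusions of the theorem. That is why it transfers verbatim to Corollary~\ref{cor:fin-sample-DI-estimation-L-shared-randomness-conjugation-invariant} and to the non-i.i.d.\ setting of Theorem~\ref{thm:noniid}. Your argument would need the corresponding internal estimates re-verified in each of those proofs.
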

		\begin{proof}
			If $L = L^*$ then $\tr(L\tau_-) = \tr(L\tau_-^*)$ and hence
			\begin{equation}
				\begin{split}
					\tr(L\tau) &= \tr(L(p\tau_+ + (1-p)\tau_-^*)) \\
					&= \tr(L\otimes \mathbb{I} \left(p\tau_+\otimes\ketbra{0}^{\otimes n}+ (1-p)\tau_-^*\otimes\ketbra{1}^{\otimes n}\right)).
				\end{split}
			\end{equation}
			The claim then follows from \eqref{eq:state-closeness}, the triangle inequality, and 
			\begin{equation}
				\begin{split}
					|\tr(L\left(p\tau_+\otimes\ketbra{0}^{\otimes n}+ (1-p)\tau_-^*\otimes\ketbra{1}^{\otimes n} - \Gamma(\rho)\right))|&\leq 2\|L\|_{\infty}D\left(p \tau_+ \otimes \ketbra{0}^{\otimes n} + (1-p)\tau_-^* \otimes \ketbra{1}^{\otimes n}, \Gamma(\rho)\right) \\
					&\leq W\cO\left(\frac{\varepsilon}{W}\right).
				\end{split}
			\end{equation}
		\end{proof}
		A sufficient condition for $L = L^*$ is that the distribution $\mathcal{D}$ only samples Pauli basis choices $\vec{P}$ with an even number of $Y$ measurements.

		\subsection{Using shared randomness}\label{subsec:shared-randomness}
		We now proceed to analyze the scenario where all parties share classical randomness. Shared randomness can be established either by generating it collaboratively in a single location beforehand or by employing DI methods such as DI quantum key distribution or conference key agreement.
		
		Conceptually, shared randomness allows the spatially separated parties to jointly roll a dice in each round to decide which set of estimators — $\vI$, $\vII$, or $\vIII$ — to test.
		More importantly, in the case of estimating $\vIII$, the parties can use the shared randomness to jointly generate a random sample $\vec{P}$ from a general distribution $\mathcal{D}$. This eliminates the requirement that $\mathcal{D}$ be in product form, as we used in the local randomness scenario.
		
		The protocol utilizing shared randomness is summarized in Protocol~\ref{prot:di-witness-shared-randomness}. The following theorem guarantees its performance:

		\begin{theorem}[Finite sample version of Theorem~\ref{thm:DI-estimation-L} using shared randomness]\label{thm:fin-sample-DI-estimation-L-shared-randomness}
			Let $L$ be the $n$-partite observable defined in Eq.~\eqref{eq:L-def}, with the weighting coefficients bounded by a constant $W>1$, i.e., $|\omega(\vec{b}|\vec{P})| \leq W$.
			
			For a desired precision $\varepsilon > 0$ and failure probability $0 < \delta < 1$, if the total number of samples $N$ is set to:
			\begin{equation}\label{eq:claim-sample-complexity-shared-randomness}
				N = \cO\left(W^4n^4\frac{\log(\frac{n}{\delta})}{\varepsilon^4}\right).
			\end{equation}
			Then, with probability at least $1-\delta$, if Protocol~\ref{prot:di-witness-shared-randomness} outputs \texttt{CERTIFIED}, then there exist $p \in [0,1]$, a state $\tau = p \tau_+ + (1-p)\tau_-$, and a product of local channels $\Gamma$ such that:
			\begin{equation}\label{eq:state-closeness-shared-randomness}
				D\left(p \tau_+ \otimes \ketbra{0}^{\otimes n} + (1-p)\tau_-^* \otimes \ketbra{1}^{\otimes n}, \Gamma(\rho)\right) \le \cO\left(\frac{\varepsilon}{W}\right),
			\end{equation}
			and the returned estimator $\hvIII$ satisfies:
			\begin{equation}\label{eq:estimator_error_shared_randomness}
				| \hvIII - \tr(L\tau) | = \cO(\varepsilon).
			\end{equation}
		\end{theorem}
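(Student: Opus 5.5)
The proof will copy the proof of Theorem~\ref{thm:fin-sample-DI-estimation-L-local-randomness} almost verbatim. Only the sampling analysis changes, and that is where the factor $n$ is saved. First I specify Protocol~\ref{prot:di-witness-shared-randomness}. In each round a shared random variable selects one of three round types with constant probabilities, say $1/3$ each.

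In a type-I round, the parties draw $l\in[n]$, $k\in\{0,1,2\}$ and $i,j\in\{0,1\}$ uniformly. Parties $A_l,B_l$ then use settings $x_l(i,k),y_l(j,k)$, and all other parties' outputs are ignored.

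In a type-II round, the parties draw $l\in[n-1]$ and $P\in\{X,Y,Z\}$ uniformly. They set $x_l=\Dr$, $x_{l+1}=\Dl$ and $y_l=y_{l+1}=P$.

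In a type-III round, all main parties set $x_l=\diamond$ and the auxiliary parties jointly draw $\vec P\leftarrow\mathcal D$ from the shared randomness.

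The finalization step is the same as in Protocol~\ref{prot:di-witness-local-randomness}, with the same acceptance thresholds $\hvI\le \varepsilon^2/(n^2W^2)$ and $\hvII\le\varepsilon/(nW)$. No rejection sampling is needed for $\vII$, because $P$ is drawn uniformly by construction.

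For soundness, I condition on the event in Eq.~\eqref{eq:requirement_error_bound}, namely that every empirical estimator is within its target accuracy. On this event the argument of Theorem~\ref{thm:fin-sample-DI-estimation-L-local-randomness} transfers unchanged. Jensen's inequality together with the threshold on $\hvI$ gives $\sum_{l,k}\sqrt{\vI(l,k)}=\cO(\varepsilon/W)$. The threshold on $\hvII$ gives $\sum_l\vII(l,l+1)=\cO(\varepsilon/W)$. Hence the quantity called $\varepsilon$ in Theorem~\ref{thm:DI-estimation-L} is $\cO(\varepsilon/W)$. Theorem~\ref{thm:DI-estimation-L} then supplies the local channel $\Lambda$, with indicator weight $1-\cO(\varepsilon/W)$ and with $v_{\mathrm{III}}$ within $\cO(\varepsilon)$ of the global-transpose expectation.

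I then compose $\Lambda$ with the dephasing channel on the indicator registers, as in Eq.~\eqref{eq:dephasing-decompose}, to get $\Gamma$. This yields $p$, $\tau_\pm$ and the bound Eq.~\eqref{eq:state-closeness-shared-randomness}. The triangle inequality, combined with $|\hvIII-\vIII|\le\varepsilon$, gives Eq.~\eqref{eq:estimator_error_shared_randomness}.

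One check is that shared randomness does not break the Born-rule model of Eq.~\eqref{eq:measurement_prob}. It does not, because conditional on the shared randomness each party's setting is an independent input and the devices receive only their own inputs. Lemma~\ref{lemma:DI-estimation-L-up-to-partial-transpose} does not care how $\vec P$ is generated, so correlated $\mathcal D$ is allowed.

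The remaining step is the sample count from Lemmas~\ref{lemma:sample-complexity-N_I-and-N_II} and~\ref{lem:sample-complexity-N_III}. Each CHSH configuration is selected with probability $\Omega(1/n)$. Each type-II configuration is also selected with probability $\Omega(1/n)$, rather than the $\Omega(1/n^2)$ of the local protocol, where two independent $1/(8n)$ events had to coincide. The type-III probability $\Pmsm[\diamond^n;A\land\vec y\leftarrow\mathcal D]$ is $\Omega(1)$.

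Substituting the accuracies $\epsilon_{\mathrm I}=\varepsilon^2/(n^2W^2)$, $\epsilon_{\mathrm{II}}=\varepsilon/(nW)$ and $\epsilon_{\mathrm{III}}=\varepsilon$ gives the three requirements
\[
N=\cO\!\left(\max\left\{\frac{n\cdot n^4W^4}{\varepsilon^4},\ \frac{n\cdot n^2W^2}{\varepsilon^2},\ \frac{W^2}{\varepsilon^2}\right\}\log\frac{n}{\delta}\right).
\]
The first term reads $n^5W^4/\varepsilon^4$, which exceeds the claimed $n^4$. This is the main obstacle: with per-$(l,k)$ accuracy $\varepsilon^2/(n^2W^2)$, uniform selection of a single $l$ costs an extra factor $n$.

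My first remedy is to exploit that only the average $\hvI=\frac1{3n}\sum_{l,k}\hvI(l,k)$ is thresholded. The soundness argument only uses $\sum_{l,k}\vI(l,k)$ through Jensen's inequality, so it suffices to estimate this average to accuracy $\varepsilon^2/(n^2W^2)$, not each term separately. With shared randomness I can form an unbiased single-sample estimator of the average. Draw $(l,k,i,j)$ uniformly and record $\pm 4(-1)^{ij}(-1)^{a_l+b_l}$. This estimator is bounded by a constant, so Hoeffding's inequality requires only $\cO(n^4W^4\varepsilon^{-4}\log\delta^{-1})$ type-I rounds.

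The Jensen step then becomes
\[
\sum_{l,k}\sqrt{\vI(l,k)}\le\sqrt{3n\sum_{l,k}\vI(l,k)}\le\sqrt{9n^2\bigl(\hvI+\varepsilon^2/(n^2W^2)\bigr)}=\cO(\varepsilon/W).
\]
The $\vII$ average is handled the same way, at cost $\cO(n^2W^2/\varepsilon^2)$. This recovers the claimed $N=\cO(W^4n^4\varepsilon^{-4}\log(n/\delta))$.
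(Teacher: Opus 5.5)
Your soundness argument is the paper's: condition on the accuracy event, apply Cauchy--Schwarz to $\sum_{l,k}\sqrt{\vI(l,k)}$, invoke Theorem~\ref{thm:DI-estimation-L}, then dephase the indicator registers. The difference is entirely in the sample count, and it comes from how you reconstructed Protocol~\ref{prot:di-witness-shared-randomness}.

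In the paper, a type-I round runs the CHSH test on \emph{all} $n$ pairs $(A_l,B_l)$ at once, each with its own random $(k,i,j)$. A type-II round runs the transpose-braiding test on all pairs $(l,l+1)$ with $l\equiv s \pmod 2$, for a shared random parity $s$ and a shared $y$. These pairs are disjoint, so no-signaling leaves each pair's marginal statistics unaffected. Every configuration therefore has probability $\Omega(1)$. Lemmas~\ref{lemma:sample-complexity-N_I-and-N_II} and~\ref{lem:sample-complexity-N_III} then apply verbatim, with the per-configuration accuracies of Eq.~\eqref{eq:requirement_error_bound} and the unchanged ratio estimators of Protocol~\ref{prot:di-witness-local-randomness}, giving $W^4n^4\log(n/\delta)/\varepsilon^4$ directly. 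The extra factor of $n$ you ran into never appears.

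You instead test one pair per round, which costs that factor of $n$. You recover it by replacing the per-configuration ratio estimators with one pooled, uniformly sampled, unbiased estimator of each average $\hvI,\hvII$. This rests on the correct observation that soundness only needs the averages to accuracy $\varepsilon^2/(n^2W^2)$ and $\varepsilon/(nW)$, not each $\vI(l,k)$ separately. The argument is valid: the pooled estimator has $\cO(1)$ range, Hoeffding gives $\cO(n^4W^4\varepsilon^{-4}\log\delta^{-1})$ type-I rounds, and your Cauchy--Schwarz chain is right. It also buys something: no union bound over $l$ is needed, and no simultaneous tests, so it does not rely on the even/odd scheduling that the chain topology forces for the transpose-braiding tests.

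What it does not do is prove the statement for the paper's Protocol~\ref{prot:di-witness-shared-randomness} as written. Your finalization differs from that of Protocol~\ref{prot:di-witness-local-randomness}; you first say it is the same, then change it in the remedy. You should either state explicitly that you are analyzing a modified protocol with pooled estimators, or adopt the parallel scheduling, in which case plain substitution into the two lemmas suffices and no remedy is needed.
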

		Compared to Theorem~\ref{thm:fin-sample-DI-estimation-L-local-randomness}, Theorem~\ref{thm:fin-sample-DI-estimation-L-shared-randomness} makes no assumption about the distribution $\mathcal{D}$ and its sample complexity is reduced by a factor of $n$.
		
		\begin{protocol}{DI multipartite Pauli measurements using shared randomness}{prot:di-witness-shared-randomness}
			\textbf{Input: } 
			\begin{enumerate}
				\item Total number of measurement rounds $N$.
				\item Target observable $L$ defined in Eq.~\eqref{eq:L-def}, with target measurement distribution $\mathcal{D}$ and  weighting coefficients $\omega(\vec{b}|\vec{P})$ satisfying $|\omega(\vec{b}|\vec{P})| \leq W$. 
			\end{enumerate}
			
			\textbf{Initialization}
			
			Set all counters and estimators to zero, same as in Protocol~\ref{prot:di-witness-local-randomness}.
			
			\textbf{Measurement rounds}
			
			Execute the following steps for $N$ total rounds:
			\begin{enumerate}
				\item Setting selection: The parties use their shared randomness to jointly select one of the three test types uniformly at random, with probability $\tfrac{1}{3}$ for each type.
				\begin{enumerate}[label=(\Roman*)]
					\item 
					For each $l\in [n]$: Choose $k\in\{0,1,2\}, i,j\in\{0,1\}$ uniformly at random. Let $(x_l(0,k),x_l(1,k),y_l(0,k),y_l(1,k))$ be the one defined in Eq.~\eqref{eq:3-CSHS-inputs}
					and set $x_l = x_l(i,k)$ and $y_l = y_l(j,k)$.
					
					\item 
					Jointly choose $s \in \{0, 1\}$ and $y \in \{0, 1, 2\}$ uniformly at random.
					For all $l \in \{0, \dots, n-2\}$ such that $l \equiv s \pmod{2}$: Set $x_l = \,\Dr$, $x_{l+1} = \, \Dl$, and $y_l = y_{l+1} = y$. 
					
					\item 
					Jointly choose the vector of Paulis $\vec{P}=(P_0, \dots, P_{n-1}) \in \{X, Y, Z\}^n$ according to the distribution $\mathcal{D}$.
					For all $l \in \{0, \dots, n-1\}$: Set $x_l = \diamond$, and set $y_l \in \{0, 1, 2\}$ to be the index corresponding to $P_l$.
				\end{enumerate}
				\item Measurement and counter updates: Parties perform measurements according to the selected settings and update the relevant counters and estimators exactly as detailed in Protocol~\ref{prot:di-witness-local-randomness} (skipping the rejection sampling in Eq.~\ref{eq:rejection-sample-prot1}).
			\end{enumerate}
			
			\textbf{Finalization and output}
			
			Compute the final estimators and perform the certification check, same as in Protocol~\ref{prot:di-witness-local-randomness}.
		\end{protocol}
		
		\begin{proof}[Proof of Theorem~\ref{thm:fin-sample-DI-estimation-L-shared-randomness}]
			The proof is analogous to the proof of Theorem~\ref{thm:fin-sample-DI-estimation-L-local-randomness}, with the difference that 
			\begin{equation}
				\begin{split}
					\min\limits_{l\in [n], k \in [3], i,j \in \{0,1\}}\left\{\Pmsm[x_l(i,k),y_l(j,k); A_l,B_l]\right\} & = \Omega(1), \\
					\min\limits_{y\in [3], l\in [n-1]}\left\{\Pmsm[\Dr, \Dl, y,y; A_l,A_{l+1}, B_l,B_{l+1}]\right\} & = \Omega(1),\\
					\Pmsm[\diamond^n; A\land \vec{y}\leftarrow \mathcal{D}] &= \Omega(1).
				\end{split}
			\end{equation}
			Substituting these probabilities and the desired error bounds from Eq.~\eqref{eq:requirement_error_bound} into Lemmas~\ref{lemma:sample-complexity-N_I-and-N_II} and \ref{lem:sample-complexity-N_III}, the overall sample complexity is given by 
			\begin{equation}
				\begin{split}
					N &= \cO\left(\max\left\{W^4n^4\frac{\log(\frac{n}{\delta})}{\varepsilon^4},W^2n^2\frac{\log(\frac{n}{\delta})}{\varepsilon^2},W^2\frac{\log(\frac{1}{\delta})}{\varepsilon^2}\right\}\right) \\
					&= \cO\left(W^4n^4\frac{\log(\frac{n}{\delta})}{\varepsilon^4}\right).
				\end{split}
			\end{equation} 
		\end{proof}
		
		Analogous to Corollary~\ref{cor:fin-sample-DI-estimation-L-local-randomness-conjugation-invariant}, the assumption that $L = L^*$ yields an identical simplification for the shared randomness scenario. The proof follows exactly as before.
		\begin{corollary}[Theorem~\ref{thm:fin-sample-DI-estimation-L-shared-randomness} for $L = L^*$]\label{cor:fin-sample-DI-estimation-L-shared-randomness-conjugation-invariant}
			If $L = L^*$, then Equations~\eqref{eq:state-closeness-shared-randomness} and~\eqref{eq:estimator_error_shared_randomness} in Theorem~\ref{thm:fin-sample-DI-estimation-L-shared-randomness}  can be replaced by $| \hvIII - \tr(L\Gamma(\rho)) | = \cO(\varepsilon)$,  where we extend $L$ to act as identity on the indicator systems.
		\end{corollary}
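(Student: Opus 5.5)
The approach is to reduce the statement to Theorem~\ref{thm:fin-sample-DI-estimation-L-shared-randomness}, exactly as Corollary~\ref{cor:fin-sample-DI-estimation-L-local-randomness-conjugation-invariant} was reduced to Theorem~\ref{thm:fin-sample-DI-estimation-L-local-randomness}. That theorem already supplies, with probability at least $1-\delta$ and conditional on \texttt{CERTIFIED}, a product of local channels $\Gamma$, a weight $p\in[0,1]$, and states $\tau_\pm$ such that Eq.~\eqref{eq:state-closeness-shared-randomness} and Eq.~\eqref{eq:estimator_error_shared_randomness} hold. The proof also inherits the same sample complexity $N$. The only remaining task is to remove the abstract mixture $\tau$ in favour of $\Gamma(\rho)$ itself, using the hypothesis $L=L^*$.

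\textbf{Step 1 (rewrite $\tr(L\tau)$).} Since $L=L^*$, we have $\tr(L\tau_-)=\tr(L^*\tau_-^*)=\tr(L\tau_-^*)$; the middle equality holds because the trace of a product is real-valued under entrywise conjugation. Hence
\[
\tr(L\tau)=p\tr(L\tau_+)+(1-p)\tr(L\tau_-^*)=\tr\!\left[(L\otimes\mathbb{I})\left(p\tau_+\otimes\ketbra{0}^{\otimes n}+(1-p)\tau_-^*\otimes\ketbra{1}^{\otimes n}\right)\right],
\]
where $L$ is extended by the identity on the indicator systems.

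\textbf{Step 2 (Hölder bound).} Apply Hölder's inequality together with $\opnorm{L}\le W$, which holds because $L$ is an average of weighted projectors with $|\omega|\le W$. This gives
\[
\left|\tr\!\left[(L\otimes \mathbb{I})\left(p\tau_+\otimes\ketbra{0}^{\otimes n}+(1-p)\tau_-^*\otimes\ketbra{1}^{\otimes n}-\Gamma(\rho)\right)\right]\right|\le 2W\, D(\cdot,\cdot).
\]
By Eq.~\eqref{eq:state-closeness-shared-randomness} the right-hand side is $\cO(W\cdot \varepsilon/W)=\cO(\varepsilon)$.

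\textbf{Step 3 (triangle inequality).} Combining Step 2 with Eq.~\eqref{eq:estimator_error_shared_randomness} via the triangle inequality yields $|\hvIII-\tr(L\Gamma(\rho))|=\cO(\varepsilon)$.

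There is no genuine obstacle in this argument. The only point requiring care is the conjugation identity in Step 1, namely that $L=L^*$ makes the global-conjugation flag irrelevant to the expectation value. One should also check that the $W$ factor coming from $\opnorm{L}$ cancels against the $\varepsilon/W$ state-closeness bound, so that the final error is $\cO(\varepsilon)$ and not $\cO(W\varepsilon)$.
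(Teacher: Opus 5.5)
Your proposal is correct and follows the paper's argument essentially verbatim. You rewrite $\tr(L\tau)$ as the expectation of $L\otimes\mathbb{I}$ on the flagged mixture via $\tr(L\tau_-)=\tr(L\tau_-^*)$, bound the difference to $\Gamma(\rho)$ by $2\opnorm{L}D\le 2W\cdot\cO(\varepsilon/W)$, and close with the triangle inequality against Eq.~\eqref{eq:estimator_error_shared_randomness}. One small wording fix in Step~1: the precise reason is that $\tr(L^*\tau_-^*)=\overline{\tr(L\tau_-)}$, and this equals $\tr(L\tau_-)$ because the trace of a product of two Hermitian operators is real.
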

		
		\section{Completeness analysis}\label{sec:completeness}
		Our previous analysis has established the soundness of our self-testing protocols, guaranteeing a small estimation error when the protocols output \texttt{CERTIFIED}.
		However, a trivial protocol that always outputs \texttt{FAILED} would also satisfy this soundness requirement, yet it is useless for self-testing any quantum device. 
		Therefore, we now proceed to analyze the \emph{completeness} of our protocols. We show that even if the physical realization implements the reference experiment with noticeable but small imperfections, our self-testing protocols will still output \texttt{CERTIFIED} with high probability after $\poly(n,\varepsilon^{-1})$ rounds.
		This establishes the utility and practical applicability of our self-testing protocols.
		
		Here, we consider imperfections arising from both the underlying quantum state and the implemented measurements. We remark that the completeness analysis requires the existence of an embedding of the reference experiment into the physical experiment. That is, the support of the reference state, $\tau_T \otimes \Phi^+$, is a subspace of the support of the actual input state $\rho = \psi$ of the physical experiment, and the reference measurement operators can be extended trivially onto the domain of the measurement operators in the physical experiment.
		For simplicity in the subsequent derivations, we assume throughout this section that this embedding has already been performed and that the reference state and the actual input state reside in the same Hilbert space.
		
		Regarding imperfections in the state, we quantify the imperfection by the trace distance between the actual input state $\rho$ and the ideal reference state, which is a product of Bell states $\Phi^+$ tensored with a state $\tau_T$ on the system $T$. 
		Here, $\Phi^+$ is composed of Bell states shared between all required auxiliary register pairs (e.g., between $R^{\Dr}_l$ and $R^{\Dl}_{l+1}$, and between $S_l$ and $B_l$, for all $l$).
		The state error $\epsilon_{\mathrm{S}}$ is defined as the minimum trace distance over any states $\tau_T$ on the input system $\mathcal{H}_T$:
		\begin{equation}\label{eq:state_imperfection}
			\epsilon_{\mathrm{S}} \coloneqq \min_{\tau_T} D(\rho, \tau_T \otimes \Phi^+).
		\end{equation}

		For the measurement imperfections, we note that each measurement can be represented as a quantum channel. The measurement channel $\mathcal{M}_{x_l}^{(l)}$ maps the quantum state to a classical probability distribution over the outcomes $\vec{a}_l$:
		\begin{equation}
			\mathcal{M}_{x_l}^{(l)} (\rho) = \sum_{\vec{a}_l} \tr\left(\rho M_{\vec{a}_l | x_l}^{(l)}\right) \ketbra{\vec{a}_l},
		\end{equation}
		where $M_{\vec{a}_l | x_l}^{(l)}$ is the measurement operator and $\ketbra{\vec{a}_l}$ is the classical output state encoding the outcome. The channel $\mathcal{N}_{y_l}^{(l)}$ for party $B_l$ is defined similarly.
		Denote the ideal measurement channels in the reference experiment as $\widetilde{\mathcal{M}}_{x_l}^{(l)}$ and $\widetilde{\mathcal{N}}_{y_l}^{(l)}$. The measurement imperfection $\epsilon_{\mathrm{M}}$ is then quantified by the maximum diamond distance between the implemented channel and its ideal counterpart, maximized over all parties $l$ and all relevant measurement settings $x_l$ and $y_l$:
		\begin{equation}\label{eq:measurement_imperfection}
			\epsilon_{\mathrm{M}} = \max \left\{\max_{l, x_l} \left\{\norm{\mathcal{M}_{x_l}^{(l)} -\widetilde{\mathcal{M}}_{x_l}^{(l)}}_{\diamond} \right\}, \max_{l, y_l} \left\{\norm{\mathcal{N}_{y_l}^{(l)} -\widetilde{\mathcal{N}}_{y_l}^{(l)}}_{\diamond}\right\} \right\}.
		\end{equation}
		
		Then, we have the following completeness guarantee.
		
		\begin{theorem}[Completeness of DI multipartite Pauli measurement protocols]\label{thm:completeness}
			Let $L$ be the $n$-partite observable defined in Eq.~\eqref{eq:L-def}, with the weighting coefficients bounded by a constant $W>1$, i.e., $|\omega(\vec{b}|\vec{P})| \leq W$.
			Let $\epsilon_{\mathrm{S}}$ and $\epsilon_{\mathrm{M}}$ be the state and measurement device errors defined in Eqs.~\eqref{eq:state_imperfection} and \eqref{eq:measurement_imperfection}.
			For a desired precision $\varepsilon > 0$, failure probability $0 < \delta < 1$, suppose the device imperfections satisfy the bound:
			\begin{equation}
				\epsilon_{\mathrm{S}}, \epsilon_{\mathrm{M}} = \cO\left( \frac{\varepsilon^2}{n^2W^2}\right).
			\end{equation}
			Then, for a total number of samples $N$ chosen as specified below, with probability at least $1-\delta$:
			\begin{enumerate}
				\item\label{bullet:CompletenessProtocol1} Protocol~\ref{prot:di-witness-local-randomness} outputs \texttt{CERTIFIED} and $\hvIII$. This protocol requires the target measurement distribution $\mathcal{D}$ for $L$ to be a product distribution $\mathcal{D} = \bigotimes_{l=0}^{n-1}\mathcal{D}^{(l)}$, with marginal probabilities satisfying $p_X^{(l)}, p_Y^{(l)}, p_Z^{(l)} = \Omega(1)$ for all $l\in[n]$. The required sample complexity is set to $N = \cO\left(W^4n^5\frac{\log(\frac{n}{\delta})}{\varepsilon^4}\right)$.
				\item\label{bullet:CompletenessProtocol2} Protocol~\ref{prot:di-witness-shared-randomness} outputs \texttt{CERTIFIED} and $\hvIII$. The required sample complexity is set to $N = \cO\left(W^4n^4\frac{\log(\frac{n}{\delta})}{\varepsilon^4}\right)$.
			\end{enumerate}
			Let $\tilde{\tau}_T$ be any reference state in $T$. In both cases, the returned $\hvIII$ satisfies
			\begin{equation}\label{eq:claim-vIII-completeness}
				|\hvIII - \tr(L\tilde{\tau}_T)| = \cO(\varepsilon + W D(\rho, \tilde{\tau}_T \otimes \Phi^+)).
			\end{equation}
			
		\end{theorem}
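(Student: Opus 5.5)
The approach is to compare the physical experiment with the reference experiment \emph{at the level of the true expectation values} $\vI,\vII,\vIII$. Then the finite-sample concentration already established in Lemmas~\ref{lemma:sample-complexity-N_I-and-N_II} and~\ref{lem:sample-complexity-N_III} transfers this to the estimators $\hvI,\hvII,\hvIII$.

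\textbf{Step 1: bound each estimator's deviation from its reference value.} Each estimator is an expectation of a bounded function of the outcomes, conditioned on a fixed input configuration. For inputs $\vec x,\vec y$, the output distribution is obtained by applying $\bigotimes_l \mathcal{M}^{(l)}_{x_l}\otimes\mathcal{N}^{(l)}_{y_l}$ to $\rho$. We replace the physical state by $\tau_T\otimes\Phi^+$ (with $\tau_T$ the minimiser in Eq.~\eqref{eq:state_imperfection}), at cost $\epsilon_{\mathrm S}$ in total variation by data processing. We then replace the measurement channels one party at a time by their ideal versions, at cost $\epsilon_{\mathrm M}$ per channel, using the diamond norm and a telescoping hybrid argument. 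This is cheap for the local estimators:
\begin{itemize}
\item $\vI(l,k)$ involves only two parties, so its distance from the reference value $0$ is $\cO(\epsilon_{\mathrm S}+\epsilon_{\mathrm M})$.
\item $\vII(l,l+1)$ involves four parties, so similarly $|\vII|=\cO(\epsilon_{\mathrm S}+\epsilon_{\mathrm M})$.
\end{itemize}
Here we use that in the reference experiment, CHSH is maximal and the teleported Bell pair gives $\langle K\rangle=3$, so the reference value of $\vII$ is $0$.

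The estimator $\vIII$ is the main obstacle, because all $2n$ channels are involved and the naive hybrid gives $\cO(W(\epsilon_{\mathrm S}+n\epsilon_{\mathrm M}))$. With $\epsilon_{\mathrm M}=\cO(\varepsilon^2/(n^2W^2))$ this is still $\cO(\varepsilon^2/(nW))\le\cO(\varepsilon)$, so the crude bound suffices. To get the stated dependence on an arbitrary $\tilde\tau_T$, we instead compare directly to $\tilde\tau_T\otimes\Phi^+$. This incurs $W D(\rho,\tilde\tau_T\otimes\Phi^+)$ from the state and $\cO(nW\epsilon_{\mathrm M})$ from the measurements. Finally, in the reference experiment, ideal teleportation with the classical Pauli correction $f$ gives exactly $\vIII=\tr(L\tilde\tau_T)$, as in the computation of (ii) in Lemma~\ref{lemma:DI-estimation-L-up-to-partial-transpose} with trivial isometries and no transpose flags.

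\textbf{Step 2: the certification thresholds.} We need $\hvI\le \varepsilon^2/(n^2W^2)$ and $\hvII\le \varepsilon/(nW)$. The true values are $\cO(\varepsilon^2/(n^2W^2))$ by Step~1 and the assumed noise bound, with the implicit constant chosen small enough, say half the threshold. We then run the protocols with $N$ large enough that the empirical averages are within the other half of the threshold. This is the same requirement as Eq.~\eqref{eq:requirement_error_bound}, and it needs exactly the sample complexities computed in the proofs of Theorems~\ref{thm:fin-sample-DI-estimation-L-local-randomness} and~\ref{thm:fin-sample-DI-estimation-L-shared-randomness}: $W^4n^5\log(n/\delta)/\varepsilon^4$ with local randomness, using the setting probabilities $\Omega(1/n)$, $\Omega(1/n^2)$, $\Omega(1)$ and the rejection sampling that keeps $\hvII$ unbiased, and $W^4n^4\log(n/\delta)/\varepsilon^4$ with shared randomness. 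A union bound gives overall success probability $1-\delta$.

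\textbf{Step 3: conclude for $\hvIII$.} On the same event, $|\hvIII-\vIII|\le\varepsilon$. Combining with Step~1 gives Eq.~\eqref{eq:claim-vIII-completeness}.

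The remaining care point is that the averaged quantities $\hvI,\hvII$ in Eq.~\eqref{eq:estimators-prot1} are averages over $l$ and $k$. Each summand is controlled individually, and all the per-party bounds hold simultaneously by the union bound already built into Lemma~\ref{lemma:sample-complexity-N_I-and-N_II}.
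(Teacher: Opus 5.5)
Your proposal is correct and follows essentially the same route as the paper. You bound the true values $\vI,\vII$ by $\cO(\epsilon_{\mathrm S}+\epsilon_{\mathrm M})$ and $\vIII$ within $\cO(W(n\epsilon_{\mathrm M}+D(\rho,\tilde\tau_T\otimes\Phi^+)))$ of the reference values, using data processing and a diamond-norm triangle inequality over the involved measurement channels, and then reuse the finite-sample analysis of Theorems~\ref{thm:fin-sample-DI-estimation-L-local-randomness} and~\ref{thm:fin-sample-DI-estimation-L-shared-randomness}, with constant-factor tightening, to clear the certification thresholds.
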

		
		Equation~\eqref{eq:claim-vIII-completeness} tells us that if the state $\rho$ in the physical experiment approximately equals a product state $\tilde{\tau}_T$ on system $T$ and the reference bell states $\Phi^+$ on the other systems, then $\hvIII$ approximates $\tr(L\tilde{\tau}_T)$.
		
		\begin{proof}
			In the reference experiment, the deviation values $\vI(l,k)$ and $\vII(l,l+1)$ are exactly zero for all $l$, $k$ and test states $\tau_T$, as the ideal setup perfectly achieves the bounds $S(l,k)=2\sqrt{2}$ and the partial transpose identity (Eq.~\eqref{eq:v_II-protocol-specification}).
			
			We first analyze the values $\vI(l,k)$ and $\vII(l,l+1)$ in the presence of physical device imperfections $\epsilon_{\mathrm{S}}$ and $\epsilon_{\mathrm{M}}$.
			We consider $\vII(l,l+1)$ as an example. Its expected value is a summation (weighted by coefficients that are all $\cO(1)$) over measurement outcomes of the composed measurement channel $\mathcal{C}_{y_l} \coloneqq \mathcal{M}^{(l)}_{\Dr} \otimes \mathcal{M}^{(l+1)}_{\Dl} \otimes \mathcal{N}_{y_l}^{(l)} \otimes \mathcal{N}_{y_{l+1}}^{(l+1)}$ for $y_l=y_{l+1} \in \{0,1,2\}$ (see Eq.~\eqref{eq:v_II-protocol-specification}). We denote the corresponding ideal channel as $\widetilde{\mathcal{C}}_{y_l}$. By the triangle inequalities, we have:
			\begin{equation}\label{eq:proof-channel-diamond-norm-triangular}
				\norm{\mathcal{C}_{y_l} - \widetilde{\mathcal{C}}_{y_l}}_{\diamond} \le \norm{\mathcal{M}^{(l)}_{\Dr} - \widetilde{\mathcal{M}}^{(l)}_{\Dr}}_{\diamond} + \norm{\mathcal{M}^{(l+1)}_{\Dl} - \widetilde{\mathcal{M}}^{(l+1)}_{\Dl}}_{\diamond} + \norm{\mathcal{N}_{y_l}^{(l)} - \widetilde{\mathcal{N}}_{y_l}^{(l)}}_{\diamond} + \norm{\mathcal{N}_{y_{l+1}}^{(l+1)} - \widetilde{\mathcal{N}}_{y_{l+1}}^{(l+1)}}_{\diamond} \le 4\epsilon_{\mathrm{M}}.
			\end{equation}
			The value of $\vII$ can be written as $\sum_{y_l}\frac{1}{3}\tr[O_{\mathrm{II}, y_l} \mathcal{C}_{y_l}(\rho)]$ with $O_{\mathrm{II},y_l}$ a weighting of the different measurement outcomes (as specified in Box~\ref{box:ProtocolEstimators}).
			Using the general inequality $\tr[O \tau_1] - \tr[O \tau_2] \le \norm{O}_{\infty} D(\tau_1, \tau_2)$ we can hence bound, 
			\begin{equation}
				\begin{split}\label{eq:defivation-vII-ideal}
					\vII(l,l+1) - 0 &= \tr\left[\sum_{y_l}\frac{1}{3}O_{\mathrm{II}, y_l} (\mathcal{C}_{y_l}(\rho)-\widetilde{\mathcal{C}}_{y_l}(\tau \otimes \Phi^+))\right] \\
					&\le \cO\left(\sum_{y_l} D\left(\mathcal{C}_{y_l}(\rho), \widetilde{\mathcal{C}}_{y_l}(\tau \otimes \Phi^+)\right)\right) \\
					&\le \cO\left(\sum_{y_l} D\left(\mathcal{C}_{y_l}(\rho), \widetilde{\mathcal{C}}_{y_l}(\rho)\right)\right) + \cO\left(\sum_{y_l} D\left(\widetilde{\mathcal{C}}_{y_l}(\rho), \widetilde{\mathcal{C}}_{y_l}(\tau \otimes \Phi^+)\right)\right) \\
					&\le \cO\left( \sum_{y_l} \norm{\mathcal{C}_{y_l} - \widetilde{\mathcal{C}}_{y_l}}_{\diamond} \right) + \cO\left( D\left(\rho, \tau \otimes \Phi^+ \right)\right) \\
					&\le \cO(\epsilon_{\mathrm{M}} + \epsilon_{\mathrm{S}}).
				\end{split}
			\end{equation}
			Here, $\tau$ is the state that obtains the minimum in \eqref{eq:state_imperfection}. The third line uses the definition of diamond norm and the data processing inequality.
			Since we assume $\epsilon_{\mathrm{S}}, \epsilon_{\mathrm{M}} \le \cO\left( \frac{\varepsilon^2}{n^2W^2}\right)$, the expectation value is bounded by $\vII(l,l+1) = \cO\left( \frac{\varepsilon^2}{n^2W^2}\right) \leq\cO\left( \frac{\varepsilon}{nW}\right)$.
			A similar argument applies to $\vI(l,k)$, yielding $\vI(l,k) = \cO\left( \frac{\varepsilon^2}{n^2W^2}\right)$.
			
			By setting the sample complexity $N$ as specified in the theorem, the estimation errors for the finite sample estimates $\hat{v}$ are controlled by Eq.~\eqref{eq:requirement_error_bound} with probability at least $1-\delta$ (follow the proofs of Theorems~\ref{thm:fin-sample-DI-estimation-L-local-randomness} and \ref{thm:fin-sample-DI-estimation-L-shared-randomness}), which gives $\hvI(l,k) = \cO\left(\frac{\varepsilon^2}{n^2W^2}\right)$ and $ \hvII(l,l+1) = \cO\left(\frac{\varepsilon}{nW}\right)$ for any $l,k$. Therefore, 
			\begin{equation}
				\begin{split}
					\hvI &= \frac{1}{3n} \sum_l\sum_k \hvI(l,k) = \cO\left(\frac{\varepsilon^2}{n^2W^2}\right), \\
					\hvII &= \frac{1}{n-1} \sum_{l=0}^{n-2} \hvII(l,l+1) = \cO\left(\frac{\varepsilon}{nW}\right).
				\end{split}
			\end{equation}
			By increasing $N$ and tightening the errors $\epsilon_{\mathrm{S}}, \epsilon_{\mathrm{M}}$ by a constant scale (which does not change the order), we have $\hvI \le \frac{\varepsilon^2}{n^2W^2}, \hvII \le \frac{\varepsilon}{nW}$, which satisfies the criteria in the certification check steps (see Protocol~\ref{prot:di-witness-local-randomness}), therefore the protocols will output \texttt{CERTIFIED}.
			
			To conclude the proof, we analyze the value of $\hvIII$ produced by Protocol~\ref{prot:di-witness-local-randomness} and Protocol~\ref{prot:di-witness-shared-randomness}. In the ideal reference experiment performed on the state $\tilde{\tau}_T \otimes \Phi^+$, the value $\vIII$ equals $\tr(L\tilde{\tau}_T)$. We show that $\hvIII$ remains close to this ideal value, with a deviation bounded by the measurement and state imperfections.
			Concretely, we define the physical measurement channel for a given Pauli string $\vec{P}$ as $\mathcal{C_\vec{P}} = \bigotimes_{l=0}^{n-1} \mathcal{M}^{(l)}_{\diamond} \otimes \mathcal{N}_{(\vec{P})_l}^{(l)}$, and denote the corresponding ideal channel in the reference experiment as $\widetilde{\mathcal{C}}_\vec{P}$. The difference between the actual expectation value $\vIII$ and the ideal value $\tr(L\tilde{\tau}_T)$ can be bounded as follows:
			\begin{equation}
				\begin{split}
					|\vIII - \tr(L\widetilde{\tau}_T)| 
					&\leq \cO(W \mathbb{E}_{\vec{P}\leftarrow \mathcal{D}}[D\left(\mathcal{C}_{\vec{P}}(\rho), \widetilde{\mathcal{C}}_{\vec{P}}(\widetilde{\tau}_T \otimes \Phi^+)\right)]\\
					&\leq \cO(W \mathbb{E}_{\vec{P}\leftarrow \mathcal{D}}[D\left(\mathcal{C}_{\vec{P}}(\rho),\widetilde{\mathcal{C}}_{\vec{P}}(\rho)\right) + D\left(\widetilde{\mathcal{C}}_{\vec{P}}(\rho),\widetilde{\mathcal{C}}_{\vec{P}}(\widetilde{\tau}_T \otimes \Phi^+)\right)]\\
					& \leq \cO(W(n\epsilon_{\mathrm{M}} + D(\rho, \tilde{\tau}_T \otimes \Phi^+))) \\
					&\leq \cO(\varepsilon+W D(\rho, \tilde{\tau}_T \otimes \Phi^+)).
				\end{split}
			\end{equation}
			Here, we use similar arguments as in \eqref{eq:proof-channel-diamond-norm-triangular} and \eqref{eq:defivation-vII-ideal}, with the difference that here $\norm{O_{\mathrm{III}}}_{\infty} = \cO(W)$ (instead of $\cO(1)$) and $ \norm{\mathcal{C}_{\vec{P}} - \widetilde{\mathcal{C}}_{\vec{P}}}_{\diamond} = \cO(n\epsilon_{\mathrm{M}})$ (instead of $\cO(\epsilon_{\mathrm{M}})$). 
			
			Following the sample complexity analysis in the proofs of Theorems~\ref{thm:fin-sample-DI-estimation-L-local-randomness} and \ref{thm:fin-sample-DI-estimation-L-shared-randomness}, we ensure that for the specified $N$, the statistical error satisfies $\hvIII \approx_{\varepsilon} \vIII$ with high probability. Consequently,
			\begin{equation}
				\hvIII \approx_{\varepsilon + W D(\rho, \tilde{\tau}_T \otimes \Phi^+)} \tr(L\tilde{\tau}_T).
			\end{equation}  
		\end{proof}
		
		\section{Sample complexity for self-testing generic multipartite states}\label{app:self-testing-fin-samples}
		
		After establishing that our protocol robustly self-tests multipartite Pauli measurements, we now demonstrate that this framework leads to a sample-efficient self-testing protocol for a generic multipartite target state $\Psi$.
		
		\subsection{Definition for self-testing states}
		The goal of self-testing a state is to use statistics from local measurements on spatially separated systems to certify that the physical state is equivalent to a reference state $\Psi$, up to local quantum channels. A protocol that self-tests a state $\Psi$ is sound if it can only succeed for states that are close to $\Psi$. It is complete if it succeeds with high probability for a physical experiment that is close to the reference experiment.
		
		\begin{definition}[Robust self-testing of states]\label{def:approx-selftest-states}
			Consider a protocol which certifies whether a physical state $\rho$ is $\varepsilon$-close to an $n$-partite pure state $\Psi$ by outputting either \texttt{CERTIFIED} or \texttt{FAILED}.
			We characterize the protocol by the following two properties:
			\begin{enumerate}
				\item $(\varepsilon, \delta)-$\textbf{soundness}: Suppose that for any $\lambda \in [0,1]$ and any quantum channel $\Lambda = \bigotimes_{l=0}^{n-1} \Lambda_l$ consisting of a tensor product of local channels, the physical state $\rho$ satisfies:
				\begin{equation}
					D(\lambda \Psi \otimes \ketbra{0}^{\otimes n} + (1-\lambda) \Psi^* \otimes \ketbra{1}^{\otimes n}, \Lambda(\rho)) \ge \varepsilon
				\end{equation}
				where $\ketbra{0}^{\otimes n}$ and $\ketbra{1}^{\otimes n}$ are auxiliary indicator states. Then, the protocol outputs \texttt{FAILED} with probability at least $1-\delta$.
				
				\item $(\varepsilon_T, \varepsilon_{\mathrm{Net}}, \delta)-$\textbf{completeness}: If the input state $\rho$ satisfies $D(\rho , \Psi \otimes \Phi^+) < \varepsilon_T$, and the state and measurement imperfections satisfy $\epsilon_S,\epsilon_M < \varepsilon_{\mathrm{Net}}$ (see \eqref{eq:state_imperfection}, \eqref{eq:measurement_imperfection} for definitions of $\epsilon_S,\epsilon_M$), then the protocol outputs \texttt{CERTIFIED} with probability at least $1-\delta$.
			\end{enumerate}
		\end{definition}
		
		In the completeness definition, $\varepsilon_T$ limits the deviation of the physical state from the ideal reference state, while $\varepsilon_{\mathrm{Net}}$ limits the deviation of the auxiliary quantum network states as well as the measurements. 
		For example, if $\rho = \rho_T \otimes \rho_{SRB}$, then the definition requires $D(\rho_T, \Psi) < \varepsilon_T$ and $D(\rho_{SRB}, \Phi^+) <  \min(\varepsilon_T, \varepsilon_{\mathrm{Net}})$.
		
		The freedom up to a global complex conjugation in this and other self-testing statements is intrinsic to relying solely on measurement statistics. Indeed, for any quantum measurement $\{M_{a|x}\}$ performed on a state $\rho$, the resulting statistics are identical to those obtained by performing the complex-conjugated measurements $\{M_{a|x}^*\}$ on the complex-conjugated state $\rho^*$. 
		
		\subsection{Self-testing states using shared randomness}
		Having established a general framework for lifting device-dependent Pauli measurement protocols to their DI versions, we can now integrate our method with state certification protocols that utilize Pauli measurements to obtain a robust self-testing protocol of states.
		
		We first consider the shadow overlap protocol introduced in Ref.~\cite{Huang2024Certifying}, which is designed to certify whether an experimental state $\rho$ is close to a pure target state $\Psi$. For a given target state $\Psi$, the protocol can be viewed as the estimation of the following observable:
		\begin{equation}\label{eq:L_psi-def} 
			L_\Psi = \frac{1}{n} \sum_{k=0}^{n-1} \sum_{\vec{b}^{(k)} \in \{0,1\}^{n-1}} \ketbra{\vec{b}^{(k)}}_{[n]\setminus\{k\}} \otimes \ketbra{\Psi_{\vec{b}}^{(k)}}. 
		\end{equation}
		Here, the unnormalized conditional post-measurement state on the $k$-th qubit, conditioned on the outcomes $\vec{b}^{(k)} \coloneqq (b_0,\cdots,b_{k-1},b_{k+1},\cdots,b_{n-1})$ on the other qubits, is denoted by $\ket{\tilde{\Psi}_{\vec{b}}^{(k)}} \coloneqq (\mathbb{I}_k \otimes \bra{\vec{b}^{(k)}}) \ket{\Psi}$. 
		The normalized state is given by $\ket{\Psi_{\vec{b}}^{(k)}} \coloneqq \ket{\tilde{\Psi}_{\vec{b}}^{(k)}} / \sqrt{\braket{\tilde{\Psi}_{\vec{b}}^{(k)}}{\tilde{\Psi}_{\vec{b}}^{(k)}}}$ if $|\braket{\tilde{\Psi}_{\vec{b}}^{(k)}}{\tilde{\Psi}_{\vec{b}}^{(k)}}|>0$ and $0$ otherwise.

		The observable $L_\Psi$ measures the average overlap between the conditional post-measurement states of the experimental state $\rho$ and those of the target state $\Psi$. It is straightforward to verify that the largest eigenvalue of $L_{\Psi}$ is $1$, which is achieved by the ideal target state: $\tr(L_{\Psi} \Psi) = 1$. 
		Consequently, the effectiveness of $L_{\Psi}$ for certification is determined by its spectral gap $\Delta(L_{\Psi})$, defined as the difference between its largest and second-largest eigenvalues. A larger gap implies that any state significantly different from $\Psi$ will yield a lower expectation value, making errors easier to detect. Crucially, this spectral gap has been proved to be noticeable with respect to the system size for generic Haar-random states:
		\begin{fact}[{\cite[Theorem 16]{Huang2024Certifying}}]\label{fact:shadow_overlap}
			Let $\Psi$ be an $n$-qubit Haar-random state. Then, with probability at least $1-\exp(-cn)$ over the choice of $\Psi$, $\Delta(L_{\Psi}) \ge c'n^{-2}$ for some positive constants $c, c'$.
		\end{fact}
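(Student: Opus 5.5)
The plan is to reduce the spectral gap of $L_\Psi$ to the Poincaré constant of a reversible Markov chain on the hypercube $\{0,1\}^n$, and then bound that constant for Haar-random amplitudes by a multicommodity-flow (canonical path) argument.

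\emph{Reduction to a Markov chain.} First rewrite $L_\Psi=\frac1n\sum_{k}\Pi_k$, where $\Pi_k=\sum_{\vec b^{(k)}}\ketbra{\vec b^{(k)}}\otimes\ketbra{\Psi^{(k)}_{\vec b}}$ is a projector. Then $\Psi$ is a common $+1$ eigenvector of every $\Pi_k$, so $\lambda_{\max}(L_\Psi)=1$. The gap equals
\[
\min_{\phi\perp\Psi}\frac{\bra{\phi}(\mathbb{I}-L_\Psi)\ket{\phi}}{\braket{\phi}{\phi}}.
\]
Since a Haar-random $\Psi$ has all amplitudes nonzero almost surely, any $\ket{\phi}$ can be written as $\phi(x)=f(x)\Psi(x)$ for a complex function $f$. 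Set $\pi(x)=|\Psi(x)|^2$. A direct computation on each two-dimensional block (fix $\vec b^{(k)}$ and let $x,y$ differ only in bit $k$) gives
\[
\bra{\phi}(\mathbb{I}-\Pi_k)\ket{\phi}=\sum_{x\sim_k y}\frac{\pi(x)\pi(y)}{\pi(x)+\pi(y)}\,|f(x)-f(y)|^2 ,
\]
with $\braket{\phi}{\phi}=\sum_x\pi(x)|f(x)|^2$ and $\phi\perp\Psi\iff\sum_x\pi(x)f(x)=0$. Hence $\Delta(L_\Psi)$ is exactly the spectral gap of the heat-bath (Glauber) chain with stationary distribution $\pi$: choose a site $k$ uniformly at random, then resample bit $k$ from $\pi$ conditioned on the other bits. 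The phases of $\Psi$ drop out entirely, and complex $f$ reduces to the real case by splitting into real and imaginary parts.

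\emph{Poincaré inequality via flows.} For a Haar-random state, $\pi(x)=E_x/\sum_y E_y$ with $E_x$ i.i.d.\ $\mathrm{Exp}(1)$, and $\sum_y E_y=2^n(1\pm o(1))$ with probability $1-e^{-\Omega(n)}$. To show gap $\ge c'n^{-2}$, I would use the Diaconis–Stroock / Sinclair bound: exhibit a flow routing $\pi(x)\pi(y)$ units between every pair $(x,y)$ along hypercube paths, with edge capacity $Q(e)=\frac1n\frac{\pi(x)\pi(y)}{\pi(x)+\pi(y)}$. It then suffices to bound $\max_e \frac{1}{Q(e)}\sum_{\gamma\ni e}\rho(\gamma)|\gamma|$ by $O(n^2)$ with high probability. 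The benchmark is the uniform hypercube with bit-fixing paths randomized over the order in which bits are fixed, which gives congestion $O(n)\cdot|\gamma|=O(n^2)$. The task is to show that the random weights cost only a constant factor.

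\emph{Main obstacle: heavy lower tails.} The expected main difficulty is that $1/Q(e)\propto n\,(1/E_x+1/E_y)$, and $1/E_x$ has infinite mean. Naive bit-fixing paths through a single atypically light vertex would therefore blow up the congestion, and a union bound over edges cannot control this directly. I would handle it as follows.

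1. Classify vertices as \emph{light} if $E_x<\eta$ for a small constant $\eta$.
2. Route all flow so that it avoids light vertices except at path endpoints. This is possible because, with probability $1-e^{-\Omega(n)}$, every vertex has at least $(1-2\eta)n$ heavy neighbours. That claim follows from a Chernoff bound over the $n$ independent neighbours of each vertex, combined with a union bound over $2^n$ vertices. The union bound needs the per-vertex failure probability to be $e^{-\Omega(n)}\cdot 2^{-n}$, which I would obtain by taking $\eta$ small enough that the binomial tail is exponentially small at rate exceeding $\ln 2$.
3. Spread each path over many random bit orders, with detours of length $O(1)$ around light vertices.
4. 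Handle the endpoint edges, which are incident to a light vertex $x$ (the $1/E_x$ issue), by noting that the flow originating or terminating at $x$ is itself proportional to $\pi(x)$. The factor $E_x$ then cancels the $1/E_x$ in $1/Q(e)$, leaving $O(n^2)$.

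On all other edges both endpoints satisfy $E\ge\eta$, so $1/Q(e)=O(n\,2^n)$. The congestion is then a sum of many weakly dependent bounded terms, and concentration (Bernstein together with a union bound over the $n2^{n-1}$ edges) keeps it within a constant factor of the uniform value with probability $1-e^{-cn}$. This yields Poincaré constant $O(n^2)$, i.e.\ $\Delta(L_\Psi)\ge c'n^{-2}$.
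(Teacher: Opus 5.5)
Your first step is correct. The $2\times 2$ block computation shows that $\Delta(L_\Psi)$ is exactly the spectral gap of the heat-bath walk with stationary law $\pi=|\Psi|^2$, and the phases drop out. This is also how the cited source proceeds, via the relaxation time of that walk; the paper itself gives no proof and imports the bound. Everything therefore rests on the $O(n^2)$ Poincar\'e bound for i.i.d.\ exponential weights. There your proposal contains one false claim and leaves the decisive estimate unproved.

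\emph{The false claim is in step~2.} Take $q=1-e^{-\eta}\approx\eta$ and the threshold of $2\eta n$ light neighbours. The Chernoff exponent is then $D(2\eta\,\|\,q)=(2\ln 2-1)\eta+O(\eta^2)$, which goes to $0$ as $\eta\to 0$ instead of exceeding $\ln 2$. Consequently the expected number of vertices with more than $2\eta n$ light neighbours is $2^n e^{-\Theta(\eta n)}\to\infty$, and the union bound cannot work. A fixed-fraction version is true. Since $D(t\,\|\,q)\ge t\ln(1/q)-\ln 2$, for any fixed $t<1$ and $\eta$ small enough, every vertex has at least $(1-t)n$ heavy neighbours with probability $1-e^{-\Omega(n)}$. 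Step~4 does not even need this. An edge $e=(x,u)$ with $x$ light and $u$ heavy has $1/Q(e)\le 2n/\pi(x)$ and carries at most $2\pi(x)$ flow, so its congestion is at most $4n\max_\gamma|\gamma|$.

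\emph{The real gap is step~3 together with the concentration claim; this is where the theorem actually lives.} Light vertices have constant density. A random-order geodesic therefore meets $\Theta(\eta n)$ of them, and only an $e^{-\Theta(\eta n)}$ fraction of bit orders avoids them all. So a constant fraction of all traffic must be rerouted, and you must show that none of the $n2^{n-1}$ heavy--heavy edges absorbs more than $O(2^{-n})$ load. ``Bernstein for weakly dependent bounded terms'' does not deliver this, for two reasons:
\begin{itemize}
\item The load on $e=(z,z')$ is a quadratic form $\sum_{x,y}w_{xy}(e)E_xE_y$, and its routing weights depend on the same variables through the light set.
\item The load is not concentrated near the uniform value. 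The single row $x=z$ already contributes $E_z\sum_y w_{zy}E_y\approx E_z 2^n/n$, which is of the order of the mean $2^{n-1}$ when $E_z=\Theta(n)$.
\end{itemize}

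One workable way to close the gap:
\begin{enumerate}
\item Condition on the light/heavy pattern. By memorylessness the $E_x$ remain independent, with heavy ones distributed as $\eta+\mathrm{Exp}(1)$.
\item Fix an explicit local rerouting. For example, when a path enters a light interior vertex $w$ along coordinate $i$ and leaves along $j$, swap the two flips so the path passes through the opposite corner $w\oplus e_i\oplus e_j$ of that square. Use a length-$4$ detour through a third coordinate if that corner is also light, and add a rule for runs of consecutive light vertices.
\item Bound the loads deterministically, using $\max_x E_x\le 2n$ and uniform Chernoff bounds on neighbourhood sums.
\end{enumerate}
With high probability, the pass-through traffic at any vertex for a fixed ordered pair of directions is $O(2^{-n}/n)$. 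An edge $(a,a\oplus e_j)$ receives swapped traffic only from the at most $2n$ light neighbours of its endpoints, so the added load is $O(2^{-n})$. Until an argument of this kind is carried out, the $O(n^2)$ congestion bound on heavy--heavy edges is asserted, not proved.
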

		
		Furthermore, an estimator of the expectation value $\tr(L_{\Psi} \rho)$ can be obtained by performing local Pauli measurements on $\rho$. This is achieved by randomly selecting the measurement settings and post-processing the outcomes as follows:
		\begin{itemize}
			\item Measurement Setting: Uniformly select an index $k \in \{1, \dots, n\}$ and a Pauli operator $P \in \{X, Y, Z\}$. Set the $k$-th party's measurement to $P^{(k)} = P$, and set $P^{(l)} = Z$ for all other parties $l \neq k$.
			\item Weighting Coefficient: Define $\omega^{(k)}(\vec{b}|\vec{P})$ as the coefficient derived from local classical shadow tomography targeting the conditional state $\ket{\Psi_{\vec{b}}^{(k)}}$:
			\begin{equation}\label{eq:omega-def-huang-protocol}
				\omega^{(k)}(\vec{b}|\vec{P}) = \bra{\Psi_{\vec{b}}^{(k)}} \left( 3 \frac{\mathbb{I} + (-1)^{b_k} \sigma_P}{2} - \mathbb{I} \right) \ket{\Psi_{\vec{b}}^{(k)}}.
			\end{equation}
		\end{itemize}
		As shown in Ref.~\cite{Huang2024Certifying}, local shadow tomography ensures that this procedure yields an unbiased estimator of $\tr(L_{\Psi} \rho)$. The noticeable spectral gap $\Delta(L_{\Psi})$ guaranteed by Fact~\ref{fact:shadow_overlap} ensures that the protocol is robust against infidelity for generic Haar-random states $\Psi$.
		
		Therefore, by integrating the shadow overlap certification method into our robust self-testing framework, we obtain a practical protocol for certifying generic Haar-random states. 
		It is important to note that the measurement settings in this construction require coordination across different parties, and the weighting coefficients depend not only on the chosen Pauli basis $\vec{P}$ but also on the specific party index $k$. 
		This coordination is accommodated by our shared randomness protocol, Protocol~\ref{prot:di-witness-shared-randomness}, as the parties can collaboratively sample both $k$ and $\vec{P}$ at the start of each round. 
		Furthermore, the weighting coefficients are strictly bounded by a constant, $|\omega^{(k)}(\vec{b}|\vec{P})| \le W$, where $W = \cO(1)$. These elements together give the following performance guarantee:
		\begin{theorem}[Self-testing states using shared randomness]\label{thm:self-testing-state-shared-randomness}
			Let $\varepsilon' > 0$ be the desired precision. For an $n$-partite pure target state $\ket{\Psi}$ with a spectral gap $\Delta \coloneqq \Delta(L_{\Psi}) > 0$, Protocol~\ref{prot:self-testing-state-shared-randomness} is an $(\cO(\varepsilon'), \delta)$-sound and $(\cO(\Delta \varepsilon'^2), \cO(\frac{\Delta^2\varepsilon'^4}{n^2}), \delta)$-complete self-testing protocol for $\Psi$. The total sample complexity is given by:
			\begin{equation}
				N = \cO\left( \frac{n^4}{\Delta^{4} \varepsilon'^8} \log\left(\frac{n}{\delta}\right) \right).
			\end{equation}
		\end{theorem}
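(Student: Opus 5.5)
The proof will run Protocol~\ref{prot:di-witness-shared-randomness} with $L=L_\Psi$ and a suitably chosen inner precision, followed by a threshold test on $\hvIII$. Concretely, Protocol~\ref{prot:self-testing-state-shared-randomness} calls the shared-randomness protocol with $W=\cO(1)$ and precision $\varepsilon=c\,\Delta\varepsilon'^2$. It outputs \texttt{CERTIFIED} iff that protocol certifies and, in addition, $\hvIII\ge 1-c_1\Delta\varepsilon'^2$. With this choice of $\varepsilon$, the bound $N=\cO(W^4n^4\log(n/\delta)/\varepsilon^4)$ of Theorem~\ref{thm:fin-sample-DI-estimation-L-shared-randomness} becomes exactly $\cO(n^4\Delta^{-4}\varepsilon'^{-8}\log(n/\delta))$.

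\emph{Soundness.} Suppose the protocol outputs \texttt{CERTIFIED}. By Theorem~\ref{thm:fin-sample-DI-estimation-L-shared-randomness}, with probability at least $1-\delta$ there exist $\tau=p\tau_++(1-p)\tau_-$ and a product of local channels $\Gamma$ such that $D(p\tau_+\otimes\ketbra{0}^{\otimes n}+(1-p)\tau_-^*\otimes\ketbra{1}^{\otimes n},\Gamma(\rho))=\cO(\Delta\varepsilon'^2)$ and $|\hvIII-\tr(L_\Psi\tau)|=\cO(\Delta\varepsilon'^2)$. Hence $\tr(L_\Psi\tau)\ge 1-\cO(\Delta\varepsilon'^2)$.

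The key step is that $\tr(L_\Psi\tau_\pm)$ are averaged linearly inside $\tau$. Since $L_\Psi\le 1$, each of $p\,\tr(L_\Psi\tau_+)$ and $(1-p)\tr(L_\Psi\tau_-)$ is near its maximum. The spectral decomposition of $L_\Psi$ then gives $L_\Psi\le \Psi+(1-\Delta)(\mathbb{I}-\Psi)$, so $\tr(L_\Psi\sigma)\le 1-\Delta(1-\tr(\Psi\sigma))$ for every state $\sigma$. Combining, $p(1-\tr\Psi\tau_+)+(1-p)(1-\tr\Psi\tau_-)=\cO(\varepsilon'^2)$.

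By Fuchs--van de Graaf, $D(\tau_\pm,\Psi)\le\sqrt{1-\tr\Psi\tau_\pm}$. Using conjugation invariance of trace distance, $D(\tau_-^*,\Psi^*)=D(\tau_-,\Psi)$. Jensen's inequality (concavity of the square root) on the $p$-mixture then gives $D(p\tau_+\otimes\ketbra{0}^{\otimes n}+(1-p)\tau_-^*\otimes\ketbra{1}^{\otimes n},\ p\Psi\otimes\ketbra{0}^{\otimes n}+(1-p)\Psi^*\otimes\ketbra{1}^{\otimes n})=\cO(\varepsilon')$. The triangle inequality with the $\Gamma(\rho)$ bound shows that $\Gamma(\rho)$ is $\cO(\varepsilon')$-close to the flagged target with $\lambda=p$. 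Taking the contrapositive, any $\rho$ that is $\Omega(\varepsilon')$-far from the target for every product channel $\Lambda$ fails with probability at least $1-\delta$.

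The main subtlety is that $\tau_+$ and $\tau_-$ may differ, which is why the argument uses the linear weighted sum rather than the fidelity of $\tau$ itself. This converts a square-root loss from $\Delta\varepsilon'^2$ into an $\varepsilon'$ trace-distance bound.

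\emph{Completeness.} Apply Theorem~\ref{thm:completeness} with $\varepsilon=c\Delta\varepsilon'^2$. The required imperfection bound $\epsilon_S,\epsilon_M=\cO(\varepsilon^2/n^2)$ equals $\cO(\Delta^2\varepsilon'^4/n^2)$, which matches $\varepsilon_{\mathrm{Net}}$. The theorem then gives certification of the Pauli tests with probability at least $1-\delta$. Choosing $\tilde\tau_T=\Psi$ in Eq.~\eqref{eq:claim-vIII-completeness} yields $|\hvIII-1|=\cO(\varepsilon+D(\rho,\Psi\otimes\Phi^+))=\cO(\Delta\varepsilon'^2)$ whenever $\varepsilon_T=\cO(\Delta\varepsilon'^2)$. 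With the constants tuned, $\hvIII$ passes the threshold $1-c_1\Delta\varepsilon'^2$, which establishes $(\cO(\Delta\varepsilon'^2),\cO(\Delta^2\varepsilon'^4/n^2),\delta)$-completeness.
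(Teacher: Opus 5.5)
Your proposal is correct and follows essentially the same route as the paper: substitute $\varepsilon=\Delta\varepsilon'^2$ and $W=\cO(1)$ into the shared-randomness lifting theorem, and use the spectral gap to turn the near-maximal value of $\tr(L_\Psi\tau)$ into fidelity bounds for $\tau_\pm$. From there you apply Fuchs--van de Graaf and concavity of the square root, and invoke the completeness theorem with $\tilde\tau_T=\Psi$, exactly as the paper does. The one difference is cosmetic: you apply the operator inequality $L_\Psi\le\Psi+(1-\Delta)(\mathbb{I}-\Psi)$ to the whole mixture, whereas the paper derives separate $\cO(\varepsilon/p)$ and $\cO(\varepsilon/(1-p))$ bounds for each component; yours is just a slightly tidier version of the same step.
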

		
		\begin{protocol}{Self-testing states using shared randomness}{prot:self-testing-state-shared-randomness}
			\begin{enumerate}
				\item For a pure state $\Psi$ with $\Delta(L_{\Psi}) > 0$, execute Protocol~\ref{prot:di-witness-shared-randomness} for the observable $L_{\Psi}$ using $N$ measurement rounds.
				
				\item  If the output of Protocol~\ref{prot:di-witness-shared-randomness} is \texttt{FAILED} or the output is an estimate $\hat{\omega}$ with $\hat{\omega} < 1 - \Delta\varepsilon'^2$, output \texttt{FAILED}.
				\item Else, output \texttt{CERTIFIED}.
			\end{enumerate}
		\end{protocol}
		
		\begin{proof}
			The sample complexity is obtained by substituting the precision $\varepsilon = \Delta\varepsilon'^2$ and the constant bound $W = \cO(1)$ into Theorem~\ref{thm:fin-sample-DI-estimation-L-shared-randomness}. We now verify that Protocol~\ref{prot:self-testing-state-shared-randomness} satisfies the requirements for robust self-testing in Definition~\ref{def:approx-selftest-states}.
			
			\emph{Soundness}---Theorem~\ref{thm:fin-sample-DI-estimation-L-shared-randomness} guarantees that with probability at least $1-\delta$, if the protocol outputs \texttt{CERTIFIED}, there exists $p\in [0,1]$ and states $\tau_+,\tau_-$, such that the input state $\rho$ and the local channels $\Gamma$ satisfy:
			\begin{equation}\label{eq:distance-state-extraction}
				D(p\tau_+ \otimes \ketbra{0}^{\otimes n} + (1-p)\tau_-^* \otimes \ketbra{1}^{\otimes n}, \Gamma(\rho)) \le \cO(\varepsilon), 
			\end{equation}
			and
			\begin{equation}
				\tr[L_{\Psi} (p\tau_+ + (1-p)\tau_-)] \ge 1- \cO(\varepsilon).
			\end{equation}
			As the maximal eigenvalue of $L_{\Psi}$ is $1$, this weighted sum implies that both terms must be close to unity:
			\begin{equation}
				\tr(L_{\Psi}\tau_+) \geq 1 - \cO\left(\frac{\varepsilon}{p}\right), \quad \tr(L_{\Psi}\tau_-) \geq 1 - \cO\left(\frac{\varepsilon}{1-p}\right).
			\end{equation}
			Since $\ket{\Psi}$ is the eigenvector for the eigenvalue $1$, which has spectral gap $\Delta$, we get:
			\begin{equation}
				\bra{\Psi}\tau_+\ket{\Psi} \geq 1 - \cO\left(\frac{\varepsilon}{\Delta p}\right), \quad \bra{\Psi}\tau_-\ket{\Psi} \geq 1 - \cO\left(\frac{\varepsilon}{\Delta (1-p)}\right).
			\end{equation}
			By the Fuchs–van de Graaf inequality, $\|\rho-\tau\|_1 \leq 2\sqrt{1-F(\rho,\tau)}$, we have:
			\begin{equation}\label{eq:fidelity-to-trace-norm}
				\|\ketbra{\Psi} - \tau_+\|_1 \leq \cO\left(\sqrt{\frac{\varepsilon}{\Delta p}}\right), \quad \|\ketbra{\Psi} - \tau_-\|_1 \leq \cO\left(\sqrt{\frac{\varepsilon}{\Delta (1- p)}}\right).
			\end{equation}
			By the concavity of the square root function,
			\begin{equation}
				p \|\ketbra{\Psi} - \tau_+\|_1 + (1-p)\|\ketbra{\Psi} - \tau_-\|_1 \leq  \cO(\sqrt{\Delta^{-1}\varepsilon}) = \cO(\varepsilon').
			\end{equation}
			Hence, 
			\begin{equation}\label{eq:distance-state-ideal}
				D(p \ketbra{\Psi} \otimes \ketbra{0}^{\otimes n} + (1-p) \ketbra{\Psi^*} \otimes \ketbra{1}^{\otimes n}, p\tau_+ \otimes \ketbra{0}^{\otimes n} + (1-p)\tau_-^* \otimes \ketbra{1}^{\otimes n}) = \cO(\varepsilon').
			\end{equation}
			Combining Eqs.~\eqref{eq:distance-state-extraction} and ~\eqref{eq:distance-state-ideal} via triangle inequality, we obtain that 
			\begin{equation}
				D(p \ketbra{\Psi} \otimes \ketbra{0}^{\otimes n} + (1-p) \ketbra{\Psi^*} \otimes \ketbra{1}^{\otimes n}, \Gamma(\rho)) = \cO(\varepsilon'), 
			\end{equation}
			which establishes $(\cO(\varepsilon'),\delta)$-soundness.
			
			\emph{Completeness}---Assume the physical implementation is sufficiently close to the reference experiment, such that $D(\rho, \Psi \otimes \Phi^+) < \cO(\varepsilon)$ and $\epsilon_M, \epsilon_S < \cO(\varepsilon^2/n^2)$. According to Theorem~\ref{thm:completeness}, with probability at least $1-\delta$, Protocol~\ref{prot:self-testing-state-shared-randomness} will output an estimator $\hat{\omega}$ such that:
			\begin{equation}
				|\hat{\omega} - \tr(L_{\Psi} \ketbra{\Psi})| \le \varepsilon = \Delta\varepsilon'^2.
			\end{equation}
			Since $\tr(L_{\Psi} \ketbra{\Psi}) = 1$, we have $\hat{\omega} \ge 1 - \Delta\varepsilon'^2$. Consequently, the protocol outputs \texttt{CERTIFIED}. This confirms the $(\cO(\Delta \varepsilon'^2), \cO(\frac{\Delta^2\varepsilon'^4}{n^2}),\delta)$-completeness of the protocol.
		\end{proof}
		Combining Theorem~\ref{thm:self-testing-state-shared-randomness} with Fact~\ref{fact:shadow_overlap} gives a self-testing protocol for generic Haar-random states. This protocol exhibits polynomial sample complexity and maintains noticeable robustness against both state and measurement imperfections.
		
		\begin{corollary}[Sample-efficient self-testing of generic states using shared randomness]
			Let $\varepsilon' > 0$. For an $n$-qubit Haar-random state $\ket{\Psi}$, with probability at least $1 - \exp(-\Omega(n))$ over the choice of $\Psi$, Protocol~\ref{prot:self-testing-state-shared-randomness} is a $(\cO(\varepsilon'), \delta)$-sound and $(\cO(n^{-2} \varepsilon'^2), \cO(n^{-6}\varepsilon'^4), \delta)$-complete self-testing protocol. 
			The required sample complexity is $N = \cO\left(n^{12}\varepsilon'^{-8}\log(\frac{n}{\delta})\right)$. 
		\end{corollary}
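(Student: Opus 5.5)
The corollary is a direct substitution: combine Theorem~\ref{thm:self-testing-state-shared-randomness}, which holds for any target $\Psi$ with $\Delta(L_\Psi)>0$, with Fact~\ref{fact:shadow_overlap}, which lower-bounds $\Delta(L_\Psi)$ for Haar-random $\Psi$.

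First, fix the good event $E$ over the choice of $\Psi$, namely $\Delta(L_\Psi)\ge c' n^{-2}$. By Fact~\ref{fact:shadow_overlap}, $E$ has probability at least $1-\exp(-cn)=1-\exp(-\Omega(n))$. Conditioned on $E$, $\Delta>0$, so Theorem~\ref{thm:self-testing-state-shared-randomness} applies verbatim to Protocol~\ref{prot:self-testing-state-shared-randomness}. Note that the protocol's threshold $1-\Delta\varepsilon'^2$ uses the actual gap $\Delta(L_\Psi)$ of the chosen target. Computing this gap is a classical preprocessing step that does not affect the DI guarantees. Alternatively, one may run the protocol with the lower bound $c'n^{-2}$ in place of $\Delta$. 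This is legitimate because the theorem's soundness proof only uses that $\Psi$ has gap at least the value plugged in: the fidelity bound $\bra{\Psi}\tau_\pm\ket{\Psi}\ge 1-\cO(\varepsilon/(\Delta p))$ remains valid with any lower bound on the gap.

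Soundness then reads $(\cO(\varepsilon'),\delta)$ and is independent of $\Delta$. For completeness, substitute $\Delta=\Theta(n^{-2})$ into the two parameters:
\[
\cO(\Delta\varepsilon'^2)=\cO(n^{-2}\varepsilon'^2),\qquad \cO\!\left(\frac{\Delta^2\varepsilon'^4}{n^2}\right)=\cO(n^{-6}\varepsilon'^4).
\]
Substituting into the sample complexity gives
\[
N=\cO\!\left(\frac{n^4}{\Delta^4\varepsilon'^8}\log\frac{n}{\delta}\right)=\cO\!\left(n^{12}\varepsilon'^{-8}\log\frac{n}{\delta}\right),
\]
matching the claim. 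Each of these is monotone in $\Delta$ in the right direction: a larger gap only helps both the tolerances and $N$. So using the lower bound $c'n^{-2}$ rather than the exact gap yields valid statements in all three places.

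The only point needing care is this consistency between the exact gap and the lower bound in the protocol's threshold. I would handle it by stating the protocol with the lower bound $c'n^{-2}$ and noting that the proof of Theorem~\ref{thm:self-testing-state-shared-randomness} goes through unchanged. The bound $|\omega^{(k)}|\le W=\cO(1)$ holds independently of $\Psi$, since the shadow weights lie in $[-1,2]$, so no further dependence on the target enters.
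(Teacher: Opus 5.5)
Your proposal is correct and matches the paper's argument, which simply combines Theorem~\ref{thm:self-testing-state-shared-randomness} with Fact~\ref{fact:shadow_overlap} and substitutes $\Delta=\Omega(n^{-2})$ into the soundness, completeness, and sample-complexity parameters. Your extra remark is a valid clarification that the paper leaves implicit: the threshold $1-\Delta\varepsilon'^2$ may equally be set with the lower bound $c'n^{-2}$, because the soundness step $1-\bra{\Psi}\tau_\pm\ket{\Psi}\le(1-\tr(L_\Psi\tau_\pm))/\Delta$ only needs a lower bound on the gap.
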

		
		While we have established a robust self-testing protocol for generic states, the requirement for shared randomness is demanding. Achieving this requirement may involve integrating our framework with DI quantum random number generators or distribution protocols, such as DI quantum key distribution or conference key agreement, which lie beyond the scope of the current work.
		Alternatively, it is possible to sample measurement settings i.i.d.\ and use rejection sampling to achieve a result similar to Theorem~\ref{thm:self-testing-state-shared-randomness}. However, this would come at the cost of additional polynomial overhead in the sample complexity.
		Therefore, to achieve robust self-testing without any additional costs, we now develop a modified method that uses solely local randomness.

		\subsection{Self-testing states using local randomness}
		The key strategy for removing shared randomness is to let each party randomly sample a local basis, and to show that the resulting observable still preserves a noticeable gap. 
		
		We introduce a rotated-basis version of the observable $L_{\Psi}$ for a fixed basis $\vec{P} = (P_0, \dots, P_{n-1}) \in \{X,Y,Z\}^n$:
		\begin{equation}
			L_{\Psi}^{\vec{P}} = \frac{1}{n}\sum_{k=0}^{n-1} \sum_{\vec{b}^{(k)} \in \{0,1\}^{n-1}}
			\ketbra{\vec{b}^{(k)}, \vec{P}^{(k)}}_{[n]\setminus\{k\}} \otimes \ketbra{\Psi_{\vec{b}, \vec{P}}^{(k)}},
		\end{equation}
		where $\vec{b}^{(k)} \coloneqq (b_0,\cdots,b_{k-1},b_{k+1},\cdots,b_{n-1}), \vec{P}^{(k)} = (P_0,\cdots,P_{k-1}, P_{k+1},\cdots,P_{n-1})$, $\ket{\vec{b}^{(k)}, \vec{P}^{(k)}}$ denotes the basis state corresponding to measurement outcomes $\vec{b}^{(k)}$ in basis $\vec{P}^{(k)}$, and $\ket{\Psi_{\vec{b}, \vec{P}}^{(k)}}$ denotes the normalized state on system $k$ after projecting $\ket{\Psi}$ onto $\ket{\vec{b}^{(k)}, \vec{P}^{(k)}}_{[n]\setminus\{k\}}$.
		Since the Haar measure is invariant under arbitrary local unitary rotations, the spectral properties of the shadow overlap observable are preserved across different bases. This allows us to extend Fact~\ref{fact:shadow_overlap} to any fixed basis:
		\begin{corollary}\label{col:spectral_gap_any_basis}
			For any basis $\vec{P} \in \{X,Y,Z\}^n$, let $\Psi$ be an $n$-qubit Haar-random state. Then, with probability at least $1-\exp(-cn)$ over the choice of $\Psi$, $\Delta(L_{\Psi}^{\vec{P}}) \ge c'n^{-2}$ for some positive constants $c, c'$ independent of the basis $\vec{P}$.
		\end{corollary}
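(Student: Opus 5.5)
The plan is to reduce Corollary~\ref{col:spectral_gap_any_basis} to Fact~\ref{fact:shadow_overlap} by a local-unitary change of basis, combined with the unitary invariance of the Haar measure.

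\textbf{Step 1 (relating the two observables).} For each $P\in\{X,Y,Z\}$ fix a single-qubit unitary $W_P$ that maps the computational basis to the eigenbasis of $\sigma_P$, i.e.\ $W_P\ket{b}$ is the eigenvector with eigenvalue $(-1)^b$. We may take $W_Z=\mathbb{I}$ and $W_X=H$, and for $W_Y$ we use $S H$ up to phases. Set $W_{\vec P}=\bigotimes_l W_{P_l}$ and $\ket{\Phi}=W_{\vec P}^\dagger\ket{\Psi}$. Then $\ket{\vec b^{(k)},\vec P^{(k)}} = W_{\vec P^{(k)}}\ket{\vec b^{(k)}}$.

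The post-measurement state on qubit $k$ is obtained by projecting with $\bra{\vec b^{(k)}}W_{\vec P^{(k)}}^\dagger$. This gives
\[
\ket{\Psi^{(k)}_{\vec b,\vec P}} = W_{P_k}\ket{\Phi^{(k)}_{\vec b}}
\]
up to a global phase, and the zero case is handled consistently. Substituting into the definition of $L_\Psi^{\vec P}$ yields the identity
\[
L_\Psi^{\vec P} = W_{\vec P}\,L_{\Phi}\,W_{\vec P}^\dagger ,
\]
where $L_\Phi$ is the computational-basis observable of Eq.~\eqref{eq:L_psi-def}. This is the only step requiring care. One must check that the factor $W_{P_k}$ on the unmeasured qubit $k$ matches, which it does because it is the same tensor factor of $W_{\vec P}$. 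One must also check that the normalization and the zero-norm convention commute with unitary conjugation, which holds since norms are preserved.

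\textbf{Step 2 (spectrum).} Unitary conjugation preserves the spectrum, so $\Delta(L_\Psi^{\vec P})=\Delta(L_\Phi)$.

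\textbf{Step 3 (distribution of $\Phi$).} Since $W_{\vec P}^\dagger$ is a fixed unitary, left invariance of the Haar measure implies that $\ket\Phi$ is Haar-distributed whenever $\ket\Psi$ is. Hence
\[
\Pr_\Psi\!\left[\Delta(L^{\vec P}_\Psi)\ge c'n^{-2}\right]=\Pr_\Phi\!\left[\Delta(L_\Phi)\ge c'n^{-2}\right]\ge 1-\exp(-cn)
\]
by Fact~\ref{fact:shadow_overlap}, with the same constants $c,c'$. In particular these constants are independent of $\vec P$.

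The main obstacle, though it is mild, is the bookkeeping in Step 1: verifying exactly that the conditional states transform covariantly, including phases, which drop out in the projectors $\ketbra{\cdot}$. Everything else is immediate.
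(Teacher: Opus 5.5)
Your proposal is correct and follows the same route as the paper, which just observes that the Haar measure is invariant under local unitary rotations. The spectral gap of the shadow-overlap observable therefore transfers from Fact~\ref{fact:shadow_overlap} to any fixed basis $\vec{P}$. Your Step~1 identity $L_\Psi^{\vec P} = W_{\vec P}\, L_{\Phi}\, W_{\vec P}^\dagger$ with $\ket{\Phi} = W_{\vec P}^\dagger\ket{\Psi}$ makes explicit the covariance that the paper leaves implicit, and it correctly gives the same constants $c,c'$ for every $\vec{P}$.
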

		
		We next consider the random-basis-enhanced \cite{du2025certifyinglocalizablequantumproperties} observable $M_{\Psi}$, defined by averaging the rotated shadow overlap observables $L_{\Psi}^{\vec{P}}$ over all possible combinations of local Pauli bases:
		\begin{equation}\label{eq:modified-Huang-protocol-def}
			M_{\Psi} \coloneqq \frac{1}{3^n} \sum_{\vec{P} \in \{X,Y,Z\}^n} L_{\Psi}^{\vec{P}}.
		\end{equation}
		We show that this observable satisfies the following guarantee:
		\begin{lemma}\label{lem:random-basis-enh}
			Let $\Psi$ be an $n$-qubit Haar-random state. Then, with probability at least $1-\exp(-cn)$ over the choice of $\Psi$, $\Delta(M_{\Psi}) \ge c'n^{-2}$ for some positive constants $c, c'$.
		\end{lemma}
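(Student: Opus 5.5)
The plan is to reduce the gap of $M_{\Psi}$ to the gaps of the individual rotated observables $L_{\Psi}^{\vec{P}}$, which Corollary~\ref{col:spectral_gap_any_basis} controls for each fixed basis. A union bound over all $3^n$ bases will not work, since the failure probability $e^{-cn}$ need not beat $3^n$. I will therefore only require that most bases be good, and control the bad fraction with Markov's inequality.

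\emph{Step 1 (structure of each $L_{\Psi}^{\vec{P}}$).} Fix $\vec{P}$ and $k$. Each term $\ketbra{\vec{b}^{(k)},\vec{P}^{(k)}}\otimes\ketbra{\Psi^{(k)}_{\vec{b},\vec{P}}}$ is a projector dominated by $\ketbra{\vec{b}^{(k)},\vec{P}^{(k)}}\otimes\mathbb{I}_k$. Summing over $\vec{b}^{(k)}$ and averaging over $k$ therefore gives $0\preceq L_{\Psi}^{\vec{P}}\preceq \mathbb{I}$. Moreover $\bra{\Psi}L_{\Psi}^{\vec{P}}\ket{\Psi}=1$: the projection of $\Psi$ onto outcome $\vec{b}^{(k)}$ is proportional to $\ket{\vec{b}^{(k)},\vec{P}^{(k)}}\otimes\ket{\Psi^{(k)}_{\vec{b},\vec{P}}}$, so the conditional overlaps are all $1$. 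Combined with $L_{\Psi}^{\vec{P}}\preceq\mathbb{I}$, this forces $L_{\Psi}^{\vec{P}}\ket{\Psi}=\ket{\Psi}$. Writing $\Delta_{\vec{P}}\coloneqq\Delta(L_{\Psi}^{\vec{P}})$, we obtain the operator inequality
\begin{equation}
L_{\Psi}^{\vec{P}}\preceq \ketbra{\Psi}+(1-\Delta_{\vec{P}})(\mathbb{I}-\ketbra{\Psi}).
\end{equation}
This inequality holds for every $\vec{P}$, including those with $\Delta_{\vec{P}}=0$.

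\emph{Step 2 (averaging).} Averaging the inequality over $\vec{P}$ uniform in $\{X,Y,Z\}^n$ gives $M_{\Psi}\ket{\Psi}=\ket{\Psi}$ and $M_{\Psi}\preceq\ketbra{\Psi}+(1-\bar\Delta)(\mathbb{I}-\ketbra{\Psi})$, where $\bar\Delta\coloneqq 3^{-n}\sum_{\vec{P}}\Delta_{\vec{P}}$. Since $\ket{\Psi}$ is an eigenvector, $M_{\Psi}$ preserves its orthogonal complement. Hence the largest eigenvalue of $M_{\Psi}$ is $1$, every other eigenvalue is at most $1-\bar\Delta$, and $\Delta(M_{\Psi})\ge\bar\Delta$.

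\emph{Step 3 (probabilistic bound; the main obstacle).} Let $G_{\vec{P}}$ be the event that $\Delta_{\vec{P}}\ge c'n^{-2}$, and let $B(\Psi)\coloneqq 3^{-n}\sum_{\vec{P}}\mathbf{1}[\neg G_{\vec{P}}]$ be the fraction of bad bases. By Corollary~\ref{col:spectral_gap_any_basis}, whose constants $c,c'$ are uniform in $\vec{P}$, linearity of expectation gives $\mathbb{E}_{\Psi}[B]\le e^{-cn}$. Markov's inequality then yields $\Pr[B\ge 1/2]\le 2e^{-cn}$. Outside this event, at least half the bases are good, so
\begin{equation}
\bar\Delta\ge \tfrac12 c'n^{-2}.
\end{equation}
Adjusting the constants gives $\Delta(M_{\Psi})\ge c''n^{-2}$ with probability at least $1-e^{-\tilde c n}$, which is the claim.

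The delicate point is exactly this replacement of a union bound by an expectation/Markov argument. It relies on the constants in Corollary~\ref{col:spectral_gap_any_basis} being independent of $\vec{P}$, which follows from local-unitary invariance of the Haar measure. It also relies on Step 1's operator inequality, which lets bad bases contribute nonnegatively rather than spoil the average.
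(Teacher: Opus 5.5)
Your proof is correct and follows the paper's argument: the paper also averages the per-basis indicator of a good gap over $\vec{P}$, swaps the expectations over $\Psi$ and $\vec{P}$, applies Markov's inequality, and then invokes $\Delta(M_\Psi)\ge \mathbb{E}_{\vec{P}}\Delta(L_\Psi^{\vec{P}})$; its Markov threshold is $e^{-cn/2}$ rather than your $1/2$, which gives failure probability exactly $e^{-cn/2}$ and spares the constant adjustment. Your Steps 1--2 give an explicit proof of that last inequality via $L_\Psi^{\vec{P}}\preceq\mathbb{I}$ and $L_\Psi^{\vec{P}}\ket{\Psi}=\ket{\Psi}$, which the paper only asserts as a ``concavity of the spectral gap'' for observables sharing the top eigenvector $\Psi$.
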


		\begin{proof}
			For a state $\Psi$ and basis $\vec{P}$, define
			\begin{equation}
				f(\Psi,\vec{P}) = 
				\begin{cases}
					1, & \Delta(L_{\Psi}^{\vec{P}})  \ge c'n^{-2}, \\
					0, & \text{otherwise}
				\end{cases}
			\end{equation}
			with $c'$ from Corollary~\ref{col:spectral_gap_any_basis}.
			By Corollary~\ref{col:spectral_gap_any_basis}, there exists a constant $c$ such that for any basis $\vec{P}$,
			\begin{equation}
				\bE_{\Psi}[f(\Psi,\vec{P})] \ge 1-\exp(-cn).
			\end{equation}
			Consequently,
			\begin{equation}
				\bE_{\Psi} \bE_{\vec{P}}[1-f(\Psi,\vec{P})] 
				= \bE_{\vec{P}} \bE_{\Psi}[1-f(\Psi,\vec{P})] 
				\le \exp(-cn).
			\end{equation}
			Setting $c_2 = c/2$ and applying Markov’s inequality yields
			\begin{equation}
				\Pr_{\Psi}\!\left[\bE_{\vec{P}}[1-f(\Psi,\vec{P})] \ge \exp(-c_2 n)\right] 
				\le \exp(-(c-c_2)n) = \exp(-c_2 n).
			\end{equation}
			Hence, with probability at least $1-\exp(-c_2 n)$,
			\begin{equation}
				\bE_{\vec{P}} f(\Psi,\vec{P}) \ge 1-\exp(-c_2 n).
			\end{equation}
			It follows that
			\begin{equation}
				\Delta(M_{\Psi}) 
				\ge \bE_{\vec{P}} \Delta(L_{\Psi}^{\vec{P}}) 
				\ge \bE_{\vec{P}} f(\Psi,\vec{P}) \cdot c'n^{-2} 
				= \bigl(1-\exp(-c_2 n)\bigr)\Omega(n^{-2}) 
				= \Omega(n^{-2}),
			\end{equation}
			where the first inequality follows from the concavity of the spectral gap for non-negative observables that share the same eigenvector corresponding to the largest eigenvalue $1$. 
		\end{proof}
		
		Importantly, the observable $M_{\Psi}$ admits a multipartite Pauli measurement scheme that can be implemented using solely local randomness:
		\begin{itemize}
			\item Measurement Setting: Each party $l$ independently and uniformly selects a basis $P_l \in \{X, Y, Z\}$.
			\item Weighting Coefficient: For a given outcome vector $\vec{b}$ and setting $\vec{P}$, the coefficient is computed as:
			\begin{equation}\label{eq:omega-average-over-k}
				\omega(\vec{b}|\vec{P}) = \frac{1}{n}\sum_{k=0}^{n-1} \bra{\Psi_{\vec{b}, \vec{P}}^{(k)}} \left(3\frac{\mathbb{I}+(-1)^{b_k} \sigma_{P_k}}{2}-\mathbb{I}\right)\ket{\Psi_{\vec{b}, \vec{P}}^{(k)}}.
			\end{equation}
		\end{itemize}
		By the unbiasedness of local shadow tomography~\cite{Huang2024Certifying}, this procedure yields an unbiased estimator of $\tr(M_{\Psi} \rho)$. Note that we incorporate the average over $k$ into the classical weighting coefficient rather than the sampling process. Moreover, we maintain $\abs{\omega(\vec{b}|\vec{P})} \le W=\cO(1)$. Integrating this into our self-testing framework for local randomness, we obtain the following performance guarantee:
		\begin{theorem}[Self-testing states using local randomness]\label{thm:self-testing-state-local-randomness}
			Let $\varepsilon' > 0$ be the desired precision. For an $n$-partite pure target state $\ket{\Psi}$ with a spectral gap $\Delta \coloneqq \Delta(M_{\Psi}) > 0$, Protocol~\ref{prot:self-testing-state-local-randomness} is an $(\cO(\varepsilon'), \delta)$-sound and $(\cO(\Delta \varepsilon'^2), \cO(\frac{\Delta^2\varepsilon'^4}{n^2}), \delta)$-complete self-testing protocol for $\Psi$. The total sample complexity is given by:
			\begin{equation}
				N = \cO\left( \frac{n^5}{\Delta^{4} \varepsilon'^8} \log\left(\frac{n}{\delta}\right) \right).
			\end{equation}
		\end{theorem}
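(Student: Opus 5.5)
The argument follows the proof of Theorem~\ref{thm:self-testing-state-shared-randomness} almost verbatim, with two substitutions: Protocol~\ref{prot:di-witness-shared-randomness} is replaced by Protocol~\ref{prot:di-witness-local-randomness}, and $L_\Psi$ by $M_\Psi$. Protocol~\ref{prot:self-testing-state-local-randomness} is presumably defined in the same way. It runs Protocol~\ref{prot:di-witness-local-randomness} for the observable $M_\Psi$ with the local-randomness Pauli scheme described above, in which each $P_l$ is drawn uniformly from $\{X,Y,Z\}$ and the weight $\omega(\vec b|\vec P)$ is given by Eq.~\eqref{eq:omega-average-over-k}. It outputs \texttt{FAILED} if that subroutine fails or if $\hat\omega<1-\Delta\varepsilon'^2$.

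The first step is to check that Theorem~\ref{thm:fin-sample-DI-estimation-L-local-randomness} applies. The distribution $\mathcal D$ is a product of uniform marginals, so $p_X^{(l)}=p_Y^{(l)}=p_Z^{(l)}=1/3=\Omega(1)$. The weight satisfies $|\omega|\le W=\cO(1)$, because each summand in Eq.~\eqref{eq:omega-average-over-k} is the expectation of $3\Pi-\mathbb I$ in a normalized state, where $\Pi$ is a projector, so each summand lies in $[-1,2]$. Unbiasedness, i.e.\ that the scheme realizes $L=M_\Psi$ in the form of Eq.~\eqref{eq:L-def}, follows from local shadow tomography as stated. Substituting $\varepsilon=\Delta\varepsilon'^2$ and $W=\cO(1)$ into Eq.~\eqref{eq:claim-sample-complexity-no-shared-randomness} gives $N=\cO(n^5\Delta^{-4}\varepsilon'^{-8}\log(n/\delta))$.

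For soundness, conditioned on \texttt{CERTIFIED} and on the success event of probability $1-\delta$, Theorem~\ref{thm:fin-sample-DI-estimation-L-local-randomness} provides $p,\tau_\pm$ and local channels $\Gamma$ such that Eq.~\eqref{eq:state-closeness} holds with error $\cO(\varepsilon)$, and such that $\tr[M_\Psi(p\tau_++(1-p)\tau_-)]\ge 1-\cO(\varepsilon)$. Since $\lambda_{\max}(M_\Psi)=1$ with eigenvector $\Psi$ and gap $\Delta$, the spectral-gap bound gives $p(1-\bra\Psi\tau_+\ket\Psi)+(1-p)(1-\bra\Psi\tau_-\ket\Psi)\le \cO(\varepsilon/\Delta)=\cO(\varepsilon'^2)$. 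Applying Fuchs--van de Graaf and concavity of the square root yields the mixture-averaged trace distance $\cO(\varepsilon')$. I also need $\|\Psi^*-\tau_-^*\|_1=\|\Psi-\tau_-\|_1$, which holds because conjugation preserves the trace norm. The triangle inequality with Eq.~\eqref{eq:state-closeness} then gives $\cO(\varepsilon')$-soundness in the sense of Definition~\ref{def:approx-selftest-states}.

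Two points need checking, and I expect them to be the main obstacle. First, $\Psi$ must be the top eigenvector of $M_\Psi$ with eigenvalue exactly $1$; this holds because each $L_\Psi^{\vec P}$ has this property and $M_\Psi$ is their average. Second, the proof uses $\tr(M_\Psi\tau_-)$, while Theorem~\ref{thm:fin-sample-DI-estimation-L-local-randomness} controls $\tr(L\tau)$ with $\tau=p\tau_++(1-p)\tau_-$ and the conjugate placed on the flag-$1$ branch. I will track this convention exactly as in the shared-randomness proof.

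Completeness follows from Theorem~\ref{thm:completeness}, item~\ref{bullet:CompletenessProtocol1}. If $\epsilon_{\mathrm S},\epsilon_{\mathrm M}=\cO(\varepsilon^2/n^2)=\cO(\Delta^2\varepsilon'^4/n^2)$, the subroutine certifies. If in addition $D(\rho,\Psi\otimes\Phi^+)\le\cO(\Delta\varepsilon'^2)$ with small enough constants, Eq.~\eqref{eq:claim-vIII-completeness} with $\tilde\tau_T=\Psi$ gives $|\hat\omega-1|\le\Delta\varepsilon'^2$, using $\tr(M_\Psi\Psi)=1$. The threshold test therefore passes, which establishes the $(\cO(\Delta\varepsilon'^2),\cO(\Delta^2\varepsilon'^4/n^2),\delta)$-completeness.
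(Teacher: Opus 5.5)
Your proposal is correct and follows the paper's proof. The paper reruns the soundness and completeness argument of Theorem~\ref{thm:self-testing-state-shared-randomness}, using Protocol~\ref{prot:di-witness-local-randomness} and $M_\Psi$ in place of Protocol~\ref{prot:di-witness-shared-randomness} and $L_\Psi$, and it obtains the sample complexity by substituting $\varepsilon=\Delta\varepsilon'^2$ and $W=\cO(1)$ into Theorem~\ref{thm:fin-sample-DI-estimation-L-local-randomness}. Your direct bound $p(1-\bra{\Psi}\tau_+\ket{\Psi})+(1-p)(1-\bra{\Psi}\tau_-\ket{\Psi})=\cO(\varepsilon/\Delta)$ is the same argument, only slightly cleaner than the paper's separate $\cO(\varepsilon/p)$ and $\cO(\varepsilon/(1-p))$ bounds, and it avoids dividing by $p$ or $1-p$.
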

		
		\begin{protocol}{Self-testing states using local randomness}{prot:self-testing-state-local-randomness}
			\begin{enumerate}
				\item For a pure state $\Psi$ with $\Delta(M_{\Psi}) > 0$, execute Protocol~\ref{prot:di-witness-local-randomness} for the observable $M_{\Psi}$ using $N$ measurement rounds.
				
				\item  If the output of Protocol~\ref{prot:di-witness-local-randomness} is \texttt{FAILED} or the output is an estimate $\hat{\omega}$ satisfying $\hat{\omega} < 1 - \Delta\varepsilon'^2$, output \texttt{FAILED}.
				\item Else, output \texttt{CERTIFIED}.
			\end{enumerate}
		\end{protocol}
		
		\begin{proof}
			The proof follows the same logic as that of Theorem~\ref{thm:self-testing-state-shared-randomness}. The specific sample complexity $N$ is obtained by substituting $\varepsilon = \Delta\varepsilon'^2$ and $W = \cO(1)$ into the performance guarantee in Theorem~\ref{thm:fin-sample-DI-estimation-L-local-randomness}.
		\end{proof}
		
		For Haar-random states, we obtain the following performance guarantee by combining Theorem~\ref{thm:self-testing-state-local-randomness} with Lemma~\ref{lem:random-basis-enh}:
		\begin{corollary}[Sample-efficient self-testing of generic states using local randomness]\label{col:self-testing-Haar-random-states}
			Let $\varepsilon' > 0$. For an $n$-qubit Haar-random state $\ket{\Psi}$, with probability at least $1 - \exp(-\Omega(n))$ over the choice of $\Psi$, Protocol~\ref{prot:self-testing-state-local-randomness} is a $(\cO(\varepsilon'), \delta)$-sound and $(\cO(n^{-2} \varepsilon'^2), \cO(n^{-6}\varepsilon'^4), \delta)$-complete self-testing protocol. 
			The required sample complexity is $N = \cO\left(n^{13}\varepsilon'^{-8}\log(\frac{n}{\delta})\right)$. 
		\end{corollary}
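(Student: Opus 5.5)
The plan is to combine Theorem~\ref{thm:self-testing-state-local-randomness} with Lemma~\ref{lem:random-basis-enh}, substituting the gap bound $\Delta = \Omega(n^{-2})$ into every parameter.

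\emph{Step 1: the gap.} By Lemma~\ref{lem:random-basis-enh}, for a Haar-random $\Psi$, with probability at least $1-\exp(-cn)$ over the choice of $\Psi$ we have $\Delta(M_\Psi)\ge c' n^{-2}$. We condition on this event for the rest of the argument, so all conclusions inherit the $1-\exp(-\Omega(n))$ probability over $\Psi$. Separately, the weighting coefficients in Eq.~\eqref{eq:omega-average-over-k} satisfy $W=\cO(1)$. Each party samples its basis uniformly and independently, so $\mathcal{D}$ is a product distribution with $p_X^{(l)}=p_Y^{(l)}=p_Z^{(l)}=1/3=\Omega(1)$. Hence the hypotheses of Theorem~\ref{thm:fin-sample-DI-estimation-L-local-randomness}, and therefore of Theorem~\ref{thm:self-testing-state-local-randomness}, hold.

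\emph{Step 2: substitution.} Theorem~\ref{thm:self-testing-state-local-randomness} gives $(\cO(\varepsilon'),\delta)$-soundness and $(\cO(\Delta\varepsilon'^2),\cO(\Delta^2\varepsilon'^4/n^2),\delta)$-completeness, with $N=\cO(n^5\Delta^{-4}\varepsilon'^{-8}\log(n/\delta))$. Since we only have a lower bound on the gap, we run Protocol~\ref{prot:self-testing-state-local-randomness} with $\Delta$ replaced by the known value $\Delta_0:=c'n^{-2}\le\Delta(M_\Psi)$. Substituting $\Delta_0$ gives:
\begin{itemize}
\item completeness parameters $\cO(n^{-2}\varepsilon'^2)$ and $\cO(n^{-4}\varepsilon'^4/n^2)=\cO(n^{-6}\varepsilon'^4)$;
\item sample complexity $\cO(n^5 n^{8}\varepsilon'^{-8}\log(n/\delta))=\cO(n^{13}\varepsilon'^{-8}\log(n/\delta))$.
\end{itemize}

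\emph{Main obstacle.} The one point needing care is this replacement of $\Delta$ by $\Delta_0$. I would reread the soundness argument inherited from Theorem~\ref{thm:self-testing-state-shared-randomness} and check that it uses $\Delta$ only through inequalities that remain valid when $\Delta$ is replaced by any $\Delta_0$ with $\Delta_0\le\Delta(M_\Psi)$.
\begin{itemize}
\item \emph{Soundness.} The fidelity bound $\bra{\Psi}\tau_\pm\ket{\Psi}\ge 1-\cO(\varepsilon/(\Delta p))$ only improves when the true gap exceeds $\Delta_0$. The required precision is $\varepsilon=\Delta_0\varepsilon'^2$, so $\sqrt{\varepsilon/\Delta}\le\varepsilon'$, and soundness still holds.
\item \emph{Completeness.} This step never uses the gap; it only needs $\tr(M_\Psi\Psi)=1$ together with the threshold $1-\Delta_0\varepsilon'^2$.
\end{itemize}
With this monotonicity checked, the corollary follows directly.
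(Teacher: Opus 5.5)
Your proposal is correct and follows the paper's route. The paper also substitutes the gap bound $\Delta(M_\Psi)=\Omega(n^{-2})$ from Lemma~\ref{lem:random-basis-enh}, together with $W=\cO(1)$, into Theorem~\ref{thm:self-testing-state-local-randomness}, which yields the stated completeness parameters and $N=\cO(n^{13}\varepsilon'^{-8}\log(n/\delta))$. Your extra check is a valid refinement the paper leaves implicit: the protocol can be run with the known lower bound $\Delta_0=c'n^{-2}$ in place of the exact gap, since soundness only improves with a larger true gap and completeness never uses the gap.
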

		
		Finally, we provide two remarks for our sample complexity bounds. First, our self-testing definitions are formulated with respect to the trace norm, which is the most stringent distance measure. If the certification were defined with respect to state fidelity $F(\rho, \sigma)$, the sample complexity scaling would improve from $\varepsilon'^{-8}$ to $\varepsilon'^{-4}$. This improvement stems from bypassing the quadratic loss in the Fuchs–van de Graaf inequality (Eq.~\eqref{eq:fidelity-to-trace-norm}) used in our soundness proof. 
		Second, while our current analytical bound for the spectral gap of $M_{\Psi}$ is $\Omega(n^{-2})$, recent numerical evidence and conjectures in Ref.~\cite{du2025certifyinglocalizablequantumproperties} suggest that for generic states, the random-basis-enhanced observable may actually possess a constant spectral gap, $\Delta(M_{\Psi}) = \Omega(1)$. If this conjecture holds, the factors of $n$ resulting from $\Delta$ could be removed, significantly enhancing both the robustness and the sample efficiency of self-testing for generic states.

		\section{Non-i.i.d.\ scenario}\label{sec:non-iid}
		
		In this section, we consider the non-i.i.d.\ scenario of self-testing. Previously, in the i.i.d.\ scenario, we assumed that the state and measurement operators are identical for every sample. In any practical experiment, this is not a realistic constraint. In this section, we allow measurements to depend on previous inputs and outputs, and to act on a quantum state that depends on previous measurements.
		In Lemma~\ref{lem:noniid-average}, we will show that the previously introduced protocols still work in the non-i.i.d.\ scenario with identical asymptotic sample complexity.
		
		In the non-i.i.d.\ scenario, there is no well-defined state $\rho$ of the physical experiment. Rather, this state may differ in each round of the protocol. Therefore, we first need to clarify how a statement such as the claim of Theorem~\ref{thm:DI-estimation-L} can be adopted to the non-i.i.d.\ case.
		One challenge, discussed in \cite{2022sampleefficientdiverification}, is that the shared state $\rho^{[N]}$ over $N$ rounds of measurement may be a classical mixture such as $\rho^{[N]} = \frac{1}{2}(\sigma^{\otimes N} + (\tau\otimes\Phi^+)^{\otimes N})$. Here, $\sigma$ can be any \enquote{garbage state} for which the protocol should fail while $\tau\otimes\Phi^+$ corresponds to the reference experiment. It appears obvious that this state should be rejected by the protocol as it is \enquote{garbage} one out of two times. Yet, any complete (see Sec.~\ref{sec:completeness}) protocol with independent measurements over the individual rounds will accept this state with around $50\%$ probability.
		
		We resolve this issue by using an idea from \cite{2022sampleefficientdiverification} to define extractability with respect to the \emph{accessed states}, that is, the states $\rho_{|m_{<i}}$ in a given round $i$ of the protocol, conditional on all previous inputs and outputs $m_{<i}$. 
		In \cite{2022sampleefficientdiverification}, the \emph{average extractability} is defined by the average achievable fidelity between a target state and the accessed states $\rho_{|m_{<i}},\ i=1,\cdots,N$ after applying a local extraction channel. 
		
		We consider an equivalent definition of average extractability: extractability with respect to an average state, which has a clear operational meaning. First, we need to introduce some additional notation:
		We write the \emph{accessed measurement} in round $i$ as $M_{|m_{<i}}$ (where the POVM elements $\left(\bigotimes_{l=0}^{n-1}M_{a_l|x_l}^{(l)}\otimes N_{b_l|y_l}^{(l)}\right)_{|m_{<i}}$ of the accessed measurement $M_{|m_{<i}}$ in round $i$ may also depend on outcomes in the previous rounds).
		Define the \emph{average state} $\bar\rho$ as the labeled, classical mixture of the accessed states in individual rounds $i$, i.e.\ $\bar\rho\coloneqq\frac1N\sum_{i=1}^N\ketbra{i}^{\otimes n}\otimes\rho_{|m_{<i}}$.
		We then call the following experiment an \emph{average experiment}: On the average state $\bar\rho$, each party first measures the round label $i$ and then proceeds to apply operations (which may depend on the round label $i$) to the remaining state $\rho_{|m_{<i}}$.
		
		Because the average state is the uniform mixture of $\ketbra{i}^{\otimes n}\otimes\rho_{|m_{<i}}$, an average experiment is operationally equivalent to randomly choosing a round label $i$, stopping before round $i$ and then performing an operation on the current state $\rho_{|m_{<i}}$. An example of such an operation would be measuring $M_{|m_{<i}}$ in the random round to obtain a result $(\vec{a},\vec{b})$. This average experiment corresponds to the (local) measurements 
		\begin{equation}\label{eq:measurement-avg-exp}
			\sum_{i=1}^N \ketbra{i}^{\otimes n}\otimes M_{|m_{<i}}
		\end{equation}
		on the average state $\bar{\rho}$.
		
		An extractor $\Lambda$ on the average state corresponds to a sequence of extractors $\Lambda^{(i)}=\bigotimes_{l=0}^{n-1}\Lambda_l^{(i)}$, $\Lambda_l^{(i)}:\rho\mapsto\Lambda_l(\ketbra{i}\otimes\rho)$, applied on each round of the physical experiment. Conversely, a sequence of extractors $\Lambda^{(i)}=\bigotimes_{l=0}^{n-1}\Lambda^{(i)}_l$ applied in the $N$ rounds of the physical experiment can be combined into one extractor
		\begin{equation}\label{eq:roundwise-extractors-to-average-extractor}
			\Lambda=\bigotimes_{l=0}^{n-1}\Lambda_l;\qquad\Lambda_l: \rho\mapsto\sum_{i=1}^N(\bra{i}\otimes\mathbb{I})(\mathcal{I}\otimes\Lambda_l^{(i)})(\rho)(\ket{i}\otimes\mathbb{I}).
		\end{equation}
		These two types of extractors are equivalent in the following sense:
		\begin{equation}
			\frac1N\sum_{i=1}^N\Lambda^{(i)}(\rho_{|m<i})=\Lambda\left(\sum_{i=1}^N\ketbra{i}^{\otimes n}\otimes\rho_{|m<i}\right)=\Lambda(\bar\rho).
		\end{equation}
		That is, the state extracted by randomly picking a round and then extracting with respect to that round can be implemented as extracting from the average state, and vice versa. This explains why the extractability notion with respect to the average state is equivalent to the average extractability in \cite{2022sampleefficientdiverification}: the average fidelity between any state $\ketbra{\psi}$ and the state $\Lambda_i(\rho_{|m<i})$ extracted from some round, is equal to the fidelity between $\ketbra{\psi}$ and the state $\Lambda(\bar\rho)$ extracted from the average state.
		
		We are now in a position to define average extractability: We define the \emph{average extractability} as the extractability with respect to the average state $\bar\rho$. Here, by extractability, we mean that there exists an extraction channel $\Lambda$ that is the product of local channels, such that $\Lambda(\bar\rho)$ extracts a target state. From our preceding discussion, it becomes clear that extractability with respect to the average state $\bar\rho$ is equivalent to average extractability in an average experiment and vice versa.

		We define the values $\vI^{[i]},\vII^{[i]},\vIII^{[i]}$ as the values of the estimators in Box~\ref{box:ProtocolEstimators} (we suppress the dependence on $l$ and $k$ for simplicity of notation) in round $i$ with respect to the state $\rho_{|m_{<i}}$ and with measurements $M_{|m_{<i}}$. Let $\vI, \vII, \vIII$ be the uniform averages of $\vI^{[i]},\vII^{[i]},\vIII^{[i]}$ over $i$. By the definition of the average experiment, $\vI, \vII, \vIII$ are exactly the values of the estimators in Box~\ref{box:ProtocolEstimators} applied in an average experiment with the states $\rho_{|m_{<i}}$ and measurements $M_{|m_{<i}}$ (see discussion above~\eqref{eq:measurement-avg-exp}).
		
		Our main result of this section is Theorem~\ref{thm:noniid}, which is a direct consequence of Lemma~\ref{lem:noniid-average}. Lemma~\ref{lem:noniid-average} proves that the experimental estimates $\hvI, \hvII, \hvIII$ are close to the expectation values $\vI, \vII, \vIII$ of the average experiment. Using Theorem~\ref{thm:noniid} together with the operator $M_{\Psi}$ from \eqref{eq:modified-Huang-protocol-def}, we verify that if our protocol outputs \texttt{CERTIFIED}, then we can (approximately) extract the target state from the average state.

		\begin{lemma}[Non-i.i.d.\ estimators]\label{lem:noniid-average}
			In the non-i.i.d.\ case, the estimators $\hat v \in \{\hvI, \hvII, \hvIII\}$ in Protocol~\ref{prot:di-witness-local-randomness} (Protocol~\ref{prot:di-witness-shared-randomness}) are close to the respective expectation values $v\in \{\vI,\vII,\vIII\}$ with respect to the average experiment. Specifically, they satisfy
			\begin{equation}\label{eq:claim-hatv-close-v-noniid}
				|\hat v - v| = \cO(\varepsilon)
			\end{equation}
			with failure probability at most $\delta$ and same sample complexity $N(W,\varepsilon,\delta)$ as in Theorem~\ref{thm:fin-sample-DI-estimation-L-local-randomness} (Theorem~\ref{thm:fin-sample-DI-estimation-L-shared-randomness}).
		\end{lemma}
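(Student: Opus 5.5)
The estimates $\hat v$ are computed from data generated round by round, so I would replace the i.i.d.\ concentration facts (Facts~\ref{fact:chernoff} and~\ref{fact:hoeffding}) by martingale concentration, namely Azuma--Hoeffding (and Freedman/Bernstein where tighter control of counts is needed). The key observation is that the inputs in round $i$ are drawn from fresh (local or shared) randomness, independent of $m_{<i}$ and of the accessed state $\rho_{|m_{<i}}$. Hence, conditional on the past, each per-round contribution to a counter has a known conditional mean, expressed through $\rho_{|m_{<i}}$ and $M_{|m_{<i}}$.

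For each elementary quantity (e.g.\ $e_{\mathrm{I}}(l,k,i,j)$, $e_{\mathrm{II}}(l,l+1)$ after rejection sampling, $e_{\mathrm{III}}$), I would define per-round random variables $Z_t = \mathbf{1}[\text{setting } s \text{ chosen in round } t]\cdot g_t$. Here $g_t$ is the bounded post-processed outcome ($|g_t|\le 1$, or $\le W$ for $\vIII$). Because settings are independent of the past, $\bE[Z_t\mid m_{<t}] = q_s\, c^{[t]}_s$, where $q_s$ is the known setting probability and $c^{[t]}_s$ is the round-$t$ correlator. The sums $\sum_t (Z_t - q_s c_s^{[t]})$ and $\sum_t(\mathbf{1}_t - q_s)$ are martingales with bounded increments. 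Azuma gives, with probability $1-\delta'$, that $e/N = q_s \bar c_s \pm \cO(W\sqrt{\log(1/\delta')/N})$ and $n/N = q_s \pm \cO(\sqrt{\log(1/\delta')/N})$, where $\bar c_s = \frac1N\sum_t c_s^{[t]}$ is exactly the correlator of the average experiment. Taking the ratio $e/n$ gives $\bar c_s$ up to error $\cO(W\sqrt{\log(1/\delta')/N}/q_s)$. A union bound over the $\cO(n)$ configurations, with $\delta'=\Theta(\delta/n)$, reproduces the $\log(n/\delta)$ factor.

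Since $\vI,\vII,\vIII$ of the average experiment are fixed linear combinations of these averaged correlators, with the same coefficients as in Box~\ref{box:ProtocolEstimators}, linearity transfers the bounds to $\hat v$. By the definition of the average state $\bar\rho$ and the measurement \eqref{eq:measurement-avg-exp}, the averaged correlators coincide with the correlators of the average experiment.

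The main obstacle I expect is the sample-complexity matching. A naive Azuma bound scales as $1/q_s^2$ instead of $1/q_s$, because the martingale increments have range $\cO(1)$ even when the setting is rarely chosen. To recover the i.i.d.\ rates in Lemmas~\ref{lemma:sample-complexity-N_I-and-N_II} and~\ref{lem:sample-complexity-N_III}, I would use Freedman's inequality. The conditional variance of each increment is at most $q_s W^2$, so the deviation becomes $\cO(\sqrt{q_s W^2\log(1/\delta')/N} + W\log(1/\delta')/N)$. Dividing by $n\approx q_s N$ then yields error $\epsilon'$ once $N=\cO(\log(1/\delta')W^2/(q_s\epsilon'^2))$, which matches the i.i.d.\ counts.

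The remaining subtlety is the rejection-sampling acceptance coin in \eqref{eq:rejection-sample-prot1}. It is fresh randomness, so it simply multiplies $q_s$ by a known factor. Substituting the same precision targets \eqref{eq:requirement_error_bound} then gives the claimed $N$ for both Protocols~\ref{prot:di-witness-local-randomness} and~\ref{prot:di-witness-shared-randomness}.
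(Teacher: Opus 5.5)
Your proposal is correct and follows essentially the paper's route. The paper also compares $\hat v$ with the selected-sample sum normalized by $pN$, i.e.\ $\hat v'=\sum_i\hat\delta_i\hat v_i/(pN)$, bounds $|\hat v'-v|$ with a martingale inequality whose exponent scales as $pN$ rather than $p^2N$, and then bounds $|\hat v-\hat v'|$ via concentration of the count $\sum_i\hat\delta_i$. The only difference is that where you cite Freedman's inequality (with conditional variance at most $q_sW^2$), the paper proves its own Bennett-type bound, Lemma~\ref{lem:modified-chernoff}, directly from the moment generating function; it serves exactly your purpose of avoiding the $1/q_s^2$ loss of plain Azuma.
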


		\begin{theorem}[Non-i.i.d.\ finite sample version of Theorem~\ref{thm:DI-estimation-L}]\label{thm:noniid}
			In the non-i.i.d.\ case, with probability at least $1-\delta$, Protocol~\ref{prot:di-witness-local-randomness} (Protocol~\ref{prot:di-witness-shared-randomness}) with same asymptotic sample complexity $N$ as Theorem~\ref{thm:fin-sample-DI-estimation-L-local-randomness} (Theorem~\ref{thm:fin-sample-DI-estimation-L-shared-randomness}) either returns \texttt{FAILED}, or for the average state $\bar\rho$, there exist $p \in [0,1]$, a state $\tau = p \tau_+ + (1-p)\tau_-$, and a product of local channels $\Gamma$ such that:
			\begin{equation}
				D\left(p \tau_+ \otimes \ketbra{0}^{\otimes n} + (1-p)\tau_-^* \otimes \ketbra{1}^{\otimes n}, \Gamma(\bar\rho)\right) \le \cO\left(\frac{\varepsilon}{W}\right),
			\end{equation}
			and the value $\hvIII$ returned by the protocol satisfies
			\begin{equation}\label{eq:claim-hvIII-vIII-non-iid}
				|\hvIII - \tr(L\tau)| = \cO(\varepsilon).
			\end{equation}
		\end{theorem}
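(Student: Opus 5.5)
The proof of Theorem~\ref{thm:noniid} reduces the non-i.i.d.\ statement to the i.i.d.\ analysis applied to the average experiment. Lemma~\ref{lem:noniid-average} is taken as given. The key observation is that the average experiment is itself a legitimate physical experiment of the type considered in Sections~\ref{app:3CHSH}--\ref{subsec:refinement-extraction-channel}. It has a single state $\bar\rho=\frac1N\sum_i\ketbra{i}^{\otimes n}\otimes\rho_{|m_{<i}}$ shared among the same $2n$ spatially separated parties, and each party holds its own copy of the round label $i$. The measurements are $\sum_i\ketbra{i}^{\otimes n}\otimes M_{|m_{<i}}$ as in Eq.~\eqref{eq:measurement-avg-exp}. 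Because each copy of the label sits locally with its party, these measurements still have the tensor-product form of Eq.~\eqref{eq:measurement_prob}. They are also projective after Naimark dilation, and $\bar\rho$ can be purified as before. Therefore Theorem~\ref{thm:DI-estimation-L} applies verbatim to $(\bar\rho,\{M\})$, with $\vI,\vII,\vIII$ the uniform round-averages. By the discussion preceding the lemma, these averages are exactly the estimator values of this average experiment.

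The steps, in order, are as follows. (1) Invoke Lemma~\ref{lem:noniid-average} at the precisions of Eq.~\eqref{eq:requirement_error_bound}. With probability at least $1-\delta$ we then have $|\hvI(l,k)-\vI(l,k)|\le\varepsilon^2/(n^2W^2)$ and $|\hvII(l,l+1)-\vII(l,l+1)|\le\varepsilon/(nW)$ for all $l,k$, and $|\hvIII-\vIII|\le\varepsilon$, with the same sample complexity $N$ as in Theorem~\ref{thm:fin-sample-DI-estimation-L-local-randomness} or Theorem~\ref{thm:fin-sample-DI-estimation-L-shared-randomness}. (2) Condition on \texttt{CERTIFIED}. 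Repeat the Jensen argument from the proof of Theorem~\ref{thm:fin-sample-DI-estimation-L-local-randomness} to get $\sum_{l,k}\sqrt{\vI(l,k)}=\cO(\varepsilon/W)$ and $\sum_l\vII(l,l+1)=\cO(\varepsilon/W)$ for the average experiment. (3) Apply Theorem~\ref{thm:DI-estimation-L} to $\bar\rho$. This gives a product of local channels $\Lambda=\bigotimes_l\Lambda_l$ satisfying Eqs.~\eqref{eq:thm-Pauli-prob} and~\eqref{eq:thm-Pauli-value} with error $\cO(\varepsilon/W)$ and $\cO(\varepsilon)$ respectively. These $\Lambda_l$ may read the local label $i$, which is allowed, and by Eq.~\eqref{eq:roundwise-extractors-to-average-extractor} they are equivalent to round-wise extractors. (4) Compose with the indicator dephasing channel, $\Gamma=\mathcal{D}\circ\Lambda$, and set $p=p_+/(p_++p_-)$. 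The decomposition of Eq.~\eqref{eq:dephasing-decompose} then yields the trace-distance claim. Finally, the triangle inequality combining $|\hvIII-\vIII|\le\varepsilon$ with Eq.~\eqref{eq:thm-Pauli-value} gives Eq.~\eqref{eq:claim-hvIII-vIII-non-iid}. All of these steps copy the i.i.d.\ proof line by line.

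The main obstacle is justifying step (3). One must check that no step in Lemmas~\ref{lemma:chsh-implications}--\ref{lem:teleportation_AAab} used anything beyond locality and the Born rule for one round. In particular, the isometries $V_{A_l},V_{B_l}$ are now controlled on the label $i$, and the relevant concern is the sum-of-squares bounds. A bound such as Eq.~\eqref{eq:P1-approx-on-rho} holds only on average over $i$, not round by round. I would argue that this is exactly what is needed: every estimate in the chain is linear in the state, or is a norm on the single purified vector $\ket{\bar\psi}$. The averaged CHSH deficit therefore directly controls $\norm{(\tilde B_i-\cdots)\ket{\bar\psi}}$, and the rest of the argument proceeds unchanged. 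A second subtlety is that the label must be copied to every party. This is built into the definition of $\bar\rho$, so $\Gamma$ remains a product of local channels.
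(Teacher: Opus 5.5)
Your proposal is correct and follows the paper's own proof. The paper's proof likewise combines Lemma~\ref{lem:noniid-average} with Theorem~\ref{thm:DI-estimation-L} applied to the average experiment, and then repeats the derivation of Theorem~\ref{thm:fin-sample-DI-estimation-L-local-randomness} (resp.\ Theorem~\ref{thm:fin-sample-DI-estimation-L-shared-randomness}) line by line. Your added justification for treating the average experiment as a single legitimate experiment is what the paper leaves implicit: every one of the $2n$ parties holds its own copy of the label, the isometries are controlled on that label, and the sum-of-squares bounds hold on the purification of $\bar\rho$ rather than round by round.
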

		
		For the interpretation of our results in this section, it may be useful to note that these results can be extended beyond the operational meaning as an average statement to statements about a large fraction of rounds. Concretely, by utilizing the non-negativity of the estimators $\vI^{[i]},\vII^{[i]}$ and the Markov inequality, small values of $\vI, \vII$ (which follow from small observed estimates $\hvI, \hvII$ by Lemma~\ref{lem:noniid-average}) imply a small value for $\vI^{[i]},\vII^{[i]}$ in most rounds $i$. Theorem~\ref{thm:DI-estimation-L}, \eqref{eq:thm-Pauli-value} then implies that for most rounds, the single round expectation $\vIII^{[i]}$ provides a good estimate for the expectation value of $L$ on the state present in that round. We will not formalize this further here and now move on to the proofs of Lemma~\ref{lem:noniid-average} and Theorem~\ref{thm:noniid}.

		To prove Lemma~\ref{lem:noniid-average}, we need the following lemma, which is a mixed version of Chernoff bound and Azuma's inequality (martingale version of Hoeffding's inequality).
		\begin{lemma}[A mixed version of Chernoff bound and Azuma's inequality]\label{lem:modified-chernoff}
			Let $\epsilon>0,0<p\le1$. Let $m_1,\cdots,m_N$ be random variables. Let $\hat\delta_i,\hat v_i$ be functions of $m_1,\cdots,m_i$; $|\hat v_i|\le 1$ and $\hat\delta_i\in\{0,1\}$. Assume
			\begin{align}\label{eq:lem-chernoff-cond}
				\bE[\hat\delta_i\mid m_{<i}]=p;\quad \bE[\hat v_i\mid m_{<i},\hat\delta_i=1]=v_i,
			\end{align}
			where $v_i$ is a function of $m_{<i}$. Then,
			\begin{align}\label{eq:lem-chernoff-result}
				\Pr[\frac{\sum_{i=1}^N\hat\delta_i\hat v_i}{pN}-\frac{\sum_{i=1}^Nv_i}N\ge\epsilon]\le\exp(-\frac{\epsilon^2}{2+\epsilon}pN).
			\end{align}
		\end{lemma}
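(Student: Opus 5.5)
The plan is to follow the standard exponential-moment (supermartingale) argument behind Freedman/Bernstein-type inequalities, with the Chernoff-style variance bound supplied by the Bernoulli selector $\hat\delta_i$. Set $X_i \coloneqq \hat\delta_i\hat v_i - p v_i$. By \eqref{eq:lem-chernoff-cond}, $\bE[X_i\mid m_{<i}] = p\,\bE[\hat v_i\mid m_{<i},\hat\delta_i=1] - p v_i = 0$, so $S_k=\sum_{i\le k}X_i$ is a martingale with respect to the filtration generated by $m_{\le k}$. The target event is exactly $S_N \ge \epsilon p N$. The two facts we will exploit are $|X_i|\le 2$ (from $|\hat v_i|\le1$, $|v_i|\le 1$) and the conditional variance bound $\bE[X_i^2\mid m_{<i}]\le \bE[\hat\delta_i\hat v_i^2\mid m_{<i}] \le p$, which holds because the variance is at most the second moment of $\hat\delta_i\hat v_i$.

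For $\lambda>0$ we then bound the conditional moment generating function. Using $e^x\le 1+x+x^2\phi(\lambda b)$ for $|x|\le \lambda b$, where $\phi$ comes from the usual Bennett inequality, we get $\bE[e^{\lambda X_i}\mid m_{<i}]\le 1+p\,\lambda^2 g(\lambda)\le \exp(p\lambda^2 g(\lambda))$ for an appropriate $g$. A somewhat cleaner way to land on the stated constant is to center differently. Since $\hat\delta_i\hat v_i\le \hat\delta_i$ and $\hat\delta_i\hat v_i\ge-\hat\delta_i$, write $\hat\delta_i\hat v_i = \hat\delta_i\cdot\frac{1+\hat v_i}{2}\cdot 2-\hat\delta_i$. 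Then apply the multiplicative Chernoff MGF bound $\bE[e^{\lambda Y}\mid m_{<i}]\le \exp\big((e^{\lambda}-1)\bE[Y\mid m_{<i}]\big)$, valid for any $Y\in[0,1]$ by convexity, to the $[0,1]$-valued variable $Y_i = \hat\delta_i(1+\hat v_i)/2$, whose conditional mean is $p(1+v_i)/2$.

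Iterating by the tower property gives $\bE[e^{\lambda\sum Y_i - (e^\lambda-1)\sum \mu_i}]\le 1$, where $\mu_i=p(1+v_i)/2$. Because the random $\sum\hat\delta_i$ term also appears, we handle it by a two-sided argument. Alternatively, and this is what I would do, we work directly with the pair $(\hat\delta_i, \hat\delta_i\hat v_i)$ via the Bernstein route above. With variance proxy $pN$ and increments bounded by a constant, Freedman's inequality yields
\[
\Pr[S_N\ge t]\le \exp\!\Big(-\frac{t^2}{2(pN + t\,b/3)}\Big).
\]
Setting $t=\epsilon pN$ gives $\exp(-\epsilon^2 pN/(2+2b\epsilon/3))$, which is of the claimed form $\exp(-\epsilon^2 pN/(2+\epsilon))$ once $b\le 3/2$.

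The main obstacle is this constant: the naive increment bound is $b=2$, which would give $2+4\epsilon/3$ in the denominator instead of $2+\epsilon$. To fix it, I would use the one-sided form of Freedman, which needs only an upper bound on the increments. The relevant one-sided bound is $X_i\le \hat\delta_i - p v_i$, and conditioning separately on $\hat\delta_i=1$ and $\hat\delta_i=0$ should allow a sharper MGF estimate. If that does not reach $b\le 3/2$, the fallback is the multiplicative Chernoff bound on $Y_i$ together with a rescaling that reproduces the $2+\epsilon$ denominator of the standard Chernoff inequality $\exp(-\epsilon^2\mu/(2+\epsilon))$. The remaining steps, namely the martingale-difference property, the variance bound, and the final Markov inequality on the exponential supermartingale, are routine.
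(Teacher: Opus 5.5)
Your skeleton (martingale differences $X_i=\hat\delta_i\hat v_i-pv_i$, Markov's inequality on $e^{\lambda S_N}$, tower property) is the right one. However, none of the branches you list actually delivers the constant $2+\epsilon$, so as written the proposal does not prove \eqref{eq:lem-chernoff-result}.

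\textbf{Freedman route.} With predictable variance $pN$ and a uniform increment bound $b$, Freedman gives the exponent $\epsilon^2pN/(2+2b\epsilon/3)$, so you need $b\le 3/2$. The sharpest one-sided bound is $X_i\le 1-pv_i$. The hypotheses allow $v_i$ arbitrarily close to $-1$ while $\hat v_i=1$ with small positive probability, so any uniform $b$ must be at least $1+p$. The one-sided fix therefore works only for $p\le 1/2$. For $p\in(1/2,1]$ it leaves the denominator $2+2(1+p)\epsilon/3>2+\epsilon$.

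\textbf{Fallback route.} Applying the multiplicative Chernoff bound to $Y_i=\hat\delta_i(1+\hat v_i)/2$ leaves the term $-\sum_i\hat\delta_i$, which is correlated with $\sum_iY_i$. Controlling the two deviations separately, e.g.\ by a union bound, costs a factor $2$ in probability and a constant in the exponent. Controlling them jointly just means bounding the MGF of $2Y_i-\hat\delta_i=\hat\delta_i\hat v_i$ directly.

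Your bound $\exp(-\epsilon^2pN/(2+4\epsilon/3))$ has the same order and would suffice for the downstream sample-complexity argument, but it is not the stated inequality.

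\textbf{The missing step} is exactly that direct MGF computation. Your remark about conditioning on $\hat\delta_i$ is the right instinct, but apply it to the MGF itself rather than to improving $b$. Using $v_i=\bE[\hat v_i\mid m_{<i},\hat\delta_i=1]$ and $1+x\le e^x$,
\begin{equation*}
\bE\bigl[e^{\lambda(\hat\delta_i\hat v_i-pv_i)}\mid m_{<i}\bigr]=e^{-\lambda pv_i}\bigl(1-p+p\,\bE[e^{\lambda\hat v_i}\mid m_{<i},\hat\delta_i=1]\bigr)\le\exp\bigl(p\,\bE[h(\hat v_i)\mid m_{<i},\hat\delta_i=1]\bigr),
\end{equation*}
where $h(x)=e^{\lambda x}-1-\lambda x$. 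Since $h$ is convex and $h(1)-h(-1)=2(\sinh\lambda-\lambda)\ge 0$, you get $h\le h(1)=e^{\lambda}-1-\lambda$ on $[-1,1]$. So each round is dominated by the extreme case $\hat v_i\equiv 1$, i.e.\ by a Bernoulli$(p)$ increment, which is why the answer is exactly the multiplicative Chernoff bound.

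Now include the shift $-\lambda p\epsilon$ and choose $\lambda=\ln(1+\epsilon)$. The per-round factor becomes $\exp\bigl(p(\epsilon-(1+\epsilon)\ln(1+\epsilon))\bigr)$. By $\ln(1+\epsilon)\ge 2\epsilon/(2+\epsilon)$, this is at most $\exp\bigl(-p\epsilon^2/(2+\epsilon)\bigr)$. Your tower-property iteration then gives \eqref{eq:lem-chernoff-result}. This is the argument the paper uses.
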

		\begin{proof}
			We slightly modify the proof of the Chernoff bound. Let $\lambda>0$ be a to-be-determined constant. By Markov's inequality and the law of iterated expectations,
			\begin{align}\label{eq:lem-chernoff-pf}\begin{split}
					\Pr[\frac{\sum_{i=1}^N\hat\delta_i\hat v_i}{pN}-\frac{\sum_{i=1}^Nv_i}N\ge\epsilon]&=\Pr[\exp(\lambda\sum_{i=1}^N(\hat\delta_i\hat v_i-pv_i-p\epsilon))\ge1] \\
					&\le\bE\left[\exp(\lambda\sum_{i=1}^N(\hat\delta_i\hat v_i-pv_i-p\epsilon))\right]=\bE\left[\prod_{i=1}^N\exp(\lambda(\hat\delta_i\hat v_i-pv_i-p\epsilon))\right]\\
					&=\bE\left[\exp(\lambda(\hat\delta_1\hat v_1-pv_1-p\epsilon))\right.
					\\&\quad\cdot\bE\left[\exp(\lambda\hat\delta_2\hat v_2-pv_2-p\epsilon)\right.
					\\&\quad\quad\left.\left.\cdot\bE\left[\exp(\lambda(\hat\delta_3\hat v_3-pv_3-p\epsilon))\cdot\cdots\mid m_1,m_2\right]\mid m_1\right]\right].
			\end{split}\end{align}
			
			Now we bound $\bE\left[\exp(\lambda(\hat\delta_i\hat v_i-pv_i-p\epsilon))\mid m_{<i}\right]$. By assumption, $v_i$ is determined given $m_{<i}$. By \eqref{eq:lem-chernoff-cond},
			\begin{align}\begin{split}
					&\bE\left[\exp(\lambda(\hat\delta_i\hat v_i-pv_i-p\epsilon))\mid m_{<i}\right]\\=&\exp(-\lambda p(v_i+\epsilon))\left(\Pr[\hat\delta_i=1\mid m_{<i}]\bE\left[\exp(\lambda\hat v_i)\mid m_{<i},\hat\delta_i=1\right]+\Pr[\hat\delta_i=0\mid m_{<i}]\right)
					\\=&\exp(-\lambda p(v_i+\epsilon))\left(p\bE\left[\exp(\lambda\hat v_i)\mid m_{<i},\hat\delta_i=1\right]+(1-p)\right)
					\\\le&\exp(-\lambda p(v_i+\epsilon))\exp(p\left(\bE\left[\exp(\lambda\hat v_i)\mid m_{<i},\hat\delta_i=1\right]-1\right))
					\\=&\exp(p\bE\left[\exp(\lambda\hat v_i)-1-\lambda(\hat v_i+\epsilon)\mid m_{<i},\hat\delta_i=1\right]).
			\end{split}\end{align}
			Here we used $1+x \leq \exp(x)$ for any real $x$.
			Since $\exp(\lambda\hat v_i)-1-\lambda(\hat v_i+\epsilon)$ is convex with respect to $\hat v_i$, and $v_i\in[-1,1]$,
			\begin{align}\begin{split}
					&\bE\left[\exp(\lambda(\hat\delta_i\hat v_i-pv_i-p\epsilon))\mid m_{<i}\right]
					\\\le&\exp(p\bE\left[\exp(\lambda\hat v_i)-1-\lambda(\hat v_i+\epsilon)\mid m_{<i},\hat\delta_i=1\right])
					\\\le&\exp(p\max(\exp(\lambda)-1-\lambda(1+\epsilon),\exp(-\lambda)-1-\lambda(-1+\epsilon))).
			\end{split}\end{align}
			Set $\lambda$ to be $\ln(1+\epsilon)$. By utilizing the fact that $\exp(x)-x>\exp(-x)+x$ for $x>0$, and $\ln(\frac{1+x}{1-x})>2x$ for $0<x<1$ (which implies $\log(x)>2\frac{x-1}{x+1}$ for $x>1$), we know
			\begin{align}
				\exp(-\lambda)-1-\lambda(-1+\epsilon)<\exp(\lambda)-1-\lambda(1+\epsilon)=\epsilon-(1+\epsilon)\ln(1+\epsilon)<\epsilon-2(1+\epsilon)\frac{\epsilon}{\epsilon+2}=-\frac{\epsilon^2}{2+\epsilon}.
			\end{align}
			Thus,
			\begin{align}
				\bE\left[\exp(\lambda(\hat\delta_i\hat v_i-pv_i-p\epsilon))\mid m_{<i}\right]
				\le\exp(p\max(\exp(\lambda)-1-\lambda(1+\epsilon),\exp(-\lambda)-1-\lambda(-1+\epsilon)))\le\exp(-\frac{p\epsilon^2}{2+\epsilon}).
			\end{align}
			Together with \eqref{eq:lem-chernoff-pf}, we get \eqref{eq:lem-chernoff-result}.
		\end{proof}

		\begin{proof}[Proof of Lemma~\ref{lem:noniid-average}]
			Consider any statistics of one round $i$. Let $p_{\mathrm{I}}, p_{\mathrm{II}}, p_{\mathrm{III}}$ denote the probability that these statistics contribute to estimator $\vI, \vII, \vIII$. The following analysis holds for any estimator, for simplicity we denote $p\in \{p_{\mathrm{I}}, p_{\mathrm{II}}, p_{\mathrm{III}}\}$ and $v_i \in \{\vI^{[i]}(l,k),\vII^{[i]}(l,l+1),\vIII^{[i]}\}$.
			Let $\hat\delta_i$ be a binary variable indicating whether $v_i$ is estimated in the $i$-th round of the actual protocol, and let $\hat v_i$ denote the (single sample) estimate for $v_i$ from that round. Assume that $v\in \{\vI,\vII,\vIII\}$ is bounded by $W$. The estimate of $v$ is
			\begin{align}
				\hat v=\frac{\sum_{i=1}^N\hat\delta_i\hat v_i}{\sum_{i=1}^N\hat\delta_i}.
			\end{align}
			On the other hand $v$ is equal to the average value of $v_i$, i.e.,
			\begin{align}
				v=\frac1N\sum_{i=1}^Nv_i.
			\end{align}
			We now show that the probability of ``$v$ and $\hat v$ are close'' has the same bound as in the i.i.d.\ case.
			
			We define another estimate
			\begin{align}
				\hat v'=\frac{\sum_{i=1}^N\hat\delta_i\hat v_i}{pN}.
			\end{align}
			We will prove below that $v$ and $\hat v'$ are close with high probability.
			Subsequently, we apply Lemma~\ref{lem:modified-chernoff} to $\hat v'$ in order to show that $\hat v'$ approximates $v$ with high probability: Setting $\hat v_i,v_i$ in Lemma~\ref{lem:modified-chernoff} to be $ \pm \hat v_i/W,\pm v_i/W$, we get
			\begin{align}
				\Pr[\hat v'-v>\epsilon/2]<\exp(-\frac{(\epsilon/2W)^2}{2+\epsilon/2W}pN);\qquad\Pr[\hat v'-v<-\epsilon/2]<\exp(-\frac{(\epsilon/2W)^2}{2+\epsilon/2W}pN).
			\end{align}
			Hence, with probability at least $1-2\exp(-pN(\epsilon/2W)^2/(2+\epsilon/2W))$, we have $\Pr[|\hat v'-v|\le\epsilon/2]$.
			
			Now, we prove that $\hat v$ and $\hat v'$ are close with high probability. Since $\{\hat\delta_i\}_i$ are independent binary random variables with expectation value $p$, we can apply a Chernoff bound for Bernoulli random variables to get
			\begin{align}
				\Pr\left[\left|\frac{\sum_{i=1}^N\hat\delta_i}{pN}-1\right|<\epsilon/2W\right]\ge1-2\exp\left(-\frac{pN(\epsilon/2W)^2}{2+\epsilon/2W}\right).
			\end{align}
			Hence, with probability at least $1-2\exp(-pN(\epsilon/2W)^2/(2+\epsilon/2W))$,
			\begin{align}
				|\hat v'-\hat v|=\left|\left(\frac{\sum_{i=1}^N\hat\delta_i}{pN}-1\right)\cdot\hat v\right|<\frac\varepsilon{2W}\cdot W=\epsilon/2.
			\end{align}
			Therefore, by union bound, with probability at least $1-4\exp(-pN(\epsilon/2W)^2/(2+\epsilon/2W))$, $|\hat v-v|\le|\hat v'-v|+|\hat v'-\hat v|<\epsilon/2+\epsilon/2=\epsilon$.
			In order to lower bound the probability to $1-\delta$, we get $N = \cO(\frac{1}{p}\frac{W^2}{\epsilon^2}\log(\frac{1}{\delta}))$, a bound similar to the i.i.d.\ scenario (see Appendix~\ref{sec:sample_comp}). Hence, the sample complexity analysis can proceed analogously to the i.i.d.\ case. This shows~\eqref{eq:claim-hatv-close-v-noniid}.
		\end{proof}

		\begin{proof}[Proof of Theorem~\ref{thm:noniid}]
			The result follows from~\eqref{eq:claim-hatv-close-v-noniid}, Theorem~\ref{thm:DI-estimation-L} applied to the average experiments that have expectation values $\vI,\vII$ and $\vIII$, and an identical derivation to Theorem~\ref{thm:fin-sample-DI-estimation-L-local-randomness} (Theorem~\ref{thm:fin-sample-DI-estimation-L-shared-randomness}).
			
		\end{proof}
	\end{document}